\newcommand{\DL}{\textsl{DL-Lite}}
\newcommand{\bDLa}{\ensuremath{\DL_{\alpha}}}
\newcommand{\bDLb}{\ensuremath{\DL_\textit{bool}}}
\newcommand{\bDLk}{\ensuremath{\DL_\textit{krom}}}
\newcommand{\bDLc}{\ensuremath{\DL_\textit{core}}}
\newcommand{\DLan}{\ensuremath{\smash{\DL_{\alpha}^\mathcal{N}}}}
\newcommand{\DLbn}{\ensuremath{\smash{\bDLb^\mathcal{N}}}}
\newcommand{\DLkn}{\ensuremath{\smash{\bDLk^\mathcal{N}}}}
\newcommand{\DLcn}{\ensuremath{\smash{\bDLc^\mathcal{N}}}}
\newcommand{\DLbrn}{{\ensuremath{\bDLb^{\mathcal{H}\mathcal{N}}}}}
\newcommand{\rDLaHN}{\ensuremath{\bDLa^{(\mathcal{HN})}}}
\newcommand{\rDLbHN}{\ensuremath{\bDLb^{(\mathcal{HN})}}}
\newcommand{\rDLkHN}{\ensuremath{\bDLk^{(\mathcal{HN})}}}
\newcommand{\rDLcHN}{\ensuremath{\bDLc^{(\mathcal{HN})}}}
\newcommand{\PTL}{\ensuremath{\mathcal{PTL}}}
\newcommand{\coreLTL}{\ensuremath{\PTL_{\textit{core}}}}
\newcommand{\kromLTL}{\ensuremath{\PTL_{\textit{krom}}}}
\newcommand{\boolLTL}{\ensuremath{\PTL_{\textit{bool}}}}
\newcommand{\hornLTL}{\ensuremath{\PTL_{\textit{horn}}}}
\newcommand{\TuDLbn}{\ensuremath{\smash{\textsl{T}_{\mathcal{US}}\DLbn}}}%
\newcommand{\TdDLan}{\ensuremath{\smash{\textsl{T}_{F\!P}\DLan}}}%
\newcommand{\TdDLbn}{\ensuremath{\smash{\textsl{T}_{F\!P}\DLbn}}}%
\newcommand{\TdDLkn}{\ensuremath{\smash{\textsl{T}_{F\!P}\DLkn}}}%
\newcommand{\TdDLcn}{\ensuremath{\smash{\textsl{T}_{F\!P}\DLcn}}}%
\newcommand{\TdxDLan}{\ensuremath{\smash{\textsl{T}_{F\!P\!X}\DLan}}}
\newcommand{\TdxDLbn}{\ensuremath{\smash{\textsl{T}_{F\!P\!X}\DLbn}}}%
\newcommand{\TdxDLkn}{\ensuremath{\smash{\textsl{T}_{F\!P\!X}\DLkn}}}%
\newcommand{\TdxDLcn}{\ensuremath{\smash{\textsl{T}_{F\!P\!X}\DLcn}}}%
\newcommand{\zSVDLan}{\ensuremath{\smash{\textsl{T}_{U}\DLan}}}%
\newcommand{\zSVDLb}{\ensuremath{\smash{\textsl{T}_{U}\DLbn}}}%
\newcommand{\zSVDLk}{\ensuremath{\smash{\textsl{T}_{U}\DLkn}}}%
\newcommand{\zSVDLc}{\ensuremath{\smash{\textsl{T}_{U}\DLcn}}}%
\newcommand{\TxDLbn}{\ensuremath{\smash{\textsl{T}^{\,\ast}_{X}\DLbn}}}%
\newcommand{\TdrDLbn}{\ensuremath{\smash{\textsl{T}^{\,\ast}_{F\!P}\DLbn}}}%
\newcommand{\TurDLbn}{\ensuremath{\smash{\textsl{T}^{\,\ast}_{U}\DLbn}}}%
\newcommand{\TdxrDLcn}{\ensuremath{\smash{\textsl{T}^{\,\ast}_{F\!P\!X}\DLcn}}}%
\newcommand{\TdxrDLkn}{\ensuremath{\smash{\textsl{T}^{\,\ast}_{F\!P\!X}\DLkn}}}%
\newcommand{\ALC}{\ensuremath{\mathcal{ALC}}}
\newcommand{\ALCQI}{\ensuremath{\mathcal{ALCQI}}}
\newcommand{\DLR}{\ensuremath{\mathcal{DLR}}}
\newcommand{\QTL}{\mathcal{QTL}}
\newcommand{\QTLi}{{\ensuremath{\QTL^1}}}
\newcommand{\A}{\ensuremath{\mathcal{A}}}
\newcommand{\K}{\ensuremath{\mathcal{K}}}
\newcommand{\I}{\ensuremath{\mathcal{I}}}
\newcommand{\T}{\ensuremath{\mathcal{T}}}
\newcommand{\QT}{Q_\T}
\newcommand{\QA}{Q_\A}
\newcommand{\LogSpace}{\textsc{LogSpace}}
\newcommand{\NLogSpace}{\textsc{NLogSpace}}
\newcommand{\PTime}{\textsc{PTime}}
\newcommand{\NP}{\textsc{NP}}
\newcommand{\PSpace}{\textsc{PSpace}}
\newcommand{\ExpTime}{\textsc{ExpTime}}
\newcommand{\NExpTime}{\textsc{NExpTime}}
\newcommand{\ExpSpace}{\textsc{ExpSpace}}
\newcommand{\ex}[1]{\textit{#1}}
\newcommand{\nm}[1]{\mbox{\small\textsf{#1}}}
\newcommand{\nmd}[1]{\mbox{\scriptsize\textsf{#1}}}
\newcommand{\nms}[1]{\mbox{\tiny\textsf{#1}}}
\newcommand{\nxt}{{\ensuremath\raisebox{0.25ex}{\text{\scriptsize$\bigcirc$}}}}
\newcommand{\U}{\ensuremath{\mathbin{\mathcal{U}}}}
\renewcommand{\S}{\ensuremath{\mathbin{\mathcal{S}}}}
\newcommand{\Rdiamond}{\Diamond_{\!\scriptscriptstyle F}}
\newcommand{\Rbox}{\Box_{\!\scriptscriptstyle F}}
\newcommand{\Rnext}{\nxt_{\!\scriptscriptstyle F}}
\newcommand{\Ldiamond}{\Diamond_{\!\scriptscriptstyle P}}
\newcommand{\Lbox}{\Box_{\!\scriptscriptstyle P}}
\newcommand{\Lnext}{\nxt_{\!\scriptscriptstyle P}}
\newcommand{\SVdiamond}{\mathop{\ooalign{$\Diamond$ \cr \kern0.5ex
    \raisebox{0.35ex}{\scalebox{0.7}{$*$}}} \kern-0.9ex}}
\newcommand{\SVbox}{\mathop{\ooalign{$\Box$ \cr \kern0.42ex
    \raisebox{0.35ex}{\scalebox{0.7}{$*$}}}\rule{0pt}{1.5ex} \kern-0.7ex}}
\newcommand{\type}{\boldsymbol{t}}
\newcommand{\cp}{\textit{cp}}
\newcommand{\role}[1][\K]{\mathsf{role}_{#1}} 
\newcommand{\ob}[1][\A]{\mathsf{ob}_{#1}} 
\newcommand{\inv}[1]{\mathsf{inv}(#1)}
\newcommand{\Qs}{\smash{\widehat{Q}}}
\newcommand{\extA}{\mathcal{E}} 
\newcommand{\Nr}[4]{|#1^{#2,#3}_{#4}|}
\newcommand{\Nb}[3]{|#1^{\Box #2}_{#3}|}
\newcommand{\Nd}[3]{|#1^{\Diamond #2}_{#3}|}
\newcommand{\Ir}[4]{#1^{#2,#3}_{#4}}
\newcommand{\Irb}[3]{#1^{\Box #2}_{#3}}
\newcommand{\Ird}[3]{#1^{\Diamond #2}_{#3}}
\newcommand{\Rr}[3]{\varrho^{\smash{#1,#2}}_{#3}}
\newcommand{\Rb}[2]{\varrho^{\Box\hspace*{-0.1em} #1}_{#2}}
\newcommand{\Rd}[2]{\varrho^{\hspace*{-0.1em}\Diamond\hspace*{-0.1em} #1}_{#2}}
\newcommand{\Tr}[4]{\tau^{#1,#2}_{#3,#4}}
\newcommand{\Tb}[3]{\tau^{\Box\hspace*{-0.1em} #1}_{#2,#3}}
\newcommand{\Td}[3]{\tau^{\Diamond\hspace*{-0.1em} #1}_{#2,#3}}
\newcommand{\Dr}[4]{\delta^{#1,#2}_{#3,#4}}
\newcommand{\Db}[3]{\delta^{\Box\hspace*{-0.1em} #1}_{#2,#3}}
\newcommand{\Dd}[3]{\delta^{\Diamond\hspace*{-0.1em} #1}_{#2,#3}}
\newcommand{\LambdaB}[2]{\Lambda_{\Box\hspace*{-0.1em}#1}^{#2}}
\newcommand{\LambdaD}[2]{\Lambda_{\Diamond\hspace*{-0.1em}#1}^{#2}}
\newcommand{\Z}{\mathbb{Z}}
\newcommand{\N}{\mathbb{N}}
\newcommand{\TT}{\mathfrak{T}}
\newcommand{\Mmf}{\mathfrak{M}}
\newenvironment{akrzlist}{\begin{list}{}{\itemsep=0pt\leftmargin=2.5em\labelwidth=1.5em}}{\end{list}}
\newenvironment{akrzitemize}{\begin{list}{--}{\itemsep=0pt\leftmargin=1.5em\labelwidth=1.5em\topsep=4pt}}{\end{list}}
\begin{document}

\markboth{A.~Artale, R.~Kontchakov, V.~Ryzhikov, M.~Zakharyaschev}{A
  Cookbook for Temporal Conceptual Data Modelling with Description
  Logics}

\title{A Cookbook for Temporal Conceptual Data Modelling with Description Logics}

\author{ALESSANDRO ARTALE and VLADISLAV RYZHIKOV
\affil{KRDB Research Centre, Free University of Bozen-Bolzano, Italy}
ROMAN KONTCHAKOV and MICHAEL ZAKHARYASCHEV
\affil{Department of Computer Science and Information Systems,
Birkbeck, University of London, U.K.}
}


\begin{abstract}
  We design temporal description logics suitable for reasoning about
  temporal conceptual data models and investigate their computational
  complexity. Our formalisms are based on \DL{} logics with three
  types of concept inclusions (ranging from atomic concept inclusions and 
  disjointness to the full Booleans), as well as 
  cardinality constraints and role inclusions. The logics are interpreted over the
  Cartesian products of object domains and the flow of time $(\Z,<)$, satisfying the constant domain assumption. Concept and role inclusions of the TBox 
  hold at all moments of time (globally) and data assertions of the ABox hold at specified moments of time. 
  To express temporal constraints of conceptual data models, the languages are equipped 
  with flexible and rigid roles, standard future and past temporal operators on
  concepts and operators `always' and `sometime' 
  on roles.  The
  most expressive of our temporal description logics (which can
  capture lifespan cardinalities and either qualitative or
  quantitative evolution constraints) turns out to be
  undecidable. However, by omitting some of the temporal operators on
  concepts/roles or by restricting the form of concept inclusions we
  construct logics whose complexity ranges between 
  \NLogSpace{} and \PSpace{}. These positive results are obtained by reduction to
  various clausal fragments of propositional temporal logic, which
  opens a way to employ propositional or first-order temporal provers
  for reasoning about temporal data models.
\end{abstract}

\category{I.2.4}{Knowledge Representation Formalisms and Methods}{Representation languages}
\category{F.4.1}{Mathematical Logic}{Temporal logic}
\category{F.2.2}{Nonnumerical Algorithms and Problems}{Complexity of proof procedures}
\category{H.2.1}{Logical Design}{Data models.}

\terms{Languages, Theory.}

\keywords{Description Logic, Temporal Conceptual Data Model.}

\acmformat{Artale, A., Kontchakov, R., Ryzhikov, V., Zakharyaschev,
  M. 2014. A Cookbook for Temporal Conceptual Data Modelling with Description Logics.}

\begin{bottomstuff}
This work was partially supported by the U.K.\ EPSRC grant EP/H05099X/1.
\end{bottomstuff}

\maketitle


\section{Introduction}
\label{sec:intro}

The aim of this article is twofold. On the one hand, we investigate the complexity of reasoning about temporal conceptual data models depending on the available modelling constructs. On the other hand, we achieve this by encoding temporal conceptual data models in carefully crafted temporal description logics (TDLs, for short). As a result, we obtain a new family of TDLs and a clear understanding of how their constructs affect the complexity of reasoning. Most of the constructed TDLs feature an unexpectedly low complexity---compared to other known TDLs---such as \NLogSpace, \PTime, \NP{} and \PSpace, which is good news for automated temporal conceptual modelling. However, some combinations of the constructs (which involve temporal operators on relationships) result in undecidability, giving a new type of undecidable fragments of first-order temporal logic.

Conceptual data modelling formalisms, such as the Extended
Entity-Relationship model (EER) and Unified Modelling Language
(UML), provide visual means to describe
application domains in a declarative and reusable way, and are  regarded as standard tools in database design and
software engineering. One of the main tasks in conceptual modelling is to ensure that conceptual schemas satisfy various `quality properties': for instance, one may wish to check whether a given schema is consistent, whether its entities and relationships can be populated, whether a certain individual is an instance of a certain class, etc. That was where conceptual modelling met description logics (DLs), a family of knowledge representation formalisms specifically designed to efficiently reason about structured knowledge~\cite{BCMNP03}. Since 2007, DLs have been recognised as the backbone of the Semantic Web, underlying the standard Web Ontology Languages OWL and OWL~2\hbox to 0pt{.}\footnote{\url{www.w3.org/2007/OWL}, \ \url{www.w3.org/TR/owl2-overview}}

Connections between conceptual data models (CMs, for short) and DLs have
been investigated since the 1990s (see,
e.g.,~\cite{calvanese-et-al:jair-99,BoBr03,BeCD05-AIJ-2005,ACKRZ:er07}
and references therein), which resulted in a classification
of CMs according to the computational complexity of checking schema consistency depending on the available modelling constructs. The standard EER/UML
constructs include generalisation (inheritance) for entities  (classes), 
relationships and attributes  with disjointness and covering constraints on them, cardinality constraints for relationships and their
refinements, multiplicity constraints for attributes and key
constraints for entities.  
Reasoning over CMs equipped with the full set of constructs is \ExpTime-complete, which was shown by mapping CMs into the DLs \DLR{} and
\ALCQI~\cite{calvanese-et-al:jair-99,BeCD05-AIJ-2005}.
With the invention of the \DL{}
family~\cite{CDLLR05,CDLLR07,ACKZ:aaai07,ACKZ:jair09}, it became clear that reasoning
over CMs can often be done using DLs much weaker than \DLR{}
and \ALCQI. For example, the
\NP-complete \smash{\rDLbHN{}} was shown to be adequate for
representing a large class of CMs with generalisation and both disjointness and covering constraints, but no upper cardinality bounds on specialised relationships; see ~\cite{ACKRZ:er07} and Section~\ref{sec:dl-lite} for details. If we are also prepared
to sacrifice covering constraints, then the \NLogSpace-complete fragment
\smash{\rDLcHN{}} can do the job. (Note that \smash{\rDLcHN{}}
contains the OWL~2~QL profile\footnote{\url{www.w3.org/TR/owl2-profiles}} of OWL~2 and the DL fragment of RDF Schema, RDFS.\!\footnote{\url{www.w3.org/TR/rdf-schema}})

Temporal conceptual data models (TCMs)
extend CMs with means to represent constraints over temporal database instances. 
Temporal constraints can be grouped into three categories:
\emph{timestamping}, \emph{evolution} and \emph{temporal cardinality}
constraints. Timestamping constraints discriminate between those
classes, relationships and attributes that change over time and those
that are
time-invariant (or, \emph{rigid})~\cite{theodoulidis:et:al:is-91,gregersen:jensen:tkde-99,%
  finger:mcbrien:2000,Ar:Fr:er-99,mads-book:06}.
Evolution constraints control how the domain elements evolve
over time by migrating from one class to
another~\cite{gupta:hall:icde-91,%
  Mendelzon:94,Su97,mads-book:06,APS:amai07}. We distinguish between
qualitative evolution constraints describing generic temporal
behaviour, and quantitative ones specifying the exact time of
migration.
Temporal cardinality constraints restrict the number of times
an instance of a class can participate in a relationship: snapshot
cardinality constraints do it at each moment of time, while
lifespan cardinality constraints impose restrictions over the
entire existence of the instance as a member of the
class~\cite{tauzovich:er-91,mcbrien:et:al:cismod-92,artale:franconi:john09}.

Temporal extensions of DLs have been constructed and investigated since Schmiedel~\citeyear{schmiedel:90} and Schild's~\citeyear{Schild93} seminal papers (see~\cite{GKWZ03,ArFr01,AF05,LuWoZa-TIME-08} for detailed surveys), with reasoning over TCMs being one of the main objectives. 
The first attempts to
represent TCMs by means of TDLs resulted in fragments of
$\DLR_\mathcal{US}$ and $\ALCQI_\mathcal{US}$ whose
complexity ranged from \ExpTime{} and \ExpSpace{} up to
undecidability~\cite{Ar:Fr:er-99,AFWZ:02,AFM:lead03}. 
A general conclusion one could draw from the obtained results is that---as far as there is a nontrivial interaction between the temporal and DL components---TDLs based on full-fledged DLs such as \ALC{} turn out to be too complex for effective practical reasoning  (in more detail, this will be discussed in Section~\ref{sec:summary}).

The possibility to capture CMs using logics of the 
\DL{} family gave a glimpse of hope that automated reasoning over TCMs can
finally be made practical. 
The first temporal extension of $\smash{\rDLbHN{}}$ was constructed
by~\citeN{AKLWZ:time07}. It featured rigid roles,  with temporal and Boolean operators applicable not only to concepts but also to TBox axioms and ABox assertions. The resulting logic was shown to be \ExpSpace-complete.  (To compare: the same temporalisation of \ALC{} is trivially undecidable~\cite{AFWZ:02,GKWZ03}.)  
This encouraging result prompted a systematic investigation of TDLs
suitable for reasoning about TCMs.

Our aim in this article is to design \DL-based TDLs that are capable of
representing various sets of TCM constructs and have as low computational
complexity as possible.
Let us first formulate our minimal requirements for such
TDLs. At the model-theoretic level, we are interested in temporal
interpretations that are Cartesian products of object domains and the
flow of time $(\Z,<)$. At each moment of time, we interpret the
DL constructs over the same domain (thus complying with the constant
domain assumption adopted in temporal
databases~\cite{ChomickiTB:01}). We want to be able to specify, using
temporal ABoxes, that a finite number of concept and role membership
assertions hold at specific moments of time. We regard timestamping
constraints as indispensable; this means, in particular, that we
should be able to declare that certain roles and concepts are rigid (time-invariant) in
the sense that their interpretations do not change over time.  Other
temporal and static (atemporal) modelling constraints are expressed by means of
TBox axioms (concept and role inclusions). In fact, we observe that to
represent TCM constraints, we only require concept and role inclusions
that hold globally, at every time instant; thus, temporal and Boolean
operators on TBox axioms~\cite{AKLWZ:time07,BaaGhiLu-KR08,BaaderGL12} are not needed for our aims
(but may be useful to impose constraints on schema evolution).
Finally, in order to represent cardinality constraints (both snapshot and lifespan), we require number restrictions; thus, we assume this construct to be available in all of our formalisms. 

The remaining options include the choice of (\emph{i}) the underlying dialect of \DL{}
for disjointness and covering constraints; (\emph{ii}) the temporal operators on
concepts for different types of evolution constraints,
and (\emph{iii}) the temporal operators on roles for lifespan
cardinality constraints. For (\emph{i}), we consider three DLs: \smash{\rDLbHN}
and its sub-Boolean fragments \smash{\rDLkHN} and \smash{\rDLcHN}. For
(\emph{ii}), we take various subsets of the standard future and past
temporal operators (since and until, next and previous
time, sometime and always in the future/past, or simply sometime and
always). Finally, for (\emph{iii}), we only use the undirected
temporal operators `always' and `sometime' (referring to all time
instants); roles in the scope of such operators are called
temporalised.

Our most expressive TDL, based on \rDLbHN, captures all the standard types of temporal constraints:  timestamping, evolution and temporal cardinality.
Unfortunately, and to our surprise, this TDL turns out to be undecidable. As follows from the proof of Theorem~\ref{thm:undec}, it is a subtle interaction of functionality constraints on temporalised roles with the temporal operators and full Booleans on concepts that causes undecidability. 
On a more positive note, we show that even small restrictions of this interaction result in TDLs with better computational properties.

First, keeping \rDLbHN as the base DL but limiting the temporal operators on concepts to `always' and `sometime\hbox to 0pt{,}' we obtain an \NP-complete logic, which can express timestamping and life\-span cardinalities. To appreciate this result, recall that a similar logic based on \ALC\ is 2\ExpTime-complete~\cite{ALT:ijcai07}. 
Second, by giving up temporalised roles but retaining temporal operators on concepts, we obtain \PSpace- or \NP-complete logics depending on the available temporal operators, which matches the complexity of the underlying propositional temporal logic.
These TDLs have sufficient expressivity to capture timestamping
and evolution constraints, but cannot represent temporal cardinality
constraints (see Section~\ref{sec:tex}). We prove these
upper complexity bounds by a reduction to the propositional
temporal logic \PTL, which opens a way to employ the existing temporal
provers for checking quality properties of TCMs.
Again, we note that a similar logic based on \ALC{} is undecidable~\cite{WoZa99b,AFWZ:02,GKWZ03}.

We can reduce the complexity even further by restricting \rDLbHN{} to
its sub-Boolean fragments \smash{\rDLkHN{}} and
\smash{\rDLcHN}, which are unable to capture covering constraints. 
This results in logics within \NP{} and \PTime. And
if the temporal operators on concepts are limited to `always' and
`sometime' then the two sub-Boolean fragments are \NLogSpace-complete.
To obtain these results we consider sub-Boolean
fragments of \PTL{} by imposing restrictions on both the type of
clauses in Separated Normal Form~\cite{DBLP:conf/ijcai/Fisher91} and
the available temporal operators. We give a complete classification of
such fragments according to their complexity (see
Table~\ref{tab:PTL-fragments}).

The rest of the article is organised as follows. Section~\ref{sec:dl+modelling} introduces, using a simple example, conceptual data modelling languages and illustrates how they can be captured by various dialects of \DL, which are formally defined in Section~\ref{sec:dl-lite}. Section~\ref{sec:tex} introduces temporal conceptual modelling  constraints using a temporal extension of our example. In Section~\ref{sec:tdl}, we design \DL{} based TDLs that can represent those constraints. Section~\ref{sec:summary} gives a detailed overview of the results obtained in this article together with a discussion of related work. Section~\ref{sec:tdl-rigid:Z:bool-until-complex} gives the reduction of TDLs to \PTL{} mentioned above. In Section~\ref{sec:2ltl}, we establish the complexity results for the clausal fragments of propositional temporal logic. Section~\ref{sec:tdl-temporalised-roles} studies the complexity of TDLs with temporalised roles. 
We discuss the obtained results, open problems and future directions in Section~\ref{sec:concl}.

\section{Conceptual Modelling and Description Logic}
\label{sec:dl+modelling}

Description logics (DLs; see, e.g.,~\cite{BCMNP03}) were designed in the 1980s as logic-based formalisms for knowledge representation and
reasoning; their major application areas include ontologies in life
sciences and the Semantic Web. Conceptual modelling
languages~\cite{Chen76} are a decade older, and were developed
for abstract data representation in database design. Despite apparent
notational differences, both families of languages are built around
concepts (or entities) and relationships using a number of `natural'
constructs; a close correspondence between them was discovered and
investigated in~\cite{calvanese-et-al:jair-99,BoBr03,%
  BeCD05-AIJ-2005,ACKRZ:er07}.

The \DL{} description
logics~\cite{CDLLR05,CDLLR07,PLCD*08,ACKZ:aaai07,ACKZ:jair09} and the
\DL-based profile OWL~2~QL of OWL~2 have grown from the idea of
linking relational databases and ontologies in the framework of
ontology-based data access~\cite{DFKM*08,HMAM*08,PLCD*08}. The
chief aims that determined the shape of the \DL{} logics are:
(\emph{i}) the ability to represent basic constraints used in
conceptual modelling, and (\emph{ii}) the ability to support
query answering using standard relational database systems. In this article, we concentrate on \DL{} as a modelling language and briefly return to the issue of ontology-based data access (OBDA) in Section~\ref{sec:concl}.

In this section, we give an intuitive example illustrating the main
constructs of conceptual data models and their \DL{}
representations. In the example, we use the Extended
Entity-Relationship (EER) language~\cite{ElNa07}; however, one can
easily employ other conceptual modelling formalisms such as UML class
diagrams (\url{www.uml.org}). Then we formally define the syntax and
semantics of the \DL{} logics to be used later on in this article.

\subsection{A Motivating Example}
\label{subsec:mot-ex}

Let us consider the EER diagram in Fig.~\ref{atemp-uml} representing  (part of) a company information system.%
\begin{figure}[t]
\centering
\begin{tikzpicture}[>=latex,class/.style={rectangle,draw=black,thick,inner xsep=6pt},
conn/.style={draw,thick}, 
double-conn/.style={->,>=stealth,draw,thick,double distance=1pt, 
          decoration={markings,mark=at position 1 with {\arrow[scale=0.9]{>}}}, postaction={decorate}},
single-conn/.style={->,>=stealth,draw,thick,
          decoration={markings,mark=at position 1 with {\arrow[scale=1.5]{>}}}, postaction={decorate}},
temp/.style={draw,thick,densely dashed,>=open triangle 60,
	   decoration={markings,mark=at position 1 with {\arrow[scale=0.8]{>}}}, postaction={decorate}, shorten >= 5pt},
attribute/.style={circle,draw,thick,minimum size=1.5mm,inner sep=0pt},
relation/.style={draw,thick,diamond,aspect=4,inner ysep=0pt,inner xsep=6pt},yscale=0.9]\footnotesize
\node[class] (department) at (0,0) {\nmd{Department}};
\node[class] (interest group) at (2.3,0) {\nmd{Interest Group}};
\node[circle,draw,thick,minimum size=3mm,inner sep=0pt] (disj) at (1.15,0.6) {\tiny d};
\draw[conn] (department) -- (disj);
\draw[conn] (interest group) -- (disj);
\node[class] (organisational unit) at (1.15,1.5) {\nmd{Organisational Unit}};
\draw[double-conn] (disj) -- (organisational unit);
\node[relation] (member) at (1.15,2.9) {\nmd{Member}};
\draw[conn] (organisational unit) -- (member) node [near start, left] {\scriptsize $_{(1,\infty)}$} node[near end, right] {\scriptsize $\mathsf{org}$};
\node[class] (area manager) at (4.6,0) {\nmd{Area Manager}};
\node[class] (top manager) at (6.8,0) {\nmd{Top Manager}};
\node[circle,draw,thick,minimum size=3mm,inner sep=0pt] (cover) at (5.7,0.6) {};
\draw[conn] (area manager) -- (cover);
\draw[conn] (top manager) -- (cover);
\node[class] (manager) at (5.7,1.5) {\nmd{Manager}};
\draw[double-conn] (cover) -- (manager);
\node[class] (employee) at (5.7,2.9) {\nmd{Employee}};
\draw[single-conn] (manager) -- (employee);
\draw[conn] (employee) -- (member) node[very near end, below] {\scriptsize $\mathsf{mbr}$}; 
\node[attribute,label=above:{\nms{Salary(Integer)}}] (salary) at (7.7,3.3) {};
\draw[conn] (employee.north east) -- (salary);
\node[attribute,label=above:{\nms{Name(String)}}] (name) at (3.8,3.3) {};
\draw[conn] (employee.north west) -- (name);
\node[attribute,label=above:{\nms{\underline{Payroll Number(Integer)}}}] (payslip) at (5.7,3.5) {};
\draw[conn] (employee) -- (payslip);
\node[relation] (manages) at (10.5,0) {\nmd{Manages}};
\draw[conn] (top manager) -- (manages) node[very near end, above] {\scriptsize $\mathsf{man}$}  node [near start, below] {\scriptsize $_{(1,1)}$}; 
\node[relation] (works on) at (10.5,2.9) {\nmd{Works On}};
\draw[conn] (employee) -- (works on) node[very near end, below] {\scriptsize $\mathsf{emp}$}; 
\node[class] (project) at (10.5,1.5) {Project};
\node[attribute,label=above:{\nms{\underline{Project Code(String)}}}] (p code) at (11.8,1.8) {};
\draw[conn] (project.east) -- (p code);
\draw[conn] (project) -- (manages) node[near end, left] {\scriptsize $\mathsf{prj}$} node [near start, right] {\scriptsize $_{(1,1)}$}; 
\draw[conn] (project) -- (works on) node[near end, left] {\scriptsize $\mathsf{act}$} node [near start, left] {\scriptsize $_{(3,\infty)}$}; 
\end{tikzpicture}
\caption{A conceptual data model of a company information system.}
\label{atemp-uml}
\end{figure}
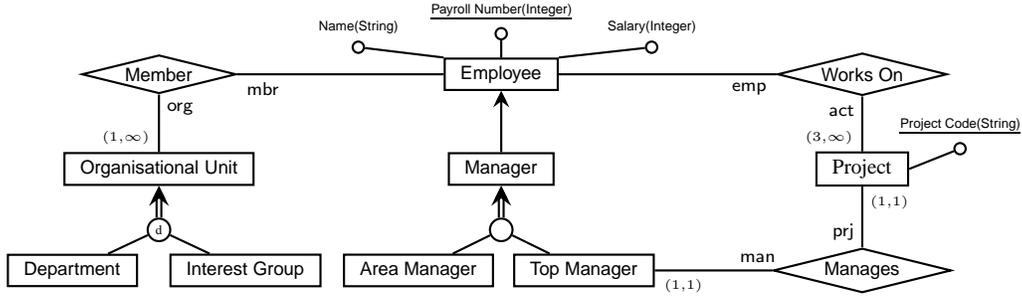
%
The arrow from the entity \nm{Manager} to the entity \nm{Employee} stands
for the statement `all managers are employees.' The double arrow with
a circle below \nm{Manager} means that the set of managers is the union
of the set of area managers and the set of top managers. These
statements can be represented in the language of description logic as
inclusions between concepts:
\begin{align*}
  \ex{Manager} &~\sqsubseteq~ \ex{Employee}, &
  \ex{AreaManager} &~\sqsubseteq~ \ex{Manager},\\
\ex{Manager} &~\sqsubseteq~ \ex{AreaManager} \sqcup \ex{TopManager}, &
  \ex{TopManager} & ~\sqsubseteq~ \ex{Manager}.
\end{align*}
Here $\ex{Manager}$, $\ex{Employee}$, \ex{AreaManager}, 
\ex{TopManager} are \emph{concept names} (or unary predicates) and the
symbols $\sqsubseteq$ and $\sqcup$ denote the usual
set-theoretic inclusion and union, respectively.
In a similar way we read and represent the part of the EER diagram
located below \nm{Organisational Unit}; the only new ingredient here is
the circled $\mathsf{d}$, indicating that the union is
\emph{disjoint}:
\begin{align*}
  \ex{Department} &~\sqsubseteq~ \ex{OrganisationalUnit}, &\quad
  & \ex{OrganisationalUnit} ~\sqsubseteq~ \ex{Department} \sqcup \ex{InterestGroup}, \\
  \ex{InterestGroup} &~\sqsubseteq~ \ex{OrganisationalUnit}, &
  & \ex{Department} \sqcap \ex{InterestGroup} ~\sqsubseteq~ \bot.
\end{align*}
Here $\bot$ denotes the empty set and $\sqcap$ the set-theoretic intersection.

The entity \nm{Employee} in Fig.~\ref{atemp-uml} has three
\emph{attributes}: \nm{Name}, which is a string, and \nm{Payroll Number} and
\nm{Salary}, both of which are integers. The attribute
\nm{Payroll Number} (underlined) is a \emph{key} for the entity
\nm{Employee}. In description logic, we can encode attributes by means of
\emph{roles} (binary predicates). For example, to say that every
employee has a salary, which is an integer number,
we can represent the attribute \nm{Salary} by a role, \ex{salary},
together with the concept inclusions
\begin{align*}
  \ex{Employee} &~\sqsubseteq~ \exists \ex{salary}, &
    \exists \ex{salary}^- &~\sqsubseteq~ \ex{Integer},
\end{align*}
where $\exists \ex{salary}$ denotes the domain of \ex{salary}, and
$\ex{salary}^-$ is the inverse of \ex{salary}, so that $\exists
\ex{salary}^-$ is the range of \ex{salary}. Then the fact that each
individual has a unique \ex{salary} attribute value can be expressed
by the concept inclusion
\begin{align*}
    \mathop{\geq 2} \ex{salary} &~\sqsubseteq~ \bot,
\end{align*}
where $\mathop{\geq 2} \ex{salary}$ stands for the set of all domain
elements with at least two values of \ex{salary} attached to them
(which must be empty according to this inclusion, i.e., \ex{salary} is a \emph{functional} role).  
The attributes \nm{Payroll Number} and
\nm{Name} are represented in a similar manner. The fact that
\nm{Payroll Number} is a key for \nm{Employee} can be encoded by the inclusion
\begin{align*}
  \mathop{\geq 2} \ex{payrollNumber}^- &~\sqsubseteq~ \bot.
\end{align*}

\emph{Relationships} are used to describe connections among objects
from (possibly) different entities.  \nm{Works On}, \nm{Member} and
\nm{Manages} in Fig.~\ref{atemp-uml} are binary
relationships. The argument \nm{emp} of
\nm{Works On} is of type \nm{Employee} in the sense that its values always belong to the entity \nm{Employee}
(in other words, \nm{Employee} participates in \nm{Works On} as
\nm{emp}). Likewise, the argument \nm{act} of \nm{Works On} 
is of type \nm{Project}. In description logic, a binary
relationship such as \nm{Works On} can be represented by a role,
say, \ex{worksOn}. If we agree that the first argument of \ex{worksOn} corresponds to 
\nm{emp} and the second to \nm{act}, then the
domain of \ex{worksOn} belongs to \ex{Employee} and its
range to \ex{Project}:
\begin{align*}
  \exists \ex{worksOn} &~\sqsubseteq~ \ex{Employee}, & 
  \exists \ex{worksOn}^- &~\sqsubseteq~ \ex{Project}.
\end{align*}
The expression $(3,\infty)$ labelling the argument \nm{act} of
\nm{Works On} is a \emph{cardinality constraint} meaning that every element of the set \nm{Project} participates
in at least three distinct tuples in the relationship \nm{Works On} (each
project involves at least three employees). This can be represented by the
inclusion
\begin{align}\label{eq:card}
  \ex{Project} &~\sqsubseteq~ \mathop{\geq 3} \ex{worksOn}^-.
\end{align}
The expression $(1,1)$ labelling the argument \nm{prj} of the
relationship \nm{Manages} means that each element of \nm{Project}
participates in at least one and at most one (that is, exactly one) tuple in \nm{Manages}, which
is represented by two inclusions:
\begin{align*}
  \ex{Project} &~\sqsubseteq~ \exists \ex{manages}^-, &
  \ex{Project} &~\sqsubseteq~ \mathop{\leq 1} \ex{manages}^-.
\end{align*}

Relationships of arity greater than 2 are encoded by using
\emph{reification}~\cite{CDLN-handbook-AR-2001} (binary
relationships can also be reified).  For instance, to reify the binary
relationship \nm{Works On}, we introduce a new concept name, say
\ex{C-WorksOn}, and two functional roles, \ex{emp} and \ex{act},
satisfying the following concept inclusions:
\begin{align}\label{eq:reifi:1}
\ex{C-WorksOn} & \sqsubseteq \exists \ex{emp}, & \mathop{\geq 2} \ex{emp} & \sqsubseteq \bot, & \exists \ex{emp} & \sqsubseteq \ex{C-WorksOn}, & \exists \ex{emp}^- & \sqsubseteq \ex{Employee},\\
\label{eq:reifi:2}
\ex{C-WorksOn} & \sqsubseteq \exists \ex{act}, & \mathop{\geq 2} \ex{act} & \sqsubseteq \bot, & \exists \ex{act} & \sqsubseteq \ex{C-WorksOn}, & \exists \ex{act}^- & \sqsubseteq \ex{Project}.%
\end{align}
Thus, each element of \ex{C-WorksOn} is related, via the roles
\ex{emp} and \ex{act}, to a unique pair of elements of \ex{Employee}
and \ex{Project}. Cardinality constraints are still representable for
reified relations, e.g., the cardinality expressed by the
formula~\eqref{eq:card} becomes
\begin{align}\label{eq:reifi:card}
  \ex{Project} &~\sqsubseteq~ \mathop{\geq 3} \ex{act}^-.  
\end{align}
Of the data modelling constructs not used in
Fig.~\ref{atemp-uml}, we mention here \emph{relationship
  generalisation}, i.e., a possibility to state that one
relationship is a subset of another relationship. For example,
we can state that everyone managing a project must also work on the
project. In other words: \nm{Manages} is a sub-relationship of \nm{Works On},
which can be represented in description logic as the role inclusion
\begin{align*}
  \ex{manages} &~\sqsubseteq~ \ex{worksOn}
\end{align*}
if both relationships are binary and not reified.  On the other hand,
if both relationships are reified then we need
a concept inclusion between the respective reifying concepts
as well as role inclusions between the functional roles for their arguments:
\begin{align*}
\ex{C-Manages} ~\sqsubseteq~\ex{C-WorksOn},\qquad
\ex{prj}~\sqsubseteq \ex{act},\qquad
\ex{man}~\sqsubseteq\ex{emp}.
\end{align*}

To represent database instances of a conceptual model, we use assertions such as $\ex{Manager}(\ex{bob})$ for `Bob is a manager' and $\ex{manages}(\ex{bob},\ex{cronos})$ for `Bob manages
Cronos.'

As conceptual data models can be large and contain non-trivial
implicit knowledge, it is important to make sure that the constructed
conceptual model satisfies certain \emph{quality properties}. For
example, one may want to know whether it is consistent, whether all or
some of its entities and relationships are not necessarily empty or
whether one entity or relationship is (not) subsumed by another. 
To automatically check such quality properties, it is essential to
provide an effective reasoning support during the construction phase
of a conceptual model. 

We now define the reasoning problems formally, by giving the
syntax and semantics of description logics containing the constructs
discussed above.

\subsection{\DL{} Logics}
\label{sec:dl-lite}

We start with the logic called \DLbn{} in the nomenclature 
of~\citeN{ACKZ:jair09}.
The language of \DLbn{} contains \emph{object names} $a_0,a_1,\dots$,
\emph{concept names} $A_0,A_1,\dots$, and \emph{role names}
$P_0,P_1,\dots$. \emph{Roles} $R$, \emph{basic concepts} $B$
and \emph{concepts} $C$ of this language are defined by the grammar:
\begin{align*}\label{eq:def:concept}
R \quad &::=\quad P_k \quad\mid\quad P_k^-, \\ %
B \quad &::=\quad \bot 
\quad\mid\quad A_k \quad\mid\quad
\mathop{\geq q} R, \\%
C \quad &::=\quad B \quad\mid\quad  \neg C \quad\mid\quad C_1
\sqcap C_2,
\end{align*}
where $q$ is a positive integer represented in binary.
A \DLbn{} \emph{TBox}, $\T$, is a finite set of \emph{concept inclusion axioms} of the form
\begin{equation*}
  C_1 ~\sqsubseteq~ C_2.
\end{equation*}
An \emph{ABox}, $\A$, is a finite set of assertions of
the form
\begin{equation*}
  A_k(a_i), \qquad \neg A_k(a_i), \qquad P_k(a_i,a_j), \qquad \neg P_k(a_i,a_j).
\end{equation*}
Taken together, $\T$ and $\A$ constitute the 
\emph{knowledge base} (KB, for short) $\K=(\T,\A)$.

An \emph{interpretation} $\I=(\Delta^\I,
\cdot^\I)$ of this and other \DL\ languages consists of a
\emph{domain} $\Delta^\I\ne\emptyset$ and an interpretation
function $\cdot^\I$ that assigns to each object name $a_i$ an
element $a_i^{\I}\in \Delta^\I$, to each concept
name $A_k$ a subset $A_k^{\I}\subseteq \Delta^\I$,
and to each role name $P_k$ a binary relation
$P_k^{\I}\subseteq\Delta^\I\times\Delta^\I$. As in databases, we adopt the \emph{unique name
  assumption} (UNA): $a_i^{\I}\neq
a_j^{\I}$ for all $i\neq j$ (note, however, that OWL does not use the UNA).
The role and concept constructs are interpreted in $\I$ as follows:
\begin{align*}
  (P_k^-)^{\I} & ~=~ \{ (y,x) \in \Delta^\I \times
  \Delta^\I \mid (x,y)\in P_k^{\I} \}, &
 \bot^{\I} & ~=~\emptyset,\\
  (\mathop{\geq\! q} R)^{\I} & ~=~
  \big\{x\in\Delta^\I \mid\sharp\{y\in\Delta^\I
   \mid (x,y)\in R^{\I}\}\geq q\big\}, &
  (\neg C)^{\I} & ~=~ \Delta^\I\setminus C^{\I},\\
&&   (C_1\sqcap C_2)^{\I} & ~=~ C_1^{\I}\cap C_2^{\I},
\end{align*}
where $\sharp X$ denotes the cardinality of $X$. We use the standard
abbreviations: 
\begin{equation*}
  C_1 \sqcup C_2 = \neg(\neg C_1 \sqcap \neg C_2),\qquad
  \top = \neg\bot,\qquad
  \exists R = (\mathop{\geq 1} R),\qquad
  \mathop{\leq q} R = \neg(\mathop{\geq q + 1} R).
\end{equation*}
Concepts of the form $\mathop{\leq q} R$ and $\mathop{\geq q} R$ are
called \emph{number restrictions}, and those of the form $\exists R$
are called \emph{existential concepts}.

The \emph{satisfaction relation} $\models$ is defined as expected:
\begin{align*}
  \I\models C_1\sqsubseteq C_2 & \quad\text{iff}\quad
  C_1^{\I}\subseteq C_2^{\I}, \\
  \I\models A_k(a_i) & \quad\text{iff}\quad
  a_i^{\I}\in A_k^{\I}, & %
  \I\models P_k(a_i,a_j) & \quad\text{iff}\quad
  (a_i^{\I},a_j^{\I})\in P_k^{\I},  \\
  \I\models \neg A_k(a_i) & \quad\text{iff}\quad
  a_i^{\I}\notin A_k^{\I},  &
  \I\models \neg P_k(a_i,a_j) & \quad\text{iff}\quad
  (a_i^{\I},a_j^{\I})\notin P_k^{\I}.
\end{align*}
A knowledge base $\K=(\T,\A)$ is said to be \emph{satisfiable} (or
\emph{consistent}) if there is an interpretation $\I$ satisfying all
the members of $\T$ and $\A$. In this case we write $\I\models\K$ (as
well as $\I\models\T$ and $\I\models\A$) and say that $\I$ is a
\emph{model of} $\K$ (and of $\T$ and $\A$). The \emph{satisfiability
  problem}---given a KB $\K$, decide whether $\K$ is satisfiable---is
the main reasoning problem we consider in this article. 
\emph{Subsumption} (given an inclusion $C_1 \sqsubseteq C_2$ and a TBox $\T$, decide whether $\I \models C_1^\I \subseteq C_2^\I$ for every model $\I$ of $\T$; or $\T \models C_1 \sqsubseteq C_2$  in symbols) and \emph{concept
satisfiability} (given a concept $C$ and a TBox $\T$, decide whether there is a model $\I$ of $\T$ such that $C^\I\neq \emptyset$; or $\T \not\models C \sqsubseteq \bot$) are reducible to satisfiability. For
example, to check whether $\T \models C_1 \sqsubseteq C_2$
we can construct a new KB $\K = ( \T \cup \{A \sqsubseteq C_1, A
\sqsubseteq \neg C_2 \}, \{A(a)\})$ with a fresh concept name $A$, and
check whether $\K$ is \emph{not} satisfiable.

The two sub-languages of \DLbn{} we deal with in this article are obtained by restricting the Boolean operators on concepts. In
\DLkn{} TBoxes,\!\footnote{The Krom fragment of first-order logic consists of  formulas in prenex normal form whose quantifier-free part is a conjunction of binary clauses.}
concept inclusions are of the form
\begin{equation*}\tag{\textit{krom}}
  B_1 ~\sqsubseteq~ B_2, \qquad B_1 ~\sqsubseteq~ \neg B_2
  \qquad\text{or}\qquad \neg B_1 ~\sqsubseteq~ B_2.
\end{equation*}
(Here and below $B_1$, $B_2$ are basic concepts.)
\DLcn{} only uses concept inclusions of the form
\begin{equation*}\tag{\textit{core}}
  B_1 ~\sqsubseteq~ B_2 \qquad\text{or}\qquad
  B_1 \sqcap B_2 ~\sqsubseteq~ \bot.
\end{equation*}
As $B_1 \sqsubseteq \neg B_2$ is equivalent to $B_1 \sqcap B_2
\sqsubseteq \bot$, \DLcn{} is a sub-language of \DLkn. Although the Krom fragment does not seem to be more useful for
conceptual modelling than \DLcn, we shall see in
Remark~\ref{krom} that temporal extensions of \DLkn{} can capture some important temporal modelling constructs that are not representable by the corresponding extensions of \DLcn.

Most of the constraints in the company conceptual model from Section~\ref{subsec:mot-ex} were represented by means of \DLcn{} concept inclusions. The only exceptions were the covering constraints $\ex{Manager} \sqsubseteq \ex{AreaManager} \sqcup \ex{TopManager}$ and $\ex{OrganisationalUnit} \sqsubseteq \ex{Department} \sqcup \ex{InterestGroup}$, which belong to the language \DLbn, and  the role inclusion $\ex{manages} \sqsubseteq \ex{worksOn}$.
The extra expressive power, gained from the addition of covering constraints to \DLcn, comes at a price~\cite{ACKZ:aaai07}: the satisfiability problem is \NLogSpace-complete for \DLcn{} and \DLkn{} KBs and \NP-complete for \DLbn{} KBs.

The straightforward extension of \DLcn{} with role inclusions of the form
\begin{align*}
R_1 ~\sqsubseteq~ R_2 \qquad (\text{with \quad $\I\models R_1\sqsubseteq R_2 \quad\text{iff}\quad
  R_1^{\I}\subseteq R_2^{\I}$})
\end{align*}
leads to an even higher complexity: satisfiability becomes \ExpTime-complete~\cite{ACKZ:jair09}. The reason for this is the interaction of functionality constraints and role inclusions such as
\begin{equation*}
R_1 \sqsubseteq R_2\quad\text{and}\quad \mathop{\geq 2} R_2 \sqsubseteq \bot.
\end{equation*}
Note that inclusions of this sort are required when we use relationship generalisation with reification (see Section~\ref{subsec:mot-ex}).
If we restrict this interaction in  TBoxes $\T$ by requiring that no role $R$ can occur in $\T$ in both a role inclusion of the form $R' \sqsubseteq R$ and a number restriction $\mathop{\geq q} R$ or $\mathop{\geq q} R^-$ with $q\ge 2$, then the complexity of satisfiability checking with such TBoxes matches that of the language  without role inclusions. The extension of $\DLan$, where $\alpha \in \{{\it core}, {\it krom}, {\it bool} \}$, with role inclusions satisfying the condition above is denoted by $\smash{\rDLaHN}$; without this condition, the extension is denoted by $\DL_\alpha^\mathcal{HN}$.
Table~\ref{table:DL-Lite:languages} summarises the complexity of the KB satisfiability problem for \DL{} logics (for details, consult~\cite{ACKZ:jair09}).

\begin{table}[t]
\tbl{Complexity of the \DL{} logics.\label{table:DL-Lite:languages}}{
\centering\renewcommand{\arraystretch}{1.3}
\begin{tabular}[c]{|c|c|c|c|}
  \hline
  \raisebox{-2pt}{concept}        & \multicolumn{3}{c|}{role inclusions}\\ \cline{2-4}
  \raisebox{4pt}{inclusions} & $\DLan$ &  \rule[-4pt]{0pt}{16pt}$\rDLaHN$  & $\DL_\alpha^\mathcal{HN}$\\\hline\hline
Bool &  \NP & \NP & \ExpTime \\\hline 
  Krom & \NLogSpace & \NLogSpace &  \ExpTime \\\hline 
  core & \NLogSpace & \NLogSpace &  \ExpTime \\\hline
\end{tabular}
}
\end{table}

Thus, already in the atemporal case, a conceptual data model engineer has to search for a suitable compromise between the expressive power of the modelling language and efficiency of reasoning. In the temporal case, the trade-off between expressiveness and efficiency becomes even more dramatic.

In the next section, we extend the atemporal conceptual data model considered above with a number of temporal constructs and use them  to design a family of temporal description logics that are suitable for temporal conceptual modelling.


\section{Temporal Conceptual Modelling and Temporal Description Logic}
\label{sec:tex}

Temporal conceptual data models extend standard conceptual schemas
with means to visually represent temporal constraints imposed on
temporal database
instances~\cite{theodoulidis:et:al:is-91,tauzovich:er-91,%
jensen:snodgrass:tkde-99,AFM:lead03,mads-book:06,combi:et:al:er-08}.

When introducing a temporal dimension into conceptual data models,
time is usually modelled by a linearly ordered set of time instants,
so that at each moment of time we can refer to its past and future. In
this article, we assume that the flow of time is isomorphic to the
strictly linearly ordered set $(\Z,<)$ of integer
numbers. (For a survey of other options, including 
interval-based and branching models of time, consult,
e.g.,~\cite{Gabbayetal94,GabbayFingerReynolds2000a,GKWZ03}.)

We will now introduce the most important temporal conceptual modelling
constructs by extending the company information system example from
Section~\ref{subsec:mot-ex}.


\subsection{The Motivating Example Temporalised}
\label{subsec:t-mot-ex}

A basic assumption in temporal conceptual models is that
entities, relationships and attributes may freely change over
time as long as they satisfy the constraints of the schema at
\emph{each} time instant.
Temporal constructs are used to impose constraints on the temporal
behaviour of various components of conceptual schemas. We group these
constructs into three categories---\emph{timestamping},
\emph{evolution} and \emph{temporal cardinality
  constraints}---and illustrate them by the model in
Fig.~\ref{uml}.
%
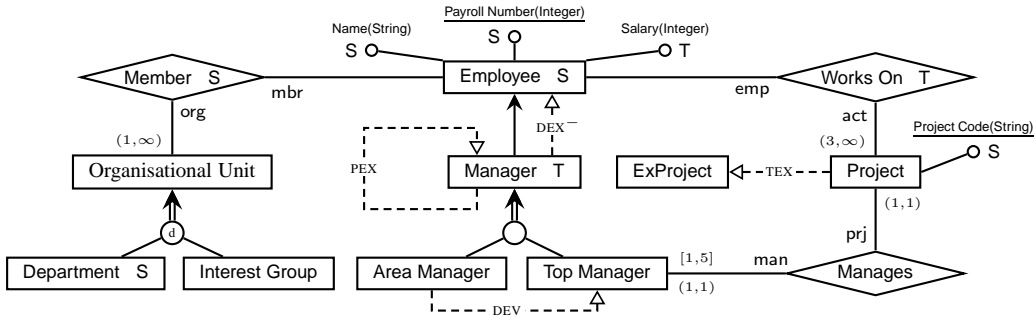
\begin{figure}[t]
\centering
\begin{tikzpicture}[>=latex,class/.style={rectangle,draw=black,thick,inner xsep=6pt},
conn/.style={draw,thick}, 
double-conn/.style={->,>=stealth,draw,thick,double distance=1pt, 
          decoration={markings,mark=at position 1 with {\arrow[scale=0.9]{>}}}, postaction={decorate}},
single-conn/.style={->,>=stealth,draw,thick,
          decoration={markings,mark=at position 1 with {\arrow[scale=1.5]{>}}}, postaction={decorate}},
temp/.style={draw,thick,densely dashed,>=open triangle 60,
	   decoration={markings,mark=at position 1 with {\arrow[scale=0.8]{>}}}, postaction={decorate}, shorten >= 5pt},
attribute/.style={circle,draw,thick,minimum size=1.5mm,inner sep=0pt},
relation/.style={draw,thick,diamond,aspect=4,inner ysep=0pt,inner xsep=6pt},yscale=0.9]\footnotesize
\node[class] (department) at (0,0) {\nmd{Department} \ \ {\scriptsize\textsf{S}}};
\node[class] (interest group) at (2.3,0) {\nmd{Interest Group}};
\node[circle,draw,thick,minimum size=3mm,inner sep=0pt] (disj) at (1.15,0.6) {\tiny d};
\draw[conn] (department) -- (disj);
\draw[conn] (interest group) -- (disj);
\node[class] (organisational unit) at (1.15,1.5) {Organisational Unit};
\draw[double-conn] (disj) -- (organisational unit);
\node[relation] (member) at (1.15,2.9) {\nmd{Member}  \ \ {\scriptsize\textsf{S}}};
\draw[conn] (organisational unit) -- (member) node [near start, left] {\scriptsize $_{(1,\infty)}$} node[near end, right] {\scriptsize $\mathsf{org}$};
\node[class] (area manager) at (4.6,0) {\nmd{Area Manager}};
\node[class] (top manager) at (6.8,0) {\nmd{Top Manager}};
\draw[temp] ($(area manager.south)+(0,0)$) -- ++(0,-0.3) -- ++(2,0) node[midway,fill=white,inner sep=1pt] {\tiny\textsc{DEV}} -|  ($(top manager.south)+(-0,0)$) ;
\node[circle,draw,thick,minimum size=3mm,inner sep=0pt] (cover) at (5.7,0.6) {};
\draw[conn] (area manager) -- (cover);
\draw[conn] (top manager) -- (cover);
\node[class] (manager) at (5.7,1.5) {\nmd{Manager}  \ \ {\scriptsize\textsf{T}}};
\draw[temp] ($(manager.south)+(-0.5,0)$) -- ++(0,-0.3) -- ++(-1.5,0) -- ++(0,1.2) node[midway,fill=white,inner sep=1pt] {\tiny\textsc{PEX}} -- ++(1.5,0) -- ($(manager.north)+(-0.5,0)$);
\draw[double-conn] (cover) -- (manager);
\node[class] (employee) at (5.7,2.9) {\nmd{Employee}  \ \ {\scriptsize\textsf{S}}};
\draw[temp] ($(manager.north)+(0.5,0)$) -- ($(employee.south)+(0.5,0)$) node[midway,fill=white,inner sep=1pt] {\tiny\hspace*{1em}\textsc{DEX$^-$}};
\draw[single-conn] (manager) -- (employee);
\draw[conn] (employee) -- (member) node[very near end, below] {\scriptsize $\mathsf{mbr}$}; 
\node[attribute,label=above:{\nms{Salary(Integer)}},label=right:{{\scriptsize\textsf{T}}}] (salary) at (7.7,3.3) {};
\draw[conn] (employee.north east) -- (salary);
\node[attribute,label=above:{\nms{Name(String)}},label=left:{{\scriptsize\textsf{S}}}] (name) at (3.8,3.3) {};
\draw[conn] (employee.north west) -- (name);
\node[attribute,label=above:{\nms{\underline{Payroll Number(Integer)}}},label=left:{{\scriptsize\textsf{S}}}] (payslip) at (5.7,3.5) {};
\draw[conn] (employee) -- (payslip);
\node[relation] (manages) at (10.5,0) {\nmd{Manages}};
\draw[conn] (top manager) -- (manages) node[very near end, above] {\scriptsize $\mathsf{man}$} node [near start, above] {\scriptsize $_{[1,5]}$} node [near start, below] {\scriptsize $_{(1,1)}$}; 
\node[relation] (works on) at (10.5,2.9) {\nmd{Works On}   \ \ {\scriptsize\textsf{T}}};
\draw[conn] (employee) -- (works on) node[very near end, below] {\scriptsize $\mathsf{emp}$}; 
\node[class] (project) at (10.5,1.5) {\nmd{Project}};
\node[class] (exproject) at (7.8,1.5) {\nmd{ExProject}};
\draw[temp] (project) -- (exproject) node[midway,fill=white,inner sep=1pt] {\tiny\textsc{TEX}};
\node[attribute,label=above:{\nms{\underline{Project Code(String)}}},label=right:{{\scriptsize\textsf{S}}}] (p code) at (11.8,1.8) {};
\draw[conn] (project.east) -- (p code);
\draw[conn] (project) -- (manages) node[near end, left] {\scriptsize $\mathsf{prj}$} node [near start, right] {\scriptsize $_{(1,1)}$}; 
\draw[conn] (project) -- (works on) node[near end, left] {\scriptsize $\mathsf{act}$} node [near start, left] {\scriptsize $_{(3,\infty)}$}; 
\end{tikzpicture}
\caption{A temporal conceptual model of a company information system.}
\label{uml}
\end{figure}

\emph{Timestamping constraints}~\cite{theodoulidis:et:al:is-91,%
  gregersen:jensen:tr98,gregersen:jensen:tkde-99,finger:mcbrien:2000,%
  Ar:Fr:er-99,mads-book:06} distinguish between entities,
relationships and attributes that are
\begin{akrzitemize}
\item \emph{temporary} in the sense that no element belongs to them at all moments of time,

\item \emph{snapshot}, or time-invariant, in the sense that their interpretation does not change with time,

\item \emph{unconstrained} (all others).
\end{akrzitemize}
In temporal entity-relationship diagrams, the 
temporary entities, relationships and attributes are marked with {\sf T} and the snapshot ones with {\sf S}.
In Fig.~\ref{uml}, \nm{Employee} and \nm{Department} are snapshot entities,
\nm{Name}, \nm{Payroll Number} and \nm{Project Code} are snapshot attributes and
\nm{Member} a snapshot relationship.  On the other hand, \nm{Manager} is a
temporary entity, \nm{Salary} a temporary attribute, and \nm{Works On} a
temporary relationship.

There are (at least) two ways of representing timestamping constraints in temporal description logics. One of them is to introduce special names for temporary and snapshot concepts and roles, and interpret them accordingly. Another way is to employ a \emph{temporal operator} $\SVbox$, which is read as `always' or `at all---past, present and future---time instants\hbox to 0pt{.}' Intuitively, for a concept $C$, $\SVbox C$ contains those elements that belong to $C$ at all time instants. Using this operator, the constraints `\nm{Employee} is a snapshot entity' and `\nm{Manager} is a temporary entity' can be represented as follows:
\begin{align*}
  \ex{Employee} &~\sqsubseteq~\SVbox \ex{Employee}, &
  \SVbox\ex{Manager} &~\sqsubseteq~\bot.
\end{align*}
The first inclusion says that, at any moment of time, every element of \nm{Employee} has always been and will always be an element of
\nm{Employee}. The second one states that no element can belong to
\nm{Manager} at all time instants.
Note that both of these concept inclusions are meant to hold
\emph{globally}, that is, at all moments of time.

The same temporal operator $\SVbox$ together with \emph{rigid} roles
(i.e., roles that do not change over time) can be used to capture
timestamping of reified relationships.  If the relationship \nm{Member}
is reified by the concept \ex{C-Member} with two functional roles
\ex{org} and \ex{mbr}, satisfying the concept inclusions similar
to~\eqref{eq:reifi:1} and~\eqref{eq:reifi:2}, then
the requirement that both roles \ex{org} and \ex{mbr}
are rigid ensure that \nm{Member} is a snapshot relationship.  On the
other hand, for the reified temporary relationship \nm{Works On} we
require the concept inclusion
\begin{align*}
  \SVbox \ex{C-WorksOn} &~\sqsubseteq~\bot
\end{align*}
and two \emph{flexible} roles \ex{emp} and \ex{act}, which can change
arbitrarily.
Rigid roles are also used to represent both snapshot attributes and
snapshot binary relationships. Temporary attributes can be captured by
flexible roles or by using temporalised roles:
\begin{equation*}
\exists \SVbox \ex{salary} \sqsubseteq \bot,
\end{equation*}
where $\SVbox \ex{salary}$ denotes the intersection of the relations
$\ex{salary}$ at all time instants.

\smallskip

\emph{Evolution constraints} control how the domain elements evolve
over time by `migrating' from one entity to
another~\cite{gupta:hall:icde-91,Mendelzon:94,Su97,APS:amai07}. We distinguish between \emph{qualitative} evolution
constraints that describe generic temporal behaviour but do not
specify the moment of migration, and \emph{quantitative} evolution (or
transition) constraints that specify the exact moment of migration.
The dashed arrow marked with {\sc tex} ({\sc t}ransition {\sc
  ex}tension\footnote{We refer to~\cite{AKRZ:ER10} for a detailed
  explanation of the various evolution constraints and their naming
  convention.}) in Fig.~\ref{uml} is an example of a quantitative
evolution constraint meaning that each project expires in exactly
one time unit (one year) and becomes an instance of \nm{ExProject}.  The dashed arrow marked with
{\sc dev} ({\sc d}ynamic {\sc ev}olution) is a qualitative evolution
constraint meaning that every area manager will eventually become a
top manager. The {\sc dex}$^-$ ({\sc d}ynamic {\sc ex}tension) dashed
arrow says that every manager was once an employee, while the \textsc{pex} ({\sc p}ersistent {\sc ex}tension) dashed arrow means that a
manager will always be a manager and cannot be demoted.

In temporal description logic, these evolution constraints are
represented using temporal operators such as `at the next moment of
time' $\Rnext$, `sometime in the future' $\Rdiamond$, `sometime in
the past' $\Ldiamond$ and `always in the future' $\Rbox$:
\begin{align*}
\ex{Project} &~\sqsubseteq~\Rnext\ex{ExProject}, &
\ex{AreaManager} &~\sqsubseteq~\Rdiamond\ex{TopManager},\\
\ex{Manager} &~\sqsubseteq~\Ldiamond\ex{Employee}, &
\ex{Manager} &~\sqsubseteq~\Rbox\ex{Manager}.
\end{align*}
Again, these concept inclusions must hold globally. In the following,
the evolution constraints that involve $\Rdiamond$ and $\Ldiamond$
will be called \emph{migration constraints}.

\smallskip

\emph{Temporal cardinality constraints}~\cite{tauzovich:er-91,%
  mcbrien:et:al:cismod-92,gregersen:jensen:tr98} restrict the number
of times an instance of an entity participates in a
relationship. \emph{Snapshot} cardinality constraints do that at each
moment of time, while \emph{lifespan} cardinality constraints impose
restrictions over the entire existence of the instance as a member of
the entity.  In Fig.~\ref{uml}, we use $(k,l)$ to specify the
snapshot cardinalities and $[k,l]$ the lifespan cardinalities: for
example, at any moment, every top manager manages exactly one
project, but not more than five different projects over the whole
career. If the relationship \nm{Manages} is not reified and
represented by a role in temporal description logic then these two
constraints can be expressed by the following concept inclusions:
\begin{align*}
\ex{TopManager} &~\sqsubseteq~\exists \ex{manages} \sqcap  \mathop{\leq 1}\ex{manages}, &
\ex{TopManager} &~\sqsubseteq~\mathop{\leq 5}\SVdiamond \ex{manages},
\end{align*}
where $\SVdiamond$ means `sometime' (in the past, present or future),
and so $\SVdiamond \ex{manages}$ is the union of the relations 
$\ex{manages}$ over \emph{all} time instants.  
Snapshot and
lifespan cardinalities can also be expressed in a similar
way even for reified relationships (see, e.g.,~\eqref{eq:reifi:card} which captures snapshot
cardinalities). Observe that the above inclusions imply, in particular, that no one can remain a top manager for longer than five years (indeed, each top manager manages at least one project a year, each project expires in a year, and no top manager can manage more than five projects throughout the lifetime). However, this is inconsistent with `every manager always remains a manager'\!, and so the entity \nm{Manager} cannot be populated by instances, which, in turn, means that \nm{Project} must also be empty (since each project is managed by a top manager). One can make these entities consistent by, for example, dropping the persistence constraint on \nm{Manager} or the upper lifespan cardinality bound on the number of projects a top manager can manage throughout the lifetime. In large schemas, situations like this can easily remain undetected if the quality check is performed manually.

To represent temporal database instances, we use assertions like
$\Lnext\ex{Manager}(\ex{bob})$ for `Bob was a manager last year' and
$\Rnext \ex{manages}(\ex{bob},\ex{cronos})$ for `Bob will manage
Cronos next year.'

\subsection{Temporal \DL{} Logics}
\label{sec:tdl}

It is known from temporal logic~\cite{Gabbayetal94} that all the
temporal operators used in the previous section can be expressed in
terms of the binary operators $\S$ `since' and $\U$ `until' (details
will be given below). So we formulate our `base' temporal extension
\TuDLbn{} of the description logic \DLbn{} using only these two
operators.
The language of \TuDLbn{} contains \emph{object names} $a_0, a_1,
\dots $, \emph{concept names} $A_0, A_1, \dots$, \emph{flexible role
  names} $P_0, P_1, \dots$, and \emph{rigid role names} $G_0, G_1,
\dots$. \emph{Role names} $S$ and \emph{roles} $R$ are defined by
taking
\begin{align*}
  S \ \ ::= \ \ & P_i \ \ \mid \ \  G_i\qquad\text{and}\qquad   R \ \ ::= \ \ S \ \ \mid\ \  S^-.
\end{align*}
We say $R$ is a \emph{rigid role} if it is of the form $G_i$ or $G_i^-$, for a rigid role name $G_i$. 
\emph{Basic concepts} $B$, \emph{concepts} $C$ and \emph{temporal concepts} $D$ are given by the following grammar: 
\begin{align*}
  B \ \ ::= \ \  & \bot \ \ \mid\ \ 
  A_i\ \ \mid\ \  \mathop{\geq q} R, 
  \\
  C \ \ ::= \ \ & B \ \ \mid \ \ D  \ \ \mid
  \ \ \neg C \ \ \mid\ \ C_1\sqcap C_2,\\
  D \ \ ::= \ \ & C \ \ \mid \ \ C_1\U C_2\mid\ \ C_1\S C_2,
\end{align*}
where, as before, $q$ is a positive integer given in binary. (We use two separate rules for $C$ and $D$ here because in the definitions of the fragments of \TuDLbn{} below these rules will be restricted to the corresponding sub-Boolean and temporal fragments.)
A \TuDLbn{} \emph{TBox}, $\T$, is a finite set of
\emph{concept inclusions} of the form $C_1 \sqsubseteq C_2$. An
\emph{ABox}, $\A$, consists of assertions of the form
\begin{equation*}
\nxt^n
A_k(a_i),\qquad \nxt^n \neg A_k(a_i),\qquad \nxt^n S(a_i,a_j)\quad\text{and}\quad \nxt^n \neg S(a_i,a_j),
\end{equation*}
where $A_k$ is a concept name, $S$ a (flexible
or rigid) role name, $a_i$, $a_j$ object names and, for $n \in \Z$,
\begin{equation*}
\nxt^n = \underbrace{\Rnext\cdots\Rnext}_{n \text{ times}}, \   \text{ if } n \geq 0,\qquad\text{and}\quad
\nxt^n = \underbrace{\Lnext\cdots\Lnext}_{-n \text{ times}},  \ \text{ if } n < 0.
\end{equation*}
Note that we use $\nxt^n$ as an abbreviation and take the size of $\nxt^n$ to be $n$ (in other words, the numbers $n$ in ABox assertions are given in \emph{unary}).
Taken together, the TBox $\T$ and ABox $\A$ form the
\emph{knowledge base} (KB) $\K=(\T,\A)$.

A \emph{temporal interpretation} is a pair $\I = (\Delta^\I,\cdot^{\I(n)})$, where $\Delta^\I$ is a non-empty interpretation domain and $\I(n)$ gives a standard DL interpretation 
for each time instant $n \in \Z$:
\begin{equation*}
  \I(n)~=~\bigl(\Delta^\I, a_0^\I,\dots ,A_0^{\I(n)},
  \dots ,P_0^{\I(n)},\dots ,G_0^{\I}, \dots\bigr).
\end{equation*}
We assume, however, that the domain $\Delta^\I$ and the interpretations $a_i^\I\in \Delta^\I$ of object names and $G_i^{\I} \subseteq
\Delta^\I\times\Delta^\I$ of rigid role names are the same for all $n\in \Z$. (For a discussion of the constant domain assumption, consult~\cite{GKWZ03}. Recall also that we adopt the UNA.) The interpretations $A_i^{\I(n)}\subseteq\Delta^\I$ of concept names and $P_i^{\I(n)} \subseteq \Delta^\I\times\Delta^\I$ of flexible role names can vary.
The atemporal constructs are interpreted in $\I(n)$ as before; we write $C^{\smash{\I(n)}}$ for the extension of $C$ in $\I(n)$.
The interpretation of the temporal operators is as follows:
\begin{align*}
  (C_1\U C_2)^{\I(n)} & ~=~
  \bigcup_{k>n}\bigl(C_2^{\I(k)}
  \cap \bigcap_{n < m < k}\hspace*{-0.5em} C_1^{\I(m)}\bigr),\\
  (C_1\S C_2)^{\I(n)} & ~=~ 
  \bigcup_{k<n}\bigl(C_2^{\I(k)} \cap \bigcap_{n > m > k}\hspace*{-0.5em}
  C_1^{\I(m)}\bigr).
\end{align*}
Thus, for example, $x \in (C_1\U C_2)^{\smash{\I(n)}}$ iff there is a moment $k>n$ such that $x \in C_2^{\smash{\I(k)}}$ and $x \in C_1^{\smash{\I(m)}}$, for all moments $m$ between $n$ and $k$. Note that the operators $\S$ and $\U$ (as well as the $\Box$ and $\Diamond$ operators to be defined below) are `strict' in the sense that their semantics does not include the current moment of time. The non-strict operators, which include the current moment, are obviously definable in terms of the strict ones. 

As noted above, for the aims of TCM it is enough to interpret concept inclusions in $\I$ \emph{globally}:
\begin{equation*}
\I \models C_1 \sqsubseteq C_2 \quad \text{iff} \quad C_1^{\I(n)} \subseteq C_2^{\I(n)} \quad \text{for \emph{all} $n \in \Z$}.
\end{equation*}
ABox assertions are interpreted relatively to the \emph{initial moment}, 0. Thus, we set:
\begin{align*}
\I \models \nxt^n A_k(a_i) \quad &\text{iff} \quad a_i^\I \in A_k^{\I(n)}, &
\I \models \nxt^n S(a_i,a_j) \quad &\text{iff} \quad (a_i^\I, a_j^\I) \in S^{\I(n)}, \\ %
\I \models \nxt^n \neg A_k(a_i) \quad &\text{iff} \quad a_i^\I \notin A_k^{\I(n)}, &
\I \models \nxt^n \neg S(a_i,a_j) \quad &\text{iff} \quad (a_i^\I, a_j^\I) \notin S^{\I(n)}.
\end{align*}
We call $\I$ a \emph{model} of a KB $\K$ and write
$\I\models \K$ if $\I$ satisfies all
elements of $\K$. $\K$ is \emph{satisfiable} if it  has a model. A concept $C$ (role $R$) is
\emph{satisfiable} with respect to $\K$ if there are a model
$\I$ of \K{} and $n\in \Z$ such that
$C^{\I(n)}\neq \emptyset$ (respectively,
$R^{\I(n)}\neq \emptyset)$. It is readily seen that the concept and role satisfiability problems are equivalent to KB satisfiability.

We now define a few fragments and extensions of the base language \TuDLbn. Recall that to say that $C$ is a snapshot concept, we need the `always' operator $\SVbox$ with the following meaning:
\begin{equation*}
(\SVbox C)^{\I(n)}  ~=~ \bigcap_{k \in \Z} C^{\I(k)}.
\end{equation*}
The dual operator `sometime' is defined as usual: $\SVdiamond C = \neg\SVbox \neg C$. 
In terms of $\S$ and $\U$, it can be represented as $\SVdiamond C = \top \U (\top \S C)$.  
Define
\zSVDLb{} to be the sublanguage of \TuDLbn{} the temporal concepts $D$
in which are of the form
\begin{equation*}\tag{U}
D \ \ ::=\ \ C \ \ \mid\ \ \SVbox C.
\end{equation*}
Thus, in \zSVDLb, we can express timestamping constraints (see Section~\ref{subsec:t-mot-ex}).

The temporal operators $\Rdiamond$ (`sometime in the future') and
$\Ldiamond$ (`sometime in the past') that are required for qualitative evolution constraints with the standard temporal logic semantics
\begin{equation*}
  (\Rdiamond C)^{\I(n)} ~=~ \bigcup_{k>n}  C^{\I(k)}\quad\text{and}\quad
  (\Ldiamond C)^{\I(n)} ~=~ \bigcup_{k<n}  C^{\I(k)}
\end{equation*}
can be expressed via $\U$ and $\S$ as $\Rdiamond C = \top \U C$ and $\Ldiamond C = \top \S C$; the operators $\Rbox$ (`always in the future') and $\Lbox$ (`always in the past') are defined as dual to $\Rdiamond$ and $\Ldiamond$: $\Rbox C = \neg \Rdiamond \neg C$ and $\Lbox C = \neg \Ldiamond \neg C$.
We define the fragment \TdDLbn{} of \TuDLbn{} by restricting the temporal concepts $D$ to the form
\begin{equation*}\tag{FP}
D \ \ ::= \ \ C \ \ \mid \ \  \Rbox C\ \ \mid \ \ \Lbox C.
\end{equation*}
Clearly, we have the following equivalences:
\begin{equation*}
\SVbox C = \Rbox\Lbox C\qquad \text{and}\qquad \SVdiamond C = \Rdiamond\Ldiamond C.
\end{equation*}
In what follows, these equivalences will be regarded as definitions for $\SVbox$ and $\SVdiamond$ in those languages where they are not explicitly present.
Thus, \TdDLbn{} is capable of expressing both time\-stamping and qualitative (but not quantitative) evolution constraints.

The temporal operators $\Rnext$ (`next time') and $\Lnext$ (`previous time'), used in quantitative evolution constraints, can be defined as $\Rnext C = \bot \U C$ and $\Lnext C = \bot \S C$, so that we have
\begin{equation*}
  (\Rnext C)^{\I(n)} = C^{\I(n+1)}\quad\text{and}\quad(\Lnext C)^{\I(n)} = C^{\I(n-1)}.
\end{equation*}
The fragment of \TuDLbn{} with temporal concepts of the form
\begin{equation*}
D \ \ ::=\ \  C \ \ \mid\ \ \Rbox C\ \  
\mid \ \ \Lbox C \ \ \mid \ \ \Rnext C \ \ \mid \ \  \Lnext C
\tag{FPX}
\end{equation*}
will be denoted by \TdxDLbn. In this fragment, we can express timestamping, qualitative and
quantitative evolution constraints.

Thus, we have the following inclusions between the languages introduced above:
\begin{equation*}
\zSVDLb \quad\subseteq\quad \TdDLbn \quad\subseteq\quad \TdxDLbn \quad\subseteq\quad \TuDLbn.
\end{equation*}

Similarly to the atemporal case, we can identify sub-Boolean
fragments of the above languages. A temporal TBox is called
a \emph{Krom} or a \emph{core} TBox if it contains only concept
inclusions of the form
\begin{gather*}\tag{\textit{krom}}
D_1 \sqsubseteq D_2, \qquad D_1 \sqsubseteq \neg D_2, \qquad \neg D_1 \sqsubseteq D_2,\\
\tag{\textit{core}}
D_1 \sqsubseteq D_2, \qquad D_1  \sqcap D_2\sqsubseteq \bot,
\end{gather*}
respectively, where the $D_i$ are temporal concepts defined
by~{(FPX)}, {(FP)} or {(U)} with
\begin{equation*}
C \ \ ::= \ \ B \ \ \mid \ \ D.
\end{equation*}
Note that no Boolean operators are allowed in the $D_i$.  This gives us 6
fragments: \TdxDLan{}, \TdDLan{} and \zSVDLan{}, for
$\alpha\in\{\textit{core},\textit{krom}\}$. 
\begin{remark}\label{subUS}
  We do not consider the core and Krom fragments of the full language
  with since ($\S$) and until ($\U$) because, as we shall see in
  Section~\ref{sec:compex-rigid-roles} (Theorem~\ref{thm:pspace:2}),
  these operators allow one to go beyond the language of binary
  clauses of the core and Krom fragments, and the resulting languages
  would have the same complexity as \TuDLbn{} (but less expressive).
\end{remark}
\begin{remark}\label{krom}
The introduced fragments of the full language $\TuDLbn$ do not contain $\Rdiamond$ and $\Ldiamond$. Both operators, however,  can be defined in the Krom and Bool fragments. For example, the concept inclusion $\Ldiamond B_1 \sqsubseteq \Rdiamond B_2$ can be represented by means of the inclusions
\begin{equation*}
\Rbox A_2 \sqsubseteq \Lbox A_1\qquad\text{ and }\qquad  A_i \sqsubseteq \neg B_i,\quad \neg B_i \sqsubseteq A_i, \quad\text{ for } i=1,2.
\end{equation*}
In the core fragments, where we do not have negation in the left-hand side, this trick does not work. Therefore, evolution constraints involving $\Ldiamond$ or $\Rdiamond$ (such as $\ex{Manager} \sqsubseteq
\Ldiamond\ex{Employee}$) are not expressible in the core fragments (but timestamping remains expressible). 
\end{remark}

As we have seen in our running example, in order to express lifespan
cardinality constraints, temporal operators on roles are required. For a role $R$ of the form
\begin{equation*}
R \ \ ::=\ \  S \ \ \mid\ \  S^- \ \ \mid \ \ \SVdiamond R \ \ \mid \ \  \SVbox R,
\end{equation*}
we define the extensions of $\SVdiamond R$ and $\SVbox R$ in an interpretation $\I$ by taking
\begin{equation*}
  (\SVdiamond R)^{\I(n)}=\bigcup_{k\in\Z}  R^{\I(k)}\quad\text{and}\quad
  (\SVbox R)^{\I(n)}=\bigcap_{k\in\Z}  R^{\I(k)}.
\end{equation*}
In this article we consider three extensions of $\DLbn$ with such temporalised roles, which are denoted by
${\smash{\textsl{T}^{\,\ast}_{\beta}\DLbn}}$, for $\beta\in\{\textit{X}, \textit{FP}, \textit{U} \}$, where
${\smash{\textsl{T}^{\,\ast}_{\textit{X}}\DLbn}}$ allows only
$\Lnext,\Rnext$ as the temporal operators on concepts.

We can also extend our languages with role inclusions, which are interpreted globally (in the same way as concept inclusions):
\begin{equation*}
\I\models R_1 \sqsubseteq R_2 \quad\text{iff}\quad R_1^{\I(n)} \subseteq R_2^{\I(n)}, \quad \text{ for \emph{all} } n\in\Z.
\end{equation*}
These extensions are denoted by \ensuremath{\smash{\textsl{T}_{\mathcal{US}}\DLbrn}}, \ensuremath{\smash{\textsl{T}_{\textit{FP}}\DL_\textit{bool}^{(\mathcal{HN})}}}, etc. 

In the remaining part of the article, we investigate the computational complexity of the satisfiability problem for the temporal extensions of the \DL{} logics designed above. But before that we briefly summarise the obtained results in the more general context of temporal description logics.

\subsection{Summary of the Complexity Results and Related Work}
\label{sec:summary}

The temporal \DL{} logics we analyse here are collected in
Table~\ref{table:TDL-Lite:languages} together with the obtained and
known complexity results. (Note that the complexity bounds in
Table~\ref{table:TDL-Lite:languages} are all tight except the case of $\TdDLcn$, where we only have an upper bound.) To avoid clutter, we omitted
from the table the logics of the form
\ensuremath{\smash{\textsl{T}_{\beta}\DL_\alpha^{(\mathcal{HN})}}},
whose complexity is the same as the complexity of the respective
\ensuremath{\smash{\textsl{T}_{\beta}\DL_\alpha^{\mathcal{N}}}}.

\begin{table}[t]
\tbl{Complexity of the temporal \DL{} logics.
\label{table:TDL-Lite:languages}}{%
\newcommand{\cplx}[3][4.5em]{\rule[-8pt]{0pt}{20pt}\hspace*{1em}\parbox{#1}{\centering #2\\[-1pt]#3}}
\begin{minipage}[t]{\linewidth}
\centering%
\renewcommand{\arraystretch}{1.3}
\begin{tabular}[c]{|c|c|c|c|}
  \hline
   \raisebox{-2pt}{concept}        & \multicolumn{3}{c|}{temporal constructs}\\ \cline{2-4}
  \raisebox{2pt}{inclusions} & $\U/\S$, $\raisebox{2pt}{\tiny$\bigcirc$}\hspace{-0.15em}_{\scriptscriptstyle F}/\raisebox{2pt}{\tiny$\bigcirc$}\hspace{-0.15em}_{\scriptscriptstyle P}$, $\Rbox/\Lbox$ $^{a}$ &  $\Rbox/\Lbox$  & $\SVbox$\\\hline
Bool &  \raisebox{-0.6ex}{\rule[-10pt]{0pt}{28pt}\parbox{7.5em}{\centering \TuDLbn{}\\[2pt]\TdxDLbn{}}}\hspace*{-0.5em} \cplx{\PSpace}{Thm.~\ref{thm:pspace}} & \TdDLbn \cplx[4em]{\NP}{Thm.~\ref{thm:np1}} & \zSVDLb \cplx{\NP}{Thm.~\ref{thm:np1}} \\\hline 
Krom & \TdxDLkn \cplx{\NP}{Thm.~\ref{thm:krom-core}} & \TdDLkn \cplx[4em]{\NP}{Thm.~\ref{thm:krom-core}} &  \zSVDLk \cplx[5.4em]{\NLogSpace}{Thm.~\ref{thm:nlogspace}} \\\hline 
core & \TdxDLcn \cplx{\NP}{Thm.~\ref{thm:krom-core}} & \TdDLcn \cplx[4em]{$\leq$ \PTime}{Thm.~\ref{thm:core}} &  \zSVDLc \cplx[5.4em]{\NLogSpace}{} \\\hline\hline 
 \parbox{5.2em}{\centering temporalised\\[-2pt]roles} &   \TxDLbn  \cplx{undec.}{Thm.~\ref{thm:undec}}    &  \TdrDLbn \cplx{undec.}{Thm.~\ref{thm:undec}} & \TurDLbn \cplx{\NP}{Thm.~\ref{thm:R:NP}} \\\hline\hline
\rule[-14pt]{0pt}{32pt}\parbox{5.2em}{\centering unrestricted\\[-2pt]role\\[-2pt]inclusions} &   \ensuremath{\smash{\textsl{T}_{\mathcal{US}}\DLbrn}}\hspace*{-1em} \cplx[4.5em]{undec.}{\cite{GKWZ03}}    &  \ensuremath{\smash{\textsl{T}_{F\!P}\DLbrn}}\hspace*{-1em} \cplx[4.5em]{undec.}{\cite{GKWZ03}} &  \ensuremath{\smash{\textsl{T}^{\,\ast}_{U}\DLbrn}}\hspace*{-1em} \cplx[5.5em]{2\ExpTime}{\cite{ALT:ijcai07}} \\\hline
\end{tabular}\\[4pt]
\hspace*{0.3em}\rule{10em}{0.1pt}\hfill\mbox{}\\
\hspace*{0.5em}$^{a}$ Sub-Boolean fragments of the language with $\U/\S$ are not defined (see Remark~\ref{subUS}).\hfill\mbox{}
\end{minipage}%
}
\end{table}


The analysis of the constructs required for temporal conceptual
modelling in Sections~\ref{subsec:mot-ex} and~\ref{subsec:t-mot-ex}
has led us to temporalisations of \DL{} logics, interpreted over the Cartesian products of object domains and the
flow of time $(\Z,<)$, in which (1) the future and past
temporal operators can be applied to concepts; (2) roles can be
declared flexible or rigid; (3) the `undirected' temporal operators
`always' and `sometime' can be applied to roles; (4) the concept
and role inclusions are global, and the database (ABox) assertions are specified
to hold at particular moments of time.

The minimal logic required to capture all of the temporal and static
conceptual modelling constraints is
\ensuremath{\smash{\textsl{T}^{\,\ast}_{\textit{FPX}}\DL_\textit{bool}^{\mathcal{HN}}}};
alas, it is undecidable. In fact, even the logic \TxDLbn, capturing
only the quantitative evolution constraints, lifespan
cardinalities and covering, 
is undecidable. Replacing `quantitative' with `qualitative'---i.e.,
considering \TdrDLbn---does not beat undecidability in the presence of
lifespan cardinalities.  Both these
undecidability results will still hold if we replace arbitrary cardinality constraints ($\mathcal{N}$) with role functionality. 
To regain decidability in the presence of temporalised roles, we have
to limit the temporal operators on concepts to the undirected
operators $\SVdiamond/\SVbox$---thus restricting the language to only
timestamping and lifespan cardinalities. We show that the logic
\TurDLbn{} is \NP-complete using the quasimodel technique.

Logics in the last row have arbitrary role inclusions, which together with functionality constraints are expressive enough to model all \ALC{} constructors~\cite{ACKRZ:er07,ACKZ:jair09}, and so the resulting TDLs are as complex as the corresponding temporal extensions of \ALC.

On a positive note, logics with restricted role inclusions and no temporal operators on roles exhibit much better computational properties. Our smallest logic,
$\smash{\textsl{T}_{\textit{U}}\DL_\textit{core}^{(\mathcal{HN})}}$,
 is \NLogSpace-complete. In the temporal dimension, it can only
express timestamping constraints. It can also capture all the static
constraints that are different from covering and do not involve any
interaction between role inclusions and number restrictions. Extending the language with covering leads to the loss of tractability in $\smash{\textsl{T}_{\textit{U}}\DL_\textit{bool}^{(\mathcal{HN})}}$.
When covering is not needed and we are interested in temporal
constraints different from lifespan cardinalities, we can regain
tractability if we restrict the language to timestamping and evolution
constraints that only capture persistence
($\smash{\textsl{T}_{\textit{FP}}\DL_\textit{core}^{(\mathcal{HN})}}$).
If we also require migration constraints (that involve 
$\Ldiamond$ and $\Rdiamond$; see Remark~\ref{krom}) then we can
use the Krom language $\smash{\textsl{T}_{\textit{FP}}\DL_\textit{krom}^{(\mathcal{HN})}}$,
which is again \NP-complete.  Surprisingly, the addition of the full set of
evolution constraints  makes reasoning \NP-complete even in
$\smash{\textsl{T}_{\textit{FPX}}\DL_\textit{core}^{(\mathcal{HN})}}$.

To better appreciate the formalisms designed in this article, we
consider them in a more general context of temporal description logics (for more detailed surveys,
consult~\cite{ArFr01,AF05,GKWZ03,LuWoZa-TIME-08}). 
Historically, the first temporal extensions of DLs were \emph{interval-based}~\cite{schmiedel:90}. \citeN{Bett97} considered interval-based temporal extensions of \ALC{} in the style of \citeN{HaSh91} and established their undecidability. \citeN{ArFr98} gave a subclass of decidable interval-based
temporal DLs.

Numerous \emph{point-based} temporal DLs have been constructed and
investigated since Schild's seminal paper~\citeyear{Schild93}. 
One of the lessons of the 20-year history of the discipline
is that logics interpreted over two- (or more) dimensional structures
are very complex and sensitive to subtle interactions between
constructs operating in different dimensions. 
The first TDLs suggested for representing TCMs were based on the expressive DLs \DLR{} and \ALCQI~\cite{Ar:Fr:er-99}. However, it turned out that already a single rigid role and the operator $\Rdiamond$ (or $\Rnext$) on \ALC-concepts led to undecidability~\cite{WoZa99b}. In fact, to construct an undecidable TDL, one only needs a rigid role and three concept constructs: $\sqcap$, $\exists R. C$ and $\Rdiamond$, that is, a temporalised $\mathcal{EL}$~\cite{AKLWZ:time07}. There have been several attempts to tame the bad computational behaviour of TDLs by imposing various restrictions on the DL and temporal components as well as their interaction.

One approach was to disallow rigid roles and temporal operators on roles, which resulted in \ExpSpace-complete temporalisations of \ALC~\cite{AFWZ:02,GKWZ03}. Such temporalisations reside in the monodic fragment\footnote{A temporal formula is \emph{monodic} if all of its sub-formulas beginning with a temporal operator have at most one free variable.} of first-order temporal logic~\cite{hodk:wolter:mz:dec-tl:99}, for which 
tableau~\cite{LutzSWZ02,KontchakovLWZ04} and
resolution~\cite{DegtyarevFK06} reasoning algorithms have been developed
and implemented~\cite{HustadtKRV04,Guensel05,LudwigH10}.
Another idea was to weaken the whole temporal component to the `undirected' temporal operators $\SVbox$ and $\SVdiamond$ (which
cannot discriminate between past, present and future) on concepts and roles, resulting in a 2\ExpTime-complete extension of \ALC~\cite{ALT:ijcai07}.
The third approach was to allow arbitrary temporal operators on \ALC{} axioms only (but not on concepts or roles)~\cite{BaaGhiLu-KR08,BaaderGL12}, which gave an \ExpTime-complete logic. The addition of rigid concepts to this logic increases the complexity to \NExpTime, while rigid concepts and roles make it 2\ExpTime-complete.
Finally, the fourth approach, which dates back to Schild~\citeyear{Schild93}, was to use only global axioms. In this case, \ALC{} with temporal operators on concepts is \ExpTime-complete, which matches the complexity of \ALC{} itself (in contrast, temporal operators on axioms and concepts make the less expressive $\DL_\textit{bool}$ \ExpSpace{}-complete~\cite{AKLWZ:time07}).

As argued above, global axioms are precisely what we need in TCM. On the other hand, to capture timestamping and evolution constraints  we need the full set of temporal operators on concepts, while to capture lifespan cardinalities and timestamping on
relations we need temporalised or rigid roles. To achieve decidability in the case with rigid roles, we also weaken \ALC{} to \DL, which, as we have seen above, perfectly suits the purpose of conceptual modelling. We thus start from the first promising results of~\citeN{AKRZ:09}, which demonstrated that even with rigid roles temporal extensions of $\DLbn$ could be decidable, and extend them to various combinations of temporal operators and  different sub-Boolean fragments of $\DLbn$, proving encouraging complexity results and showing how these logics can represent TCM schemas.

The results in the first three rows of Table~\ref{table:TDL-Lite:languages} are established by using embeddings into the propositional temporal logic \PTL. To cope with the sub-Boolean core and Krom logics, we introduce, in Section~\ref{sec:2ltl}, a number of new fragments of \PTL{} by restricting the type of clauses in Separated Normal Form~\cite{DBLP:conf/ijcai/Fisher91} and the available temporal operators. The obtained complexity classification in Table~\ref{tab:PTL-fragments} helps understand the results in the first three rows of Table~\ref{table:TDL-Lite:languages}.


\section{Reducing Temporal \DL{} to Propositional Temporal Logic}
\label{sec:tdl-rigid:Z:bool-until-complex}

In this section we reduce the satisfiability problem for \TuDLbn{} KBs to the satisfiability problem for propositional temporal logic. This will be achieved in two steps. First, we reduce \TuDLbn{} to
the one-variable first-order temporal logic \QTLi{} \cite{GKWZ03}. And then we show that satisfiability of the resulting \QTLi-formulas can be further reduced to satisfiability of \QTLi-formulas without
positive occurrences of existential quantifiers, which are essentially propositional temporal formulas. 
To simplify presentation, we consider here the logic \TuDLbn{}
(without role inclusions). The full
\ensuremath{\smash{\textsl{T}_{\mathcal{US}}\DL_\textit{bool}^{(\mathcal{HN})}}}
requires a bit more elaborate (yet absolutely routine) reduction that
is similar to the one given by~\citeN{AKRZ:09} for the atemporal case.


\subsection{First-Order Temporal Logic}

The language of \emph{first-order temporal logic} $\QTL$ contains \emph{predicates} $P_0, P_1, \dots$ (each with its arity), \emph{variables} $x_0,x_1,\dots$ and \emph{constants} $a_0,a_1,\dots$. \emph{Formulas} $\varphi$ of $\QTL$ are defined by the grammar:
\begin{equation*}
\varphi \ \ ::= \ \ P_i(t_1,\dots,t_{k_i}) \ \ \mid \ \ \bot \ \ \mid \ \ \forall x\,\varphi \ \ \mid \ \ \neg \varphi \ \ \mid \ \ \varphi_1 \land \varphi_2 \ \ \mid \varphi_1 \U \varphi_2 \ \ \mid \ \ \varphi_1 \S \varphi_2,
\end{equation*}
where $k_i$ is the arity of $P_i$ and the $t_j$ are \emph{terms}---i.e., variables or constants. These formulas are interpreted in \emph{first-order temporal models} $\Mmf$, which, for every $n\in \Z$, give a first-order structure 
\begin{equation*}
\Mmf(n) = (\Delta^\Mmf, a_0^{\Mmf}, a_1^{\Mmf}, \dots, P_0^{\Mmf,n},P_1^{\Mmf,n}\dots)
\end{equation*}
with the same domain $\Delta^\Mmf$, the same $a_i^\Mmf\in\Delta$, for each constant $a_i$, and where  $P_i^{\Mmf,n}$ is a $k_i$-ary relation on $\Delta^{\Mmf}$, for each predicate $P_i$ of arity $k_i$ and each $n\in\Z$. An \emph{assignment} in $\Mmf$ is a function, $\mathfrak{a}$, that maps variables to elements of $\Delta^\Mmf$. For a term $t$, we write $t^{\mathfrak{a},\Mmf}$ for $\mathfrak{a}(x)$ if $t = x$, and for $a^\Mmf$ if $t = a$. The semantics of $\QTL$ is standard (see, e.g.,~\cite{GKWZ03}):
\begin{align*}
&\Mmf,n\models^\mathfrak{a} P(t_1,\dots,t_k) \quad\text{iff}\quad (t_1^{\mathfrak{a},\Mmf},\dots,t_k^{\mathfrak{a},\Mmf})\in P^{\Mmf,n},\\
&\Mmf,n \not\models^\mathfrak{a} \bot,\\
&\Mmf,n\models^\mathfrak{a} \forall x\,\varphi \quad\text{iff}\quad \Mmf,n\models^{\smash{\mathfrak{a}'}} \varphi, \text{ for all assignments $\smash{\mathfrak{a}'}$ that differ from $\mathfrak{a}$ on $x$ only},\\
&\Mmf,n \models^\mathfrak{a} \neg \varphi \quad\text{iff}\quad \Mmf,n \not \models^\mathfrak{a} \varphi,\\
&\Mmf,n \models^\mathfrak{a} \varphi_1 \land \varphi_2 \quad\text{iff}\quad \Mmf,n \models^\mathfrak{a} \varphi_1 \text{ and } \Mmf,n \models^\mathfrak{a} \varphi_2,\\
&\Mmf,n\models^\mathfrak{a} \varphi_1\U\varphi_2 \quad\text{iff}\quad \text{there is } k > n \text{ with } \Mmf,k\models^\mathfrak{a}\varphi_2
\text{ and } \Mmf,m\models^\mathfrak{a}\varphi_1, \text{ for } n < m < k,\\
&\Mmf,n\models^\mathfrak{a} \varphi_1\S\varphi_2 \quad\text{iff}\quad \text{there is } k < n \text{ with } \Mmf,k\models^\mathfrak{a}\varphi_2 
\text{ and } \Mmf,m\models^\mathfrak{a}\varphi_1, \text{ for } k < m < n.
\end{align*}
We use the standard abbreviations such as 
\begin{multline*}
\top = \neg \bot,\qquad\varphi_1\lor\varphi_2 = \neg(\neg\varphi_1\land\neg\varphi_2),\qquad
\exists x \, \varphi = \neg \forall x\, \neg \varphi, \\ \Rdiamond \varphi = \top\U\varphi, \qquad \Rbox \varphi = \neg\Rdiamond\neg\varphi,\qquad \Rnext \varphi = \bot\U\varphi
\end{multline*}
as well as the past counterparts for $\Ldiamond$, $\Lbox$ and $\Lnext$; we also write $\SVbox \varphi$ for $\Rbox\Lbox\varphi$.

If a formula $\varphi$ contains no free variables (i.e., $\varphi$ is a sentence), then we omit the valuation $\mathfrak a$ in $\Mmf,n\models^{\mathfrak a}\varphi$ and write
$\Mmf,n\models\varphi$. If $\varphi$ has a single
free variable $x$, then we write $\Mmf,n\models\varphi[a]$
in place of $\Mmf,n\models^\mathfrak{a}\varphi$ with
$\mathfrak{a}(x) = a$.

A $\QTLi$-\emph{formula} is a $\QTL$-formula which is constructed using at most one variable. Satisfiability of $\QTLi$-formulas is known to be \ExpSpace-complete~\cite{GKWZ03}. 
In the \emph{propositional temporal logic}, $\PTL$, only 0-ary predicates (that is, propositional variables) are allowed. The satisfiability problem for $\PTL$-formulas is \PSpace-complete~\cite{SistlaClarke82}.


\subsection{Reduction to $\QTLi$}\label{sec:tld2tfol-red}

Given a \TuDLbn{} KB $\K=(\T,\A)$, let $\ob$
be the set of all object names occurring in $\A$ and
$\role$ the set of rigid and flexible role names occurring in
$\K$ and their
inverses.

In our reduction, objects $a \in \ob$ are mapped
to constants $a$,  concept names $A$ to
unary predicates $A(x)$, and number restrictions $\mathop{\geq q} R$
to unary predicates $E_qR(x)$. Intuitively, for a role name $S$, the predicates $E_qS(x)$ and $E_qS^-(x)$ represent, at each moment of time, the sets of elements with \emph{at least
$q$} distinct $S$-successors and \emph{at least $q$} distinct
$S$-predecessors; in particular, $E_1S(x)$ can be thought of as the domain of $S$ and $E_1S^-(x)$ as the range of $S$. By induction on the
construction of a \TuDLbn{} concept $C$, we define the \QTLi{}-formula $C^*(x)$:
\begin{align*}%
  A^* & = A(x), & \bot^* & =\bot,& (\mathop{\geq q} R)^* & = E_qR(x), \\ %
  (C_1\U C_2)^* & = C_1^* \U C_2^*, &
  (C_1\S C_2)^* & =C_1^* \S C_2^*,& (C_1 \sqcap C_2)^* &= C_1^* \land  C_2^*, &
  (\neg C)^* &= \neg C^*.
\end{align*}
It can be easily seen that the map $\cdot^*$ commutes with all the Boolean and temporal operators: e.g., $(\Rdiamond C)^* = \Rdiamond C^*$. For a TBox $\T$, we consider the following sentence, saying that the concept inclusions in $\T$ hold globally:
\begin{equation*}
\T^\dagger \quad=\ \ \bigwedge_{C_1 \sqsubseteq C_2\in\T}
\hspace*{-1em} \SVbox\forall x\,\bigl(C_1^*(x) \to C_2^*(x)\bigr).
\end{equation*}
In the translation above, we replaced binary predicates (i.e., roles) by collections of unary predicates, the $E_qR(x)$. Clearly, we have to ensure that these predicates behave similarly to the respective number restrictions. In particular, the following three properties trivially hold in \TuDLbn{} interpretations, for all roles $R$ at all moments of time:
\begin{akrzitemize}
\item every point with at least $q'$ $R$-successors has at least $q$ $R$-successors, for each $q < q'$;

\item if $R$ is a rigid role, then every point with at least $q$ $R$-successors at some moment has at least $q$ $R$-successors at all moments of time;

\item if the domain of a role is not empty, then its range is not empty either.
\end{akrzitemize}
These conditions can be encoded by the following \QTLi-sentences:
\begin{align}
  \label{eq:role:saturation}&\bigwedge_{R\in\role}\hspace*{1em}
  \bigwedge_{q,q'\in \QT \text{ with } q' > q}\hspace*{-1em}\SVbox\ \forall
  x\,\bigl((\mathop{\geq q'}R)^*(x) \to (\mathop{\geq q}R)^*(x)\bigr),\\
  \label{eq:role:global:relation}&
  \bigwedge_{R \in\role \text{ is rigid}} \hspace*{1em}\bigwedge_{q\in
    \QT}\hspace*{-0.4em} \SVbox\ \forall x\,\bigl((\mathop{\geq q}R)^*(x)
  \rightarrow \SVbox\,(\mathop{\geq q}R)^*(x)\bigr),\\%
  \label{eq:role:existence:2} & \bigwedge_{R\in\role} \SVbox\,
  \bigl(\exists x\, (\exists R)^*(x) \ \rightarrow \ \exists
  x\,(\exists \inv{R})^*(x)\bigr),
\end{align}%
where $\QT$\label{def:QK} is the set containing 1 and all the numbers $q$ such that
$\mathop{\geq q} R$ occurs in $\T$, and $\inv{R}$ is the inverse of $R$, i.e., $\inv{S} = S^-$ and $\inv{S^-} = S$, for a
role name $S$. As we shall see later, these three properties are enough
to ensure that the real binary relations for roles $S$ in \TuDLbn{}
can be reconstructed from the collections of unary predicates $E_qS(x)$ and $E_qS^-(x)$ satisfying~\eqref{eq:role:saturation}--\eqref{eq:role:existence:2}.

It is easy to extend the above reduction to ABox concept assertions: take $\nxt^n A(a)$ for each $\nxt^n A(a)\in\A$, and
$\nxt^n \neg A(a)$ for each $\nxt^n \neg A(a)\in\A$. However, ABox role assertions need a more elaborate treatment. For every $a \in \ob$, if $a$ has $q$ $R$-successors in
$\A$ at moment $n$---i.e., $\nxt^n R(a, b_1),\dots, \nxt^n
R(a,b_q) \in \A$, for distinct $b_1,\dots,b_q$---then we include $(\nxt^n\mathop{\geq
  q}R)^*(a)$ in the translation of $\A$. When counting the number
of successors, one has to remember the following property
of rigid roles $S$: if an ABox contains $\nxt^m S(a,b)$ then $\nxt^n
S(a,b)$ holds for all $n\in\Z$, and so $\nxt^m S(a,b)$ contributes to the
number of $S$-successors of $a$ and $S^-$-successors of $b$ at
\emph{each} moment.

In what follows, we assume that $\A$ contains $\nxt^n
S^-(b,a)$ whenever it contains $\nxt^n S(a,b)$.\label{def:invABox}  For each $n\in \Z$ and each role $R$, we define the \emph{temporal slice}
$\A_n^R$ of $\A$ by taking
\begin{equation*}
  \A_n^R ~=~ \begin{cases}\bigl\{ R(a,b) \mid \nxt^m R(a,b)\in \A \text{ for some
  }m \in \Z \bigr\}, & R \text{ is a rigid role},\\
  \bigl\{ R(a,b) \mid \nxt^n R(a,b) \in \A \bigr\}, & R \text{ is a flexible role}.\end{cases}
\end{equation*}
The translation $\A^\dagger$ of the \TuDLbn{} ABox $\A$ is
defined now by taking
\begin{equation*}
  \A^\dagger = \hspace*{-0.5em}\bigwedge_{{\scriptscriptstyle\bigcirc}^n A(a)\in \A}\hspace*{-1.5em}
  \nxt^n A(a)\hspace*{0.7em} \land\hspace*{-0.2em}
  \bigwedge_{{\scriptscriptstyle\bigcirc}^n \neg A(a)\in \A}\hspace*{-1.5em} \nxt^n
  \neg A(a)\hspace*{1em} \land\hspace*{-0.2em}
  \bigwedge_{{\scriptscriptstyle\bigcirc}^n R(a,b)\in \A}\hspace*{-1.5em} \nxt^n
  (\mathop{\geq q^{R,n}_{\A(a)}} R)^*(a)\hspace*{1em}\land
  \bigwedge_{\begin{subarray}{c}{\scriptscriptstyle\bigcirc}^n \neg
      S(a,b)\in \A\\S(a,b)\in \A_n^S \end{subarray}}\hspace*{-1.5em} \bot,
\end{equation*}
where $\smash{q^{R,n}_{\A(a)}}$ is the number of distinct $R$-successors of $a$
in $\A$ at moment $n$:
\begin{equation*}
  q^{R,n}_{\A(a)} = \max \bigl\{ q \in \QT
  \mid R(a,b_1),\dots,R(a,b_q)\in\A_n^R,  \text { for distinct } b_1,\dots,b_q \bigr\}.
\end{equation*}
We note that $\A^\dagger$ can be effectively computed for any given $\K$ because we need  temporal slices $\A_n^R$ only for those $n$ that are explicitly mentioned in $\A$, i.e.,
those $n$ with $\nxt^n R(a,b)\in\A$.

Finally, we define the \QTLi-\emph{translation} $\K^\dagger$ of $\K=(\T,\A)$ as the conjunction
of $\T^\dagger$, $\A^\dagger$ and
formulas~\eqref{eq:role:saturation}--\eqref{eq:role:existence:2}. The size of $\T^\dagger$ and
$\A^\dagger$ does not exceed the size of $\T$ and
$\A$, respectively. The size 
of~\eqref{eq:role:global:relation} and~\eqref{eq:role:existence:2} is
linear in the size of $\T$, while the size 
of~\eqref{eq:role:saturation} is cubic in the size of
$\T$ (though it can be made linear by taking account of only  those $q$ that occur in $\mathop{\geq q} R$, for a fixed $R$, and replacing $q' > q$ in
the conjunction index with a more restrictive condition `$q' > q$ and
there is no $q''\in \QT$ with $q' > q'' > q$'; for details, see~\cite{ACKZ:jair09}).

The main technical result of this section is that 
$\K$ and $\K^\dagger$ are equisatisfiable; the  proof (based on the unravelling construction) is given in Appendix~\ref{app:proof:qtli-equisat}.
\begin{theorem}\label{lem:qtli-equisat}
A \TuDLbn{} KB $\K$ is satisfiable iff the \QTLi-sentence
  $\K^\dagger$ is satisfiable.
\end{theorem}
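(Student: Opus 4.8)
The plan is to prove the two directions separately: the forward implication is a direct verification, while the backward one requires reconstructing a genuine \TuDLbn-interpretation from a first-order temporal model, and this is where the real work lies.

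\emph{Forward direction.} Given $\I\models\K$, I would define a first-order temporal model $\Mmf$ over the same domain, $\Delta^\Mmf=\Delta^\I$, with $a^\Mmf=a^\I$, $A^{\Mmf,n}=A^{\I(n)}$ and $E_qR^{\Mmf,n}=(\mathop{\geq q}R)^{\I(n)}$ for all $n\in\Z$. A routine induction on the construction of $C$ then shows $\Mmf,n\models C^*[d]$ iff $d\in C^{\I(n)}$, using that $\cdot^*$ commutes with the Boolean and temporal operators. From this, $\T^\dagger$ holds because the inclusions of $\T$ hold globally in $\I$, and $\A^\dagger$ holds by the interpretation of the ABox assertions. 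The three auxiliary conjuncts are true because they merely record genuine properties of the binary relations $R^{\I(n)}$: \eqref{eq:role:saturation} is monotonicity of $\mathop{\geq q}R$ in $q$; \eqref{eq:role:global:relation} holds since $R^{\I(n)}$ is constant in $n$ for rigid $R$; and \eqref{eq:role:existence:2} holds since a nonempty domain of a relation forces a nonempty range.

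\emph{Backward direction.} Here I must reconstruct the actual relations $S^{\I(n)}$ from the unary counting predicates $E_qR$ alone, and I would do so by an unravelling (forest) construction. Associate with each $d\in\Delta^\Mmf$ its \emph{type} $\type(d)$, the map sending $n$ to the set of unary predicates true of $d$ at $n$ in $\Mmf$, and call a type \emph{realizable} if some element of $\Delta^\Mmf$ realizes it. The domain $\Delta^\I$ consists of finite sequences that start from the elements of $\Delta^\Mmf$ (so object names keep their $\Mmf$-types and the ABox is accommodated) and are extended by fresh anonymous witnesses. For each element $w$, each role $R$ and the maximal $q\in\QT$ with $\type(w)$ requiring $\mathop{\geq q}R$ at a moment $n$, I attach $q$ fresh children to witness those successors; each child receives a realizable type lying in the range of $R$, which exists by \eqref{eq:role:existence:2}. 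The edge to a child is switched on only at the relevant moment if $R$ is flexible and at all moments if $R$ is rigid, while $S^-$ is interpreted as the converse of $S^{\I(n)}$, so inverses are handled automatically.

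The crux, and the step I expect to be the main obstacle, is verifying that this reconstruction realizes the types exactly, so that by induction $d\in C^{\I(n)}$ iff $\Mmf,n\models C^*[\cdots]$; in particular every number restriction must come out right. The lower bounds $\mathop{\geq q}R$ follow because enough distinct children are attached, using \eqref{eq:role:saturation} and the finiteness of $\QT$ to make ``the maximal $q$'' well defined. The upper bounds $\mathop{\leq q}R=\neg(\mathop{\geq q+1}R)$ require that no excess successors are created, so the number attached must equal the maximal required $q$, and the ABox-induced counts must not overshoot---guaranteed because $\A^\dagger$ forces each ABox individual into the corresponding $E_qR$ via the slices $\A_n^R$. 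The genuinely delicate bookkeeping is the interaction of the constant-domain assumption with rigid versus flexible roles: a rigid edge, once created, must persist at all $n$, which is sound precisely because \eqref{eq:role:global:relation} makes rigid number restrictions time-invariant in $\Mmf$, and a rigid assertion $\nxt^m S(a,b)$ must be counted as an $S$-successor of $a$ and an $\inv{S}$-successor of $b$ at \emph{every} moment. Checking that these named edges, the anonymous witnesses and the inverse-role constraints can be assembled simultaneously without conflict---and that every created witness can carry a type sitting in the range of the relevant role at the required moments---is where the argument does its real work, and the full verification is deferred to Appendix~\ref{app:proof:qtli-equisat}.
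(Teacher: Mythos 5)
Your proposal is correct and follows essentially the same route as the paper: the forward direction is the same routine translation, and your backward-direction unravelling---fresh anonymous witnesses whose types lie in the range of the role (supplied by \eqref{eq:role:existence:2}), edges switched on per moment for flexible roles and at all moments for rigid ones (sound by \eqref{eq:role:global:relation}), with the ABox slices seeding the named part---is precisely the construction carried out in Appendix~\ref{app:proof:qtli-equisat}. The bookkeeping you single out as the crux is implemented there by an explicit iterative ``defect repair'': the paper compares a required rank $\varrho^{R,n}_{\cp(u)}$ with an actual rank $\tau^{R,n}_{u,m}$ and attaches exactly $\varrho^{R,n}_{\cp(u)}-\tau^{R,n}_{u,m}$ fresh copies at each stage, which is exactly the mechanism that enforces your ``no excess successors'' requirement for elements that already carry ABox edges or an incoming edge from their parent.
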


Meanwhile, we proceed to the second step of our reduction.

\subsection{Reduction to $\PTL$}

Our next aim is to construct a $\PTL$-formula that is equisatisfiable with $\K^\dagger$. First, we observe that $\K^\dagger$ can be represented in the form $\K^{\dagger_0} \land \bigwedge_{R\in\role} \vartheta_R$, where 
\begin{equation*}
\K^{\dagger_0} \ \ = \ \ \SVbox\forall x\,\varphi(x) \  \land \ \psi \qquad \text{ and }\qquad \vartheta_R =  
\SVbox\, \forall x\, \bigl( (\exists R)^*(x) \rightarrow \exists x\,(\exists \inv{R})^*(x)\bigr),
\end{equation*}
for a quantifier-free first-order temporal formula $\varphi(x)$
with a single variable $x$ and unary predicates only and a variable-free formula $\psi$. We show now that one can replace $\vartheta_R$ by a formula without existential quantifiers. To this end we require the following lemma:

\begin{lemma}\label{lem:roles}
  For every \TuDLbn{} KB $\K$, if there is a model $\Mmf$ satisfying
  $\K^{\dagger_0}$ such that $\Mmf,n_0\models (\exists R)^*[d]$, for
  some $n_0\in\Z$ and $d\in\Delta^{\Mmf}$, then there is a model
  $\Mmf'$ extending $\Mmf$ with new elements and satisfying
  $\K^{\dagger_0}$ such that, for each $n\in\Z$, there is
  $d_n\in\Delta^{\Mmf'}$ with $\Mmf',n\models (\exists R)^*[d_n]$.
\end{lemma}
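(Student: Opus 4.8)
The plan is to exploit the fact that, after the translation, $\K^{\dagger_0}$ speaks only about unary predicates and constants, so the satisfaction of its universal part $\SVbox\forall x\,\varphi(x)$ by a domain element $e$ depends solely on the \emph{run} of $e$, i.e.\ the function $n\mapsto\{P : \Mmf,n\models P[e]\}$ recording the membership of $e$ in the unary predicates over time. Moreover, $\SVbox\forall x\,\varphi(x)$ requires precisely that every run satisfy the one-variable temporal formula $\varphi$ at \emph{every} moment. The witness $d$ has a run that satisfies $(\exists R)^*$ at the single moment $n_0$; the idea is to populate $\Mmf'$ with time-shifted copies of this run so that the same local pattern recurs at every moment.

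Concretely, for each $n\in\Z$ I would add to $\Delta^{\Mmf}$ a fresh element $d_n$ and define its run to be the run of $d$ shifted by $n-n_0$: for every unary predicate $P$ and every $k\in\Z$, put $\Mmf',k\models P[d_n]$ iff $\Mmf,k-n+n_0\models P[d]$. All old elements and all constants keep their $\Mmf$-behaviour, so $\Mmf'$ is an extension of $\Mmf$. By construction $d_n$ at moment $n$ copies $d$ at moment $n_0$, hence $\Mmf',n\models(\exists R)^*[d_n]$, which supplies the required witness at every time point.

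The core of the argument is a shift-invariance claim: for any one-variable temporal formula $\chi(x)$ over unary predicates, $\Mmf',k\models\chi[d_n]$ iff $\Mmf,k-n+n_0\models\chi[d]$. I would prove this by a routine induction on $\chi$, the only point being that $j\mapsto j+(n-n_0)$ is an automorphism of the flow $(\Z,<)$, so the clauses for $\U$, $\S$ and (hence) $\SVbox$ commute with the shift. Applying the claim to $\chi=\varphi$ and using that $\Mmf\models\SVbox\forall x\,\varphi(x)$ gives $\Mmf,j\models\varphi[d]$ for all $j$, I get $\Mmf',k\models\varphi[d_n]$ for all $k$ and all $n$. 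Since each new element satisfies $\varphi$ at all moments and the old elements are unchanged, $\Mmf'\models\SVbox\forall x\,\varphi(x)$; and since $\psi$ is variable-free and mentions only constants, whose interpretation is untouched, $\Mmf'\models\psi$ as well, so $\Mmf'\models\K^{\dagger_0}$.

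The step to watch is the shift-invariance itself, and in particular that it must cover \emph{all} of $\varphi$: besides the matrices of $\T^\dagger$, the formula $\varphi$ absorbs the saturation clauses~\eqref{eq:role:saturation} and---crucially---the rigid-role clauses~\eqref{eq:role:global:relation}, whose matrix $E_qR(x)\to\SVbox E_qR(x)$ itself contains $\SVbox$. One has to confirm that a $\SVbox$ nested inside $\varphi$ is also insensitive to the time shift; this holds because $\SVbox$ ranges over all of $\Z$, which the reindexing merely permutes. The remaining checks---that the $d_n$ are genuinely fresh and that enlarging the domain with elements that themselves satisfy $\varphi$ at all moments cannot break an already satisfied universal statement---are immediate.
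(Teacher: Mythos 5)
Your proposal is correct and is essentially the paper's own proof: the paper likewise extends $\Mmf$ with the elements $(d,k)$, $k\in\Z$, interpreted by $B^{\Mmf',n} = B^{\Mmf,n}\cup\{(d,k)\mid d\in B^{\Mmf,n-k}\}$, which are exactly your time-shifted copies $d_n = (d,n-n_0)$, and then observes that $\SVbox\forall x\,\varphi(x)$ is preserved because $\varphi$ expresses a per-element property holding at all moments while $\psi$ only concerns the constants. Your explicit shift-invariance induction (including the nested $\SVbox$ from the rigid-role clauses) just spells out what the paper leaves implicit.
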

\begin{proof}
Consider a new model $\Mmf'$ with domain $\Delta^{\Mmf} \cup
  \bigl(\{ d \} \times \Z\bigr)$ by setting
\begin{equation*}
B^{\Mmf',n} = B^{\Mmf,n} \cup \bigl\{ (d, k) \mid d\in B^{\Mmf,n - k}, k\in\Z \bigr\},\quad \text{ for each $B$  of $\K^{\dagger_0}$ and $n \in \Z$}.
\end{equation*}
In other words, $\Mmf'$ is the disjoint union of $\Mmf$ and copies of $d$ `shifted' along the timeline by $k$ steps, for each $k\in\Z$. It follows that, at each moment $n\in\Z$, the element $(d,n-n_0)$ belongs to $(\exists R)^*$, thus making $(\exists R)^*$ non-empty at all moments of time. Moreover, $\Mmf',0\models \K^{\dagger_0}$ because $\varphi(x)$ expresses a property of a single domain element and holds at \emph{each} moment of time, $\psi$ depends only on the part of the model that corresponds to constants (and which are interpreted as in $\Mmf$).
\end{proof}

Next, for each  $R\in\role$, we take a fresh constant $d_R$ and a fresh propositional variable $p_R$ (recall that $\inv{R}$ is also in $\role$),  and consider the following \QTLi{}-formula:
\begin{equation*}
\textstyle \K^{\ddagger} = \K^{\dagger_0}\land \bigwedge_{R\in\role} \vartheta_R',\text{ where }
\vartheta_R'  \ = \ 
\SVbox\forall x\,\bigl((\exists R)^*(x)
 \to \SVbox p_R\bigr) 
 \land
   \bigl(p_{\inv{R}} \to (\exists R)^*(d_{R}) \bigr)
\end{equation*}
($p_{\inv{R}}$ and $p_R$ indicate that $\inv{R}$ and $R$ are non-empty whereas $d_R$ and $d_{\inv{R}}$ witness that  at $0$).

\begin{lemma}\label{lemma:Kddagger}
  A \TuDLbn{} KB \K{} is satisfiable iff the \QTLi{}-sentence $\K^{\ddagger}$ is
  satisfiable.
\end{lemma}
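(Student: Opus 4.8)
The plan is to lean on Theorem~\ref{lem:qtli-equisat}, which already reduces satisfiability of $\K$ to that of $\K^\dagger = \K^{\dagger_0} \land \bigwedge_{R\in\role}\vartheta_R$, so that it suffices to prove $\K^\dagger$ and $\K^\ddagger$ equisatisfiable. Both formulas share the conjunct $\K^{\dagger_0}$ and differ only in that each $\vartheta_R$ is replaced by $\vartheta_R'$. The crux is that $\vartheta_R$ requires the range $(\exists\inv{R})^*$ to be nonempty \emph{at every moment at which the domain $(\exists R)^*$ is nonempty}, whereas the flagged $\vartheta_R'$ only records, through the fresh propositional variables $p_R$ and the fresh witnessing constants $d_R$, nonemptiness \emph{at the initial moment $0$}. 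Lemma~\ref{lem:roles} is precisely the device that bridges this gap, since it spreads nonemptiness from a single time point to all time points while preserving $\K^{\dagger_0}$.

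For the implication $\K^\ddagger \Rightarrow \K$, I would start from a model $\Mmf,0\models\K^\ddagger$, so $\Mmf\models\K^{\dagger_0}$. First I would observe that, for every $R\in\role$ whose domain is nonempty at some moment, the conjunct $\SVbox\forall x\,((\exists R)^*(x)\to\SVbox p_R)$ of $\vartheta_R'$ forces $p_R$ to hold at all moments; the second conjunct of $\vartheta_{\inv{R}}'$, namely $p_R \to (\exists\inv{R})^*(d_{\inv{R}})$, then yields that $(\exists\inv{R})^*$ is nonempty at $0$. I would next apply Lemma~\ref{lem:roles} to $\inv{R}$ once for each such $R$ (finitely many, as $\role$ is finite), producing an extension $\Mmf'\models\K^{\dagger_0}$ in which $(\exists\inv{R})^*$ is nonempty at \emph{all} moments. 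Since each application only adjoins shifted copies of one element, it merely enlarges the unary predicates, so nonemptiness secured earlier is preserved and a domain $(\exists R')^*$ empty throughout $\Mmf$ stays empty in $\Mmf'$. Hence every $\vartheta_R$ holds in $\Mmf'$ (its consequent holds everywhere in the spread-out case, and its antecedent is never satisfied otherwise), giving $\Mmf'\models\K^\dagger$ and, by Theorem~\ref{lem:qtli-equisat}, satisfiability of $\K$.

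For the converse $\K \Rightarrow \K^\ddagger$, I would take $\Mmf\models\K^\dagger$ via Theorem~\ref{lem:qtli-equisat} and, using Lemma~\ref{lem:roles} as above, pass to an extension $\Mmf'\models\K^{\dagger_0}$ in which every domain $(\exists R)^*$ nonempty at some moment of $\Mmf$ is nonempty at all moments, while domains empty throughout $\Mmf$ stay empty; thus in $\Mmf'$ each $(\exists R)^*$ is all-or-nothing across the timeline. I would then set $p_R$ to be true at every moment exactly when $(\exists R)^*$ is nonempty in $\Mmf'$, and interpret $d_R$ as an element of $(\exists R)^*$ at moment $0$ whenever this set is nonempty (arbitrarily otherwise). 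The first conjunct of $\vartheta_R'$ holds by the choice of $p_R$. For the second, if $p_{\inv{R}}$ is true then $(\exists\inv{R})^*$ is nonempty in $\Mmf'$, hence at some moment of $\Mmf$; invoking $\vartheta_{\inv{R}}$ in the \emph{original} model $\Mmf$ gives $(\exists R)^*$ nonempty at that moment of $\Mmf$, and therefore nonempty at $0$ in $\Mmf'$, so $d_R$ witnesses $(\exists R)^*(d_R)$ as required.

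The main obstacle is exactly this mismatch between the implicit ``at every moment'' quantification inside $\vartheta_R$ and the single-moment (at $0$) witnessing demanded by $\vartheta_R'$; both directions dissolve it by the same move, turning nonemptiness at one instant into nonemptiness at all instants via Lemma~\ref{lem:roles}, with careful bookkeeping that the extension neither destroys previously established nonemptiness nor revives a domain that was empty throughout. The only other point needing care is that $p_R$ and $d_R$ do not occur in $\K^{\dagger_0}$, so their interpretation is unconstrained and neither $\varphi(x)$ nor $\psi$ is affected by the choices above.
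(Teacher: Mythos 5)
Your proposal is correct and follows essentially the same route as the paper's proof: both directions combine Theorem~\ref{lem:qtli-equisat} with repeated applications of Lemma~\ref{lem:roles} to make each $(\exists R)^*$ either empty at all moments or non-empty at all moments, after which the fresh symbols $p_R$ and $d_R$ are interpreted (or exploited) in the obvious way. Your additional bookkeeping---that extensions preserve previously secured non-emptiness, never revive domains empty throughout, and leave $\K^{\dagger_0}$ untouched since $p_R$ and $d_R$ do not occur in it---is exactly what the paper's terser argument implicitly relies on.
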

\begin{proof}
  ($\Rightarrow$) If $\K$ is satisfiable then, by
  Theorem~\ref{lem:qtli-equisat} and repeated application of
  Lemma~\ref{lem:roles}, $\K^{\dagger_0}$ is satisfied in a model
  $\Mmf$ such that, for each $R\in\role$, the predicates $(\exists
  R)^*$ and $(\exists \inv{R})^*$ are either both empty at all moments
  of time or both non-empty at all moments of time. To satisfy the 
  $\vartheta_R'$, for $R\in\role$, we extend $\Mmf$ to $\Mmf'$ as follows: if $(\exists R)^*$ and $(\exists \inv{R})^*$ are non-empty, we set $p_R^{\smash{\Mmf'},n}$ to be true at all $n\in\Z$, and take $d_R$ to be an element in $((\exists R)^*)^{\Mmf,0}$; 
  otherwise, we set $p_R^{\smash{\Mmf'},0}$ to be false and take some domain element as $d_R$. It follows that $\Mmf',0\models\K^\ddagger$.

\smallskip

($\Leftarrow$) Conversely, suppose $\Mmf,0\models\K^\ddagger$. By
repeated application of Lemma~\ref{lem:roles}, $\K^{\dagger_0}$ is
satisfied in a model $\Mmf'$ such that, for
each $R\in\role$, the predicates $(\exists R)^*$ and $(\exists
\inv{R})^*$ are either both empty at all moments of time or both
non-empty at all moments of time. It follows that $\Mmf',0\models
\vartheta_R$ for all $R\in\role$, and so $\Mmf',0\models\K^\dagger$.
\end{proof} 

Finally, as $\K^{\ddagger}$ contains no existential quantifiers, it
can be regarded as a propositional temporal formula because all the
universally quantified variables can be instantiated by all the
constants in the formula (which only results in a polynomial
blow-up). Observe also that the translation $\cdot^\ddagger$ can be
done in logarithmic space in the size of $\K$.  This is almost trivial
for all conjuncts of $\K^\ddagger$ apart from $(\mathop{\geq
 q^{ \smash{R,n}}_{\A(a)}} R)^*$ in $\A^\dagger$, where the numbers can
be computed using a \LogSpace{}-transducer as follows: initially 
set $q = 0$; then enumerate all object names $b_i$ in $\A$
incrementing $q$ for each $R(a, b_i) \in \smash{\A_R^n}$ and stop if
$q = \max \QT$ or the end of the object names list is reached; let
$\smash{q^{R,n}_{\A(a)}}$ be the maximum number in $\QT$ not exceeding
$q$. Note that in the case of
$T_{\mathcal{US}}\bDLb^{(\mathcal{HN})}$, the translation is feasible
only in  \NLogSpace{} (rather than \LogSpace{}) because we have to
take account of role inclusions (and graph reachability is
\NLogSpace{}-complete).

\subsection{Complexity of \TuDLbn{} and its Fragments}
\label{sec:compex-rigid-roles}

We now use the translation $\cdot^\ddagger$ to obtain
the complexity results announced in Section~\ref{sec:summary}.

\begin{theorem}\label{thm:pspace} The satisfiability problem for \TuDLbn{} and \TdxDLbn{} KBs is \PSpace-complete.
\end{theorem}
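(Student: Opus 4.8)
The plan is to prove the \PSpace{} upper bound for the larger logic \TuDLbn{} and the matching lower bound for the smaller logic \TdxDLbn; since $\TdxDLbn\subseteq\TuDLbn$, this already yields \PSpace-completeness for both.

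For membership in \PSpace{} I would start from Lemma~\ref{lemma:Kddagger}, which says that a \TuDLbn{} KB $\K$ is satisfiable iff the \QTLi-sentence $\K^\ddagger$ is. Since $\K^\ddagger$ has no existential quantifiers, every universal quantifier $\forall x$ can be instantiated by each of the finitely many constants occurring in $\K^\ddagger$---the object names of $\A$ together with the fresh constants $d_R$, $R\in\role$---which turns $\K^\ddagger$ into an equisatisfiable \emph{propositional} temporal formula. The one point needing care is that this grounding stays polynomial: its atoms are $A(a)$, $E_qR(a)$ and $p_R$, and although the $q$ are written in binary, $\QT$ has only polynomially many elements, so there are polynomially many atoms. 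As already observed, both $\cdot^\ddagger$ and the grounding are computable in logarithmic space; composing with the \PSpace{} decision procedure for \PTL{} \cite{SistlaClarke82} places \TuDLbn-satisfiability, and hence \TdxDLbn-satisfiability, in \PSpace.

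For the lower bound I would reduce \PTL-satisfiability to \TdxDLbn{} concept satisfiability, which by Section~\ref{sec:dl-lite} is equivalent to KB satisfiability. Given a \PTL-formula $\varphi$ over propositional atoms $p_1,\dots,p_k$ built from the future operators $\Rnext$ and $\Rdiamond$ only, replace each $p_i$ by a fresh concept name $A_i$, each Boolean connective by the corresponding concept Boolean, $\Rnext$ by the concept operator $\Rnext$, and $\Rdiamond$ by $\neg\Rbox\neg$ (legitimate, since \DLbn{} has full Booleans), obtaining a \TdxDLbn{} temporal concept $C_\varphi$. As $C_\varphi$ contains neither roles nor number restrictions, the membership of a single element $d$ in $C_\varphi^{\I(n)}$ depends only on the trace $n\mapsto\{\,i\mid d\in A_i^{\I(n)}\,\}$, which is exactly a \PTL-model; hence $d\in C_\varphi^{\I(n)}$ iff that trace satisfies $\varphi$ at $n$. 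Consequently $\varphi$ is \PTL-satisfiable iff the KB $(\{A\sqsubseteq C_\varphi\},\{A(a)\})$ with a fresh concept name $A$ is satisfiable (time-shift invariance of $(\Z,<)$ lets us fix the witness moment to be $0$).

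It remains to feed this reduction a \PSpace-hard source. The crux---and the reason the complexity is \PSpace{} rather than the \NP{} of \TdDLbn{} (Theorem~\ref{thm:np1})---is that \TdxDLbn{} provides the next-time operators $\Rnext,\Lnext$ on top of $\Rbox,\Lbox$: the fragment of \PTL{} built from $\Rnext$ and $\Rdiamond$ alone is already \PSpace-complete \cite{SistlaClarke82}, whereas without $\Rnext$ the eventually/always fragment is only \NP-complete. I therefore expect the main obstacle to be not the (routine, homomorphic) translation above, nor the upper bound, which is essentially handed to us by Lemma~\ref{lemma:Kddagger} and the logarithmic-space computability of $\cdot^\ddagger$, but rather the bookkeeping that the restricted-operator \PTL{} fragment reached by the reduction is genuinely the \PSpace-hard one---that is, that the presence of $\Rnext$ is essential and the reduction does not inadvertently land in the tractable $\Rdiamond/\Rbox$-only fragment.
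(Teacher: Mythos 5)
Your proposal is correct, and the upper bound is exactly the paper's argument: apply Lemma~\ref{lemma:Kddagger}, ground the universally quantified variables by the constants of $\K^\ddagger$ (a polynomial blow-up, computable in logarithmic space), and invoke \PSpace-completeness of \PTL{} -- though note that since $\K^\ddagger$ lives over $(\Z,<)$ with both $\U$ and $\S$, the precise result needed is \PSpace-completeness over the integers, for which the paper cites further references beyond \citeN{SistlaClarke82}. Where you genuinely diverge is the lower bound. The paper does not translate arbitrary $\Rnext/\Rdiamond$-formulas into a single concept; it observes that \TdxDLbn{} can encode formulas of the restricted clausal shape $\theta \land \SVbox\bigwedge_i(\varphi_i\to\Rnext\psi_i)$, with $\theta$, $\varphi_i$, $\psi_i$ conjunctions of propositional variables, whose satisfiability is already \PSpace-hard; each conjunct $\varphi_i\to\Rnext\psi_i$ becomes a global concept inclusion and $\theta$ becomes ABox assertions, so the encoding barely uses the Booleans at all. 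Your route instead exploits the full Boolean closure of \TdxDLbn{} concepts: a homomorphic translation of an entire $\mathcal{L}(F,X)$-formula into one concept $C_\varphi$, a single inclusion $A\sqsubseteq C_\varphi$, and the (correct) observation that satisfaction of a role-free concept by a fixed element depends only on its concept-name trace. Both are sound; yours is arguably cleaner as a reduction, but the paper's clausal encoding is more informative about \emph{where} the hardness lives -- it shows that very simple global implications with $\Rnext$ suffice, which is consonant with the fact that the sub-Boolean fragments \TdxDLkn{} and \TdxDLcn{} drop to \NP{} (Theorem~\ref{thm:krom-core}): your translation, needing negation to express $\Rdiamond$ as $\neg\Rbox\neg$ and arbitrary Boolean structure, could not be carried out there. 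Two small points to tidy up: the Sistla--Clarke hardness for $\mathcal{L}(F,X)$ uses non-strict operators while the paper's $\Rdiamond$, $\Rnext$ are strict (harmless, since non-strict versions are Boolean-definable), and future-only satisfiability over $\N$ transfers to satisfiability at moment $0$ over $\Z$ because nothing constrains the past.
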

\begin{proof}
  The upper bound follows from the reduction $\cdot^\ddagger$ above
   and the fact that $\PTL$ is \PSpace-complete over
  $(\Z,<)$~\cite{Rabi:10,Rey:10,SistlaClarke82}.  The lower
  bound is an immediate consequence of the observation that \TdxDLbn{}
  can encode formulas of the form $\theta \land
  \SVbox\bigwedge_i(\varphi_i\to\Rnext \psi_i)$, where $\theta$, the
  $\varphi_i$ and $\psi_i$ are conjunctions of propositional
  variables: satisfiability of such formulas is known to be
  \PSpace-hard (see e.g.,~\cite{Gabbayetal94}).
\end{proof}

In fact, using the $\U$-operator, we can establish the following:

\begin{theorem}\label{thm:pspace:2}
The satisfiability problem for the core fragment of \TuDLbn{} KBs is \PSpace-hard.
\end{theorem}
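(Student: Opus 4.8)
The plan is to strengthen the lower bound of Theorem~\ref{thm:pspace} by giving a direct reduction from a \PSpace-complete tiling problem---say, corridor tiling of width $n$ over a tile set $D$ with horizontal and vertical compatibility relations---into the satisfiability of a \emph{core} \TuDLbn{} knowledge base built over a single object name $a$. The essential difficulty, compared with the Boolean case, is that the core language permits neither conjunction on the left of an inclusion nor any covering/disjunction on the right: the only available axioms are single-antecedent positive inclusions $D_1 \sqsubseteq D_2$ and binary disjointness $D_1 \sqcap D_2 \sqsubseteq \bot$. I would therefore lay the tiling out along $(\Z,<)$, encoding the cell at row $r$, column $i$ by a block of consecutive time instants, and use the temporal operators to recover precisely the expressive power that the Boolean core fragment lacks.

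First I would fix the layout: reserve $K=|D|$ consecutive ``slots'' for each cell, so that a row of $n$ cells occupies $nK$ instants and each cell sits at a position reachable by iterating $\Rnext$ (available in the core fragment since $\Rnext C = \bot \U C$, whence every bounded $\Rnext^{m}$ of polynomial size is as well). A tile is selected by marking exactly one of the $K$ slots of a cell as \emph{active}, the index of that slot naming the chosen tile. The horizontal and vertical compatibility requirements are then purely \emph{binary} conditions relating the active slots of cells lying $K$ apart (horizontally) or $nK$ apart (vertically), and each forbidden pair is ruled out by a disjointness axiom of the shape $\mathit{Active} \sqcap \Rnext^{m}\mathit{Active}' \sqsubseteq \bot$ for the appropriate offset $m$. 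Delimiting the cells and propagating the block structure is routine bookkeeping with $\Rnext$ and a handful of disjointness axioms.

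The crux---and the only place where $\U$ is genuinely indispensable---is forcing \emph{totality}: a core TBox cannot state that every cell carries \emph{some} tile, since that is exactly a covering (disjunctive) constraint, and without it the empty tiling would satisfy all the purely prohibitive axioms, breaking soundness. Here I would exploit the implicit temporal disjunction of until, namely that $C_1\U C_2$ demands $C_2$ at \emph{some} later instant. Placing a marker $\mathit{CellStart}$ at the first slot of each cell and asserting $\mathit{CellStart} \sqsubseteq \mathit{Searching}\,\U\,\mathit{Active}$ forces an active slot to appear inside the block, thus encoding ``at least one tile'' with a single-antecedent inclusion; a short companion gadget (a $\mathit{Searching}$ flag flipped by $\Rnext$ together with $\mathit{Searching}\sqcap\mathit{Active}\sqsubseteq\bot$) pins the choice down to \emph{exactly one} slot per cell, which makes the extraction of a tiling from a model well defined. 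I expect this gadget to be the main obstacle: one must verify that the until-forced eventuality cannot escape the current block and that the choice is unique, so that models correspond precisely to legal tilings. Once faithfulness is checked in both directions---a legal tiling yields a model by activating the chosen slots, and any model yields a tiling by reading off the unique active slot of each cell---\PSpace-hardness transfers to the core fragment, establishing the theorem and confirming the point of Remark~\ref{subUS} that adding $\U$ collapses the sub-Boolean fragments to the full complexity.
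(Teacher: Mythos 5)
Your central insight---that the implicit disjunction of $\U$ (an eventuality must be realised \emph{somewhere} in the future) can stand in for the covering constraint that the core fragment lacks---is sound, and it is precisely the phenomenon alluded to in Remark~\ref{subUS}. But the reduction as sketched contains a step that fails: the compatibility axioms. In your encoding the identity of a tile is carried only by the \emph{position} of the active slot inside its block, while the concept $\mathit{Active}$ itself is position-blind; an axiom $\mathit{Active} \sqcap \Rnext^{m}\mathit{Active} \sqsubseteq \bot$ is a \emph{global} constraint forbidding every pair of active instants at distance $m$, not just the intended pair (slot $j$ of cell $p$, slot $j'$ of cell $p{+}1$). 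With consecutive slots the distance $K+j'-j$ determines only the difference $j'-j$, so the same axiom also kills the pairs $(j{+}1,j'{+}1), (j{+}2,j'{+}2),\dots$, which may be perfectly compatible, and it wrongly applies across row boundaries (in your layout the last cell of row $r$ and the first cell of row $r{+}1$ are adjacent blocks at distance $K$). Hence the existence of a tiling no longer implies satisfiability, and the reduction is unsound. The natural repair---tile-indexed concepts $\mathit{Active}_j$ confined to slot $j$---is blocked by the very restriction you are trying to overcome: deriving $\mathit{Active}_j$ from ``$\mathit{Active}$ holds at a slot-$j$ position'' needs an axiom of the shape $\mathit{Active}\sqcap P_j \sqsubseteq \mathit{Active}_j$, whose conjunctive antecedent is not core, and your single axiom $\mathit{CellStart}\sqsubseteq \mathit{Searching}\,\U\,\mathit{Active}$ can only force one \emph{fixed} concept, never a choice among $\mathit{Active}_1,\dots,\mathit{Active}_K$. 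Making this work requires a further idea absent from your sketch, e.g.\ spacing the slots as a Sidon set so that distances (with the horizontal/vertical offsets added) uniquely decode slot pairs, together with blocker concepts confining $\mathit{Active}$ to slot positions and a separator column for row boundaries. (Your ``exactly one'' gadget also does not achieve uniqueness---$\mathit{Searching}\sqcap\mathit{Active}\sqsubseteq\bot$ only constrains instants before the \emph{first} active slot---but this flaw is harmless: if every cell carries a non-empty set of tiles and all cross pairs are compatible, any selection yields a legal tiling.)

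For contrast, the paper side-steps the need to simulate disjunction altogether by reducing from acceptance of a \emph{deterministic} polynomial-space Turing machine to \emph{un}satisfiability. With one object $d$ and concepts $H_{iq}$, $S_{ia}$, $D_i$ describing configurations instant by instant, the single-antecedent inclusions $H_{iq}\sqsubseteq \bot\U H_{i'q'}$ and $H_{iq}\sqsubseteq\bot\U S_{ia'}$ propagate the (unique) successor configuration, the frame axiom $S_{ia}\sqsubseteq S_{ia}\U D_i$ makes a tape cell keep its symbol until the head has just visited it (this is where genuine $\U$, not merely $\Rnext$, is essential in their proof), and $H_{iq_f}\sqsubseteq\bot$ turns acceptance into a contradiction. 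Because the machine is deterministic, no choice ever has to be expressed, so every model is forced to contain the whole computation and no covering/totality constraint arises; hardness then follows since \PSpace{} is closed under complement. The structural difference from your plan is exactly where the hard direction sits: a satisfiability-based tiling reduction must \emph{force} tiles to exist (your totality crux, which is where your construction breaks down), whereas an unsatisfiability-based reduction from a deterministic computation only ever needs to \emph{propagate} and \emph{forbid}, which the core language does natively.
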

\begin{proof}
The proof is by reduction of the non-halting problem for deterministic Turing
 machines with a polynomial tape.
  Let $s(n)$ be a polynomial and $M$ a deterministic Turing machine
  that requires $s(n)$ cells of the tape given an input of length
  $n$. Without loss of generality, we assume that $M$ 
  never runs outside the first $s(n)$ cells.
  Let $M=\langle Q, \Gamma, \#, \Sigma, \delta, q_0, q_f
  \rangle$, where $Q$ is a finite set of states, $\Gamma$ a tape alphabet, $\# \in \Gamma$ the blank symbol, $\Sigma \subseteq
  \Gamma$ a set of input symbols, $\delta\colon
  (Q\setminus\{q_f\}) \times \Gamma \to Q \times \Gamma \times
  \{L,R\}$ a transition function, and $q_0,q_f \in Q$ the
  initial and accepting states, respectively. 
  Let $\vec{a}=a_1\dots a_n$ be an input for $M$. We construct a KB
   that is unsatisfiable iff $M$ accepts
  $\vec{a}$. This will prove  
  \PSpace-hardness.
  The KB uses a single object name
  $d$ and the following concepts, for $a \in \Gamma$, $q \in Q$
  and $1 \leq i \leq s(n)$, representing configurations of $M$:
  \begin{akrzitemize}
  \item $H_{iq}$, which contains $d$ if the head points to cell $i$ and the current
    state is $q$;
  \item $S_{ia}$, which contains $d$ if tape cell $i$ contains symbol $a$ in the current configuration;
  \item $D_i$, which contains $d$ if the head pointed to 
    cell $i$ in the \emph{previous} configuration.
\end{akrzitemize}
Let $\T_M$ contain the
following concept inclusions, for $a,a'\in \Gamma$, $q,q'\in Q$,
and $1 \leq i,j \leq s(n)$: 
\begin{align}
\label{eq:tm:change:head:R} H_{iq} & \sqsubseteq \bot \U
  H_{{(i+1)}q'},\quad H_{iq} \sqsubseteq \bot \U S_{ia'}, &&
  \text{if } \delta(q,a) = (q',a',R) \text{ and } i < s(n),\\
  \label{eq:tm:change:head:L} H_{iq} & \sqsubseteq \bot \U
  H_{{(i-1)}q'},\quad H_{iq} \sqsubseteq \bot \U S_{ia'}, && \text{if } \delta(q,a) = (q',a',L) \text{ and }
  i > 1,\\
  \label{eq:tm:E} H_{iq} & \sqsubseteq \bot \U D_i,\\
 \label{eq:tm:neq:D} D_i \sqcap  D_j & \sqsubseteq \bot,&& \text{if } i \neq j,\\
  \label{eq:tm:preserve} S_{ia} & \sqsubseteq S_{ia} \U  D_i,\\
  \label{eq:tm:non-term} H_{iq_f} & \sqsubseteq \bot,
\end{align}
and let $\A_{\vec{a}}$ consist of the following ABox assertions:
\begin{equation*}
  H_{1q_0}(d),
  \qquad S_{ia_i}(d), \ \text{ for } 1 \leq i \leq n,\qquad
  S_{i\#}(d), \ \text{ for } n < i \leq s(n).
\end{equation*}
Note that the concept inclusions in $\T_M$ are of the
form $B_1 \sqcap B_2 \sqsubseteq \bot$ or $B_1 \sqsubseteq B_2 \U
B_3$, where each $B_i$ is either a concept name or $\bot$, and that $\U$, in essence, encodes the `next-time' operator.
The proof that $(\T_M,\A_{\vec{a}})$ is unsatisfiable
iff $M$ accepts $\vec{a}$ is given in Appendix~\ref{app:PSpace}. 
\end{proof}

On the other hand, if we do not have the $\U/\S$ or $\Rnext/\Lnext$
operators in our languages, then the complexity drops to \NP, which matches the complexity of the $\Rbox/\Lbox$-fragment of propositional temporal logic~\cite{OnoNakamura80}:
\begin{theorem}\label{thm:np1}
Satisfiability of \TdDLbn{} and \zSVDLb{} KBs is \NP-complete.
\end{theorem}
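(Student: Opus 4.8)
The plan is to establish matching lower and upper bounds.

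\emph{Lower bound.} \NP-hardness is inherited from the atemporal case: the logic $\DLbn$ is exactly the fragment of both $\TdDLbn$ and $\zSVDLb$ in which no temporal operator occurs on concepts, and KB satisfiability for $\DLbn$ is already \NP-hard~\cite{ACKZ:aaai07}. Since every $\DLbn$ KB is a legal $\TdDLbn$- and $\zSVDLb$-KB, the bound transfers immediately.

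\emph{Upper bound.} First I would apply the translation $\cdot^\ddagger$ of Section~\ref{sec:tdl-rigid:Z:bool-until-complex}: by Lemma~\ref{lemma:Kddagger}, $\K$ is satisfiable iff the propositional temporal formula $\K^\ddagger$ is satisfiable. The crucial observation is that, once $\K$ lies in $\TdDLbn$ or $\zSVDLb$, the only temporal operators that $\cdot^*$ produces inside $\K^{\dagger_0}$ and the $\vartheta_R'$ are $\Rbox$ and $\Lbox$ (recall $\SVbox = \Rbox\Lbox$), because these fragments forbid $\U$, $\S$, $\Rnext$ and $\Lnext$ on concepts. Hence $\K^\ddagger$ contains no $\U$ or $\S$, and its only occurrences of $\Rnext/\Lnext$ are the bounded prefixes $\nxt^n$ coming from the ABox part $\A^\dagger$. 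After instantiating the universal quantifiers by the polynomially many constants, $\K^\ddagger$ is thus a conjunction of global $\Rbox/\Lbox$-constraints (the $\SVbox\,\forall x\,\varphi$ and $\vartheta_R'$), a few constraints anchored at the root $0$, and the anchored ABox literals $\nxt^{n}(\cdots)$.

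Next I would argue that this is essentially a formula of the $\Rbox/\Lbox$-fragment of $\PTL$, whose satisfiability is \NP-complete~\cite{OnoNakamura80}. The engine is the polynomial bounded-model property of that fragment: since the truth value of each $\Rbox$- (resp.\ $\Lbox$-) subformula is monotone along the future (resp.\ past) direction, it can flip at most once, so a satisfiable formula of length $\ell$ has a model whose timeline splits into $O(\ell)$ maximal blocks of constant propositional type, the outermost two being infinite, and each block may be collapsed to a single representative point. The obstacle is precisely the anchored prefixes $\nxt^{n}$. Here I would use that the numbers in ABox assertions are written in unary, so only polynomially many distinct time points $n_1 < \dots < n_k$ are named in $\A$. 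Because the global TBox constraints of $\K^{\dagger_0}$ speak only through $\Rbox/\Lbox$ and so cannot count exact distances, I can place the $n_i$ at distinct designated positions of a \emph{skeleton} consisting of $n_1,\dots,n_k$ together with one representative per gap and one for each infinite tail; this yields a skeleton with polynomially many points.

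The \NP-algorithm then guesses the propositional type at each skeleton point and verifies in polynomial time that (i) the guessed sequence is globally consistent for every $\Rbox$- and $\Lbox$-subformula, (ii) each anchored literal $\nxt^{n_i}(\cdots)$ holds at the position of $n_i$, and (iii) the role-existence conjuncts $\vartheta_R'$ are respected (each $p_R$ forced everywhere or nowhere, with $\exists R$ and $\exists \inv{R}$ non-empty together). The step I expect to be the real difficulty is proving correctness of this skeleton construction: one must show that any $\Z$-model of $\K^\ddagger$ collapses to such a polynomial skeleton, and conversely that a consistent guessed skeleton expands to a genuine $\Z$-model by inserting arbitrarily long constant blocks between skeleton points and infinite constant tails at the ends. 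For $\zSVDLb$ this is slightly simpler, since the sole temporal operator is $\SVbox$: every type must then agree with one global ``$\SVbox$-type,'' and the skeleton collapses further.
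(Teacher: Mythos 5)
Your lower bound is exactly the paper's: \NP-hardness is inherited from \DLbn. Your upper-bound route, however, diverges from the paper's and contains a genuine gap. The paper does not re-prove any bounded-model property: it modifies the translation $\cdot^\ddagger$ so that the ABox prefixes $\nxt^n$ disappear---each ground atom $\nxt^n C(a)$ is replaced by $\bigl(\Rdiamond^n H^n_C(a)\land\neg\Rdiamond^{n+1}H^n_C(a)\bigr)\land\SVbox\bigl(H^n_C(a)\to C(a)\bigr)$, where the fresh predicate $H^n_C$ marks the moment $n$---and then invokes the Ono--Nakamura \NP{} result for the $\Rbox/\Rdiamond$-fragment as a black box. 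You instead keep the anchors $\nxt^{n_i}$ and try to extend the bounded-model argument itself to handle them.

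The gap is in your ``engine.'' It is true that each $\Rbox$-/$\Lbox$-subformula is monotone and flips at most once, but it does not follow that some model splits into $O(\ell)$ blocks of \emph{constant propositional type} with \emph{constant} infinite tails; in the Boolean fragment this is false. Take the legal \TdDLbn{} TBox axiom $\top \sqsubseteq \neg\Rbox\neg A \sqcap \neg\Rbox A$: its translation yields $\SVbox\bigl(\Rdiamond A(a) \land \Rdiamond\neg A(a)\bigr)$, which is satisfiable, yet in every model the atom $A(a)$ must change value cofinally often towards $+\infty$, so no model is eventually constant---and note that here \emph{all} $\Box$-subformulas are constantly false, so your monotonicity argument sees a single ``block'' inside which the atoms nevertheless oscillate. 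Constant-block collapse is sound only when $\Rdiamond$ is not expressible, which is exactly why the paper uses it only in the Horn/core case (Theorem~\ref{thm:hornLTL} and Remark~\ref{rem:core:extension}); the Boolean structure lemma (Lemma~\ref{l:structure}) instead uses \emph{periodic} tails that repeat a polynomial-length block containing a recurring witness for every false $\Rbox L$ or $\Lbox L$ (conditions (B$_3$) and (B$_5$)). Your algorithm can be repaired by replacing constant tails and gap-fillers with such periodic blocks---your unary-anchor observation is correct and then does give an \NP{} procedure---or you can sidestep the issue entirely with the paper's marker trick. Finally, your remark that \zSVDLb{} is simpler is right for a reason worth making explicit: $\SVdiamond C$ is rigid (time-invariant), so constant blocks \emph{are} sound there; the failure is specific to $\Rbox/\Lbox$.
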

\begin{proof}
  The lower bound is immediate from the complexity of \DLbn{}.
  The upper bound for \TdDLbn{} can be shown using a slight
  modification of the reduction $\cdot^\ddagger$ and the
  result of~\citeN{OnoNakamura80} mentioned above. We need to modify
  $\cdot^\ddagger$ in such a way that the target language does not
  contain the $\nxt^n$ operators of the ABox. We take a fresh predicate
  $H_C^n(x)$ for each ground atom $\nxt^n C(a)$ occurring in
  $\A^\dagger$ and use the following formulas instead of $\nxt^n C(a)$
  in $\A^\dagger$:
\begin{align*}
&\bigl(\Rdiamond^n H^n_C(a) \land \neg\Rdiamond^{n+1} H^n_C(a)\bigr) \land \SVbox\,\bigl(H^n_C(a) \rightarrow C(a)\bigr), & \text{if } n \geq 0,\\
&\bigl(\Ldiamond^{-n} H^n_C(a) \land \neg\Ldiamond^{-n+1} H^n_C(a)\bigr) \land \SVbox\,\bigl(H^n_C(a) \rightarrow C(a)\bigr), & \text{if } n < 0,
\end{align*}
where $\Rdiamond^n$ and $\Ldiamond^n$ denote $n$ applications of
$\Rdiamond$ and $\Ldiamond$, respectively.
Note that $H^n_C(a)$ holds at $m$ iff $m=n$. Thus, we use these predicates to `mark' a small number of moments of time in models.

The \NP{} upper bound trivially holds for \zSVDLb, a sublanguage
of \TdDLbn.
\end{proof}

Our next theorem also uses the reduction $\cdot^\ddagger$
and follows from the complexity results for the fragments $\kromLTL(\SVbox,\Rnext/\Lnext,\Rbox/\Lbox)$ and $\coreLTL(\SVbox,\Rnext/\Lnext)$ of \PTL{}, obtained by restricting the form of clauses in the Separated Normal Form (SNF)~\cite{DBLP:conf/ijcai/Fisher91} and proved in Section~\ref{sec:2ltl}.
\begin{theorem}\label{thm:krom-core}
  Satisfiability of \TdxDLkn{}, \TdDLkn{} and \TdxDLcn{} KBs KBs is \NP-complete.
\end{theorem}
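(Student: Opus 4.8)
The plan is to obtain both the upper and the lower bound from the \NP-completeness of the clausal \PTL{} fragments $\kromLTL(\SVbox,\Rnext/\Lnext,\Rbox/\Lbox)$ and $\coreLTL(\SVbox,\Rnext/\Lnext)$, which are established independently in Section~\ref{sec:2ltl}. For the \NP{} upper bound I would reuse the translation $\cdot^\ddagger$ of this section and simply inspect the shape of the clauses it produces. Every concept inclusion of a Krom (resp.\ core) TBox has at most two basic concepts, each prefixed by at most one of $\Rbox$, $\Lbox$, $\Rnext$, $\Lnext$; since $\cdot^*$ commutes with all temporal operators, $\T^\dagger$ yields, under a leading $\SVbox\forall x$, exactly the binary Krom clauses (resp.\ the positive and negative binary core clauses) of the target fragment, with literals decorated only by the permitted operators. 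The role conjuncts \eqref{eq:role:saturation}--\eqref{eq:role:existence:2} are already binary implications under $\SVbox$, and after grounding the single variable $x$ by the finitely many constants they too remain inside the fragment. Invoking \NP{} membership of the fragment then gives the upper bound.

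The ABox needs separate care. For \TdxDLkn{} and \TdxDLcn{} the operators $\Rnext,\Lnext$ are available, so each assertion $\nxt^n C(a)$ translates directly and stays within the fragment. For \TdDLkn{}, which lacks $\Rnext/\Lnext$, I would instead borrow the marking device from the proof of Theorem~\ref{thm:np1}: a fresh proposition $H^n_C$ pinned to the single moment $n$ and linked to $C$ by $\SVbox(H^n_C \to C)$. The only thing to verify is that these marking clauses are binary and hence Krom, so that the whole translation again lands in the Krom \PTL{} fragment and inherits its \NP{} membership.

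For the matching \NP{} lower bounds, observe that the atemporal Krom and core logics are only \NLogSpace-complete, so hardness must come from the temporal dimension. I would obtain it by the reverse embedding of the corresponding \PTL{} fragment: fix a single object name, read each propositional variable as a concept name applied to it, and turn each clause into a concept inclusion of exactly the same (Krom or core) shape, the temporal operators carrying over verbatim. Since every operator occurring in the two fragments---$\SVbox$, $\Rnext/\Lnext$ and, for Krom, $\Rbox/\Lbox$---is available in the respective temporal \DL, this is a faithful reduction, and the \NP-hardness of the fragments transfers.

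The main obstacle lies in the \TdxDLcn{} case. The \textsc{fpx} core concept language admits $\Rbox C$ and $\Lbox C$ in \emph{both} antecedent and consequent positions of core inclusions, whereas the target fragment $\coreLTL(\SVbox,\Rnext/\Lnext)$ has no $\Rbox/\Lbox$. A \emph{positive} occurrence of $\Rbox A$ is readily removed by a fresh proposition $p$ with the core step clauses $p \sqsubseteq \Rnext A$ and $p \sqsubseteq \Rnext p$; the delicate case is a \emph{negative} occurrence such as $\Rbox A \sqsubseteq B$, whose faithful renaming seems to call for an eventuality ($\Rdiamond \neg A$) that core clauses cannot express directly. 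Showing that such occurrences can nonetheless be absorbed into $\coreLTL(\SVbox,\Rnext/\Lnext)$ without leaving the core shape is precisely the Separated Normal Form analysis carried out for the \PTL{} fragments in Section~\ref{sec:2ltl}, and it is the one place where the argument is more than routine clause bookkeeping.
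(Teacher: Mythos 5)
Your handling of the two Krom logics and of all three lower bounds essentially reproduces the paper's argument: translate via $\cdot^\ddagger$ into $\kromLTL(\SVbox,\Rnext/\Lnext,\Rbox/\Lbox)$ and invoke Theorem~\ref{lem:bin-ltl:krom-diamond-next-np} for membership; reverse-embed the hardness constructions of Theorems~\ref{krom-low-NP} and~\ref{lem:bin-ltl:core-diamond} (propositional variables become concept names on a single object) for hardness. But there is a genuine gap in your \NP{} upper bound for \TdxDLcn{}. You insist that its translation must land in $\coreLTL(\SVbox,\Rnext/\Lnext)$, correctly observe that this fails because FPX core inclusions may carry $\Rbox/\Lbox$ on either side, and then defer the elimination of negative occurrences of $\Rbox A$ to ``the SNF analysis of Section~\ref{sec:2ltl}.'' No such analysis exists there: that section defines the clausal fragments and determines their complexity, but it nowhere shows how to rewrite core clauses containing $\Rbox/\Lbox$ into box-free core clauses. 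Indeed, your own diagnosis shows why this is problematic: renaming $\Rbox A \sqsubseteq B$ produces an eventuality $\neg B \to \Rdiamond \neg A$, whose SNF form involves negated literals and so falls outside the core shape (core $\lambda$'s admit no negation). The step you leave open is neither routine nor supplied anywhere in the paper.

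The gap disappears once one notices it need not be closed at all. Every core inclusion $D_1 \sqsubseteq D_2$ or $D_1 \sqcap D_2 \sqsubseteq \bot$ is also a Krom inclusion, so the $\cdot^\ddagger$ translation of a \TdxDLcn{} KB is already a $\kromLTL(\SVbox,\Rnext/\Lnext,\Rbox/\Lbox)$-formula, and \NP{} membership follows from Theorem~\ref{lem:bin-ltl:krom-diamond-next-np} uniformly for all three logics---this is exactly the paper's one-line argument. The box-free fragment $\coreLTL(\SVbox,\Rnext/\Lnext)$ is needed only for the \emph{lower} bound of \TdxDLcn{}, where your reverse embedding is sound; note only that the non-$\SVbox$ conjuncts of that hardness proof are unit literals or of the form $\SVbox d \to \bot$, which can be mimicked by ABox assertions and the global inclusion $\Rbox\Lbox D \sqsubseteq \bot$, whereas arbitrary initial clauses could not be (cf.\ the remark in the proof of Theorem~\ref{thm:core}). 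Finally, your marking device for the \TdDLkn{} ABox, borrowed from Theorem~\ref{thm:np1}, is unnecessary: since the target Krom fragment contains $\Rnext/\Lnext$, the $\nxt^n$-prefixed atoms of $\A^\dagger$ are admissible as they stand.
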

\begin{proof}
  The \NP{} upper bound follows from the fact that the $\cdot^\ddagger$   translation of a KB in any of the three languages is a $\kromLTL(\SVbox,\Rnext/\Lnext,\Rbox/\Lbox)$-formula. By Theorem~\ref{lem:bin-ltl:core-diamond}, satisfiability of such formulas is in \NP.
The matching lower bound for \TdDLkn{} (and \TdxDLkn{}) follows from the proof of \NP-hardness of $\kromLTL(\SVbox,\Lbox/\Rbox)$, which will be presented in Theorem~\ref{krom-low-NP}: one can take concept names instead of propositional variables and encode, in an obvious way, the formulas of the proof of Theorem~\ref{krom-low-NP} in a KB;  similarly, the lower bound for \TdxDLcn{} follows from \NP-hardness of $\coreLTL(\SVbox,\Rnext/\Lnext)$; see Theorem~\ref{lem:bin-ltl:core-diamond}.
\end{proof}

\begin{theorem}\label{thm:core}
 Satisfiability of \TdDLcn{} KBs is in \PTime.
\end{theorem}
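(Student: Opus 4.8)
The plan is to reuse the translation $\cdot^\ddagger$ from Section~\ref{sec:tld2tfol-red} and to verify that, for a $\TdDLcn$ KB $\K$, the equisatisfiable propositional temporal formula $\K^\ddagger$ (Lemma~\ref{lemma:Kddagger}) belongs to the fragment $\coreLTL(\SVbox,\Rbox/\Lbox)$ of \PTL, whose satisfiability is shown to be in \PTime{} in Section~\ref{sec:2ltl} (Table~\ref{tab:PTL-fragments}). Since $\cdot^\ddagger$ is computable in logarithmic space, the \PTime{} bound for $\TdDLcn$ then follows at once, so the bulk of the argument is a syntactic inspection of the clauses produced by $\cdot^\ddagger$.

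First I would look at the TBox and role parts. A core inclusion is either $D_1 \sqsubseteq D_2$ or $D_1 \sqcap D_2 \sqsubseteq \bot$, where each $D_i$ is an (FP) temporal concept obtained from a basic concept using only $\Rbox$ and $\Lbox$; under $\cdot^*$ these $D_i$ become unary atoms---a concept name $A(x)$ or a number-restriction predicate $E_qR(x)$---possibly prefixed by $\Rbox$ or $\Lbox$, and the global wrapper $\SVbox\forall x$ turns each inclusion into a binary (core) temporal clause over the operators $\SVbox,\Rbox,\Lbox$, with no $\Rnext/\Lnext$ and no $\U/\S$. The role axioms~\eqref{eq:role:saturation}--\eqref{eq:role:existence:2} together with the formulas $\vartheta_R'$ of Lemma~\ref{lemma:Kddagger} are, after instantiating the single variable by the constants, $\SVbox$-prefixed implications between single atoms (or between an atom and $\SVbox$ of an atom); these are again core clauses of $\coreLTL(\SVbox,\Rbox/\Lbox)$. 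Hence the whole of $\T^\dagger$ and the role conjuncts lie in the target fragment.

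The main obstacle is the ABox part $\A^\dagger$, which still carries the next-time operators $\nxt^n$. The device used for the \NP{} bound in Theorem~\ref{thm:np1}---replacing $\nxt^n C(a)$ by a marker $H^n_C$ governed by $\Rdiamond^n H^n_C(a)\land\neg\Rdiamond^{n+1}H^n_C(a)$---is unavailable here, since the core fragment offers neither $\Rdiamond/\Ldiamond$ nor negated left-hand sides. The essential point distinguishing $\TdDLcn$ from the \NP-hard $\TdxDLcn$ is that the only next operators come from the ABox and are applied to \emph{ground} facts: they merely fix the truth of the atom $(\mathop{\geq q^{R,n}_{\A(a)}} R)^*(a)$ (or of a concept-name atom) at the single, unary-encoded instant $n$. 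Since $\A$ mentions only polynomially many instants, these contribute only polynomially many ground unit constraints at fixed moments, and the Horn-style propagation procedure that witnesses the \PTime{} bound for $\coreLTL(\SVbox,\Rbox/\Lbox)$ can be fed these anchored facts without leaving \PTime; the delicate step is to check that anchoring the finitely many ABox moments (and enforcing the negative-assertion clashes that produce $\bot$) preserves both equisatisfiability with $\A^\dagger$ and the core shape of all clauses.

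Once all three parts---$\T^\dagger$, the role axioms, and the anchored $\A^\dagger$---are recognised as an instance of $\coreLTL(\SVbox,\Rbox/\Lbox)$ (extended by a polynomial set of ground facts), the result is immediate: by the classification of Section~\ref{sec:2ltl} this fragment is decidable in \PTime, and the translation itself runs within the same bound, so satisfiability of $\TdDLcn$ KBs is in \PTime. I expect the hardest part to be the careful treatment of the ABox rather than the TBox, precisely because the core fragment cannot internalise the $\nxt^n$ operators the way the Boolean and Krom fragments can.
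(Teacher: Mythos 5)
Your proposal takes essentially the same route as the paper: the paper's proof likewise observes that $\K^\ddagger$ splits into a $\coreLTL(\SVbox,\Rbox/\Lbox)$ part (TBox plus role axioms) and a conjunction of anchored ground facts $\nxt^n p$ coming from the ABox, and handles the latter exactly by the extension you anticipate---Remark~\ref{rem:core:extension} adds the ABox moments to the set of critical time points in the proof of Theorem~\ref{thm:hornLTL} and keeps the reduction to propositional Horn-SAT otherwise unchanged. Your observation that the $\Rdiamond$-marker trick of Theorem~\ref{thm:np1} is unavailable in the core fragment matches the paper's corresponding remark, so the proposal is correct and aligned with the published argument.
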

\begin{proof}
The result follows from the observation that  the $\cdot^\ddagger$   translation of a \TdDLcn{}  KB is of the form $\varphi'\land\varphi''$, where $\varphi'$ is a  $\coreLTL(\SVbox,\Rbox/\Lbox)$-formula representing the TBox and $\varphi''$ is a conjunction of formulas of the form $\nxt^n p$, for propositional variables $p$. A modification of the proof of Theorem~\ref{thm:hornLTL} (explained in  Remark~\ref{rem:core:extension}) shows that satisfiability of such formulas is in \PTime{}.

We note in passing that the matching lower bound for $\coreLTL(\SVbox,\Rbox/\Lbox)$, to be proved in Theorem~\ref{core-low-P}, does not imply \PTime{}-hardness of \TdDLcn{} as the formulas in the proof require an implication to hold at the initial moment of time, which is not expressible in \TdDLcn{}.\end{proof}

Finally, we show that the Krom and core fragments of \zSVDLb{} can  be simulated by 2CNFs (see, e.g.,~\cite{BoGG97}), whose satisfiability is \NLogSpace-complete.
\begin{theorem}\label{thm:nlogspace}
The satisfiability problem for \zSVDLc{} and \zSVDLk{} KBs is
  \NLogSpace-complete.
\end{theorem}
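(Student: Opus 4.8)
The plan is to sandwich the problem between the atemporal \DL{} logics and 2CNF satisfiability, which is \NLogSpace-complete~\cite{BoGG97}. The lower bound is then immediate: every atemporal \DLcn{} (resp.\ \DLkn) KB is a \zSVDLc{} (resp.\ \zSVDLk) KB that uses no temporal operators, and such a KB has a model over $(\Z,<)$ iff it has an ordinary \DL{} model (replicate any atemporal model at all instants, placing the ABox at moment $0$). Since \DLcn{} and \DLkn{} are \NLogSpace-hard by Table~\ref{table:DL-Lite:languages}, so are \zSVDLc{} and \zSVDLk.

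For the upper bound I would run the translation $\cdot^\ddagger$ of Section~\ref{sec:tld2tfol-red}. As the only temporal constructor on concepts here is $\SVbox$ and no Booleans are nested inside temporal concepts, the TBox part of $\K^\ddagger$ is a conjunction of \emph{binary} clauses whose only temporal operator is $\SVbox$ ($\SVdiamond$ arising by negation), while the ABox part merely pins atoms at the finitely many instants $n$ occurring in $\nxt^n$-assertions. The crucial point is that $\SVbox$ is the universal modality: $(\SVbox C)^{\I(n)}=\bigcap_{k\in\Z} C^{\I(k)}$ is independent of $n$, so the truth of every $\SVbox$- and $\SVdiamond$-literal is a global, time-independent fact. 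Hence each global clause that mentions a $\SVbox$/$\SVdiamond$-literal collapses to a binary clause over the \emph{global} variables alone, e.g.\ $\SVbox(p\vee\SVbox q)\equiv \SVbox p\vee\SVbox q$ and $\SVbox(p\vee\neg\SVbox q)\equiv \SVbox p\vee\neg\SVbox q$, whereas the purely local clauses constrain the valuation at every individual instant.

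I would then assemble a single 2CNF with: global variables for $\SVbox p$ and $\SVdiamond p$ for each atom $p$ (concept names, number restrictions $\mathop{\geq q}R$, and the witness variables of $\cdot^\ddagger$), linked by the clauses $\SVbox p\to p$, $p\to\SVdiamond p$ and $\SVbox p\to\SVdiamond p$; and one copy of the local variables for each instant that is either pinned by the ABox or needed as a $\SVdiamond$/$\neg\SVbox$-witness. There are only polynomially many such instants, so the 2CNF is polynomial and \LogSpace-computable (number restrictions are plain atoms, and the successor counting is the same \LogSpace-transducer used for $\cdot^\ddagger$), while 2CNF-satisfiability lies in \NLogSpace.

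The main obstacle is soundness---turning a satisfying assignment of the 2CNF into a genuine model over $(\Z,<)$. Once the global variables decide which atoms hold everywhere and which hold somewhere, one must realise every ABox-pinned instant and every demanded $\SVdiamond$- and $\neg\SVbox$-witness by a local valuation satisfying the local binary clauses, the globally forced literals, and the relevant ABox facts. I would discharge this with the implication-graph characterisation of 2-SAT: each demand reduces to a consistency check that the corresponding local-variable copy has already certified, and because $\Z$ is infinite these special instants can be placed at pairwise distinct integers without interfering, every remaining instant taking a default local solution. The converse direction is routine---reading the global and local valuations off any $\Z$-model gives a satisfying assignment---so $\K$ and its 2CNF are equisatisfiable, yielding the \NLogSpace{} upper bound and, together with the lower bound, the theorem.
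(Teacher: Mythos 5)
Your proposal is correct, and it shares its skeleton with the paper's proof: the lower bound is the same trivial inheritance from atemporal \DLcn{}, and the upper bound starts, as in the paper, from the translation $\cdot^\ddagger$ of Section~\ref{sec:tld2tfol-red} (Lemma~\ref{lemma:Kddagger}). Where you genuinely diverge is the last step. The paper keeps the time dimension first-order: it turns $\K^\ddagger$ into a two-sorted formula in which each non-boxed atom $B^*(x)$ acquires an explicit time argument $B^*(x,t)$ and each $\SVbox B^*(x)$ becomes a fresh predicate $U_B(x)$ axiomatised by $\forall t\,\forall x\,\bigl(U_B(x)\to B^*(x,t)\bigr)$ and $\forall x\,\exists t\,\bigl(B^*(x,t)\to U_B(x)\bigr)$, grounds $x$ over $\ob\cup\{d_R\mid R\in\role\}$, and then observes that the result is a Krom formula with quantifier prefix $\exists^*\forall$, whose satisfiability is in \NLogSpace{} by~\cite{BoGG97}. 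You instead propositionalise the time dimension by hand: your global variables for $\SVbox p$ are exactly the paper's $U_B$, your $\SVdiamond p$ variables absorb the collapse $\SVbox(\neg p\lor\SVbox q)\equiv\neg\SVdiamond p\lor\SVbox q$ that the paper expresses through the universally quantified $t$, and your witness instants are in effect a Skolemisation of the paper's $\exists t$ conjuncts, after which your default-valuation model construction re-proves the small-model property that the citation to~\cite{BoGG97} supplies for free. So your route is more elementary and self-contained (it needs only 2CNF-SAT in \NLogSpace{} plus an explicit model construction), while the paper's is shorter because it delegates precisely the witnessing and grounding work to the known result on $\exists^*\forall$ Krom formulas. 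In a full write-up you should make two tacit steps explicit: (i) instantiating the universal $x$ by the constants is sound only because $\K^\ddagger$ contains no existential quantifiers over the object domain, so a model of the grounding can take the constants themselves as its domain; and (ii) a concrete choice of ``default'' valuation for the unconstrained instants is the valuation your 2CNF assigns to any one distinguished instant, since the linking clauses $\SVbox p\to p_i$ and $p_i\to\SVdiamond p$ already force it to agree with every globally determined literal.
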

\begin{proof}
The lower bound is trivial from \NLogSpace-hardness of \DLcn.
We show the matching upper bound. Given a \zSVDLk{} KB $\K=(\T,\A)$, we consider the $\QTLi$-formula $\K^\ddagger$, which, by Lemma~\ref{lemma:Kddagger}, is satisfiable iff $\K$ is satisfiable. 
Now, we transform $\K^\ddagger$ into a two-sorted first-order formula $\K^{\ddagger_2}$ by representing the time dimension explicitly as a predicate argument. Recall that $\K^\ddagger$ is built from the propositional variables $p_R$, for $R\in\role$, and unary predicates $B^*$, for concepts $B$ of the form $A$ and $\mathop{\geq q} R$. Without loss of generality, we assume that there is at most one $\SVbox$ in front of each $B^*$ in $\K^\ddagger$. We replace each $B^*(x)$ in $\K^\ddagger$ that is not prefixed by $\SVbox$ with the binary predicate $B^*(x,t)$, and each $\SVbox B^*(x)$ with a fresh unary predicate $U_B(x)$; the outermost $\SVbox$ is replaced by $\forall t$.  To preserve the semantics of the $\SVbox B^*$, we also append to the resulting formula the conjuncts $\forall x\, \bigl(U_B(x) \leftrightarrow  \forall t\, B^*(x,t)\bigr)$, which are equivalent to
\begin{equation*}
\forall t\,\forall x\,\bigl(U_B(x) \to B^*(x,t)\bigr)
  \ \ \land\ \ 
\forall x\,\exists t\,\bigl(B^*(x,t) \to U_B(x) \bigr). 
\end{equation*}
The propositional variables $p_R$ of $\K^\ddagger$ remain propositional variables in $\K^{\ddagger_2}$, and the second conjunct of $\K^\ddagger$ is replaced by the following formula:
\begin{equation*}
 \bigl(p_R \to (\exists \inv{R})^*(d_{\inv{R}}, 0) \bigr)
 \ \ \land\ \ 
   \forall t\,\forall x\,\bigl((\exists R)^*(x,t)
 \to p_R\bigr)
\end{equation*}
with constant $0$.  Finally,
the ground atoms $\nxt^n B^*(a)$ in $\A^\dag$ are replaced by $B^*(a,n)$ with constants $n$. Thus, 
$\K^{\ddagger_2}$ is a conjunction of (at most)
binary clauses without quantifiers or with prefixes of the form $\forall
t\,\forall x$ and $\forall x\,\exists t$. Since the first argument of
the predicates, $x$, is always universally quantified, $\K^{\ddagger_2}$ is
equisatisfiable with the conjunction $\K^{\ddagger_3}$ of the
formulas obtained by replacing $x$ in $\K^{\ddagger_2}$ with  the constants in the set $\ob\cup\{d_R\mid R\in\role\}$. But then $\K^{\ddagger_3}$ is equivalent to a first-order Krom formula in prenex form with the quantifier prefix $\exists^*\forall$, satisfiability of which can be checked in \NLogSpace{} (see e.g.,~\cite[Theorem 8.3.6]{BoGG97}).
\end{proof}

\section{Clausal Fragments of Propositional Temporal Logic}\label{sec:2ltl}

Our aim in this section is to introduce and investigate a number of
new fragments of the propositional temporal logic \PTL. One reason for this is to obtain the complexity results required for the proof of Theorems~\ref{thm:krom-core} and~\ref{thm:core}. We believe, however, that these fragments are of sufficient interest on their own, independently of temporal conceptual modelling and reasoning. 

\citeN{SistlaClarke82} showed  that full \PTL{} is \PSpace-complete; see also~\cite{HalpernR81,DBLP:conf/lop/LichtensteinPZ85,Rabi:10,Rey:10}. \citeN{OnoNakamura80} proved that for formulas with only $\Rbox$ and $\Rdiamond$ the satisfiability problem becomes \NP-complete. Since then a number of fragments of \PTL{} with lower computational complexity have been identified and studied.
\citeN{ChenLin93} observed that the complexity of \PTL{} does not change even if we restrict attention to temporal Horn formulas. \citeN{DBLP:journals/iandc/DemriS02} determined the
complexity of fragments that depend on three parameters: the available
temporal operators, the number of nested temporal operators, and the
number of propositional variables in formulas. \citeN{Markey04}
analysed fragments defined by the allowed set of temporal operators,
their nesting and the use of negation.  \citeN{DBLP:conf/ijcai/DixonFK07} introduced a XOR fragment of
\PTL{} and showed its tractability.
\citeN{DBLP:journals/corr/abs-0812-4848}
systematically investigated the complexity of fragments given by both
temporal operators and Boolean connectives (using Post's lattice of
sets of Boolean functions).

In this section, we classify temporal formulas according to their clausal normal form. We remind the reader that 
any \PTL-formula can be transformed to an equisatisfiable formula in  \emph{Separated Normal Form}
(SNF)~\cite{DBLP:conf/ijcai/Fisher91}. A formula in SNF is a
conjunction of \emph{initial clauses} (that define `initial
conditions' at moment 0), \emph{step clauses} (that define `transitions' between consecutive states), and \emph{eventuality clauses} (ensuring that
certain states are eventually reached). More precisely, for the time flow $\Z$, a formula in SNF is a conjunction of clauses of the form
\begin{align*}
& L_1 \lor \dots \lor L_k,\\
& \SVbox \bigl( (L_1 \land \dots \land L_k) \to \nxt (L'_1 \lor \dots \lor L'_m)\bigr),\\
& \SVbox \bigl( (L_1 \land \dots \land L_k) \to \Rdiamond L\bigr),\\
& \SVbox \bigl( (L_1 \land \dots \land L_k) \to \Ldiamond L\bigr),
\end{align*}
where $L,L_1,\dots,L_k,L'_1,\dots,L'_m$ are \emph{literals}---i.e.,
propositional variables or their negations---and $\nxt$ is a
short-hand for $\Rnext$ (we will use this abbreviation throughout this
section). By definition, we assume the empty disjunction to be
$\bot$ and the empty conjunction to be $\top$. For example, the second clause with $m =
0$ reads $\SVbox (L_1\land \dots
\land L_k \to \bot)$.

The transformation to SNF is achieved by fixed-point unfolding and renaming~\cite{DBLP:journals/tocl/FisherDP01,Plaisted86}. Recall that an occurrence of a subformula is said to be \emph{positive} if it is in the scope of an even number of negations. Now, as $p \U q$ is equivalent to $\nxt q \lor \bigl(\nxt p \land \nxt (p \U q)\bigr)$, every \emph{positive occurrence} of $p \U q$ in a given formula $\varphi$ can be replaced by a fresh propositional variable $r$, with the following three clauses added as conjuncts to $\varphi$:
\begin{equation*}
\SVbox\bigl(r \to \nxt (q \lor p)\bigr), \qquad \SVbox\bigl(r \to \nxt (q \lor r)\bigr) \quad\text{and}\quad \SVbox\bigl(r \to \Rdiamond q\bigr).
\end{equation*}
The result is equisatisfiable with $\varphi$ but does not contain positive occurrences of $p \U q$. Similarly, we can get rid of other temporal operators and transform the formula to SNF~\cite{DBLP:journals/tocl/FisherDP01}.

We now define four types of fragments of \PTL, which are called $\coreLTL(\mathcal{X})$, $\kromLTL(\mathcal{X})$, $\hornLTL(\mathcal{X})$ and $\boolLTL(\mathcal{X})$, where $\mathcal{X}$ has one of the following four forms: $\SVbox,\Rnext/\Lnext,\Rbox/\Lbox$, or $\SVbox,\Rnext/\Lnext$, or $\SVbox,\Rbox/\Lbox$ or $\SVbox$. 
$\coreLTL(\mathcal{X})$-\emph{formulas}, $\varphi$, are constructed using the grammar:
\begin{align*}
\varphi \ \ &::= \ \ \psi \ \ \mid \ \ 
\SVbox \psi  \ \ \mid \ \ 
 \varphi_1 \land \varphi_2,\\
\psi \ \ &::= \ \ \lambda_1 \to \lambda_2 \ \ \mid \ \ \lambda_1 \land \lambda_2 \to \bot, \tag{\textit{core}}\\
\lambda \ \ &::= \ \ \bot \ \ \mid \ \ p \ \ \mid \ \  \bigstar \lambda, \quad \text{where $\bigstar$ is one of the operators in $\mathcal{X}$}.
\end{align*}
Definitions of the remaining three fragments differ only in the shape of $\psi$. In $\kromLTL(\mathcal{X})$-\emph{formulas}, $\psi$ is a binary clause:
\begin{align*}
\psi \ \ &::= \ \ \lambda_1 \to \lambda_2 \ \ \mid \ \ \lambda_1 \land \lambda_2 \to \bot \ \ \mid \ \ \lambda_1 \lor \lambda_2. \tag{\textit{krom}} 
\end{align*}
In $\hornLTL(\mathcal{X})$-\emph{formulas}, $\psi$ is a Horn clause:
\begin{align*}
\psi \ \ &::= \ \ \lambda_1 \land \dots \land \lambda_n \to \lambda, \tag{\textit{horn}} 
\end{align*}
while in $\boolLTL(\mathcal{X})$-\emph{formulas}, $\psi$ is an arbitrary  clause:
\begin{align*}
\psi \ \ &::= \ \ \lambda_1 \land \dots \land \lambda_n \to \lambda'_1 \lor \dots \lor \lambda'_k. \tag{\textit{bool}} 
\end{align*}
Note that, if $\mathcal{X}$ contains $\Box$-operators then the corresponding $\Diamond$-operators can be defined in the fragments $\kromLTL(\mathcal{X})$ and $\boolLTL(\mathcal{X})$. 

\begin{table}[t]
\tbl{Complexity of Clausal Fragments of \PTL.
\label{table:PTL:languages}}{
\centering\renewcommand{\arraystretch}{1.4}
\begin{tabular}{|c|c|c|c|c|}\hline
 & $\SVbox$, $\Rbox/\Lbox$, $\raisebox{2pt}{\tiny$\bigcirc$}\hspace{-0.15em}_{\scriptscriptstyle F}/\raisebox{2pt}{\tiny$\bigcirc$}\hspace{-0.15em}_{\scriptscriptstyle P}$ &$\SVbox$, $\raisebox{2pt}{\tiny$\bigcirc$}\hspace{-0.15em}_{\scriptscriptstyle F}/\raisebox{2pt}{\tiny$\bigcirc$}\hspace{-0.15em}_{\scriptscriptstyle P}$ & $\SVbox$, $\Rbox/\Lbox$ & $\SVbox$  \\\hline 
Bool & \PSpace{}  &\PSpace & \NP{}  & \NP \\\hline
Horn & \PSpace  &\PSpace{}  & \PTime{} {\scriptsize $[\leq$ Th.~\ref{thm:hornLTL}$]$} & \PTime \\\hline
Krom & \NP{} {\scriptsize $[\leq$ Th.~\ref{lem:bin-ltl:krom-diamond-next-np}$]$} & \NP & \NP{} $[\geq$ Th.~\ref{krom-low-NP}$]$ & \NLogSpace \\\hline
core & \NP & \NP{} {\scriptsize $[\geq$ Th.~\ref{lem:bin-ltl:core-diamond}$]$} & \PTime{} {\scriptsize $[\geq$Th.~\ref{core-low-P}$]$} & \NLogSpace \\\hline
\end{tabular}
}\label{tab:PTL-fragments}
\end{table}
Table~\ref{tab:PTL-fragments} shows how the complexity of the
satisfiability problem for \PTL-formulas depends on the type of the
underlying propositional clauses and the available temporal operators.
The \PSpace{} upper bound is
well-known; the
matching lower bound can be obtained by a standard encoding of
deterministic Turing machines with polynomial tape
(cf.~Theorem~\ref{thm:pspace}).  The \NP{} upper bound for
$\boolLTL(\SVbox,\Rbox/\Lbox)$ follows from~\cite{OnoNakamura80}.  The
\NLogSpace{} lower bound is trivial and the matching upper bound
follows from the complexity of the Krom formulas with the quantifier
prefix of the form $\exists^*\forall$~\cite{BoGG97} (a similar
argument is used in Theorem~\ref{thm:nlogspace}). In the remainder of
this section, we prove all other results in this table.  It is worth
noting how the addition of $\nxt$ or $\neg$ increases the complexity
of $\hornLTL(\SVbox,\Rbox/\Lbox)$ and $\coreLTL(\SVbox,\Rbox/\Lbox)$.

\begin{theorem}\label{lem:bin-ltl:krom-diamond-next-np}
The satisfiability problem for $\kromLTL(\SVbox,\Rnext/\Lnext,\Rbox/\Lbox)$-formulas is in \NP.
\end{theorem}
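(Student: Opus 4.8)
The plan is to establish a bounded-model property for the normalised formula and then decide satisfiability by guessing a polynomial-size model description and checking it in \PTime. Throughout, the decisive feature is that the clauses are binary (Krom) and that no $\U/\S$ occurs inside a literal.

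First I would put the temporal literals $\lambda$ into a normal form. Since the only operators admissible inside a $\lambda$ are $\SVbox,\Rnext,\Lnext,\Rbox,\Lbox$, every chain applied to a base variable $p$ collapses to one of four shapes: a \emph{global} literal $\SVbox p$; a \emph{point} literal $\nxt^{c}p$ (meaning $p$ holds $c$ steps away); a \emph{future-ray} literal (meaning $p$ holds throughout a ray $(n+c,\infty)$); or a \emph{past-ray} literal. The collapsing rules are routine: applying any operator to the position-independent $\SVbox\chi$ returns $\SVbox\chi$, so any chain containing a $\SVbox$ reduces to $\SVbox p$; moreover $\Rbox\Lbox p=\Lbox\Rbox p=\SVbox p$, so any chain whose boxes accumulate into a two-sided ray also collapses to $\SVbox p$; the remaining chains shift the offset $c$ by at most one per operator, so $|c|\le|\varphi|$ (and the $\nxt^{c}$ are written in unary, so this is harmless), while $\bot$-based literals are all $\bot$. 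Crucially, a \emph{negated} literal becomes an eventuality: $\neg\SVbox p$ is $\SVdiamond\neg p$, the negation of a future-ray literal is a future $\Rdiamond$-eventuality for $\neg p$, that of a past-ray literal a past $\Ldiamond$-eventuality, and the negation of a point literal is again a point $\nxt^{c}\neg p$. Thus after normalisation $\varphi$ is a conjunction of binary clauses over literals of these few kinds, each clause asserted either globally (under $\SVbox$) or only at the instant $0$.

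Next I would prove the key combinatorial fact: if $\varphi$ is satisfiable then it has a model that is \emph{ultimately periodic in both directions} with threshold $N$ and periods $\pi^{+},\pi^{-}$ all bounded polynomially in $|\varphi|$ — that is, each variable repeats with period $\pi^{+}$ for $n>N$ and with period $\pi^{-}$ for $n<-N$. (Periodicity rather than a constant tail is unavoidable, since binary clauses such as $\SVbox(p\to\nxt\neg p)$ and $\SVbox(\neg p\to\nxt p)$ already force alternation.) Such a model is fully described by the $2N+1$ slices inside the window $[-N,N]$ together with two periodic patterns of length at most $\pi^{+}$ and $\pi^{-}$, a certificate of polynomial size. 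The \NP{} procedure then guesses this certificate and verifies it in polynomial time: the point and ray literals, having bounded offset, are decided locally at each relevant instant; the global clauses need be checked only at the polynomially many instants of the window and of one representative period on each side, since everything outside repeats; and each eventuality arising from a negated literal is discharged either by a witness inside the window or by a falsifying instant occurring within the tail period ($\neg p$ occurring once per period $\pi^{+}$ settles every future eventuality for $\neg p$ raised in the future tail, and dually for the past). All of these reduce to evaluating a fixed conjunction of binary clauses at polynomially many integer points, which is in \PTime.

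The heart of the argument, and the step I expect to be the main obstacle, is the bounded-model property with a \emph{polynomial} period and threshold. The threat is precisely the interaction of $\Rnext/\Lnext$ with globally asserted clauses, which in the Boolean case $\boolLTL(\SVbox,\Rnext/\Lnext,\Rbox/\Lbox)$ permits one to implement a binary counter and therefore forces \PSpace. The binary shape of Krom clauses is what rules this out, since a counter requires a conjunctive ($n$-ary) antecedent for the carry. Concretely, I would take an arbitrary model and collapse it by a pumping/quotient argument, merging time instants whose local type over the bounded offset window agrees, and arguing that a surviving binary clause cannot create an implication obligation absent before the merge; this confines the instants at which some variable actually changes to a polynomial set and forces the tails into short periodic patterns. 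Carrying out this collapse so that it simultaneously preserves the global binary clauses \emph{and} all the future and past eventualities is the delicate part. Once it is in place, the normalisation of the first paragraph together with the guess-and-verify scheme of the second yields membership in \NP.
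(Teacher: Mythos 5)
Your normalisation of the literals and your intuition that Krom clauses cannot implement a counter are both sound, but the central lemma on which everything rests --- that every satisfiable formula has a model that is ultimately periodic with \emph{polynomial} threshold $N$ and \emph{polynomial} periods $\pi^{+},\pi^{-}$ --- is false, and the counterexample is contained in this very paper (Theorem~\ref{lem:bin-ltl:core-diamond}). That NP-hardness proof produces $\coreLTL(\SVbox,\Rnext/\Lnext)$-formulas (hence Krom formulas) which force a variable $d$ to hold on every arithmetic progression $j+P_i\cdot\N$ with $2\le j<P_i$ (for the first $m$ primes $P_i$) and on the progressions coming from the 3CNF clauses, force $d$ to hold at all $k\le 0$, and demand via $\SVbox d\to\bot$ that $d$ fail at some $k_0>0$. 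Any such $k_0$ must avoid all the progressions, i.e.\ encode a satisfying assignment modulo $P_1\cdots P_m$; taking a 3CNF whose unique satisfying assignment is all-zeros makes the least admissible $k_0$ at least $P_1\cdots P_m$, exponential in $m$. So in every model the witness $k_0$ lies beyond any polynomial threshold, hence inside your periodic tail; periodicity then makes $d$ false along the whole progression $k_0+\pi^{+}\cdot\N$, and for that progression to avoid $2+P_i\cdot\N$ for a prime $P_i\ge 3$ one needs $P_i\mid\pi^{+}$ (otherwise $\pi^{+}$ is invertible modulo $P_i$ and $k_0+\pi^{+}\cdot\N$ hits every residue class). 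Thus $\pi^{+}$ is divisible by all the odd primes among $P_1,\dots,P_m$ and is exponential --- and this argument goes through even if you allow a separate polynomial period per variable, so no certificate of the shape you propose (a polynomial window plus short repeating patterns) exists in general.

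This is precisely why the paper's proof does not go through short periodic models. Lemma~\ref{l:structure} keeps only polynomially many distinguished types but allows the distances $m_{i+1}-m_i$ between them to be \emph{exponential}, represented in binary; the real work is then to show that condition \textbf{(B$_6$)} --- whether two types can be placed at a given, possibly exponential, distance with a fixed filler set of literals in between --- is decidable in polynomial time when the clauses are binary. This is done by closing $\Phi$ under resolution into $\Phi^*$, reading off literal-to-literal reachability as unary automata $\mathfrak{A}_{L,L'}$, and converting these to Chrobak normal form: polynomially many arithmetic progressions with polynomial initial terms and differences, whose \emph{combined} period may nevertheless be exponential. Membership of the guessed distances in these progressions is then checked by solving trivial Diophantine equations. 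If you want to rescue your approach, the object to guess is not a periodic word but exactly such a family of progressions --- at which point you will have essentially reconstructed the paper's argument.
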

\begin{proof}
We proceed as follows. First, in Lemma~\ref{l:structure}, we give a satisfiability criterion for \PTL-formulas in terms of types---sets of propositions that occur in the given formula---and distances between them in temporal models. The number of types required is polynomial in the size of the given formula; the distances, however, are exponential, and although they can be represented in binary (in polynomial space), in general there is no polynomial algorithm that checks whether two adjacent types can be placed at a given distance (unless $\PTime = \PSpace$).  In the remainder of the proof, we show that, for formulas with \emph{binary clauses}, this condition can be verified by constructing a polynomial number of polynomial arithmetic progressions (using unary automata). This results in a non-deterministic polynomial-time algorithm: guess types and distances between them, and then verify (in polynomial time) whether the types can be placed at the required distances.  

Let $\varphi'$ be a $\kromLTL(\SVbox,\Rnext/\Lnext,\Rbox/\Lbox)$-formula. By introducing fresh propositional variables if required, we can transform $\varphi'$ (in polynomial time) to a formula 
\begin{equation}\label{eq:kromltl:input}
\varphi \ \ \ = \ \ \ \Psi\ \ \land \ \ \SVbox \Phi,
\end{equation}
where $\Psi$ contains no temporal operators and $\Phi$ contains no nested occurrences of temporal operators. Indeed, if $\varphi'$ contains a conjunct $\psi$ with a temporal $\lambda$, then we take a fresh propositional variable $\overline{\lambda}$, replace $\lambda$ in $\psi$ with $\overline{\lambda}$, and add to $\varphi'$ a new conjunct $\SVbox (\overline{\lambda} \leftrightarrow \lambda)$. In a similar way we get rid of nested occurrences of temporal operators in $\Phi$.

We will not distinguish between a set of formulas and the conjunction of its elements, and  write $\SVbox \Phi$ for $\bigwedge_{\chi\in\Phi}\SVbox \chi$. As $\SVbox (\lambda \lor \Lnext\lambda')$ is  equivalent to $\SVbox(\Rnext\lambda \lor \lambda')$, we can assume that $\Phi$ does not contain $\Lnext$ (remember that we agreed to denote $\Rnext$ by $\nxt$). 
We regard $\SVbox$ inside $\Phi$ as  defined by $\SVbox\lambda = \Rbox\Lbox\lambda$. Thus, we assume that $\Phi$ contains only $\nxt$, $\Lbox$ and $\Rbox$ (which are not nested).  

We first characterise the structure of models for formulas of the form~\eqref{eq:kromltl:input} (with $\Psi$ and $\Phi$ satisfying those conditions). It should be noted that this structure only depends on $\varphi$ being of that form (cf.~\cite{Gabbayetal94,GKWZ03} and references therein) and does not depend on whether or not $\Psi$ and $\Phi$ are sets of binary clauses.
To this end, for each $\Rbox L$ in $\varphi$, we take a fresh propositional variable, denoted $\overline{\Rbox L}$ and called the \emph{surrogate} of $\Rbox L$; likewise, for each $\Lbox L$ we take 
its surrogate $\overline{\Lbox L}$. Let $\overline{\Phi}$ be the result of replacing $\Box$-subformulas in $\Phi$ by their surrogates. 
It should be clear that $\varphi$ is equisatisfiable with
\begin{equation*}
\overline{\varphi} \ \ \ =  \ \ \ \Psi\land\SVbox\overline{\Phi}\ \ \ \ \land \ \bigwedge_{\Rbox L \text{ occurs in } \Phi} \hspace*{-1.5em}\SVbox(\overline{\Rbox L} \leftrightarrow \Rbox L) \ \ \ \ \land\ \bigwedge_{\Lbox L \text{ occurs in } \Phi} \hspace*{-1.5em}\SVbox(\overline{\Lbox L} \leftrightarrow \Lbox L).
\end{equation*}
By a \emph{type} for $\overline{\varphi}$ we mean any set of literals that contains either $p$ or $\neg p$, for each variable $p$ in $\overline{\varphi}$ (including the surrogates $\overline{\Rbox L}$ and $\overline{\Lbox L}$).

\begin{lemma}\label{l:structure}
The formula $\overline{\varphi}$ is satisfiable iff there exist $k+5$ integers
\begin{equation*}
m_0 < m_1 < \dots < m_{k+4}
\end{equation*}
\textup{(}where $k$ does not exceed the number of $\Rbox L$ and $\Lbox L$\textup{)} and a sequence $\Psi_0, \Psi_1, \dots, \Psi_{k+4}$ of types for $\overline{\varphi}$ satisfying the following conditions \textup{(}see Fig.~\ref{b6}\textup{)}\textup{:}
\begin{akrzlist}\itemsep=6pt
\item[B$_0$] $m_{i+1} - m_i < 2^{|\overline{\varphi}|}$, for $0 \leq i < k+4$\textup{;}
\item[B$_1$] there exists $
\ell_0$ such that $0 \leq \ell_0 \leq k+4$ and  $\Psi \land \Psi_{\ell_0}$ is consistent\textup{;}
\item[B$_2$] for each $i$, $0 \leq i < k+4$, and each $\Rbox L$ in $\Phi$,
\begin{equation*}
\text{if } \overline{\Rbox L}\in \Psi_i  \text{ then } L,\overline{\Rbox L}\in \Psi_{i+1},\qquad\text{ and }\qquad
\text{if }\overline{\Rbox L} \in \Psi_{i+1} \setminus \Psi_i \text{ then } L\notin \Psi_{i+1}\textup{;}
\end{equation*}
\item[B$_3$] there exists $\ell_F < k+4$ such that
  $\Psi_{\ell_F}=\Psi_{k+4}$ and, for each $\Rbox L$ in $\Phi$,
\begin{equation*}
\text{if } \overline{\Rbox L} \notin \Psi_{\ell_F} \text{ then } L \notin
    \Psi_j, \text{ for some } j \geq \ell_F\textup{;}
\end{equation*}
\item[B$_4$] for each $i$, $0 < i \leq k+4$, and each $\Lbox L$ in $\Phi$,
\begin{equation*}
\text{if } \overline{\Lbox L}\in \Psi_i \text{ then } L,\overline{\Lbox L}\in \Psi_{i-1},\qquad\text{ and }\qquad
\text{if } \overline{\Lbox L} \in \Psi_{i-1} \setminus
  \Psi_i \text{ then } L\notin \Psi_{i-1}\textup{;} 
\end{equation*}
\item[B$_5$] there exists $\ell_P > 0$ such that $\Psi_{\ell_P}=\Psi_0$ and, for each $\Lbox L$ in $\Phi$,
\begin{equation*}
\text{if } \overline{\Lbox L} \notin \Psi_{\ell_P} \text{ then } L \notin
    \Psi_j, \text{ for some } j \leq \ell_P\textup{;}
\end{equation*}
\item[B$_6$]
for all $i$, $0 \leq i < k+4$, the following formula is consistent\textup{:}
\begin{equation}\label{eq:kromltl:interval}
  \Psi_i \ \ \ \land \bigwedge_{j = 1}^{m_{i+1} - m_i - 1} \hspace*{-1.5em}\nxt^j 
  \Theta_i \ \ \ \ \land \ \ \ \nxt^{m_{i+1} - m_i} \Psi_{i+1} \ \ \ \ \land \ \ \ \ \SVbox \overline{\Phi},
\end{equation}
where $\nxt^j \Psi$ is the result of attaching $j$ operators $\nxt$ to each literal in $\Psi$ and
\begin{multline*}
\Theta_i \ \ = \ \
 \{ L, \overline{\Rbox L} \mid \overline{\Rbox L}\in \Psi_i \} \cup
\{ \neg\overline{\Rbox L} \mid \overline{\Rbox L}\notin \Psi_i \} \cup {} \\
 \{ L, \overline{\Lbox L} \mid \overline{\Lbox L}\in\Psi_{i+1} \} \cup
\{ \neg\overline{\Lbox L} \mid \overline{\Lbox L}\notin \Psi_{i+1} \}.
\end{multline*}
\end{akrzlist}
\end{lemma}
\begin{figure}[t]
\centering
\begin{tikzpicture}[>=latex,point/.style={circle,draw=black,minimum size=1mm,inner sep=0pt}]\footnotesize
\begin{scope}[xscale=0.7]
\draw[->](0,0) -- (13,0);
\node at (1.5,0) [point,fill=black,label=above:{$\neg\Rbox L$}] {};
\node at (3,0) [point,fill=black,label=above:{$\neg\Rbox L$}] {};
\node at (4,0) [label=above:{$\ldots$}] {};
\node at (5,0) [point,fill=black,label=above:{$\neg\Rbox L$}] {};
\node at (6.5,0) [point,fill=black,label=above:{$\Rbox L$},label=below:{$\neg L$}] {};
\node at (8,0) [point,fill=black,label=above:{$\Rbox L$},label=below:{$L$}] {};
\node at (9,0) [label=above:{$\ldots$}] {};
\node at (10,0) [point,fill=black,label=above:{$\Rbox L$},label=below:{$L$}] {};
\node at (11.5,0) [point,fill=black,label=above:{$\Rbox L$},label=below:{$L$}] {};
\draw[densely dotted] (0.75,-0.5) -- ++(0,1.5);
\draw[densely dotted] (2.25,-0.5) -- ++(0,1.5);
\draw[densely dotted] (5.75,-0.5) -- ++(0,1.5);
\draw[densely dotted] (7.25,-0.5) -- ++(0,1.5);
\draw[densely dotted] (10.75,-0.5) -- ++(0,1.5);
\draw[densely dotted] (12.25,-0.5) -- ++(0,1.5);
\node at (1.5,0.7) {$\Psi_{i-1}$};
\node at (6.5,0.7) {$\Psi_i$};
\node at (9,0.7) {$\Theta_i$};
\node at (11.5,0.7) {$\Psi_{i+1}$};
\end{scope}
\begin{scope}[xscale=0.7, yshift=-20mm, xshift=50mm]
\draw[->](0,0) -- (13,0);
\node at (1.5,0) [point,fill=black,label=above:{$\Lbox L$},label=below:{$L$}] {};
\node at (3,0) [point,fill=black,label=above:{$\Lbox L$},label=below:{$L$}] {};
\node at (4,0) [label=above:{$\ldots$}] {};
\node at (5,0) [point,fill=black,label=above:{$\Lbox L$},label=below:{$L$}] {};
\node at (6.5,0) [point,fill=black,label=above:{$\Lbox L$},label=below:{$\neg L$}] {};
\node at (8,0) [point,fill=black,label=above:{$\neg\Lbox L$}] {};
\node at (9,0) [label=above:{$\ldots$}] {};
\node at (10,0) [point,fill=black,label=above:{$\neg\Lbox L$}] {};
\node at (11.5,0) [point,fill=black,label=above:{$\neg\Lbox L$}] {};
\draw[densely dotted] (0.75,-0.5) -- ++(0,1.5);
\draw[densely dotted] (2.25,-0.5) -- ++(0,1.5);
\draw[densely dotted] (5.75,-0.5) -- ++(0,1.5);
\draw[densely dotted] (7.25,-0.5) -- ++(0,1.5);
\draw[densely dotted] (10.75,-0.5) -- ++(0,1.5);
\draw[densely dotted] (12.25,-0.5) -- ++(0,1.5);
\node at (1.5,0.7) {$\Psi_i$};
\node at (4,0.7) {$\Theta_i$};
\node at (6.5,0.7) {$\Psi_{i+1}$};
\node at (11.5,0.7) {$\Psi_{i+2}$};
\end{scope}
\end{tikzpicture}
\caption{Conditions \textbf{(B$_2$)}, \textbf{(B$_4$)} and \textbf{(B$_6$)} in Lemma~\ref{l:structure}.}
\label{b6}
\end{figure}
\begin{proof}
  ($\Rightarrow$) Let $\Mmf,0\models\overline{\varphi}$. Denote by
  $\Psi(m)$ the type for $\overline{\varphi}$ containing all literals that
  hold at $m$ in $\Mmf$. As the number of types is
  finite, there is $m_{\scriptscriptstyle F} > 0$ such that each type in the sequence
  $\Psi(m_{\scriptscriptstyle F}),\Psi(m_{\scriptscriptstyle F}+1),\dots$ appears infinitely
  often; similarly, there is $m_{\scriptscriptstyle P} < 0$ such that each type in the
  sequence $\Psi(m_{\scriptscriptstyle P}),\Psi(m_{\scriptscriptstyle P}-1),\dots$ appears
  infinitely often.  Then, for each subformula $\Rbox L$ of $\Phi$,
  we have one of the three options: (1) $L$ is always true in $\Mmf$, in which case we set
  $m_{\Box_{\!F} L} = 0$; (2)~there is $m_{\Box_F L}$ such that $\Mmf,m_{\Box_{\!F}
    L}\models \neg L \land \Rbox L$, in which case $m_{\scriptscriptstyle P} < m_{\Box_{\!F} L}
  < m_{\scriptscriptstyle F}$; or (3) $\Rbox L$ is always false in $\Mmf$, in which
  case $L$ is false infinitely often after $m_{\scriptscriptstyle F}$, and so
  there is $m_{\Box_{\!F} L}\geq m_{\scriptscriptstyle F}$ such that $\Mmf,m_{\Box_{\!F}
    L}\models \neg L$. Symmetrically, for each subformula $\Lbox L$ of
  $\Phi$, (1) $L$ is always true in $\Mmf$, in which case we set
  $m_{\Box_{\!P} L} = 0$; or (2) there is an $m_{\Box_{\!P} L}$ such that $m_{\scriptscriptstyle P} <
  m_{\Box_{\!P} L} < m_{\scriptscriptstyle F}$ and $\Mmf,m_{\Box_{\!P} L}\models \neg L
  \land \Lbox L$; or (3) $\Lbox L$ is always false in $\Mmf$, in
  which case there is $m_{\Box_{\!P} L} \leq m_{\scriptscriptstyle P}$ such that
  $\Mmf,m_{\Box_{\!P} L}\models \neg L$. Let
  $m_1<m_2<\dots<m_{k+3}$ be an enumeration of the set
\begin{equation*}
\{0,m_{\scriptscriptstyle P},m_{\scriptscriptstyle F}\} \ \ \cup\ \ \{ m_{\Box_{\!F} L} \mid \Rbox L \text{ occurs in } \Phi \} \ \ \cup\ \ \{ m_{\Box_{\!P} L} \mid \Lbox L \text{ occurs in } \Phi\}.
\end{equation*}
Let $m_{k+4} > m_{k+3}$ be such that $\Psi(m_{k+4}) =
\Psi(m_{\scriptscriptstyle F})$ and let $m_0 < m_1$ be such that $\Psi(m_0) =
\Psi(m_{\scriptscriptstyle P})$. We then set $\Psi_i = \Psi(m_i)$, for $0
\leq i \leq k+4$.  Let $\ell_0$, $\ell_P$ and $\ell_F$ be such that
$m_{\ell_0} = 0$, $m_{\ell_P} = m_{\scriptscriptstyle P}$ and $m_{\ell_F} = m_{\scriptscriptstyle F}$. It should
be clear that~\textbf{(B$_1$)}--\textbf{(B$_6$)} hold. Finally, given
a model of $\overline{\varphi}$ with two moments $m$ and $n$ such that the types
at $m$ and $n$ coincide, we can construct a new model for $\overline{\varphi}$ by
`removing' the states $i$ with $m \leq i <
n$. Since the number of distinct types is bounded by $2^{|\overline{\varphi}|}$,
by repeated applications of this construction we can further
ensure~\textbf{(B$_0$)}.

\smallskip

($\Leftarrow$) We construct a model $\Mmf$ of $\overline{\varphi}$
by taking finite cuts of the models $\Mmf_i$ of the formulas
in~\textbf{(B$_6$)}: between the moments $m_0$ and $m_{k+4}$, the
model $\Mmf$ coincides with the models
$\Mmf_0,\dots,\Mmf_{k+3}$ so that at the moment $m_i$
in $\Mmf$ we align the moment 0 of $\Mmf_i$, and at the
moment $m_{i+1}$ we align the moment $m_{i+1}-m_i$ of
$\Mmf_i$, which coincides with the moment 0 of
$\Mmf_{i+1}$ because both are defined by $\Psi_{i+1}$; before
the moment $m_0$, the model $\Mmf$ repeats infinitely often  its own
fragment between $m_0$ and $m_{\ell_P}$, and after $m_{k+4}$ it
repeats infinitely often its fragment between $m_{\ell_F}$ and $m_{k+4}$
(both fragments contain more than one
state). It is readily seen that
$\Mmf,m_{\ell_0}\models\overline{\varphi}$.
\end{proof}

By Lemma~\ref{l:structure}, if we provide a polynomial-time algorithm
for verifying~\textbf{(B$_6$)}, we can check satisfiability of
$\overline{\varphi}$ in \NP. Indeed, it suffices to guess $k+5$ types
for $\overline{\varphi}$ and $k+4$ natural numbers $n_i = m_{i+1} -
m_i$, for $0 \leq i < k+4$, whose binary representation,
by~\textbf{(B$_0$)}, is polynomial in $|\overline{\varphi}|$. It is
easy to see that \textbf{(B$_1$)}--\textbf{(B$_5$)} can be checked in
polynomial time. We show now that~\textbf{(B$_6$)} can also be
verified in polynomial time for
$\kromLTL(\SVbox,\Rnext/\Lnext,\Rbox/\Lbox)$-formulas.

Our problem is as follows: given a number $n \geq 0$ (in binary), types $\Psi$ and $\Psi'$, a set $\Theta$ 
of literals and a set $\Phi$ of binary clauses of the form
$D_1 \lor D_2$, where the $D_i$ are \emph{temporal} literals $p$,
$\neg p$, $\nxt p$ or $\neg\nxt p$, decide whether there is a model
satisfying
\begin{equation}\label{eq:reachability}
\Psi  \ \ \ \land \ \ \bigwedge_{k = 1}^{n-1} \nxt^k \Theta \ \ \ \land \ \ \nxt^n \Psi' \ \ \ \ \land \SVbox\Phi.
\end{equation}
In what follows, we write $\psi_1 \models \psi_2$ as a shorthand for `in every $\Mmf$, if $\Mmf,0 \models \psi_1$ then $\Mmf,0 \models \psi_2$.' For $0 \leq k \leq n$, we set: 
\begin{align*}
F^k_{\Phi}(\Psi) & = \bigl\{ L' \mid L\land \SVbox \Phi \models \nxt^k L', \text{ for } L\in\Psi \bigr\},\\
P^k_{\Phi}(\Psi') & = \bigl\{ L \mid \nxt^k L'\land \SVbox \Phi \models L, \text{ for } L'\in\Psi' \bigr\}.
\end{align*}

\begin{lemma}
Formula~\eqref{eq:reachability} is satisfiable iff the following conditions hold\textup{:}
\begin{akrzlist}
\item[\textup{L$_1$}] 
$F^0_{\Phi}(\Psi)\subseteq\Psi$, $F^n_{\Phi}(\Psi)\subseteq\Psi'$ and $P^0_{\Phi}(\Psi')\subseteq\Psi'$, $P^n_{\Phi}(\Psi')\subseteq\Psi$\textup{;}
\item[\textup{L$_2$}] 
 $\neg L\notin F^k_{\Phi}(\Psi)$ and $\neg L\notin P^{n-k}_{\Phi}(\Psi')$,  for all $L\in\Theta$ and $0 < k < n$.
\end{akrzlist}
\end{lemma}
\begin{proof}
It should be clear that if~\eqref{eq:reachability} is satisfiable
then the above conditions hold.  For the converse direction,  observe that if $L'\in F^k_\Phi(\Psi)$ then, since $\Phi$ is a set of binary clauses, there
  is a sequence of $\nxt$-prefixed literals $\nxt^{k_0} L_0 \leadsto \nxt^{k_1} L_1 \leadsto \dots \leadsto
  \nxt^{k_m} L_m$ such that $k_0 = 0$, $L_0\in\Psi$, $k_m = k$, $L_m =
  L'$, each $k_i$ is between $0$ and $n$ and the $\leadsto$ relation is defined by taking $\nxt^{k_i} L_i \leadsto \nxt^{k_{i+1}} L_{i+1}$ just in one of the three cases: $k_{i+1} = k_i$ and $L_i \to L_{i+1}\in\Phi$ or $k_{i+1} = k_i + 1$ and $L_i \to \nxt L_{i+1}\in\Phi$ or $k_{i+1} = k_i - 1$ and $\nxt L_i \to L_{i+1}\in\Phi$ (we assume that, for example, $\neg q \to \neg p\in\Phi$ whenever $\Phi$ contains $p \to q$).  
So, suppose
  conditions~\textbf{(L$_1$)}--\textbf{(L$_2$)} hold. We construct an
  interpretation
  satisfying~\eqref{eq:reachability}. By~\textbf{(L$_1$)}, both
  $\Psi\land\SVbox\Phi$ and $\nxt^n \Psi'\land\SVbox\Phi$ are
  consistent. So, let $\Mmf_\Psi$ and $\Mmf_{\Psi'}$ be such that
  $\Mmf_\Psi,0\models\Psi\land\SVbox \Psi$ and $\Mmf_\Psi,n\models\Psi'\land\SVbox \Psi$, respectively.
  Let $\Mmf$ be an interpretation that coincides with $\Mmf_\Psi$ for all
  moments $k \leq 0$ and with $\Mmf_{\Psi'}$ for all $k \geq n$;
  for the remaining $k$, $0 < k < n$, it is defined as follows. First,
  for each $p\in \Theta$ , we make $p$ true at $k$ and, for each $\neg
  p\in\Theta$, we make $p$ false at $k$; such an assignment exists due
  to \textbf{(L$_2$)}. Second, we extend the assignment by making $L$
  true at $k$ if $L\in F^k_{\Phi}(\Psi)\cup
  P^{n-k}_{\Phi}(\Psi')$. Observe that we have $\{p,\neg p\}\nsubseteq
  F^k_{\Phi}(\Psi)\cup P^{n-k}_{\Phi}(\Psi')$: for otherwise $L\land
  \SVbox\Phi\models \nxt^k p$ and $\nxt^{n-k} L' \land
  \SVbox\Phi\models \neg p$, for some $L\in\Psi$ and $L'\in\Psi'$, whence $L\land \SVbox\Phi\models \nxt^n
  \neg L'$, contrary to~\textbf{(L$_1$)}. Also, by~\textbf{(L$_2$)}, any
  assignment extension at this stage does not contradict the choices
  made due to $\Theta$.  Finally, all propositional variables not
  covered in the previous two cases get their values from $\Mmf_\Psi$
  (or $\Mmf_{\Psi'}$). We note that the last choice does not depend on the
  assignment that is fixed by taking account of the consequences of
  $\SVbox\Phi$ with $\Psi$, $\Psi'$ and $\Theta$ (because if the value
  of a variable depended on those sets of literals, the respective
  literal would be among the logical consequences and would have been
  fixed before).
\end{proof}

Thus, it suffices to show that conditions
\textbf{(L$_1$)}--\textbf{(L$_2$)} can be checked in polynomial time.
First, we claim that there is a polynomial-time algorithm which, given
a set $\Phi$ of binary clauses of the form $D_1\lor D_2$, 
constructs a set $\Phi^*$ of binary clauses that is `sound and
complete' in the following sense: 
\begin{akrzlist}
\item[S$_1$] $\SVbox\Phi^*\models \SVbox\Phi$;  
\item[S$_2$] if $ \SVbox\Phi\models \SVbox (L \to \nxt^k L_k)$ then
  either $k = 0$ and $L \to L_0\in\Phi^*$, or $k \geq 1$ and there are $L_0,L_1,\dots,L_{k-1}$  with $L = L_0$ and 
  $L_i\to\nxt L_{i+1}\in\Phi^*$, for $0 \leq i < k$.
\end{akrzlist}
Intuitively, the set $\Phi^*$ makes explicit the
consequences of $\SVbox \Phi$ and can be constructed in time $(2|\Phi|)^2$
(the number of temporal literals in $\Phi^*$ is bounded by the doubled length $|\Phi|$ of $\Phi$ as each of its literals  can only be prefixed by $\nxt$). Indeed, we
start from $\Phi$ and, at each step, add $D_1\lor D_2$ to $\Phi$ if
it contains both $D_1 \lor D$ and $\neg D\lor D_2$; we also add $L_1
\lor L_2$ if $\Phi$ contains $\nxt L_1 \lor \nxt L_2$ (and \emph{vice
versa}). This procedure is sound since we only add consequences of
$\SVbox\Phi$; completeness follows from the completeness proof for
temporal resolution~\cite[Section~6.3]{DBLP:journals/tocl/FisherDP01}.

Our next step is to encode $\Phi^*$ by means of unary automata.
For literals $L$ and $L'$,  consider a  nondeterministic finite automaton $\mathfrak{A}_{L,L'}$ over $\{0\}$, whose states are the literals of~$\Phi^*$, $L$ is the initial and $L'$ is the only accepting state, and the transition relation is $\{(L_1,L_2) \mid L_1 \to \nxt L_2\in\Phi^*\}$. By~\textbf{(S$_1$)} and~\textbf{(S$_2$)},  for all $k > 0$, we have
\begin{align*}
\mathfrak{A}_{L,L'} \text{ accepts } 0^k \quad\text{iff}\quad \SVbox\Phi\models \SVbox(L \to \nxt^k L'). 
\end{align*}
Then both $F^k_\Phi(\Psi)$ and $P^k_\Phi(\Psi')$, for $k > 0$, can be defined in terms of the
language of $\mathfrak{A}_{L,L'}$:
\begin{align*}
F_\Phi^k(\Psi) & = \bigl\{ L' \mid \mathfrak{A}_{L,L'} \text{ accepts } 0^k, \text{ for } L\in\Psi\bigr\},\\[-2pt]
P_\Phi^k(\Psi') & = \bigl\{ L \mid \mathfrak{A}_{\neg L,\neg L'} \text{ accepts } 0^k, \text{ for } L'\in\Psi'\bigr\}
\end{align*}
(recall that $\nxt^k L' \to L$ is equivalent to $\neg L \to \nxt^k \neg L'$). Note that the numbers $n$ and $k$ in conditions~\textbf{(L$_1$)} and~\textbf{(L$_2$)} are in general exponential in the length of $\Phi$ and, therefore, the automata $\mathfrak{A}_{L,L'}$ do not immediately provide a polynomial-time procedure for checking these conditions: although it can be shown that if~\textbf{(L$_2$)}  does not hold then it fails for a polynomial number $k$, this is not the case for \textbf{(L$_1$)}, which requires the accepting state to be reached in a fixed (exponential) number of transitions. Instead, we use the~\emph{Chrobak normal form}~\cite{chrobak-ufa} to decompose the automata into a polynomial number of polynomial-sized arithmetic progressions (which can have an exponential common period; cf.~the proof of Theorem~\ref{lem:bin-ltl:core-diamond}).

It is known that every $N$-state unary automaton
$\mathfrak{A}$ can be converted (in polynomial time) into an equivalent automaton in Chrobak normal form (e.g., by using Martinez's algorithm~\cite{to-ufa}), which has $O(N^2)$ states and gives rise to
$M$ arithmetic progressions $a_1 + b_1\N,\dots,a_M + b_M\N$, where
$a_i + b_i\N = \{ a_i + b_i m \mid m\in\N \}$, such that
\begin{akrzitemize}
\item $M\leq O(N^2)$ and $0 \leq a_i, b_i \leq N$, for $1 \leq i \leq
  M$;
\item $\mathfrak{A}$ accepts $0^k$ iff $k \in a_i + b_i\N$,
for some $1 \leq i \leq M$. 
\end{akrzitemize}
By construction,  the number of arithmetic progressions is quadratic 
in the length of $\Phi$.

We are now in a position to give a
polynomial-time algorithm for checking~\textbf{(L$_1$)} and~\textbf{(L$_2$)}, 
which requires solving
Diophantine equations.  In~\textbf{(L$_2$)}, for example, to verify that, for each $p\in\Theta$, we have $\neg p\notin F_\Phi^k(\Psi)$, for all $0 < k < n$, we take the automata $\mathfrak{A}_{L,\neg p}$, for $L\in\Psi$, and
transform them into the Chrobak normal form to obtain arithmetic
progressions $a_i+ b_i\N$, for $1 \leq i \leq M$. Then there is $k$,
$0 < k < n$, with $\neg p\in F_\Phi^k(\Psi)$ iff one of the equations
$a_i+b_i m=k$ has an integer solution with $0 < k < n$. The latter can
be verified by taking the integer $m = \lfloor -a_i/b_i\rfloor$
and checking whether either $a_i + b_im$ or $a_i + b_i(m + 1)$
belongs to the open interval $(0,n)$, which can clearly be done in
polynomial time.
This completes the proof of Theorem~\ref{lem:bin-ltl:krom-diamond-next-np}.
\end{proof}

We can establish the matching lower bound for $\coreLTL(\SVbox,\Rnext/\Lnext)$-formulas by using a result on the complexity of deciding inequality of regular languages  over singleton alphabets~\cite{StockmeyerM73}. In the following theorem, we give a more direct reduction of the \NP-complete problem 3SAT and repeat the argument of \citeN[Theorem~6.1]{StockmeyerM73} to construct a small number of arithmetic progressions (each with a small initial term and common difference) that give rise to models of exponential size:
\begin{theorem}\label{lem:bin-ltl:core-diamond}
  The satisfiability problem for $\coreLTL(\SVbox,\Rnext/\Lnext)$-formulas is \NP-hard.
\end{theorem}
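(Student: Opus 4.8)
The plan is to reduce 3SAT to satisfiability in $\coreLTL(\SVbox,\Rnext/\Lnext)$, following the Chinese-Remainder encoding behind Stockmeyer and Meyer's Theorem~6.1. Fix a 3SAT instance $\bigwedge_{j} C_j$ over variables $x_1,\dots,x_n$ and let $p_1<\dots<p_n$ be the first $n$ primes, so each $p_i=O(n\log n)$ has polynomial size while $P=\prod_i p_i$ is exponential. I would encode a truth assignment by a residue: a time point $t$ represents the assignment $v_t$ with $v_t(x_i)=\mathit{true}$ iff $t\equiv 0\pmod{p_i}$, and by the Chinese Remainder Theorem every assignment equals $v_t$ for some $t\in[0,P)$. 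For a clause $C_j$ mentioning the variables with primes $p_a,p_b,p_c$, set $q_j=p_ap_bp_c$ and let $\mathrm{Bad}_j\subseteq\{0,\dots,q_j-1\}$ collect the residues $\rho$ whose assignment \emph{falsifies} $C_j$; a positive literal $x_i$ contributes the $p_i-1$ residues making it false, a negated literal $\neg x_i$ the single residue making it false, so $\mathrm{Bad}_j$ is a union of polynomially many arithmetic progressions with common difference $q_j$ and small offsets, computable in polynomial time. Thus $t$ is a \emph{non-solution} iff $t\bmod q_j\in\mathrm{Bad}_j$ for some $j$, and a satisfying assignment exists iff some $t$ avoids every $\mathrm{Bad}_j$.

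Next I would build, for each clause $j$, a modular counter of period $q_j$ using fresh variables $W_j^0,\dots,W_j^{q_j-1}$: the initial fact $W_j^0$ (an initial clause with empty body), the global step clauses $\SVbox(W_j^\rho\to\Rnext W_j^{(\rho+1)\bmod q_j})$ and $\SVbox(\Rnext W_j^{(\rho+1)\bmod q_j}\to W_j^\rho)$, and the global disjointness clauses $\SVbox(W_j^\rho\land W_j^{\rho'}\to\bot)$ for $\rho\neq\rho'$. These are all core clauses, and together they force $W_j^\rho$ to hold at $t$ exactly when $t\equiv\rho\pmod{q_j}$, for \emph{all} $t\in\Z$. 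A single marker $N$ records non-solutions via the core clauses $\SVbox(W_j^\rho\to N)$ for every $j$ and every $\rho\in\mathrm{Bad}_j$, so $N$ holds at every covered point. The crucial gadget is then the single \emph{initial} clause $\SVbox N\to\bot$: since $\neg\SVbox N$ is exactly $\SVdiamond\neg N$, this clause is satisfiable precisely when $N$ is false at some moment, i.e.\ when an uncovered point exists. This is how the existential ``there is a good position'' is expressed even though the fragment has no $\Diamond$: it is hidden inside a $\SVbox$-prefixed literal on the left of an implication.

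It remains to verify equivalence and size. If the instance is satisfiable, choose a satisfying $v_t$, take the forced counter model, and set $N$ true exactly at the covered points; the clauses $\SVbox(W_j^\rho\to N)$ hold and $N$ is false at $t$, so $\SVbox N\to\bot$ is satisfied. Conversely, if the instance is unsatisfiable, then every moment is covered, each counter forces some $W_j^\rho$ with $\rho\in\mathrm{Bad}_j$ at every moment, hence $N$ holds everywhere and $\SVbox N\to\bot$ fails, so the formula is unsatisfiable. Every conjunct is a $\coreLTL(\SVbox,\Rnext/\Lnext)$ clause, the total size $\sum_j O(q_j^2)+\sum_j|\mathrm{Bad}_j|$ is polynomial, and the least common period of the counters is $P$, so the witnessing models are of exponential size, as promised.

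The main obstacle I anticipate is making the counters \emph{exact} on all of $\Z$. The step and disjointness clauses only give ``at most one active state,'' but in the core fragment one cannot assert the disjunction ``at least one state is active''; a model with all $W_j^\rho$ false everywhere would leave $N$ freely false and satisfy $\SVbox N\to\bot$ vacuously, collapsing the reduction. The fix is to \emph{seed} each counter with the single positive initial fact $W_j^0$ and to include \emph{both} the forward clause $\SVbox(W_j^\rho\to\Rnext W_j^{(\rho+1)\bmod q_j})$ and the backward clause $\SVbox(\Rnext W_j^{(\rho+1)\bmod q_j}\to W_j^\rho)$, so that truth of exactly one state propagates to every future and past moment; this is the only positive (unit) assertion the construction needs, and it is what pins the phase of the encoding and rules out the trivial all-false model. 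The secondary delicate point is the Stockmeyer--Meyer counting argument certifying that each $\mathrm{Bad}_j$ is a short family of arithmetic progressions and that avoiding all of them coincides with 3SAT satisfiability, which is exactly the Chinese-Remainder bijection between residues modulo $q_j$ and assignments to the three clause variables.
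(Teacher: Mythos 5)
Your proposal is correct and follows essentially the same route as the paper's proof: a reduction from 3SAT via the Stockmeyer--Meyer Chinese-remainder encoding, in which the residues corresponding to falsifying assignments are covered by polynomially many arithmetic progressions realized as core clauses that force a shared marker variable $N$, and the single initial clause $\SVbox N \to \bot$ expresses the existence of an uncovered (i.e., satisfying) moment. The differences are only cosmetic: you take residue $0$ (rather than $1$) for \emph{true}, so every integer encodes an assignment and the paper's extra ``non-assignment'' progressions are traded for larger bad-residue sets per clause, and you realize the progressions by bidirectional modular counters pinned at the origin, whereas the paper uses forward chain-plus-cycle gadgets on $\N$ together with a separate backward-propagating variable to cover the negative time points.
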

\begin{proof} 
The proof is by reduction of 3SAT~\cite{1994-papadimitriou}. 
Let $f = \bigwedge_{i = 1}^n C_i$ be a 3CNF with $m$ variables $p_1,\dots,p_m$ and $n$ clauses $C_1,\dots,C_n$. By a propositional assignment for $f$ we understand a function $\sigma \colon \{p_1, \dots, p_m\} \to \{0,1\}$. We will represent such assignments by sets of positive natural numbers. More precisely, 
let $P_1,\dots,P_m$ be the first $m$ prime numbers; it is known that $P_m$ does not exceed $O(m^2)$~\cite{Apostol76}.  We say that a natural number $k$ \emph{represents} an assignment $\sigma$ if
$k$ is equivalent to $\sigma(p_i)$ modulo $P_i$, for all $i$, $1 \leq i \leq m$. 
Not every natural number represents an assignment. Consider the following arithmetic progressions:
\begin{equation}\label{eq:NP:AP-not}
j + P_i\cdot\N,\qquad\text{ for } 1 \leq i \leq m \text{ and } 2 \leq j < P_i.
\end{equation}
Every element of $j + P_i\cdot\N$ is equivalent to $j$ modulo $P_i$, and so, since $j \geq 2$, cannot represent an assignment. Moreover, every natural number that cannot represent an assignment belongs to one of these arithmetic progressions (see Fig.~\ref{fig:stockmeyer}).

\begin{figure}[t]
\centering%
\tabcolsep=3pt\footnotesize%
\begin{tabular}{c|c|cccc|c|ccc|c|cccc|c|c|cccc|c|ccc|c|cccc|c}
& \bf 1 & 2 & 3 & 4 &  5 & \bf 6 & 7 & 8 & 9 & \bf 10 & 11 & 12 & 13 & 14 & \bf 15 & \bf 16 & 17 & 18 & 19 &  20 & \bf 21 & 22 & 23 & 24 & \bf 25 & 26 & 27 & 28 & 29 & \bf 30 \\\hline
2 & \bf 1 & 0 & 1 & 0 & 1 & \bf 0 & 1 & 0 & 1 & \bf 0 & 1 & 0 & 1 & 0 & \bf 1 & \bf 0 & 1 & 0 & 1& 0 & \bf 1 & 0 & 1 & 0 & \bf 1 & 0 & 1 & 0 & 1 & \bf 0\\
3 & \bf 1 &  & 0 &  1 &  & \bf 0 &  1 &  & 0 &\bf 1 &  & 0 & 1 &  & \bf 0 & \bf 1 &  & 0 &  1 &  & \bf 0 & 1 &  & 0 & \bf 1 &  & 0 & 1 &  & \bf 0\\
5 & \bf 1 &  &  &  &  0 & \bf 1 &  &  &  &  \bf 0 & 1 &  &  &  & \bf 0 & \bf 1 &  &  &  & 0 & \bf 1 &  &  &  & \bf 0 & 1 &  &  &  & \bf 0\\
\end{tabular}
\caption{Natural numbers encoding assignments for 3 variables $p_1,p_2,p_3$ (shown in bold face).}\label{fig:stockmeyer}
\end{figure}

Let $C_i$ be a clause in $f$, for example, $C_i = p_{i_1} \lor \neg p_{i_2}  \lor p_{i_3}$. Consider the following progression:
\begin{equation}\label{eq:NP:AP-clause}
P_{i_1}^{1} P_{i_2}^{0} P_{i_3}^{1} + P_{i_1}P_{i_2}P_{i_3}\cdot \N.
\end{equation}
Then a natural number represents an assignment making $C_i$ true iff it \emph{does not} belong to the progressions~\eqref{eq:NP:AP-not} and \eqref{eq:NP:AP-clause}. 
Thus, a natural number represents a satisfying assignment for $f$ iff does not belong to any of the progressions of the form~\eqref{eq:NP:AP-not} and~\eqref{eq:NP:AP-clause}, for clauses in $f$.

To complete the proof, we show that the defined arithmetic progressions can be encoded in $\coreLTL(\SVbox,\Rnext/\Lnext)$. We take a propositional variable $d$, which will be shared among many formulas below.
Given an arithmetic progression $a+b\N$ (with $a\geq 0$ and $b > 0$), consider the formula
\begin{equation*}
\theta_{a,b} = u_0  \land\bigwedge_{j = 1}^a \SVbox (u_{j-1} \to \nxt u_j) \land{}\\ \SVbox (u_a \to v_0) \land \bigwedge_{j = 1}^b \SVbox (v_{j-1} \to \nxt v_j) \land \SVbox(v_b \to v_0) \land \SVbox(v_0 \to d),
\end{equation*}
where $u_0,\dots,u_a$ and $v_0,\dots,v_b$ are fresh propositional variables.
It is not hard to see that,  in every model satisfying $\theta_{a,b}$ at moment $0$, $d$ is true at moment $k \geq 0$ whenever $k$ belongs to $a + b\N$. 

So, we take $\theta_{a,b}$ for each of the arithmetic progressions~\eqref{eq:NP:AP-not} and~\eqref{eq:NP:AP-clause} and add two formulas,  $p \land \SVbox(\nxt p \to p) \land \SVbox (p \to d)$ and
$\SVbox d \to \bot$, which ensure that $d$ and a fresh variable $p$ are true at all $k \leq 0$ but $d$ is not true at all moments. The size of the resulting conjunction of $\coreLTL(\SVbox,\Rnext/\Lnext)$-formulas is $O(n\cdot m^6)$.
One can check that it is satisfiable iff $f$ is satisfiable. 
\end{proof}

\begin{theorem}\label{thm:hornLTL}
The satisfiability problem for $\hornLTL(\SVbox,\Rbox/\Lbox)$-formulas is in \PTime.
\end{theorem}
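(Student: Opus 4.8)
The plan is to decide satisfiability by computing a \emph{least model} with a monotone (Datalog-style) fixpoint, exploiting two features of the fragment: the only temporal operators are the box-operators $\SVbox,\Rbox,\Lbox$, so a Horn head can never force a disjunction or an eventuality, and there is no $\nxt$, so the exact distances between time points are irrelevant and runs of equal columns can be contracted. First I would \emph{flatten} all temporal literals. Whenever a literal $\bigstar\lambda$ with $\bigstar\in\{\SVbox,\Rbox,\Lbox\}$ and $\lambda$ non-atomic occurs, I replace it by a fresh surrogate variable $\overline\lambda$ and add the conjuncts $\SVbox(\overline\lambda\to\bigstar\lambda)$ and $\SVbox(\bigstar\lambda\to\overline\lambda)$; both are Horn clauses (the body of each is a single temporal literal). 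Using $\SVbox=\Rbox\Lbox$ together with the collapses $\Rbox\Lbox=\Lbox\Rbox=\SVbox$ and $\SVbox X=X\SVbox=\SVbox$ for $X\in\{\Rbox,\Lbox,\SVbox\}$ over $(\Z,<)$, every temporal literal is driven down to one of $p,\Rbox p,\Lbox p,\SVbox p$ for a propositional variable $p$. This transformation is polynomial and preserves the Horn form, so I may assume from now on a \emph{flat} Horn formula, a conjunction of clauses $\lambda_1\wedge\dots\wedge\lambda_n\to\lambda$ (asserted at $0$ or globally) whose literals are flat.

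The core of the argument is a regular-model property for the least model. Over $(\Z,<)$ the truth-region of $\Rbox p$ is a final segment, that of $\Lbox p$ an initial segment, and $\SVbox p$ is all-or-nothing; moreover, since there is no $\nxt$, a \emph{contraction lemma} holds: deleting one of two adjacent columns with equal assignments (on the same side of $0$) preserves satisfaction, as no box literal distinguishes the two. Using these facts I would prove the key structural lemma: every satisfiable flat formula has a model that is constant on $(-\infty,-N]$ and on $[N,\infty)$ for some $N$ polynomial in $|\varphi|$, with the behaviour on the finite window $[-N,N]$ being the only transient part. The point $0$ is the single anchor breaking shift-invariance, and each global clause can move the boundary of a segment-type truth-region by at most one step (a literal $\Lbox p$ lags $p$ by one position, and symmetrically for $\Rbox p$); so a boundary either stabilises after polynomially many propagation steps or advances monotonically to $\pm\infty$, in which case it produces a constant tail.

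Given the lemma, I would finish by a reduction to a polynomial-size propositional Horn problem: introduce atoms $p^{(i)}$ for each variable $p$ and each $i\in\{-N,\dots,N\}$ together with two tail-atoms per variable, translate $\Rbox p$ at $i$ as the conjunction $\bigwedge_{i<j\le N}p^{(j)}$ (plus the future tail), $\Lbox p$ dually, and $\SVbox p$ as the global value; a Horn clause with head $\Rbox p$ splits into one Horn clause per conjunct, so Horn-ness is retained. Computing the least fixpoint of these clauses and checking that no $\bot$-clause fires decides satisfiability, and Horn-SAT is in \PTime. (This is also the version one adapts for Theorem~\ref{thm:core}, by adding the ground facts $\nxt^n p$ as in Remark~\ref{rem:core:extension}.) The step I expect to be the main obstacle is the polynomial bound $N$ on the transient window, i.e.\ ruling out long non-constant stretches: one must analyse how segment-boundaries of the box-literals propagate under clause application, showing that each application shifts a front by at most one and that a front which advances for more than polynomially many steps must in fact run away to infinity and yield a constant tail, so that only a polynomial prefix of the cascade need be materialised.
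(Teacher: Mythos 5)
Your overall architecture---flatten so that the box operators apply only to propositional variables, establish a small-model property (a satisfying model that is constant outside a window $[-N,N]$ with $N$ polynomial), and then translate into a polynomial-size propositional Horn-SAT instance over the window---is exactly the route the paper takes, and your final reduction step is sound: expanding $\Rbox p$ at position $i$ into $\bigwedge_{i<j\le N}p^{(j)}$ plus a tail atom preserves Horn-ness, just as the paper's equivalent encoding via surrogate variables $(\Rbox p)^m$ with three recursive Horn clauses per position does. The genuine gap is the small-model lemma itself, which is the heart of the theorem and which you explicitly leave open. Neither of the two tools you offer suffices. The contraction lemma is true but toothless on its own: after exhaustively contracting adjacent equal columns, a model may still have infinitely many distinct columns (a single variable alternating truth values forever already defeats it), so contraction gives no bound at all. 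And the propagation argument---``each clause shifts a front by one, so a front either stabilises after polynomially many steps or runs away to infinity''---is precisely the statement that needs proof, not a proof of it: nothing you say excludes a front that advances for exponentially many steps, driven by interactions with other moving fronts, and then halts; ruling this out would require a pumping/shift-invariance argument about derivations in the least model, which you do not supply.

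The paper gets the lemma far more cheaply, with no least model and no fixpoint analysis, by surgery on an \emph{arbitrary} given model. Since there is no $\nxt$, the truth set of each $\Rbox p$ is a final segment of $(\Z,<)$ and that of each $\Lbox p$ an initial segment; hence every box literal is constantly true, constantly false, or has a unique \emph{switch point}, where $\neg p\land\Rbox p$ (respectively $\neg p\land\Lbox p$) holds. Given $\Mmf\models\varphi$, collect the at most linearly many switch points together with $0$, delete all other columns, and attach constant tails whose values are read off from the boxes at the extremal kept points (on the right, make $p$ true iff $\Rbox p$ holds at the last kept point, and symmetrically on the left). One then checks that every flat literal keeps its truth value on the kept columns and that the global clauses hold on the tails; the crucial fact making the tails work is that, beyond the last switch point, $\Rbox p$ being false forces $\Rbox p$ to be false everywhere, so no box literal can ``come back to life'' in the constant tail. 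This yields $N$ linear in the number of box literals, and your ``main obstacle'' simply does not arise, because no least model is ever computed---the bound comes from counting box subformulas, not from bounding a propagation cascade. If you want to salvage your fixpoint-based route, you would have to prove the pumping claim above; the selection-of-switch-points argument is the shortcut you are missing.
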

\begin{proof}
Without loss of generality, we can assume
  that $\SVbox$ does not occur in the formulas of the form $\lambda$  and that $\Rbox$, $\Lbox$ are applied only to 
  variables. Now, observe that every satisfiable $\hornLTL(\SVbox,\Rbox/\Lbox)$-formula
  $\varphi$ is satisfied in a model with a short prefix (of length
  linear in $|\varphi|$) followed by a loop of
  length $1$ (cf.~Lemma~\ref{l:structure}). 
  More precisely, let
  $\Mmf,0\models \varphi$. Similarly to the
  proof of Lemma~\ref{l:structure}, for each subformula $\Rbox p_i$ of $\varphi$, we have
  only three possible choices: if $\Rbox p_i$ is always true or always
  false, we set $m_{\Box_{\!F} p_i} = 0$; otherwise, there is $m_{\Box_{\!F}
    p_i}$ with $\Mmf,m_{\Box_{\!F} p_i} \models \neg p_i
  \land \Rbox p_i$. Symmetrically, we take all moments $m_{\Box_{\!P} p_i}$
  for all $\Lbox p_i$ in $\varphi$. Consider the following set
\begin{equation*}
\{ 0 \} \ \ \cup \ \ \{ m_{\Box_{\!P} p_i} \mid \Lbox p_i \text{ occurs in } \varphi \}\ \  \cup \ \ \{ m_{\Box_{\!F} p_i} \mid \Rbox p_i \text{ occurs in } \varphi \}
\end{equation*}
and suppose it consists of the numbers
$m_{-l} < \dots < m_{-1} < m_0 < m_1 < \dots < m_k$ with $m_0 = 0$. Let $N$ be the number of $\Lbox p_i$ and $\Rbox p_i$ occurring in $\varphi$ plus $1$. We extend the sequence by taking $m_i = m_k +1$, for $k < i \le N$, and $m_{-i} = m_{-l} - 1$, for $l < i \le N$. Therefore, $\Mmf,m_N\models \Rbox p_i$ iff
$\Mmf,m_N\models p_i$ (and symmetrically for
$\Lbox p_i$ at $m_{-N}$).  Let $\Mmf'$ be defined as follows:
\begin{equation*}
\Mmf',n\models p_i \ \ \ \text{ iff } \ \ \ 
\begin{cases}\Mmf,m_{-N} \models p_i, & \text{ if } n < - N,\\ 
\Mmf,m_n \models p_i, & \text{ if } -N \leq n \leq N,\\ 
\Mmf,m_{N} \models p_i, & \text{ if } n > N.\end{cases} 
\end{equation*}
It can be seen that $\Mmf',0\models \varphi$. 

It remains to encode the existence of such a model by means of
propositional Horn formulas, as Horn-SAT is known to be
\PTime-complete. To this end, for each propositional variable $p_i$,
we take $2N + 1$ variables $p_i^m$, for $-N \leq m \leq N$. Also, for
each formula $\Rbox p_i$, we take $2N + 1$ propositional variables,
denoted $(\Rbox p_i)^m$, for $-N \leq m \leq N$, and similarly, for
each $\Lbox p_i$, we take variables $(\Lbox p_i)^m$. Then each clause
$\lambda_1\land \dots\land\lambda_n \to \lambda$ in $\varphi$ gives
rise to the propositional clause
\begin{equation*}
\lambda_1^0\land \dots\land \lambda_n^0 \to \lambda^0
\end{equation*}
and each $\SVbox (\lambda_1\land \dots\land\lambda_n \to \lambda)$ in $\varphi$ gives rise to $2N + 1$ clauses 
\begin{equation*}
\lambda_1^m\land \dots\land \lambda_n^m \to \lambda^m, \qquad \text{for } -N \leq m \leq N.
\end{equation*}
Additionally, we need clauses that describe the
semantics of $\Rbox p_i$ in $\Mmf'$ at $m$, $-N \leq m < N$:
\begin{equation*}
(\Rbox p_i)^m \to (\Rbox p_i)^{m+1},\qquad (\Rbox p_i)^m \to p_i^{m+1},\qquad (\Rbox p_i)^{m+1} \land p_i^{m+1} \to (\Rbox p_i)^m, 
\end{equation*}
and clauses that describe the semantics of $\Rbox p_i$ in $\Mmf'$ at moment $N$:
\begin{equation*}
(\Rbox p_i)^{N} \to p_i^N,\qquad p_i^{N} \to (\Rbox p_i)^N,
\end{equation*}
and symmetric clauses for each $\Lbox p_i$ in $\varphi$. It is not hard to show that every satisfying assignment for the set of the  clauses above gives rise to a model $\Mmf'$ of $\varphi$ and, conversely, every model $\Mmf'$ of $\varphi$ with the structure as described above gives rise to a satisfying assignment for this set of clauses.
\end{proof}

\begin{remark}\label{rem:core:extension}
  In order to obtain Theorem~\ref{thm:core}, one can
  extend the proof above to formulas of the form
  $\varphi'\land\varphi''$, where $\varphi'$ is a
  $\hornLTL(\SVbox,\Rbox/\Lbox)$-formula and $\varphi''$ a conjunction
  of $\nxt^n p$, for propositional variables $p$. To this end, in the
  definition of the set $M$, one has to take $0$ together with all $n$
  for which $\nxt^n p$ occurs in $\varphi''$; the number $N$ is then
  equal to the number of those moments $n$ plus the number of all
  $\Rbox p$ and $\Lbox p$ occurring in $\varphi'$. The rest of the
  construction remains the same.
\end{remark}

\begin{theorem}\label{core-low-P}
The satisfiability problem for $\coreLTL(\SVbox,\Rbox/\Lbox)$-formulas is \PTime-hard.
\end{theorem}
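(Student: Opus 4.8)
The plan is to prove \PTime-hardness by reducing a canonical \PTime-complete problem---the monotone circuit value problem (\textsc{mcvp})~\cite{1994-papadimitriou}---to \emph{un}satisfiability of $\coreLTL(\SVbox,\Rbox/\Lbox)$-formulas; since \PTime{} is closed under complement, this suffices, and the matching upper bound is already free from Theorem~\ref{thm:hornLTL} because every core formula is Horn. Given a monotone circuit with binary \textsc{and}/\textsc{or}-gates and some inputs preset to $\mathsf{true}$, I would introduce a propositional variable for each gate and wire, agree to read ``the variable holds at the instant $0$'' as ``the gate evaluates to $\mathsf{true}$,'' and arrange the formula so that the output variable $o$ is forced to hold at $0$ precisely when the circuit outputs $\mathsf{true}$. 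Testing the output is then a single disjointness clause $o \land o \to \bot$, so that the formula is unsatisfiable iff $o$ is forced at $0$. Preset inputs are asserted by initial (non-global) unit clauses, and an \textsc{or}-gate $g = a \lor b$ is immediate, being the pair of binary core implications $\SVbox(a \to g)$ and $\SVbox(b \to g)$.

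The heart of the construction---and the step I expect to be the main obstacle---is encoding an \textsc{and}-gate $g = a \land b$ using only \emph{single-antecedent} implications $\lambda_1 \to \lambda_2$ and binary disjointness clauses $\lambda_1 \land \lambda_2 \to \bot$, since the core fragment outlaws the conjunctive rule $a \land b \to g$. The idea I would pursue is to let a universal temporal operator itself supply the conjunction: as $\Lbox p_g$ asserted at an instant is a conjunction of $p_g$ over the whole past, a single-antecedent clause $\SVbox(\Lbox p_g \to g)$ can act as a multi-way \textsc{and}, provided the two inputs of $g$ can be ``loaded'' into an auxiliary variable $p_g$ over two disjoint past regions so that a missing input leaves a gap in $p_g$ and thereby blocks $\Lbox p_g$. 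Because the fragment has no $\Rnext/\Lnext$, these regions cannot be addressed by concrete instants; instead I would delimit them by \emph{onset markers} whose placement the model is free to choose, forcing their relative order by chains of $\Rbox$/$\Lbox$ clauses (exactly the boundary instants $m_{\Box_{\!F}p_i}$, $m_{\Box_{\!P}p_i}$ that organise the canonical models of Lemma~\ref{l:structure} and Theorem~\ref{thm:hornLTL}), and use disjointness clauses to stop one input from illegitimately flooding the region belonging to the other.

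Correctness would be argued against such a canonical model in both directions. Forward: the unique minimal (all-other-variables-false) model available for satisfiable core/Horn formulas---the prefix-plus-loop models of Theorem~\ref{thm:hornLTL}---lets me read off a gate assignment and verify, by induction on the topological order of the circuit, that a gate variable holds at $0$ iff its gate evaluates to $\mathsf{true}$; backward, a $\mathsf{true}$ evaluation is turned into a model by filling each loaded region. The delicate point, as noted, is designing the \textsc{and}-gadget so that it is simultaneously \emph{faithful}---firing exactly when both inputs hold, with the absence of an input genuinely blocking $\Lbox p_g$ rather than being masked by flooding from the other input---and \emph{composable}, since all gates must share the single timeline while using private auxiliary variables; the minimal-model property and the bounded, $O(|\varphi|)$-region structure of core models are what make this control possible without a next-time operator. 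Finally, I would observe that the reduction relies essentially on initial, non-global clauses to seed the inputs and to test the output at $0$, and that this reliance is precisely what blocks the bound from transferring to $\TdDLcn$, whose inclusions are all global (cf.\ Remark~\ref{rem:core:extension}).
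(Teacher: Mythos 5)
Your high-level insight is exactly right, and it is the same one the paper exploits: in the core fragment a universal temporal operator can play the role of conjunction, with each input of an \textsc{and} loading an auxiliary variable via a single-antecedent clause, and a clause of the form $\SVbox c \to r$ (your $\Lbox p_g \to g$) firing only when the loading is complete. Your reduction skeleton (\textsc{mcvp} to unsatisfiability, \textsc{or}-gates as two implications, the final disjointness test) is also legitimate, since \PTime{} is closed under complement. But the proof has a genuine gap precisely where you flag ``the main obstacle'': the \textsc{and}-gadget is never actually constructed, and the route you sketch---loading the two inputs into \emph{two disjoint regions of the past}, delimited by onset markers---cannot be implemented in $\coreLTL(\SVbox,\Rbox/\Lbox)$. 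To make ``$p_g$ holds throughout region $A$'' conditional on input $a$, you need either a box relativized to that region (not in the language) or a two-antecedent clause ``$a$ and inside-$A$ imply $p_g$'' (exactly what the core fragment forbids). With only unrelativized $\Lbox/\Rbox$ and single antecedents, an input either loads nothing or floods the entire past---the flooding problem you yourself mention; disjointness clauses can forbid co-occurrence but cannot confine the loading, and without $\Rnext/\Lnext$ there is no way to anchor the markers.

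The missing idea---and the way the paper's proof (a direct satisfiability-preserving reduction from Horn-SAT with at most ternary clauses) resolves this---is to split the loading between past and future rather than between two parts of the past: for a clause $p\land q\to r$, take a fresh variable $c_i$ and the initial (non-global) clauses $c_i$, \ $p\to\Rbox c_i$, \ $q\to\Lbox c_i$, \ $\SVbox c_i\to r$. The unit clause covers the present, one input covers the strict future, the other the strict past, so $\SVbox c_i$---and hence $r$---is forced exactly when both $p$ and $q$ hold at moment $0$; no region-delimiting is needed, and in a minimal model $\SVbox c_i$ fails as soon as one input fails. With this gadget your circuit encoding would go through essentially verbatim (each binary \textsc{and}-gate becomes one such block), and your closing observation---that the construction leans essentially on initial clauses and therefore does not transfer to \TdDLcn{}---is precisely the remark the paper makes in the proof of Theorem~\ref{thm:core}.
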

\begin{proof}
The proof is by reduction of satisfiability of propositional Horn formulas with \emph{at most
ternary clauses}, which is known to be \PTime-complete~\cite{1994-papadimitriou}.
Let $f=\bigwedge_{i=1}^n C_i$ be such a formula. We define $\varphi_f$ to be the conjunction of the following formulas:
\begin{align*}
& p,\ \ \text{ for all clauses }  C_i \text{ of the form } p,\\
& p\to\bot,\ \ \text{ for all clauses }  C_i \text{ of the form } \neg p,\\
& p \to q,\ \ \text{ for all clauses }  C_i \text{ of the form } p\to q,\\
&  c_i \land (p \to \Rbox c_i) \land (q \to \Lbox c_i)  \land  (\SVbox c_i \to r),\ \  \text{ for all clauses } C_i \text{ of the form } p \land q \to r,
\end{align*}
where $c_i$ is a fresh variable for each $C_i$.
It is easy to see that $f$ is satisfiable iff $\varphi_f$ is
satisfiable.
\end{proof}

\begin{theorem}\label{krom-low-NP}
The satisfiability problem for $\kromLTL(\SVbox,\Rbox/\Lbox)$-formulas is \NP-hard.
\end{theorem}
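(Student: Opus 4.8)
The plan is to prove \NP-hardness by reduction from 3SAT. Since binary (Krom) propositional clauses alone give only 2SAT, which is solvable in \NLogSpace, and since $\kromLTL(\SVbox)$ without the directional boxes is already \NLogSpace-complete (Table~\ref{tab:PTL-fragments}), the hardness must come entirely from the interaction of binary clauses with the directional operators $\Rbox/\Lbox$ over the linear order $(\Z,<)$. The first thing I would record is that, as noted in the remark preceding Table~\ref{tab:PTL-fragments}, the diamond operators $\Rdiamond$, $\Ldiamond$ and $\SVdiamond = \Rdiamond\Ldiamond$ are all available in $\kromLTL(\SVbox,\Rbox/\Lbox)$: for each variable $p$ I introduce a surrogate $\bar p$ for $\neg p$ via the binary clauses $\SVbox(p \lor \bar p)$ and $\SVbox(p \land \bar p \to \bot)$, after which $\Rdiamond p$, $\Ldiamond p$ and $\SVdiamond p$ may be used freely as extended literals inside binary clauses.

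The encoding I would aim for represents a truth assignment \emph{temporally} rather than by globally constant propositions, since the latter would collapse the construction to plain 2SAT. Concretely, each moment of time carries a candidate assignment to the variables $p_1,\dots,p_m$, and the formula is made satisfiable exactly when some moment---or some interval fixed by $\Rbox/\Lbox$---carries an assignment satisfying every clause of $f$. The delicate part is the \emph{clause gadget}: a clause $\ell_1 \lor \ell_2 \lor \ell_3$ is ternary, yet only binary clauses are allowed. Here I would exploit the past/present/future trichotomy that a single witness moment induces on $(\Z,<)$, using fresh time-varying ``relay'' propositions together with the derived $\Rdiamond/\Ldiamond$ operators, so that the three literals are tested in three different temporal regions and a single binary clause of the shape $\Ldiamond(\cdots) \lor \Rdiamond(\cdots)$ effectively realises the three-way disjunction; the $\SVbox$-prefixed clauses then propagate a relay token so that a clause can be ``discharged'' only by a genuinely satisfied literal. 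I would then prove equisatisfiability in the two standard directions: from a satisfying assignment of $f$, build a model by placing the assignment along the timeline and routing the relays accordingly; conversely, read an assignment off a given model and verify that each clause gadget forces one of its literals to hold.

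The main obstacle I anticipate is making the ternary-to-binary clause gadget \emph{faithful}, i.e.\ neither too weak nor too strong. The completeness direction (assignment $\Rightarrow$ model) is routine, but soundness is delicate: I must ensure that a relay proposition cannot become true ``for free'' and thereby satisfy a clause gadget without any literal actually holding, which is precisely the spurious-satisfaction issue that the careful use of $\SVbox$-propagation and the fixed witness structure is meant to rule out. A secondary difficulty is that, unlike the companion reduction for $\coreLTL(\SVbox,\Rnext/\Lnext)$ in Theorem~\ref{lem:bin-ltl:core-diamond}, no next-time operator is available, so I cannot count steps or build exact arithmetic progressions; the construction must instead rely solely on the region and order structure supplied by $\Rbox$, $\Lbox$ and $\SVbox$. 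Once the gadget is verified, \NP-hardness follows, matching the \NP upper bound that Theorem~\ref{lem:bin-ltl:krom-diamond-next-np} provides for the larger fragment $\kromLTL(\SVbox,\Rnext/\Lnext,\Rbox/\Lbox)$.
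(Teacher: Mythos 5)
There is a genuine gap: your reduction is a plan, not a proof, and the one step you yourself identify as the crux---the faithful ternary-to-binary clause gadget---is never constructed. Everything in the proposal up to that point (surrogate variables for negation, deriving $\Rdiamond/\Ldiamond/\SVdiamond$ inside the Krom fragment, representing assignments temporally) is routine; the entire difficulty of the theorem is concentrated in expressing a three-way disjunction $\ell_1\lor\ell_2\lor\ell_3$ over three \emph{distinct} variables using only binary clauses, and the sketch of ``relay propositions'' plus a clause of shape $\Ldiamond(\cdots)\lor\Rdiamond(\cdots)$ does not do this. Note in particular that with the paper's strict semantics such a clause only yields a two-way past/future split at the witness moment, and nothing in the sketch explains how literals belonging to three different variables, each with its own truth-value encoding, get soundly routed into disjoint temporal regions without a relay becoming true ``for free''---the exact spurious-satisfaction failure you flag and then leave unresolved. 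As written, the argument cannot be checked or completed.

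The paper avoids this obstacle entirely by reducing from a different \NP-complete problem: \emph{3-colourability} rather than 3SAT. There all constraints are naturally binary (edge inequations), so they fit Krom clauses directly, and the only three-way choice is \emph{per vertex}, which is absorbed into the temporal dimension: a ``ruler'' $p_0,\dots,p_4$ with $p_0$ initially true and $\SVbox(p_i\to\Rbox p_{i+1})$ marks the moments $1,2,3$ as the three colours; the clauses $\SVbox(p_0\land\Rbox\neg v_i\to\bot)$ (i.e., $p_0\to\Rdiamond v_i$, expressible via a surrogate for $\neg v_i$) and $\SVbox(p_4\land v_i\to\bot)$ force each vertex variable $v_i$ to hold at some moment in $\{1,2,3\}$; and $\SVbox(v_i\land v_j\to\bot)$ for each edge $(v_i,v_j)$ ensures adjacent vertices choose different moments. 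If you want to salvage your 3SAT route, you would essentially have to rediscover such a device for ternary constraints---which is precisely what choosing 3-colourability as the source problem makes unnecessary.
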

\begin{proof}
  We proceed by reduction of the 3-colourability
  problem. Given a graph $G =(V,E)$, we use variables
  $p_0,\dots, p_4$ and $v_i$, for $v_i\in V$, to define the
  following $\kromLTL(\SVbox,\Rbox/\Lbox)$-formula:
\begin{multline*}
  \varphi_G \ \ \ = \ \ \ p_0 \ \ \land \ \ \bigwedge_{0 \leq i \leq 3} \SVbox (p_i \to \Rbox p_{i+1}) \  \land 
  \\ \bigwedge_{v_i\in V} \SVbox(p_0 \land \Rbox \neg v_i \to \bot) \quad \land  \bigwedge_{v_i\in V}\SVbox (p_4 \land  v_i \to \bot) \quad \land  \bigwedge_{(v_i,v_j)\in E} \SVbox(v_i \land v_j \to \bot).
\end{multline*}%
Intuitively, the first 4 conjuncts choose, for each vertex $v_i$ of the graph, a moment $1 \le n_i \leq 3$; the last one makes sure that $n_i \ne n_j$ in case $v_i$ and $v_j$ are connected by an edge in $G$. We show that $\varphi_G$ is satisfiable iff $G$ is 3-colourable.
Suppose $c\colon V \to
\{1,2,3\}$ is a colouring function for $G$. Define $\Mmf$ by setting $\Mmf,n
\models v_i$ just in case $c(v_i)=n$, for $v_i \in V$, and 
$\Mmf,n \models p_i$ iff $n\geq i$, for $0 \leq i \leq 4$.
Clearly,
$\Mmf,0 \models \varphi_G$. Conversely, if
$\Mmf,0\models\varphi_G$ then, for each $v_i \in V$, there is
$n_i \in \{1,2,3\}$ with $\Mmf,n_i \models v_i$ and
$\Mmf,n_i\not\models v_j$ whenever $(v_i,v_j) \in E$. Thus, $c
\colon v_i \mapsto n_i$ is a colouring function.
\end{proof}


\section{\DL\ with temporalised Roles}
\label{sec:tdl-temporalised-roles}

Now we investigate the complexity of extensions of \bDLb{} with temporalised roles of the form
\begin{equation*}
R \ \ ::=\ \ S \ \ \mid \ \ S^- \ \ \mid \ \ \SVdiamond R \ \ \mid \ \
\SVbox R,
\end{equation*}
where, as before, $S$ is a flexible or rigid role name. 
Recall that the interpretation of $\SVdiamond R$ and $\SVbox R$ is defined by taking 
$ (\SVdiamond R)^{\I(n)}=\bigcup_{k\in\Z}  R^{\I(k)}$ and $(\SVbox R)^{\I(n)}=\bigcap_{k\in\Z}  R^{\I(k)}$.

\subsection{Directed Temporal Operators and Functionality: Undecidability}

Our first result is negative. It shows, in fact, that any extension of
\bDLb{} with temporalised roles, functionality constraints on roles  
and either the next-time operator $\Rnext$ or both $\Rbox$ and $\Lbox$ on
concepts is undecidable.

\begin{theorem}\label{thm:undec}
Satisfiability of \TxDLbn{} and \TdrDLbn{} KBs is undecidable.
\end{theorem}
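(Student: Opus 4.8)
The plan is to prove undecidability by reduction from the $\N\times\N$ \emph{tiling problem} (equivalently, from non-halting of Turing machines with an unbounded tape): decide whether a given finite set of tile types admits a tiling of the grid $\N\times\N$ respecting fixed horizontal and vertical matching relations. I would represent such a grid inside a single model so that one coordinate is the flow of time $(\Z,<)$ and the other is generated by one temporalised role, and arrange that the constructed KB is satisfiable iff a tiling exists.

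First I would build the two-dimensional grid. Take a flexible role name $S$ and a concept name $\mathsf{Grid}$, and impose globally
\[
\mathop{\geq 2}\SVdiamond S \sqsubseteq \bot, \qquad \mathop{\geq 2}\SVdiamond S^- \sqsubseteq \bot, \qquad \mathsf{Grid}\sqsubseteq \mathop{\geq 1} S \sqcap \mathop{\geq 1} S^-.
\]
The first two are lifespan-functionality constraints (at most one $S$-successor and one $S$-predecessor over the \emph{entire} time line), and the crucial point is that it is the \emph{temporalised} role $\SVdiamond S$, not the per-moment role, that is made functional. The central lemma I would prove is that for grid elements this forces $S$ to be \emph{rigid and functional}: since a grid element has an $S$-successor at every instant but at most one successor across all instants, that successor is the same domain element at every moment, and symmetrically for predecessors. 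Hence $S$ restricted to $\mathsf{Grid}$ is a rigid bijection between consecutive `levels', producing a confluent grid $G_{i,n}$ with $i\in\N$ the number of $S$-steps and $n\in\Z$ the time: the vertical edge $G_{i,n}\to G_{i+1,n}$ (one $S$-step, independent of $n$ by rigidity) commutes with the horizontal edge $G_{i,n}\to G_{i,n+1}$ (one time-step). Tiles are encoded by concept names $\mathsf{Tile}_t$, and using the full Booleans of $\bDLb$ I would state that each node carries exactly one tile, via disjointness together with the covering inclusion $\mathsf{Grid}\sqsubseteq \bigsqcup_t \mathsf{Tile}_t$.

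Horizontal matching is then immediate for $\TxDLbn$: the next-time operator gives $\mathsf{Tile}_t \sqsubseteq \Rnext\bigsqcup_{t'}\mathsf{Tile}_{t'}$, ranging over the horizontally compatible $t'$. The hard part will be the \emph{vertical} matching condition, which is a value restriction of the shape `the $S$-successor carries a compatible tile', i.e.\ $A\sqsubseteq\forall S.B$ — and such restrictions are provably \emph{not} expressible in $\bDLb$ even for functional roles (this is exactly why the atemporal logic is cheap). The device that makes the reduction work, and the real content of the `subtle interaction' noted after the theorem, is to route propagation along $S$ through the time dimension: I would synchronise the single $S$-step with a single time-step, using the functionality of the temporalised role to pin down the moment at which $S$ links two levels, so that the temporal operators — which \emph{can} transfer concept membership from one instant to the next — also transfer the tile colour across the role. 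This lets $\Rnext$, together with $S$, $S^-$ and the Booleans, simulate the missing qualified restriction and so encode the vertical matching relation.

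Finally, the $\TdrDLbn$ case offers $\Rbox/\Lbox$ (and the derived $\Rdiamond/\Ldiamond$, $\SVbox/\SVdiamond$) but no next-time operator, so the discrete horizontal step must itself be simulated; I expect this to be the main obstacle. Over $(\Z,<)$ one cannot pin down `the next moment' from $\Rbox/\Lbox$ alone, so I would again deploy a functional temporalised role, now to single out a unique `successor column' for each node and enforce the step discipline (in the spirit of the encoding of $\Rnext$ by $\U$ in Theorem~\ref{thm:pspace:2}); once a deterministic horizontal successor is available, the grid, tiling and propagation arguments above carry over verbatim. In both cases the translation is computable and the KB is satisfiable iff the tiling instance is solvable, which yields undecidability; as remarked in the summary, the arbitrary cardinality constraints can be replaced throughout by plain role functionality without affecting the argument.
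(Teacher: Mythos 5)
Your proposal correctly chooses the $\N\times\N$-tiling reduction and the lifespan-functionality constraints $\mathop{\geq 2}\SVdiamond S \sqsubseteq \bot$, $\mathop{\geq 2}\SVdiamond S^- \sqsubseteq \bot$, which is also the paper's starting point, but there is a genuine gap at exactly the place you flag as ``the real content'': you never say \emph{how} a tile colour crosses an $S$-edge. In this language the only information that can pass between the two endpoints of a role edge is the existence (or number) of arrows at each time instant, i.e.\ membership in $\exists S$, $\exists S^-$ and number restrictions; temporal operators on concepts move membership of the \emph{same} element along the time line and never transfer a concept to a \emph{different} element across a role. So vertical matching must be encoded in the \emph{temporal pattern} of when the arrow exists. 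That is what the paper's proof does: columns of tiles are placed $k+1$ instants apart, and the position $i\in\{1,\dots,k\}$ of the unique incoming $R$-arrow inside that window encodes $\textit{down}(T)$ (forced by $T \sqsubseteq \neg\Rnext^i\exists R^-$ for $i\neq \textit{down}(T)$), while $T \sqsubseteq \Rnext^{\textit{up}(T)}\exists R$ forces the outgoing arrow at the position encoding $\textit{up}(T)$; lifespan functionality of $\SVdiamond R$ and $\SVdiamond R^-$ then glues the source-side and target-side views of these patterns to the same pair of elements, yielding the matching condition.

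Worse, your own ``central lemma'' actively blocks this mechanism. By demanding $\mathsf{Grid}\sqsubseteq \mathop{\geq 1}S \sqcap \mathop{\geq 1}S^-$ (with $\mathsf{Grid}$ persisting), you make the $S$-arrow between consecutive levels present at \emph{every} instant, so the pattern of arrow existence is constant and carries no information; there is then no ``moment at which $S$ links two levels'' to pin down, and the synchronisation device you describe is vacuous. The paper, by contrast, carefully ensures that after the initial double $R$-arrows no two consecutive instants both carry an $R$-arrow, precisely so that arrow positions are informative. Repairing your proof means abandoning the rigid-functional grid and redoing the construction along the paper's lines. Finally, for \TdrDLbn{} your appeal to ``the spirit of Theorem~\ref{thm:pspace:2}'' does not apply, since that proof uses $\U$, which is unavailable here; the paper needs a substantially different and much longer construction (marking positions via $\Rdiamond^{=n} C = \Rdiamond^n C \sqcap \neg\Rdiamond^{n+1}C$ on freshly created successors of several auxiliary roles $V_0$, $V_1$, $S$, $R$), so nothing ``carries over verbatim.''
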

\begin{proof}
The proof is by reduction of the $\N\times\N$-tiling problem (see, e.g.,~\cite{BoGG97}): given a finite set $\TT$ of tile types $T = (\textit{up}(T), \textit{down}(T), \textit{left}(T), \textit{right}(T))$, decide whether $\TT$ can tile the $\N\times \N$-grid, i.e., whether there is a map $\tau\colon\N\times\N \to \TT$ such that $\textit{up}(\tau(i,j)) = \textit{down}(\tau(i,j+1))$ and $\textit{right}(\tau(i,j)) = \textit{left}(\tau(i+1,j))$, for all $(i,j)\in\N\times\N$. We assume that the colours of tiles in $\TT$ are natural numbers from $1$ to $k$, for a suitable $k > 1$.

Consider first \TxDLbn{} and, given $\TT$,  construct a KB
  $\K_\TT = (\T_\TT,\A)$ such that $\K_\TT$ is satisfiable iff $\TT$ tiles the $\N\times\N$-grid. The temporal dimension will provide us
  with the horizontal axis of the grid. The vertical axis will be constructed
  using domain elements. Let $R$ be a role such that
\begin{equation}\label{eq:undec:R}
  \mathop{\geq 2} \SVdiamond R \sqsubseteq \bot\quad\text{ and }\quad \mathop{\geq 2} \SVdiamond R^- \sqsubseteq \bot.
\end{equation}
In other words, if $xRy$ at some moment then there is no other
$y'$ with $xRy'$ at any moment of time (and similarly for $R^-$).  We
generate a sequence of domain elements: first,  we ensure 
that the concept $\exists R \sqcap \Rnext \exists R$ is non-empty, which can be done by taking $\A = \{A(a)\}$ and adding 
\begin{equation}
A ~\sqsubseteq~ \exists R \sqcap \Rnext \exists R,
\end{equation} 
to the TBox $\T_{\TT}$, and second, we add the following concept inclusion to $\T_{\TT}$ to produce a sequence:  
\begin{equation}
  \exists R^- \sqcap \Rnext R^- ~\sqsubseteq~ \exists R \sqcap \Rnext\exists R.
\end{equation}
(The reason for generating the $R$-arrows at two consecutive moments
of time will become clear below.) It is to be noted that the
produced sequence may in fact be either a finite loop or an infinite
chain of distinct elements. Now, let $T$ be a fresh concept name
for each $T\in \TT$ and let the concepts representing the tile types be disjoint:
\begin{equation}\label{eq:undec:tiles}
T \sqcap T' ~\sqsubseteq~ \bot,\qquad\text{ for }\ T\ne T'.
\end{equation}
Right after the double $R$-arrows we place the first column of tiles:
\begin{align}
  \exists R^- \sqcap \Rnext R^- \ \ &\sqsubseteq \ \ \bigsqcup_{T\in\TT} \Rnext\Rnext T.
\end{align}
The second column of tiles, whose colours match the colours of the first one, is placed $k+1$ moments later; the third column is  located $k+1$ moments after the second one, etc. (see Fig.~\ref{fig:undecidable}):
\begin{align}\label{eq:undec:tile-h}
  T \ \ \ &\sqsubseteq\hspace*{-1.4em} \bigsqcup_{T'\in\TT \text{ with } \textit{right}(T) =
    \textit{left}(T')}\hspace*{-1em} \Rnext^{k+1} T', \quad\text{ for each } \ T\in\TT.
\end{align}
This gives an $\N\times\N$-grid of tiles with matching \textit{left}--\textit{right} colours. To ensure that the \textit{up}--\textit{down} colours in this grid also match, we 
use the double $R$-arrows at the beginning and place the columns of tiles $k+1$ moments apart from each other.  Consider the following concept inclusions, for $T\in
\TT$: 
\begin{align}
\label{eq:undec:notR}
  T &\sqsubseteq \neg \exists R^-,\\
\label{eq:undec:notRi}
  T &\sqsubseteq \neg \Rnext^i \exists R^-,\quad \text{ for  } 1 \leq i \leq  k \text{ with } i \ne \textit{down}(T),\\
\label{eq:undec:up}
  T &\sqsubseteq \Rnext^{\textit{up}(T)} \exists R.
\end{align}
Inclusions~\eqref{eq:undec:notR}, \eqref{eq:undec:tiles} and~\eqref{eq:undec:notRi} ensure that between any two tiles $k+1$
moments apart there may be only one incoming $R$-arrow. This means that after the initial double $R$-arrows no other two
consecutive $R$-arrows can occur. The exact
position of the incoming $R$-arrow is uniquely determined by the
$\textit{down}$-colour of the tile, which together 
with~\eqref{eq:undec:up} guarantees that this colour matches the $\textit{up}$-colour of the tile below. Fig.~\ref{fig:undecidable} illustrates the construction (the solid vertical arrows represent $R$).

Let $\T_{\TT}$ contain all the concept inclusions defined above. It is not hard to check that $(\T_{\TT}, \A)$ is satisfiable iff $\TT$ tiles the $\N \times \N$-grid.

\begin{figure}[t]
\centering%
\begin{tikzpicture}[node distance=5mm, inner sep=0.4mm, >=latex]%
  \node (p1-0) {\tiny $|$};%
  \node (p1-1) [right=10mm of  p1-0] {\tiny $|$};%
  \node (p1-2) [right=10mm of  p1-1] {\tiny $|$};%
  \node (p1-4) [right=10mm of p1-2] {};%
  \node (p1-5) [right=10mm of p1-4] {};%
  \node (p1-6) [right=10mm of p1-5] {};%
  \node (p1-7) [right=10mm of p1-6] {};%
  \node (p1-8) [right=10mm of p1-7] {};%
  \node (p1-9) [right=10mm of p1-8] {};%
  \node (p1-10) [right=10mm of p1-9] {\tiny $|$};%
  \node (p1-11) [right=10mm of p1-10] {};%
  \node (p1-12) [right=10mm of p1-11] {};%
  \node[circle, draw] (d0-0) [above of=p1-0] {};%
  \node[circle, draw] (d0-1) [above of=p1-1] {};%
  \node[circle, draw] (d0-2) [above of=p1-2] {};%
  \node (d0-4) [above of=p1-4] {...};%
  \node[circle,draw] (d0-5) [above of=p1-5] {};
  \node(d0-6) [above of=p1-6] {...};%
  \node[circle, draw] (d0-7) [above of=p1-7] {};%
  \node(d0-8) [above of=p1-8] {...};%
  \node[circle, draw] (d0-9) [above of=p1-9] {};%
  \node[circle, draw] (d0-10) [above of=p1-10] {};%
  \node (d0-11) [above of=p1-11] {...};%
  \node[circle, draw] (p2-0) [above=10mm of d0-0] {};%
  \node[circle, draw] (p2-1) [above=10mm of d0-1] {};%
  \node[circle, fill] (p2-2) [above=10mm of d0-2] {};%
  \node (p2-4) [above=10mm of d0-4] {...};%
  \node[circle,draw] (p2-5) [above=10mm of d0-5] {};
  \node(p2-6) [above=10mm of d0-6] {...};%
  \node[circle, draw] (p2-7) [above=10mm of d0-7] {};%
  \node(p2-8) [above=10mm of d0-8] {...};%
  \node[circle, draw] (p2-9) [above=10mm of d0-9] {};%
  \node[circle, fill] (p2-10) [above=10mm of d0-10] {};%
  \node (p2-11) [above=10mm of d0-11] {...};%
  \node (p3-0) [above of = p2-0] {};
  \node (p3-1) [above of = p2-1] {};
  \node (p3-2) [above of = p2-2] {};
  \node (p3-4) [above of = p2-4] {};
  \node (p3-5) [above of = p2-5] {};
  \node (p3-6) [above of = p2-6] {};
  \node (p3-7) [above of = p2-7] {};
  \node (p3-8) [above of = p2-8] {};
  \node (p3-9) [above of = p2-9] {};
  \node (p3-10) [above of = p2-10] {};
  \node (p3-11) [above of = p2-11] {};
  \node[circle, draw] (p4-0) [above of=p3-0]{};%
  \node[circle, draw] (p4-1) [above of=p3-1] {};%
  \node[circle, fill] (p4-2) [above of=p3-2] {};%
  \node (p4-4) [above of=p3-4] {...};%
  \node[circle, draw] (p4-5) [above of=p3-5] {};%
  \node (p4-6) [above of=p3-6] {...};%
  \node[circle, draw] (p4-7) [above of=p3-7] {};%
  \node (p4-8) [above of=p3-8] {...};%
  \node[circle, draw] (p4-9) [above of=p3-9] {};%
  \node[circle, fill] (p4-10) [above of=p3-10] {};%
  \node (p4-11) [above of=p3-11] {...};%
  \node (p5-0) [above of = p4-0] {};
  \node (p5-1) [above of = p4-1] {};
  \node (p5-2) [above of = p4-2] {};
  \node (p5-5) [above of = p4-5] {};
  \node (p5-7) [above of = p4-7] {};
  \node (p5-6) [above of = p4-6] {};
  \node (p5-8) [above of = p4-8] {};
  \node (p5-9) [above of = p4-9] {};
  \node (p5-10) [above of = p4-10] {};
  \node (p5-11) [above of = p4-11] {};
  \node[circle,draw] (p6-0) [above of = p5-0] {};
  \node[circle,draw] (p6-1) [above of = p5-1] {};
  \node[circle, fill] (p6-2) [above of = p5-2] {};
  \node[circle, draw] (p6-5) [above of = p5-5] {};
  \node (p6-6) [above of = p5-6] {...};
  \node[circle, draw] (p6-7) [above of = p5-7] {};
  \node (p6-8) [above of = p5-8] {...};
  \node[circle, draw] (p6-9) [above of = p5-9] {};
  \node[circle, fill] (p6-10) [above of = p5-10] {};
  \node (p6-11) [above of = p5-11] {...};
  \path (p2-2) -- +(left:3mm) node {\small $T'$};%
  \path (p4-2) -- +(left:3mm) node {\small $T$};%
  \draw[dashed, -] (p2-2) -- (p6-2);
  \draw[dashed, -] (p2-10) -- (p6-10);
  \draw[thick, ->] (d0-0) edge node[left] {\small $R\ \ $} (p2-0);
  \draw[thick, ->] (d0-1) -- (p2-1);
 \draw[thick, ->] (p2-0) edge node[left] {\small $R\ \ $} (p4-0);
  \draw[thick, ->] (p2-1) -- (p4-1);
  \draw[thick, ->] (p2-7) -- (p4-7);
  \draw[thick, ->] (p4-0) edge node[left] {\small $R\ \ $} (p6-0);
  \draw[thick, ->] (p4-1) -- (p6-1);
  \draw[thick, ->] (p4-5) -- (p6-5);
  \path (d0-0) -- +(left:4mm) node {\small $a^\I$};%
  \path (p1-0) -- +(down:2.2mm) node {\scriptsize $0$};%
  \path (p1-1) -- +(down:2.2mm) node {\scriptsize $1$};%
  \path (p1-2) -- +(down:2.2mm) node {\scriptsize $2$};%
  \path (p1-10) -- +(down:2.2mm) node {\scriptsize $k+3$};%
  \path (p1-12) -- +(up:2mm) node {\scriptsize\hspace*{-2em}time};%
  \draw[->] (left:2mm) -- (p1-12); 
  \draw[snake=brace] (p5-2) -- (p5-5)  node[midway, above=1pt] {\scriptsize  $\textit{up}(T)$};
  \draw[snake=brace] (p3-2) -- (p3-7) node[midway,above=1pt] {\scriptsize $\textit{up}(T') = \textit{down}(T)$};
\end{tikzpicture}
\caption{Proof of Theorem~\ref{thm:undec}: the structure of the $\N\times\N$ grid.}\label{fig:undecidable}
\end{figure}
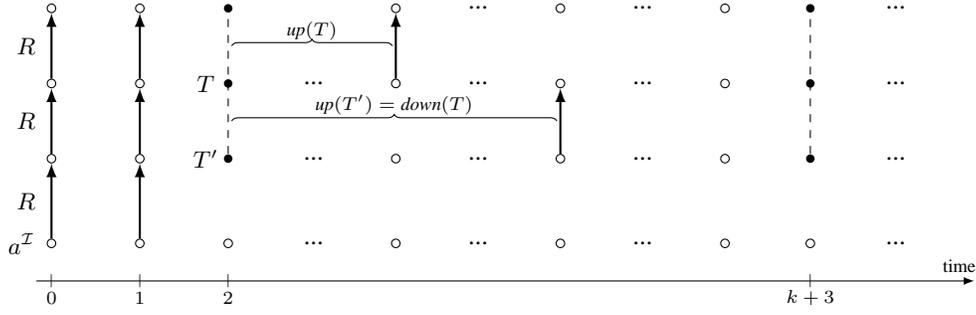

\smallskip

The proof for \TdrDLbn{} is much more involved. 
To encode the vertical axis of the $\N\times\N$-grid,  we again use the role $R$ satisfying the concept inclusions
\begin{equation}\label{eq:undec:R2}
  \mathop{\geq 2} \SVdiamond R \sqsubseteq \bot\quad\text{ and }\quad \mathop{\geq 2} \SVdiamond R^- \sqsubseteq \bot.
\end{equation}
However, as $\Rnext$ is not available in \TdrDLbn, we need a
completely different construction to ensure that the tiles match in the horizontal dimension. Indeed, in the  proof above (cf.~\eqref{eq:undec:tile-h}) we use
$\Rnext^n$ and disjunction to place a suitable tile to the right of any tile in the grid. Without the $\Rnext$ operator, we  use another role $S$ (whose $\SVdiamond S$ is also inverse-functional) and create special patterns to represent colours (as natural number from 1 to $k$) similarly to the way we paired \textit{up} and \textit{down} colours above. In order to create patterns and refer to the `next moment'\!, we use a trick  similar to the one we used in the proof of Theorem~\ref{thm:np1}: given a concept $C$ and $n \geq 0$, let 
\begin{equation*}
\Rdiamond^{=n} C =
\Rdiamond^n C \sqcap \neg \Rdiamond^{n+1} C\qquad\text{ and }\qquad \Ldiamond^{=n} C =
\Ldiamond^n C \sqcap \neg \Ldiamond^{n+1} C.
\end{equation*}
Note, however, that these $\Diamond^{=n}_{\scriptscriptstyle
F/P}C$-operators can mark a domain element with $C$ only once. So, every time we need a pattern, say of $\exists S$, of a certain length on a domain element, we create a new $S$-successor, use concepts $\textit{bit}_i$ (with various superscripts in the proof) to mark certain positions on that $S$-successor by means of the operators $\Diamond^{=i}_{\scriptscriptstyle F/P}\textit{bit}_i$  and then `transfer' the markings back to our domain element via inclusions of the form $\textit{bit}_i \sqsubseteq \exists
S^-$ and $\textit{bit}_i \sqsubseteq \neg \exists S^-$ with functional $\SVdiamond S^-$. 

The rest of the proof is organised as follows. In Step 1, we create the structure of the horizontal axis on a fixed ABox element $a$. The structure consists of repeating blocks of length $4k+4$ (to represent the four colours of the tile); each block has a certain pattern of complementary $V_0$- and $V_1$-arrows (see Fig.~\ref{fig:undec2:V0V1}), which are arranged using the same technique as we outlined for $S$ so that $a$ has a `fan' of $V_0$-successors ($y_0,y_1,\dots$) and a `fan' of $V_1$-successors ($x_0,x_1,\dots$). Then, in Step~2, we create a sequence $z_0,z_1,\dots$  of $R$-successors to represent the vertical axis (see Fig.~\ref{fig:undec2:model}) so that each of the $z_i$ repeats the structure of the horizontal axis (shifted by $k+1$ with each new $z_i$) and places tiles on a `fan' of its own $S$-successors. The particular patterns of $S$-arrows within the repeating $4k+4$ blocks will then ensure that the \textit{right}--\textit{left} colours match (within the same `fan') and, similarly, the patterns of $R$-arrows between the $z_i$ will ensure that the \textit{up}--\textit{down} colours match.

\begin{figure}[t]
\centering
\begin{tikzpicture}[yscale=0.8, point/.style={circle,draw,inner sep=0.4mm}, >=latex, looseness=0.7]
\draw[->] (-0.75,0) -- (10.5,0);
\node at (-1.1,0) {$a^\I$};
\draw[ultra thin] (-0.75,1) -- (10.5,1);
\node at (-1.1,1) {$x_0$};
\draw[ultra thin] (-0.75,2) -- (10.5,2);
\node at (-1.1,2) {$y_0$};
\node[point,fill=black,label=below:{\footnotesize $0$}] (a0) at (0,0) {};
\draw[ultra thin] (a0) -- ++(0,-0.1);
\node[point,fill=black] (a05) at (0.5,0) {};
\node[point,fill=black] (a15) at (1.5,0) {};
\node[point,fill=black,label=below:{\footnotesize $k$}] (a1) at (2,0) {};
\draw[ultra thin] (a1) -- ++(0,-0.18);
\node[point,fill=black,label=below:{\footnotesize $k\!+\!1$}] (a2) at (2.5,0) {};
\draw[ultra thin] (a2) -- ++(0,-0.18);
\node[point,fill=black] (a25) at (3,0) {};
\node[point,fill=black] (a35) at (4,0) {};
\node[point,fill=black,label=below:{\footnotesize $2k\!+\!1$\hspace*{1em}}] (a3) at (4.5,0) {};
\draw[ultra thin] (a3) -- ++(0,-0.18);
\node[point,fill=black,label=below:{\footnotesize\rule{0pt}{16pt} $2k\!+\!2$}] (a4) at (5,0) {};
\draw[ultra thin, dashed] (a4) -- ++(0,-0.5);
\node[point,fill=black,label=below:{\footnotesize \hspace*{1em}$2k\!+\!3$}] (a5) at (5.5,0) {};
\draw[ultra thin] (a5) -- ++(0,-0.18);
\node[point,fill=black] (a55) at (6,0) {};
\node[point,fill=black] (a65) at (9,0) {};
\node[point,fill=black,label=below:{\footnotesize $4k\!+\!3$}] (a6) at (9.5,0) {};
\draw[ultra thin] (a6) -- ++(0,-0.18);
\foreach \x in {-0.5,0,0.5,1,1.5,2,2.5,3,3.5,4,4.5,5,5.5,6,6.5,7,7.5,8,8.5,9,9.5,10} 
{
\node[point,fill=white]  at (\x,1) {};
\node[point,fill=white] at (\x,2) {};
};
\node[point,fill=black] (x0) at (0,1) {};
\node[point,fill=black] (x05) at (0.5,1) {};
\node[point,fill=black] (x15) at (1.5,1) {};
\node[point,fill=black] (x1) at (2,1) {};
\node[point,fill=black] (y2) at (2.5,2) {};
\node[point,fill=black] (y25) at (3,2) {};
\node[point,fill=black] (y35) at (4,2) {};
\node[point,fill=black] (y3) at (4.5,2) {};
\node[point,fill=black] (x4) at (5,1) {};
\node[point,fill=black] (y5) at (5.5,2) {};
\node[point,fill=black] (y55) at (6,2) {};
\node[point,fill=black] (y65) at (9,2) {};
\node[point,fill=black] (y6) at (9.5,2) {};
\draw[->] (a0) -- (x0);
\draw[->] (a05) -- (x05);
\draw[->] (a15) -- (x15);
\draw[->] (a1) -- (x1);
\fill[white] (0.75,-0.25) rectangle +(0.5,2.5);
\foreach \i in {0,1,2} {
	\draw[dotted] (0.75,\i) -- +(0.5,0);
}
\draw[->,bend right] (a2) to (y2);
\draw[->,bend right] (a25) to (y25);
\draw[->,bend right] (a35) to (y35);
\draw[->,bend right] (a3) to (y3);
\fill[white] (3.25,-0.25) rectangle +(0.5,2.5);
\foreach \i in {0,1,2} {
	\draw[dotted] (3.25,\i) -- +(0.5,0);
}
\draw[->] (a4) -- (x4);
\draw[->,bend right] (a5) to (y5);
\draw[->,bend right] (a55) to (y55);
\draw[->,bend right] (a65) to (y65);
\draw[->,bend right] (a6) to (y6);
\fill[white] (6.25,-0.25) rectangle +(2.5,2.5);
\foreach \i in {0,1,2} {
	\draw[dotted] (6.25,\i) -- +(2.5,0);
}
\draw[ultra thin] ($(x0) + (-0.2,0)$) -- ($(x0) + (-0.2,0.2)$) -- node[above, midway] {\small $k+1$} ($(x1) + (0.2,0.2)$)  -- ($(x1) + (0.2,0)$);
\draw[ultra thin] ($(y2) + (-0.2,0)$) -- ($(y2) + (-0.2,0.2)$) -- node[above, midway] {\small $k+1$} ($(y3) + (0.2,0.2)$)  -- ($(y3) + (0.2,0)$);
\draw[ultra thin] ($(y5) + (-0.2,0)$) -- ($(y5) + (-0.2,0.2)$) -- node[above, midway] {\small $2k+1$} ($(y6) + (0.2,0.2)$)  -- ($(y6) + (0.2,0)$);
\end{tikzpicture}
\caption{The structure of the horizontal axis: $x_0$ is a $V_1$-successor of $a^\I$ and $y_0$ is a $V_0$-successor of $a^\I$.}\label{fig:undec2:H}
\end{figure}
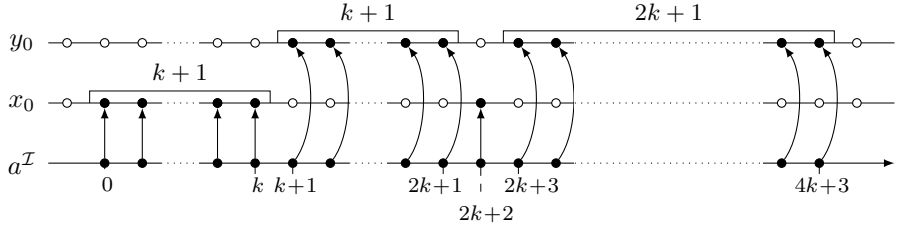

\smallskip

\noindent\textbf{Step 1.} We encode the horizontal axis using the ABox $\A = \{A(a)\}$ and a number of concept inclusions with roles $V_0$, $V_1$ and concepts ${\it bit}^{V_1}_i$, for $1 \le i \le 2k+2$, and ${\it bit}^{V_0}_i$, for $1 \le i \le 3k+2$. Consider first the following concept inclusions:
\begin{align}
\label{eq:undec2:AV1}
A &\sqsubseteq \exists V_1 \sqcap \Lbox \neg\exists V_1,\\
\label{eq:undec2:V1func}
\mathop{\geq 2} \SVdiamond V_1^- &\sqsubseteq \bot,\\
 \label{eq:undec2:V1gen}
\exists V_1^- \sqcap  \Lbox\neg \exists V_1^- & \sqsubseteq \bigsqcap_{1 \leq i \leq 2k+2} \hspace*{-1em}\Rdiamond^{=i}\, {\it bit}^{V_1}_i \ \ \sqcap \ \ \Rbox^{2k+3} \neg \exists V_1^-,\\
\label{eq:undec2:V1bits1} 
  {\it bit}^{V_1}_i &\sqsubseteq \exists V_1^-,\qquad \text{for } 1\leq i \leq k,\\
\label{eq:undec2:V1bits2} 
  {\it bit}^{V_1}_{k + i} &\sqsubseteq \neg \exists V_1^-,\qquad \text{for } 1 \leq i \leq k + 1,\\
\label{eq:undec2:V1bits3} 
  {\it bit}^{V_1}_{2k + 2} &\sqsubseteq \exists V_1^-.
\end{align}
Suppose all of them hold in an interpretation $\I$. Then, 
by~\eqref{eq:undec2:AV1}, $a^\I$ has a $V_1$-successor, say $x_0$, at moment 0 and no $V_1$-successor at any preceding moment. By~\eqref{eq:undec2:V1func}, $x_0$ does not have a $V_1$-predecessor before 0, and so, by~\eqref{eq:undec2:V1gen}--\eqref{eq:undec2:V1bits3}, $x_0$ has a $V_1$-predecessor at every moment $i$ with $0 \leq i \leq k$ and $i = 2k+2$, and no $V_1$-predecessor at any other times. By~\eqref{eq:undec2:V1func}, all these $V_1$-predecessors must coincide with $a^\I$ (Fig.~\ref{fig:undec2:H}).
We also need similar concept inclusions for the role $V_0$:
\begin{align}
\label{eq:undec2:AV0}  
  A & \sqsubseteq   \Lbox \neg\exists V_0,\\
\label{eq:undec2:V0func}
\mathop{\geq 2} \SVdiamond V_0^- &\sqsubseteq \bot,\\
 \label{eq:undec2:V0gen}
  \exists V_0^- \sqcap \Lbox \neg \exists V_0^- & \sqsubseteq \bigsqcap_{1 \leq i \leq 3k+2} \hspace*{-1em}\Rdiamond^{=i}\, {\it bit}^{V_0}_i \ \ \sqcap \ \ \Rbox^{3k+3} \neg\exists V_0^-,\\
\label{eq:undec2:V0bits1} 
  {\it bit}^{V_0}_i &\sqsubseteq \exists V_0^-,\qquad \text{for } 1\leq i \leq k,\\
\label{eq:undec2:V0bits2} 
  {\it bit}^{V_0}_{k + 1} &\sqsubseteq \neg \exists V_0^-,\\
\label{eq:undec2:V0bits3} 
  {\it bit}^{V_0}_{k + 1 + i} &\sqsubseteq  \exists V_0^-,\qquad \text{for } 1 \leq i \leq 2k+1,
\end{align}
together with 
\begin{align}
\label{eq:undec2:V0orV1}
 A &\sqsubseteq \Rbox (\exists V_1 \sqcup \exists V_0),\\
\label{eq:undec2:V0disjV1}
  \exists V_0 \sqcap  \exists V_1&\sqsubseteq \bot.
\end{align}
Suppose all of them hold in $\I$. 
By~\eqref{eq:undec2:V0orV1}, \eqref{eq:undec2:V0disjV1}, at each
moment after 0, $a^\I$ has either a $V_0$- or a
$V_1$-successor. By~\eqref{eq:undec2:AV1}, \eqref{eq:undec2:V0disjV1}
and the observations above, $a^\I$ cannot have a $V_0$-successor in
the interval between 0 and $k$. Suppose $y_0$ is a
$V_0$-successor of $a^\I$ at $k+1$ (that this is the case will
be ensured by~\eqref{eq:undec:V1f}). By~\eqref{eq:undec2:AV0},~\eqref{eq:undec2:V0func}, $y_0$ has no $V_0$-predecessors before
0; so, by~\eqref{eq:undec2:V0gen}--\eqref{eq:undec2:V0bits3},
$y_0$ has $V_0$-predecessors at the moments $i$ with $k+1 \leq i \leq
2k+1$ and $2k+3 \leq i \leq 4k+3$ and no $V_0$-predecessors at 
other moments. By~\eqref{eq:undec2:V0func}, all these
$V_0$-predecessors coincide with $a^\I$ (see Fig.~\ref{fig:undec2:H}).
We show now that if
\begin{align}
\label{eq:undec:V1f}
\mathop{\geq 2} V_1 &\sqsubseteq \bot
\end{align}
also holds in $\I$ then $a^\I$ has a $V_0$-successor at $k+1$.
Indeed, suppose $a^\I$ has a $V_1$-successor $z$ at $k+1$. Then, by~\eqref{eq:undec2:AV1}, the choice of $x_0$ and~\eqref{eq:undec:V1f}, $z$ cannot be a $V_1$-successor of $a^\I$ at any moment before that. So, $z$ must belong to the left-hand side concept of~\eqref{eq:undec2:V1gen}, which triggers the following pattern of $V_1$-successors of $a^\I$: $x_0$  at moments $i$ with $0 \leq i \leq k$, $z$ at $i$ with $k+1 \leq i \leq 2k+1$, $x_0$ at $2k+2$ and $z$ at $3k+3$ (see Fig.~\ref{fig:undec2:H2}). This leaves only the moments $i$, for $2k+3 \leq i \leq 3k+2$, without any $V_0$- or $V_1$-successors. But in this case $a^\I$ cannot have any $V_0$- or $V_1$-successor at $2k+3$. Indeed, such a $V_0$-successor $z'$ would have no $V_0$-predecessor at any moment before $2k+3$, and so, by~\eqref{eq:undec2:V0func}--\eqref{eq:undec2:V0bits3}, would remain a $V_0$-successor of $a^\I$ for $k+1$ consecutive moments, which is impossible with only $k$ available slots; by a similar argument and~\eqref{eq:undec:V1f}, $a^\I$ has no $V_1$-successor at $2k+3$.
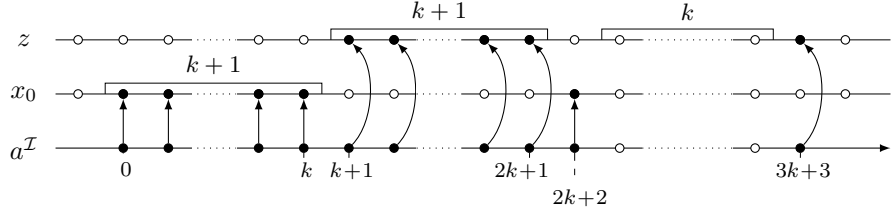
\begin{figure}[t]
\centering
\begin{tikzpicture}[yscale=0.8, point/.style={circle,draw,inner sep=0.4mm}, >=latex, looseness=0.7,yscale=0.9,xscale=1.2]
\draw[->] (-0.75,0) -- (8.5,0);
\node at (-1.1,0) {$a^\I$};
\draw[ultra thin] (-0.75,1) -- (8.5,1);
\node at (-1.1,1) {$x_0$};
\draw[ultra thin] (-0.75,2) -- (8.5,2);
\node at (-1.1,2) {$z$};
\node[point,fill=black,label=below:{\footnotesize $0$}] (a0) at (0,0) {};
\draw[ultra thin] (a0) -- ++(0,-0.1);
\node[point,fill=black] (a05) at (0.5,0) {};
\node[point,fill=black] (a15) at (1.5,0) {};
\node[point,fill=black,label=below:{\footnotesize $k$}] (a1) at (2,0) {};
\draw[ultra thin] (a1) -- ++(0,-0.18);
\node[point,fill=black,label=below:{\footnotesize $k\!+\!1$}] (a2) at (2.5,0) {};
\draw[ultra thin] (a2) -- ++(0,-0.18);
\node[point,fill=black] (a25) at (3,0) {};
\node[point,fill=black] (a35) at (4,0) {};
\node[point,fill=black,label=below:{\footnotesize $2k\!+\!1$\hspace*{1em}}] (a3) at (4.5,0) {};
\draw[ultra thin] (a3) -- ++(0,-0.18);
\node[point,fill=black,label=below:{\footnotesize\rule{0pt}{16pt} $2k\!+\!2$}] (a4) at (5,0) {};
\draw[ultra thin, dashed] (a4) -- ++(0,-0.5);
\node[point,fill=black,label=below:{\footnotesize $3k\!+\!3$}] (a5) at (7.5,0) {};
\draw[ultra thin] (a5) -- ++(0,-0.18);
\foreach \x in {-0.5,0,0.5,1,1.5,2,2.5,3,3.5,4,4.5,5,5.5,6,6.5,7,7.5,8} 
{
\node[point,fill=white]  at (\x,1) {};
\node[point,fill=white] at (\x,2) {};
};
\node[point,fill=white] at (5.5,0) {};
\node[point,fill=white] at (7,0) {};

\node[point,fill=black] (x0) at (0,1) {};
\node[point,fill=black] (x05) at (0.5,1) {};
\node[point,fill=black] (x15) at (1.5,1) {};
\node[point,fill=black] (x1) at (2,1) {};
\node[point,fill=black] (y2) at (2.5,2) {};
\node[point,fill=black] (y25) at (3,2) {};
\node[point,fill=black] (y35) at (4,2) {};
\node[point,fill=black] (y3) at (4.5,2) {};
\node[point,fill=black] (x4) at (5,1) {};
\node[point,fill=black] (y5) at (7.5,2) {};
\draw[->] (a0) -- (x0);
\draw[->] (a05) -- (x05);
\draw[->] (a15) -- (x15);
\draw[->] (a1) -- (x1);
\fill[white] (0.75,-0.25) rectangle +(0.5,2.5);
\foreach \i in {0,1,2} {
	\draw[dotted] (0.75,\i) -- +(0.5,0);
}
\draw[->,bend right] (a2) to (y2);
\draw[->,bend right] (a25) to (y25);
\draw[->,bend right] (a35) to (y35);
\draw[->,bend right] (a3) to (y3);
\fill[white] (3.25,-0.25) rectangle +(0.5,2.5);
\foreach \i in {0,1,2} {
	\draw[dotted] (3.25,\i) -- +(0.5,0);
}
\draw[->] (a4) -- (x4);
\draw[->,bend right] (a5) to (y5);
\fill[white] (5.75,-0.25) rectangle +(1,2.5);
\foreach \i in {0,1,2} {
	\draw[dotted] (5.75,\i) -- +(1,0);
}
\draw[ultra thin] ($(x0) + (-0.2,0)$) -- ($(x0) + (-0.2,0.2)$) -- node[above, midway] {\small $k+1$} ($(x1) + (0.2,0.2)$)  -- ($(x1) + (0.2,0)$);
\draw[ultra thin] ($(y2) + (-0.2,0)$) -- ($(y2) + (-0.2,0.2)$) -- node[above, midway] {\small $k+1$} ($(y3) + (0.2,0.2)$)  -- ($(y3) + (0.2,0)$);
\draw[ultra thin] ($(y5) + (-2.2,0)$) -- ($(y5) + (-2.2,0.2)$) -- node[above, midway] {\small $k$} ($(y5) + (-0.3,0.2)$)  -- ($(y5) + (-0.3,0)$);
\end{tikzpicture}
\caption{A gap of $k$ moments on the horizontal axis: both $x_0$ and $z$ are $V_1$-successors of $a^\I$.}\label{fig:undec2:H2}
\end{figure}
Next, if in addition 
\begin{align}
\label{eq:undec:V0f}
\mathop{\geq 2} V_0 &\sqsubseteq \bot
\end{align}
holds in $\I$, then $a^\I$ has a $V_1$-successor, $x_1$, at $4k+4$. 
Indeed, using~\eqref{eq:undec:V0f} and an argument similar to the one above, one can show that if $a^\I$ has a $V_0$-successor $z$ at $4k+4$ then $z$ is different from $y_1$ and $z$ cannot have $V_0$-predecessors before $4k+4$. But then the pattern of $V_0$-successors required by~\eqref{eq:undec2:V0gen}--\eqref{eq:undec2:V0bits3} would make it impossible for $a^\I$ to have any $V_0$- or $V_1$-successor at $6k+6$, where $z$ has no $V_0$-predecessor. 

Thus, we find ourselves in the same situation as at the very beginning of the construction, but with $x_1$ in place of $x_0$. By repeating the same argument again and again, we obtain domain elements $x_0,x_1,\dots$ and $y_0,y_1,\dots$ of the interpretation $\I$ which are, respectively, $V_1$- and $V_0$-successors of $a^\I$ at the moments of time indicated in Fig.~\ref{fig:undec2:V0V1} by black points and intervals. 

\smallskip

\noindent\textbf{Step 2.} We are now in a position to encode the $\N\times\N$-tiling problem. Let us regard each $T \in \TT$ as a fresh concept name satisfying the disjointness concept inclusions
\begin{align}
T \sqcap T' \sqsubseteq \bot,\qquad\text{ for }T\ne T'.
\end{align}
Consider the following concept inclusions:
\begin{align}
\label{eq:undec-fp-AR} 
A &\sqsubseteq \exists R \sqcap  \Lbox \neg \exists R,\\
\label{eq:undec-fp-row-begin} 
\exists R^- \sqcap \Lbox\neg \exists R^- &\sqsubseteq  \Ldiamond^{=2k+1} {\it row\text{-}start},\\
\label{eq:undec-fp-row-end} 
{\it row\text{-}start} &\sqsubseteq \exists S \sqcap \Lbox \neg \exists S,\\
\label{eq:undec-fp-tile-placed}
\exists S^- \sqcap \Lbox\neg  \exists S^- &\sqsubseteq \bigsqcup_{T \in \TT} T.
\end{align}
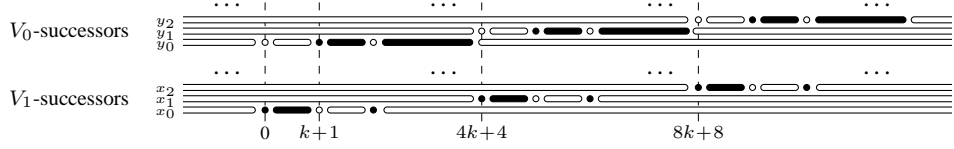
\begin{figure}[t]
\centering
\begin{tikzpicture}[>=latex,xscale=1.2,
	block/.style={rectangle,draw=black,rounded corners=0.4mm,inner ysep=0.4mm,inner xsep=2.5mm},
	point/.style={rectangle,draw=black,rounded corners=0.4mm,inner sep=0.4mm}]
\draw[ultra thin,dashed] (0.9,-2.25) -- +(0,1.6);
\node at (0.9,-2.4) {\scriptsize 0};
\draw[ultra thin,dashed] (1.5,-2.25) -- +(0,1.6);
\node at (1.5,-2.4) {\scriptsize $k\!+\!1$};
\draw[ultra thin,dashed] (3.3,-2.25) -- +(0,1.6);
\node at (3.3,-2.4) {\scriptsize $4k\!+\!4$};
\draw[ultra thin,dashed] (5.7,-2.25) -- +(0,1.6);
\node at (5.7,-2.4) {\scriptsize $8k\!+\!8$};
%
\foreach \x/\y/\n in {1.2/-1.2/0,3.6/-1.05/1,6/-0.9/2} {
\node[point, fill=black] (v0\n) at (\x-0.3,\y-0.9) {};
\node[block, fill=black] (v0r\n) at (\x,\y-0.9) {};
\node[point, fill=white] at (\x+0.3,\y-0.9) {};
\node[block, fill=white] at (\x+0.6,\y-0.9) {};
\node[point, fill=black] at (\x+0.9,\y-0.9) {};
%
\node[point, fill=white] at (\x-0.3,\y) {};
\node[block, fill=white] at (\x,\y) {};
\node[point, fill=black] at (\x+0.3,\y) {};
\node[block, fill=black] at (\x+0.6,\y) {};
\node[point, fill=white] at (\x+0.9,\y) {};
\node[block, inner xsep=6mm,fill=black] at (\x+1.5,\y) {};
}
\begin{scope}
\clip (0,-2.2) rectangle +(8.5,1.6);
\node[block, fill=white, inner xsep=9.5mm] at (0,-2.1) {};
\node[block, fill=white, inner xsep=38.5mm] at (0,-1.95) {};
\node[block, fill=white, inner xsep=67mm] at (0,-1.8) {};
\node[block, fill=white, inner xsep=9.5mm] at (0,-1.2) {};
\node[block, fill=white, inner xsep=38.5mm] at (0,-1.05) {};
\node[block, fill=white, inner xsep=67mm] at (0,-0.9) {};
\node[block, fill=white, inner xsep=67mm] at (7.8,-2.1) {};
\node[block, fill=white, inner xsep=38.5mm] at (7.8,-1.95) {};
\node[block, fill=white, inner xsep=9.5mm] at (7.8,-1.8) {};
\node[block, fill=white, inner xsep=67mm] at (8.85,-1.2) {};
\node[block, fill=white, inner xsep=38.5mm] at (8.85,-1.05) {};
\node[block, fill=white, inner xsep=9.5mm] at (8.85,-0.9) {};
\end{scope}
\foreach \x in {0,2.4,4.8,7.2} {
\node at (\x+0.5,-1.6) {\dots};
\node at (\x+0.5,-0.7) {\dots};
}
\node at (-1.3,-1.95) {\footnotesize $V_1$-successors};
\node at (-1.3,-1.05) {\footnotesize $V_0$-successors};
\node at (-0.2,-2.15) {\tiny $x_0$};
\node at (-0.2,-2) {\tiny $x_1$};
\node at (-0.2,-1.85) {\tiny $x_2$};
\node at (-0.2,-1.25) {\tiny $y_0$};
\node at (-0.2,-1.1) {\tiny $y_1$};
\node at (-0.2,-0.95) {\tiny $y_2$};
\end{tikzpicture} 
\caption{$V_0$- and $V_1$-successors of $a$ in a model of $\K_\TT$.}\label{fig:undec2:V0V1}
\end{figure}%
Intuitively, \eqref{eq:undec-fp-AR} says that $a$ has an $R$-successor, say $z_0$, at the moment 0, and no $R$-successors before 0. Then, by~\eqref{eq:undec:R2}, $z_0$ has no $R$-predecessors before 0. Axioms~\eqref{eq:undec-fp-row-begin}--\eqref{eq:undec-fp-tile-placed} make sure that $z_0$ has an $S$-successor, $w$, which is an instance of $T$ at  $-(2k+1)$, for some tile $T$. In this case, we say that $T$ \emph{is placed on} $z_0$ (rather than on $w$). Tiles will also be placed on domain elements having $S$-successors with a specific pattern of concepts $\exists S^-$ given by the following concept inclusions:
\begin{align}
\label{eq:undec-fp-role-funct}
\mathop{\geq 2} \SVdiamond S^- & \sqsubseteq \bot,\\
\label{eq:undec2:Sfunc}
\mathop{\geq 2} S & \sqsubseteq \bot,\\
   T & \sqsubseteq \bigsqcap_{1 \leq i \leq 6k+4} \hspace*{-1em}\Rdiamond^{=i}\,{\it bit}^T_i 
   \ \ \sqcap \ \  \Rbox^{6k+5} \neg\exists S^-, \label{eq:undec-fp-tile}\hspace*{-8em}\\
  \label{eq:undec-fp-role-pos-begin} 
 &&  {\it bit}^T_i &\sqsubseteq \exists S^-,\quad &\text{for } 1 \leq i < k,\\ 
  {\it bit}^T_k &\sqsubseteq \neg \exists S^-, &
  {\it bit}^T_{k+i} &\sqsubseteq \begin{cases}\neg\exists S^-,& \text{if }i= \textit{left}(T),\\ 
  \exists S^-,&\text{otherwise},\end{cases} \quad&\text{for }1 \leq i \leq k, \\
  {\it bit}^T_{2k+1} &\sqsubseteq \neg \exists S^-, &
  {\it bit}^T_{2k+1+i} &\sqsubseteq \begin{cases}\neg\exists S^-,& \text{if }i= \textit{down}(T),\\ \exists S^-,&\text{otherwise},\end{cases} \quad&\text{for }1 \leq i \leq k, \\
  {\it bit}^T_{3k+2} &\sqsubseteq \neg \exists S^-, &
{\it bit}^T_{3k+2+i} &\sqsubseteq \begin{cases}\neg\exists S^-,& \text{if }i= \textit{up}(T),\\ \exists S^-,&\text{otherwise},\end{cases} \quad&\text{for }1 \leq i \leq k,\\
  {\it bit}^T_{4k+3} &\sqsubseteq \exists S^-, &  
  {\it bit}^T_{4k+3+i} &\sqsubseteq \neg \exists S^-,\quad&\text{for }1 \leq i \leq k, \\
  {\it bit}^T_{5k+4} &\sqsubseteq \exists S^-, &  
{\it bit}^T_{5k+4+i} &\sqsubseteq \begin{cases}\exists S^-,& \text{if }i= \textit{right}(T),\\ \neg\exists S^-,&\text{otherwise},\end{cases} \quad&\text{for }1 \leq i \leq k.
  \label{eq:undec-fp-role-pos-end}
\end{align}
Suppose a domain element $w$ is an instance of $T$ at some moment $t$, for some $T \in \TT$. Then $w$ will be an instance of $\exists S^-$ at the moments $t,\dots,t+k-1$. We think of this time interval on $w$ (and, as before, on $z_0$) as the \emph{plug}, or the \emph{P-section}. After the plug we have a one-instant \emph{gap} (where $w$ is an instance of $\neg \exists S^-$). The gap is followed by a sequence of $k$ moments of time that represent $\textit{left}(T)$ in the sense that only at the $i$th moment of the sequence, where $i = \textit{left}(T)$, $w$ does \emph{not} have an $S$-predecessor. Then we again have a one-instant gap, followed by a sequence of $k$-moments representing $\textit{down}(T)$ (in the same sense), another one-instant gap and a sequence representing $\textit{up}(T)$ (see Fig.~\ref{fig:undec2:tile}).
At the next moment, $t + 4k +3$, $w$ will be an instance of $\exists S^-$; then we have $k$ gaps (i.e.,  $\neg \exists S^-$), called the \emph{socket}, or the \emph{S-section}. After the socket, at $t + 5k +4$,  $w$ is again an instance of $\exists S^-$, and then we have a sequence of $k$ moments representing `inverted' $\textit{right}(T)$: the $i$th moment of this sequence has an $S$-predecessor iff $i = \textit{right}(T)$. 
We note that, by~\eqref{eq:undec-fp-role-funct}, the pattern of $\exists S^-$ on $w$ in Fig.~\ref{fig:undec2:tile} is reflected by the pattern of $\exists S$ on the $S$-predecessor $z_0$ of $w$ at $t$, which (partly) justifies our terminology when we say that \emph{tile $T$ is placed on $z_0$} (rather than on $w$).

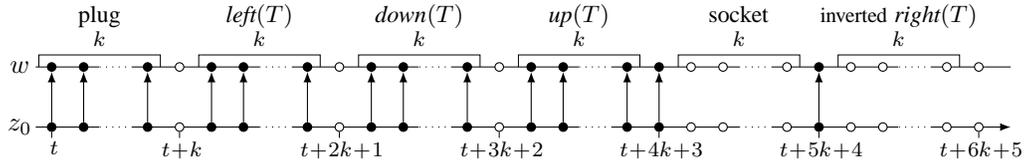
\begin{figure}[b]
\centering
\begin{tikzpicture}[xscale=0.85, point/.style={circle,draw,inner sep=0.4mm}, >=latex, looseness=0.7,yscale=0.8]
\draw[->] (-0.25,0) -- (15,0);
\node at (-0.5,0) {$z_0$};
\draw[ultra thin] (-0.25,1) -- (15,1);
\node at (-0.5,1) {$w$};
\node[point,fill=black,label=below:{\footnotesize $t$}] (a0) at (0,0) {};
\draw[ultra thin] (a0) -- ++(0,-0.18);
\node[point,fill=black] (a05) at (0.5,0) {};
\node[point,fill=black] (a15) at (1.5,0) {};
\node[point,fill=white,label=below:{\footnotesize $t\! +\! k$}] (a1) at (2,0) {};
\draw[ultra thin] (a1) -- ++(0,-0.18);
\node[point,fill=black] (a2) at (2.5,0) {};
\node[point,fill=black] (a25) at (3,0) {};
\node[point,fill=black] (a35) at (4,0) {};
\node[point,fill=white,label=below:{\footnotesize $t \!+\! 2k\!+\!1$}] (a3) at (4.5,0) {};
\draw[ultra thin] (a3) -- ++(0,-0.18);
\node[point,fill=black] (a4) at (5,0) {};
\node[point,fill=black] (a45) at (5.5,0) {};
\node[point,fill=black] (a55) at (6.5,0) {};
\node[point,fill=white,label=below:{\footnotesize $t \!+\! 3k\!+\!2$}] (a5) at (7,0) {};
\draw[ultra thin] (a5) -- ++(0,-0.18);
\node[point,fill=black] (a6) at (7.5,0) {};
\node[point,fill=black] (a65) at (8,0) {};
\node[point,fill=black] (a75) at (9,0) {};
\node[point,fill=black,label=below:{\footnotesize $t \!+\! 4k\!+\!3$}] (a7) at (9.5,0) {};
\draw[ultra thin] (a7) -- ++(0,-0.18);
\node[point,fill=white] (a8) at (10,0) {};
\node[point,fill=white] (a85) at (10.5,0) {};
\node[point,fill=white] (a95) at (11.5,0) {};
\node[point,fill=black,label=below:{\footnotesize $t\! +\! 5k\!+\!4$}] (a9) at (12,0) {};
\draw[ultra thin] (a9) -- ++(0,-0.18);
\node[point,fill=white] (a10) at (12.5,0) {};
\node[point,fill=white] (a105) at (13,0) {};
\node[point,fill=white] (a115) at (14,0) {};
\node[point,fill=white,label=below:{\footnotesize $t\! +\! 6k\!+\!5$}] (a11) at (14.5,0) {};
\draw[ultra thin] (a11) -- ++(0,-0.18);

\node[point,fill=black] (x0) at (0,1) {};
\node[point,fill=black] (x05) at (0.5,1) {};
\node[point,fill=black] (x15) at (1.5,1) {};
\node[point,fill=white] (x1) at (2,1) {};
\node[point,fill=black] (x2) at (2.5,1) {};
\node[point,fill=black] (x25) at (3,1) {};
\node[point,fill=black] (x35) at (4,1) {};
\node[point,fill=white] (x3) at (4.5,1) {};
\node[point,fill=black] (x4) at (5,1) {};
\node[point,fill=black] (x45) at (5.5,1) {};
\node[point,fill=black] (x55) at (6.5,1) {};
\node[point,fill=white] (x5) at (7,1) {};
\node[point,fill=black] (x6) at (7.5,1) {};
\node[point,fill=black] (x65) at (8,1) {};
\node[point,fill=black] (x75) at (9,1) {};
\node[point,fill=black] (x7) at (9.5,1) {};
\node[point,fill=white] (x8) at (10,1) {};
\node[point,fill=white] (x85) at (10.5,1) {};
\node[point,fill=white] (x95) at (11.5,1) {};
\node[point,fill=black] (x9) at (12,1) {};
\node[point,fill=white] (x10) at (12.5,1) {};
\node[point,fill=white] (x105) at (13,1) {};
\node[point,fill=white] (x115) at (14,1) {};
\node[point,fill=white] (x11) at (14.5,1) {};
\draw[->] (a0) -- (x0);
\draw[->] (a05) -- (x05);
\draw[->] (a15) -- (x15);
\fill[white] (0.75,-0.25) rectangle +(0.5,2.5);
\foreach \i in {0,1} {
	\draw[dotted] (0.75,\i) -- +(0.5,0);
}
\draw[->] (a2) -- (x2);
\draw[->] (a25) -- (x25);
\draw[->] (a35) -- (x35);
\fill[white] (3.25,-0.25) rectangle +(0.5,2.5);
\foreach \i in {0,1} {
	\draw[dotted] (3.25,\i) -- +(0.5,0);
}
\draw[->] (a4) -- (x4);
\draw[->] (a45) -- (x45);
\draw[->] (a55) -- (x55);
\fill[white] (5.75,-0.25) rectangle +(0.5,2.5);
\foreach \i in {0,1} {
	\draw[dotted] (5.75,\i) -- +(0.5,0);
}
\draw[->] (a6) -- (x6);
\draw[->] (a65) -- (x65);
\draw[->] (a75) -- (x75);
\fill[white] (8.25,-0.25) rectangle +(0.5,2.5);
\foreach \i in {0,1} {
	\draw[dotted] (8.25,\i) -- +(0.5,0);
}
\draw[->] (a7) -- (x7);
\draw[->] (a9) -- (x9);
\fill[white] (10.75,-0.25) rectangle +(0.5,2.5);
\foreach \i in {0,1} {
	\draw[dotted] (10.75,\i) -- +(0.5,0);
}
\fill[white] (13.25,-0.25) rectangle +(0.5,2.5);
\foreach \i in {0,1} {
	\draw[dotted] (13.25,\i) -- +(0.5,0);
}
\draw[ultra thin] ($(x0) + (-0.2,0)$) -- ($(x0) + (-0.2,0.2)$) -- node[above, midway] {\parbox{5em}{\small\centering plug\\\footnotesize $k$}} ($(x1) + (-0.3,0.2)$)  -- ($(x1) + (-0.3,0)$);
\draw[ultra thin] ($(x2) + (-0.2,0)$) -- ($(x2) + (-0.2,0.2)$) -- node[above, midway] {\parbox{5em}{\small\centering $\textit{left}(T)$\\\footnotesize $k$}} ($(x3) + (-0.3,0.2)$)  -- ($(x3) + (-0.3,0)$);
\draw[ultra thin] ($(x4) + (-0.2,0)$) -- ($(x4) + (-0.2,0.2)$) -- node[above, midway] {\parbox{5em}{\small\centering $\textit{down}(T)$\\\footnotesize $k$}} ($(x5) + (-0.3,0.2)$)  -- ($(x5) + (-0.3,0)$);
\draw[ultra thin] ($(x6) + (-0.2,0)$) -- ($(x6) + (-0.2,0.2)$) -- node[above, midway] {\parbox{5em}{\small\centering $\textit{up}(T)$\\\footnotesize $k$}} ($(x7) + (-0.3,0.2)$)  -- ($(x7) + (-0.3,0)$);
\draw[ultra thin] ($(x8) + (-0.2,0)$) -- ($(x8) + (-0.2,0.2)$) -- node[above, midway] {\parbox{5em}{\small\centering socket\\\footnotesize $k$}} ($(x9) + (-0.3,0.2)$)  -- ($(x9) + (-0.3,0)$);
\draw[ultra thin] ($(x10) + (-0.2,0)$) -- ($(x10) + (-0.2,0.2)$) -- node[above, midway] {\parbox{6em}{\small\centering {\footnotesize inverted} $\textit{right}(T)$\\\footnotesize $k$}} ($(x11) + (-0.3,0.2)$)  -- ($(x11) + (-0.3,0)$);
\end{tikzpicture}
\caption{Representing a tile using an $S$-successor.}\label{fig:undec2:tile}
\end{figure}

Thus, if the concept inclusions above hold, a tile---denote it by $T_{00}$---is placed on $z_0$ at the moment $-(2k+1)$, or, equivalently, $T_{00}$ is placed on an $S$-successor $w$ of $z_0$. The following concept inclusions will ensure then that the tiling is extended properly along both axes:
\begin{align}
\label{eq:undec2:grid-gen}
 \exists R^- \sqcap \Lbox
  \neg \exists R^- &\sqsubseteq \Rbox (\exists S \sqcup \exists R \sqcup \exists R^-),\\
 \exists R^- \sqcap \Lbox
  \neg \exists R^- &\sqsubseteq \Lbox \neg\exists R,\\
  \label{eq:undec2:V0disjR}
  \exists V_0 \sqcap\exists R &\sqsubseteq \bot,\\
  \exists V_0 \sqcap\exists R^- &\sqsubseteq \bot,\\
   \label{eq:undec-fp-role-disj1} 
    \exists S \sqcap  \exists R & \sqsubseteq \bot,\\
   \label{eq:undec-fp-role-disj2} 
    \exists S \sqcap \exists R^- & \sqsubseteq \bot,\\
    \label{eq:undec-fp-role-disj3} 
    \exists R \sqcap \exists R^- & \sqsubseteq \bot.
\end{align}
Indeed, consider the elements $z_0$ and $w$ with the tile $T_{00}$ placed on $z_0$ at $-(2k+1)$. Then $w$ has gaps (i.e., no incoming $S$-arrows) at moments 0, 
$\textit{down}(T_{00})$, $k+1$, $k+1+\textit{up}(T_{00})$, $k$ gaps from $2k+3$ to $3k +
2$ and $k-1$ gaps from $3k+4$ to $4k+3$ (one of the positions is not a gap because of the inverted
representation of $\textit{right}(T_{00})$). By~\eqref{eq:undec2:grid-gen}, each of those positions on $z_0$ must be filled either by an outgoing $S$-arrow, or by an incoming $R$-arrow, or by an outgoing $R$-arrow. Consider now what happens in these positions (see Fig.~\ref{fig:undec2:model}).

\begin{figure}[t]
\centering
\begin{tikzpicture}[>=latex,xscale=1.2,
	block/.style={rectangle,draw=black,rounded corners=0.4mm,inner ysep=0.4mm,inner xsep=2.5mm},
	point/.style={rectangle,draw=black,rounded corners=0.4mm,inner sep=0.4mm}]
\draw[ultra thin,dashed] (-0.3,-2) -- +(0,5);
\node at (-0.3,-2.2) {\scriptsize $-(2k\!+\!1)$};
\draw[ultra thin,dashed] (0.9,-2) -- +(0,5);
\node at (0.9,-2.2) {\scriptsize 0};
\draw[ultra thin,dashed] (1.5,-2) -- +(0,5);
\node at (1.5,-2.2) {\scriptsize $k\!+\!1$};
\draw[ultra thin,dashed] (3.3,-2) -- +(0,5);
\node at (3.3,-2.2) {\scriptsize $4k\!+\!4$};
\draw[densely dotted] (-0.5,-1.8) -- +(10,0);
\node at (-0.7,-1.8) {\footnotesize $a$};
\draw[densely dotted] (-0.5,-0.1) -- +(10,0);
\node at (-0.7,-0.1) {\footnotesize $z_0$};
\draw[densely dotted] (-0.5,1.1) -- +(10,0);
\node at (-0.7,1.1) {\footnotesize $z_1$};
\draw[densely dotted] (-0.5,2.3) -- +(10,0);
\node at (-0.7,2.3) {\footnotesize $z_2$};
\foreach \x/\y/\n in {0/0/00,2.4/0.15/10,4.8/0.3/20,%
                              0.6/1.2/01,3/1.35/11,5.4/1.5/21,%
                              1.2/2.4/02,3.6/2.55/12,6/2.7/22} {	
\node[block, fill=black, label=above:{\scriptsize P}] at (\x,\y) {};
\node[point, fill=white] at (\x+0.3,\y) {};
\node[block, fill=black!60, label=above:{\scriptsize L}] at (\x+0.6,\y) {};
\node[point, fill=white]  at (\x+0.6,\y) {};
\node[point, fill=white] (bd\n) at (\x+0.9,\y) {};
\node[block, fill=black!60, label=above:{\scriptsize D}] (d\n) at (\x+1.2,\y) {};
\node[point, fill=white]  at (\x+1.2,\y) {};
\node[point, fill=white] (ad\n) at (\x+1.5,\y) {};
\node[block, fill=black!60, label=below:{\scriptsize U}] (u\n) at (\x+1.8,\y) {};
\node[point, fill=white] at (\x+1.8,\y) {};
\node[point, fill=black] at (\x+2.1,\y) {};
\node[block, fill=white, label=below:{\scriptsize S}] at (\x+2.4,\y) {};
\node[point, fill=black] at (\x+2.7,\y) {};
\node[block, fill=white, label=below:{\scriptsize R}] at (\x+3,\y) {};
\node[point, fill=black]  at (\x+3,0\y) {};
\node[point, fill=white] (a\n) at (\x+3.3,0\y) {};
}
\begin{scope}[thick]
\draw[->] (u00) -- (d01);
\draw[->] (u10) -- (d11);
\draw[->] (u20) -- (d21);
\draw[->] (u01) -- (d02);
\draw[->] (u11) -- (d12);
\draw[->] (u21) -- (d22);
\draw[->] (ad00) -- (bd01);
\draw[->] (ad10) -- (a01);
\draw[->] (ad20) -- (a11);
\draw[->] (ad01) -- (bd02);
\draw[->] (ad11) -- (a02);
\draw[->] (ad21) -- (a12);
\end{scope}
\foreach \x/\y/\n in {1.2/-1.2/0,3.6/-1.05/1,6/-0.9/2} {
\node[point, fill=black] (v0\n) at (\x-0.3,\y) {};
\node[block, fill=black] (v0r\n) at (\x,\y) {};
\node[point, fill=white] at (\x+0.3,\y) {};
\node[block, fill=white] at (\x+0.6,\y) {};
\node[point, fill=black] at (\x+0.9,\y) {};
\node[block, inner xsep=6mm,fill=white] at (\x+1.5,\y) {};
\node[point, fill=white] at (\x-0.3,\y-0.5) {};
\node[block, fill=white] at (\x,\y-0.5) {};
\node[point, fill=black] at (\x+0.3,\y-0.5) {};
\node[block, fill=black] at (\x+0.6,\y-0.5) {};
\node[point, fill=white] at (\x+0.9,\y-0.5) {};
\node[block, inner xsep=6mm,fill=black] at (\x+1.5,\y-0.5) {};
}
\begin{scope}[thick]
\draw[->]  (v00) -- (bd00);
\draw[->]  (v01) -- (a00);
\draw[->]  (v02) -- (a10);
\draw[->]  (v0r0) -- (d00);
\draw[->]  (v0r1) -- (d10);
\draw[->]  (v0r2) -- (d20);
\end{scope}
\node at (0.6,-1.1) {\scriptsize $V_1$};
\node at (0.6,-1.6) {\scriptsize $V_0$};
\end{tikzpicture}
\caption{The structure of a model of $\K_\TT$.}\label{fig:undec2:model}
\end{figure}

\begin{enumerate}
\item We know that there is an incoming $R$-arrow at 0 (i.e., $z_0$ is an instance of $\exists R^-$), and so, by~\eqref{eq:undec-fp-role-disj2} and \eqref{eq:undec-fp-role-disj3}, $z_0$ cannot be an instance of $\exists S$ and $\exists R$ at 0. 
\item The position at $\textit{down}(T_{00})$ is filled by an incoming $R$-arrow using the following concept inclusions (by~\eqref{eq:undec-fp-role-disj2}, the incoming $R$-arrow can only appear at $\textit{down}(T_{00})$):
\begin{align}
\label{eq:undec-fp-start-begin}
A & \sqsubseteq \bigsqcup_{1 \leq i \leq k} \Rdiamond^{=i}\, {\it init\text{-}bot},\\
 \label{eq:undec-fp-start-end} 
 {\it init\text{-}bot} &\sqsubseteq \exists R.
\end{align} 
\item The position at $k+1$ cannot be filled by an outgoing $S$-arrow  because that would trigger a new tile sequence, which would require $k$ $S$-arrows of the P-section, which is impossible due to~\eqref{eq:undec2:Sfunc}. Next, as we observed above, $a^\I$ belongs to $\exists V_0$ at all moments $i$ with $k+1 \leq i \leq 2k+1$, and so, by~\eqref{eq:undec:R2} and~\eqref{eq:undec2:V0disjR}, $z_0$ cannot have an incoming $R$-arrow at moment $k+1$. Thus, the position at $k+1$ must be filled by an outgoing $R$-arrow. 
Thus, there is an $R$-successor $z_1$ of $z_0$, which, by~\eqref{eq:undec:R2}, implies that $z_1$ has no incoming $R$-arrows before $k+1$. Then, by~\eqref{eq:undec-fp-row-begin}--\eqref{eq:undec-fp-role-pos-end}, there will be a tile placed on $z_1$ at $-k = (k+1)-(2k+1)$. 
\item Similarly, the position at $k+1+\textit{up}(T)$ must be filled by an outgoing $R$-arrow, which ensures that the $\textit{down}$-colour of the tile placed on $z_1$ matches the $\textit{up}$-colour of the tile on $z_0$.
\item The $k$ positions of the S-section from $2k+3$ to $3k + 2$ cannot be filled by an incoming $R$-arrow. On the other hand, the tile placed on $z_1$ has its $\textit{up}$-colour encoded in this range, and so an outgoing $R$-arrow cannot fill \emph{all} these gaps either (as $k > 1$). So, $z_0$ has another $S$-successor $x_1$ in at least one of the moments of the $S$-section. By~\eqref{eq:undec-fp-row-end},~\eqref{eq:undec-fp-role-funct},  $x_1$ does not belong to $\exists S^-$ before $-(2k+1)$. By~\eqref{eq:undec-fp-tile-placed}, a tile is placed on $x_1$ between $-(2k+1)$ and $3k+2$, but, by~\eqref{eq:undec2:Sfunc} and because the tile requires $x_1$ to be the $S$-successor for $k$ consecutive moments of the P-section, it is only possible at $2k+2$. Moreover, since the $\textit{left}$- and $\textit{right}$-sections of these tile sequences overlap on $z_0$, by~\eqref{eq:undec2:Sfunc}, the adjacent colours of these two tiles match. This ensures that the $k-1$ gaps of the inverted representation of the $\textit{right}$-colour of the first tile are also filled.
\end{enumerate}
Let $\K_\TT$ be the KB containing all the concept inclusions above and $\A$. 
If $\I$ is a model of $\K_\TT$ then the process described above generates a sequence $z_0,z_1,\dots$ of domain elements such that each $z_i$ has a tile placed on it at every $4k+4$ moments of time; moreover, the $R$-arrows form a proper $\N\times\N$-grid and the adjacent colours of the tiles match. We only note that the gaps at positions in the $\textit{down}$-section do not need a special treatment after the very first tile $T_{00}$ at $(0,0)$ because, for each $z_i$, the sequence of tiles on $z_{i+1}$ will have their $\textit{left}$- and $\textit{right}$-sections, with no gap to be filled by an incoming $R$-arrow; thus, the only available choice for tiles on $z_i$ is $\exists R$. 

We have proved that if $\K_\TT$ is satisfiable then $\TT$ tiles $\N\times\N$. The converse implication is shown using the satisfying interpretation illustrated in Fig.~\ref{fig:undec2:model}.
\end{proof}

\subsection{Undirected Temporal Operators: Decidability and NP-completeness}

If we disallow the `previous time,' `next time,' `always in the past' and `always in the future' operators in the language of concept inclusions and replace them with `always' ($\SVbox$) then reasoning in the resulting logic \TurDLbn{} becomes decidable and  \NP-complete. 

Obviously, the problem is \NP-hard (because of the underlying DL). However, rather surprisingly, the interaction of temporalised roles and number restrictions is yet another source of nondeterminism, which is exhibited already by very simple TBoxes with concept inclusions in the \textit{core} fragment. The following example illustrates this point and gives a glimpse of the difficulties we shall face in the proof of the \NP{} upper bound by means of the quasimodel technique: unlike other quasimodel proofs~\cite{GKWZ03}, where only types of domain elements need to be guessed, here we also have to guess relations between ABox individuals at all relevant moments of time.

\begin{example}\label{ex:mod-role}
Let $\T  =   \{\,A \sqsubseteq \mathop{\geq 5} R, \ \mathop{\geq 7} \SVdiamond R \sqsubseteq \bot \,\}$  and 
\begin{equation*}
 \A =  \{\, \Rnext A(a),\  R(a,b_1), \ R(a,b_2),\  R(a,b_3), \ \Rnext R(a,b_1) \,\}.
\end{equation*}
The second concept inclusion of the TBox implies that, in any every model $\I$ of $(\T,\A)$, $a$ cannot have more than 6 $(\SVdiamond R)^\I$-successors in total; thus, it can only have $b_1$, $b_2$, $b_3$ and up to 3 $R^\I$-successors from outside the ABox. At moment 1, however, $a$ must have at least 5 $R^\I$-successors, including $b_1$. Thus, one of the $(\SVdiamond R)^\I$-successors in the ABox has to be re-used:  we have either $\I\models \Rnext R(a,b_2)$ or $\I\models \Rnext R(a,b_3)$.

Consider now 
$\T =  \{\, \mathop{\geq 6} \SVdiamond R \sqsubseteq \bot, \ \top\sqsubseteq \mathop{\geq 4} \SVbox R \,\}$ and $\A = \{\, R(a,b_1), R(a,b_2) \,\}$. Then, in every model $\I$ of $(\T,\A)$, either
$\I\models\SVbox R(a,b_1)$ or $\I\models\SVbox
R(a,b_2)$. 
\end{example}

\begin{theorem}\label{thm:R:NP}
The satisfiability problem for \TurDLbn{} KBs is \NP-complete.
\end{theorem}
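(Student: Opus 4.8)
The lower bound is immediate: the atemporal logic $\DLbn$ is a fragment of $\TurDLbn$, and $\DLbn$ KB satisfiability is already \NP-hard by Table~\ref{table:DL-Lite:languages}. So the work is in the \NP{} upper bound, which I would obtain by the quasimodel technique: guess a polynomial-size abstract description of a model and verify its coherence in polynomial time. The guiding observation is that the only temporal operators on concepts are the \emph{undirected} $\SVbox$ and $\SVdiamond$, whose extensions coincide at every moment of time; hence the temporal order of the unnamed moments is immaterial, and a model is determined up to harmless rearrangement by (a)~for each domain element, the \emph{set} of atemporal DL-types it realises over time, together with the induced `always'/`sometime' information, and (b)~the finitely many moments explicitly named by $\nxt^n$ in $\A$ (polynomially many, since the $n$ are in unary), at which concrete types and role links among the individuals in $\ob$ must be fixed.

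Concretely, I would let a quasimodel consist of three parts. First, a set of \emph{object runs} for the anonymous part: each run is a polynomially-bounded family of DL-types, each type consistent with the global concept inclusions of $\T$, closed under the coherence conditions that a concept $\SVbox C$ holds exactly when $C$ lies in every type of the run and $\SVdiamond C$ exactly when $C$ lies in some type (one witness type per $\SVdiamond$-concept suffices, keeping runs polynomial). Second, a description of each $a \in \ob$ assigning it a type at every relevant moment plus its global `always'/`sometime' behaviour. Third, a record of the role extensions among the individuals of $\ob$ at each relevant moment, from which the rigid ($\SVbox R$) and transient ($\SVdiamond R$) links are read off. To keep the certificate polynomial despite the binary-encoded $q$ in $\mathop{\geq q} R$, anonymous successors are represented not individually but as (successor-type, multiplicity) pairs with multiplicities in binary; only polynomially many distinct successor types are ever needed. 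The coherence conditions then require that at every relevant moment, and in aggregate for the temporalised roles, all number restrictions are met---$\mathop{\geq q} R$ for plain $R$ as well as $\mathop{\geq q} \SVdiamond R$ and $\mathop{\geq q} \SVbox R$---with $\SVbox R$-successors present at all moments and $\SVdiamond R$-successors at some moment, while role inclusions propagate successors subject to the $(\mathcal{HN})$ restriction exactly as in the atemporal case of~\cite{ACKZ:jair09}.

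I would then show that $\K$ is satisfiable iff a coherent quasimodel exists. The `only if' direction reads a quasimodel off a model by collecting, for each element, its realised types and its role links and collapsing repetitions. For `if', I would unfold a coherent quasimodel into a temporal model over $(\Z,<)$: keep the domain and the rigid-role extensions constant; place the prescribed types of the individuals $a\in\ob$ at their named moments; and use the infinitely many remaining moments of $\Z$ to realise, for each element, each of its $\SVdiamond$-witness types at some moment while holding a stable type elsewhere, expanding the binary multiplicities into actual fresh successors. Since only $\SVbox/\SVdiamond$ occur, \emph{any} such placement validates the global concept inclusions, and the counting conditions guarantee the role restrictions hold at each moment and in aggregate. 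As the certificate is polynomial and all verification reduces to comparing sums of binary integers, the whole procedure runs in \NP.

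The main obstacle is precisely the phenomenon isolated in Example~\ref{ex:mod-role}: the counting for temporalised roles on the individuals of $\ob$. An individual may be forced to \emph{re-use} a bounded pool of $\SVdiamond R$-successors across different moments, or to \emph{commit} certain ABox successors to be rigid ($\SVbox R$) rather than merely transient, and these choices interact with the exponential bounds. This is exactly why the quasimodel must guess the role links among the individuals of $\ob$ at \emph{all} relevant moments, and not merely per-element types as in the purely concept-based quasimodels of~\cite{GKWZ03}. The nontrivial step in the verification is to show that the guessed per-moment links, the rigid/transient split, and the anonymous-successor multiplicities can be reconciled simultaneously with every number restriction; I would reduce this to the solvability of a system of linear inequalities over binary integers, which is checkable in polynomial time and thus preserves the \NP{} bound.
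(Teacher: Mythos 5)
Your proposal is correct in outline and takes essentially the same route as the paper's proof: the lower bound is inherited from \DLbn{}, and the upper bound is obtained by the quasimodel technique in which, beyond polynomially many runs of types over a polynomial set of relevant time instants, one also guesses the role links among the ABox individuals at all those instants together with their rigid/transient status---precisely the twist the paper motivates with Example~\ref{ex:mod-role} and its contrast with the purely concept-based quasimodels of~\cite{GKWZ03}. The paper organises your counting inequalities as rank-comparison conditions ((r$_5$)--(r$_6$), (Q$_1$)--(Q$_4$)) over a guessed finite set of integers and additionally proves a normalisation lemma (Lemma~\ref{lem:finite:box}) bounding the number of $\SVbox R$-successors before extracting the certificate (the step your ``collapsing repetitions'' glosses over), but these are differences of bookkeeping, not of method.
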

\begin{proof}
Let $\K=(\T,\A)$ be a \TurDLbn{} KB. In what follows, given an interpretation $\I$,  we write $(\mathop{\geq q}\SVbox R)^\I$ and $(\mathop{\geq q}\SVdiamond R)^\I$ instead of $(\mathop{\geq q}\SVbox R)^{\I(n)}$ and $(\mathop{\geq q}\SVdiamond R)^{\I(n)}$, for $n\in \Z$, because temporalised roles are time-invariant.
As before, $\ob$ denotes the set of all object names occurring in $\A$ (we assume $|\ob| \geq 1$) and $\role$ the set of role names in $\K$ and their inverses. Let $\QT\subseteq \N$ be the set (cf.\ p.~\pageref{def:QK}) comprised of 1 and all $q$ such that one of $\mathop{\geq q} \SVbox R$, $\mathop{\geq q} R$ or $\mathop{\geq q} \SVdiamond R$ occurs in $\T$ and let $\QA$ be the set of all natural numbers from $0$ to $|\ob|$.  Let $q_\K = \max (\QT\cup\QA) + 1$. 
First, we show that it is enough to consider interpretations with the number of $\SVbox R$-successors bounded by $q_\K -1$ (see Appendix~\ref{app:NP} for a proof):
\begin{lemma}\label{lem:finite:box}
Every satisfiable  \TurDLbn{} KB $\K$ can be satisfied in an interpretation $\I$ with $(\mathop{\geq q_\K} \SVbox R)^{\I} = \emptyset$, for each $R\in\role$.
\end{lemma}

Next, we define the notion of quasimodel. Let $Q \supseteq \QT\cup\QA$ be a set of natural numbers with $\max Q = q_\K - 1$.
We assume that the usual order on the natural numbers in $\Qs = Q \cup \{\omega \}$ is extended to $\omega$, which is the greatest element: $0 < 1 < \dots < q_\K  - 1 < \omega$. Let $\Sigma$ consist of  the following concepts and their negations:  subconcepts of concepts occurring in $\K$  and
$\mathop{\geq q} \SVbox R$, $\mathop{\geq q} R$ and $\mathop{\geq q} \SVdiamond R$, for all $R\in\role$ and $q\in \Qs$.
A \emph{$\Sigma$-type} $\type$ is a maximal consistent subset  of  $\Sigma$:
\begin{akrzlist}
\item[t$_1$] $C\in \type$ iff $\neg C\notin \type$, for each
  $C\in\Sigma$,
\item[t$_2$] $C_1 \sqcap C_2\in\type$ iff $C_1,C_2\in \type$, for each
  $C_1\sqcap C_2\in\Sigma$,
\item[t$_3$] if $\SVbox C\in\type$ then $C\in\type$, for each $\SVbox C\in\Sigma$,
\item[t$_4$] if $\mathop{\geq q} R \in \type$ then
  $\mathop{\geq q'} R\in\type$, for each $\mathop{\geq q'} R\in\Sigma$ with $q > q'$ (similarly for $\SVbox R$ and $\SVdiamond R$),
\item[t$_5$] $\mathop{\geq 0} \SVbox R\in \type$ but $\mathop{\geq \omega} \SVbox R\notin \type$, for each role $R$,
\item[t$_6$]  if $\mathop{\geq q} \SVbox R\in\type$ then $\mathop{\geq q}
  R\in\type$ and if $\mathop{\geq q} R\in\type$ then $\mathop{\geq q} \SVdiamond R\in\type$, for each role $R$,
\item[t$_7$] if $\mathop{\geq q}
  \SVdiamond R\in \type$ then $\mathop{\geq q} \SVbox R\in\type$, for  each \emph{rigid} role $R$. 
\end{akrzlist}
Denote by $Z_\A$\label{za} the set of all integers $k$ such that at least one of $\nxt^k
A(a)$, $\nxt^k \neg A(a)$, $\nxt^k S(a,b)$ or $\nxt^k \neg
S(a,b)$ occurs in
$\A$, and let $Z \supseteq Z_\A$ be a finite set of
integers. 

By a \emph{$(Z,\Sigma)$-run}  (or simply run, if $Z$ and $\Sigma$ are clear from the context) we mean a function $r$ from $Z$ to the set of $\Sigma$-types. Concepts of the form $\SVbox C$, $\mathop{\geq q} \SVbox R$, $\mathop{\geq q} \SVdiamond R$ and their negations are called \emph{rigid}. A run $r$ is said to be \emph{coherent} if the following holds for each rigid concept $D$ in $\Sigma$:
\begin{akrzlist}
\item[r$_1$] if $D\in r(k_0)$, for some $k_0\in Z$, then $D \in r(k)$ for all $k \in Z$.
\end{akrzlist}
In the following, the runs are assumed to be coherent and so, for rigid concepts $D$, we can write $D\in r$ in place of $D\in r(k)$, for some (all) $k\in Z$. The \emph{required $R$-rank of $r$ at  $k\in Z$} and the \emph{required} $\SVbox R$- and $\SVdiamond R$-\emph{ranks of $r$} are defined by taking
\begin{align*}
\Rr{R}{k}{r} & =   \max \bigl\{ q\in \Qs \mid \mathop{\geq q} R \in r(k)\bigr\}, & 
\begin{array}{rl}
\Rb{R}{r} & =   \max \bigl\{ q\in \Qs \mid \mathop{\geq q} \SVbox R \in r \bigr\}, \\ 
\Rd{R}{r} & =   \max \bigl\{ q\in \Qs \mid \mathop{\geq q} \SVdiamond R \in r \bigr\}. 
\end{array}
\end{align*}
By the definition of $\Sigma$-types, $\Rb{R}{r} \leq \Rr{R}{k}{r} \leq \Rd{R}{r}$. Moreover, if $R$ is rigid then  
$\Rb{R}{r} = \Rd{R}{r} < \omega$. For flexible roles, however, the inequalities may be strict. 
A run $r$ is  \emph{saturated} if the following hold:
\begin{akrzlist}\itemsep=2pt
\item[r$_2$] for every flexible role $R\in\role$, if $\Rb{R}{r} < \Rd{R}{r}$ then 
\begin{itemize}
\item[--] there is $k_0\in Z$ with $ \Rb{R}{r} < \Rr{R}{k_0}{r}$, and  
\item[--] if additionally $\Rd{R}{r} < \omega$, then there is $k_1\in Z$ with $\Rr{R}{k_1}{r} < \Rd{R}{r}$;
\end{itemize}
\item[r$_3$] for every $\SVbox C\notin r$, there is $k_0\in Z$ with $C\notin r(k_0)$. 
\end{akrzlist}
Finally, we call $\extA$ a \emph{consistent $Z$-extension of $\A$} if $\extA$ extends  $\A$ with assertions of the form $\nxt^k S(a,b)$, for $k\in Z$ and $a,b\in\ob$, 
such that $\nxt^k S(a,b)\notin\extA$ for all $\nxt^k \neg S(a,b)\in\A$. 
Example~\ref{ex:mod-role} shows that, given an ABox, we have first to guess such an extension to describe a quasimodel. More precisely, we have to count the number of $R$-successors \emph{among the ABox individuals} in $\extA$. To this end, define the following sets, for $a\in\ob$, $R\in\role$ and $k \in Z$:
\begin{align*}
\extA^{R,k}_{a} & =   \{ b \mid \nxt^k R(a,b)\in\extA \}, && 
\begin{array}{rl}\extA^{\Box R}_{a} & =   \{ b \mid \nxt^n R(a,b)\in\extA, \text{ for all } n\in Z \}, \\ 
\extA^{\Diamond R}_{a} & =   \{ b \mid \nxt^n R(a,b)\in\extA, \text{ for some } n\in Z \},\end{array}
\end{align*}
where, as on p.~\pageref{def:invABox}, we assume that $\extA$ contains $\nxt^n
S^-(b,a)$ whenever it contains $\nxt^n S(a,b)$. 
We say that a $(Z,\Sigma)$-run $r$ is \emph{$a$-faithful for $\extA$} if
\begin{akrzlist}\itemsep=2pt
\item[r$_4$] $A\in r(k)$, for all $\nxt^k A(a)\in\extA$,  and $\neg A\in r(k)$, for all $\nxt^k \neg A(a)\in \extA$;
\item[r$_5$] $0 \ \ \leq \ \ \Rb{R}{r} - \Nb{\extA}{R}{a} \ \ \leq  \ \ \Rr{R}{k}{r} - \Nr{\extA}{R}{k}{a}  \ \ \leq  \ \ \Rd{R}{r} - \Nd{\extA}{R}{a}$, for all $R\in\role$  and  $k\in Z$;\footnote{We assume that $\omega -n = \omega$, for any natural number $n$.}
\item[r$_6$] for all $R\in\role$, if $\Rb{R}{r}
- \Nb{\extA}{R}{a}  \ \ <  \ \ \Rd{R}{r} - \Nd{\extA}{R}{a}$ then
\begin{itemize}
\item[--] there is $k_0\in Z$ with $\Rb{R}{r}  - \Nb{\extA}{R}{a}  \ \ < \ \ \Rr{R}{k_0}{r} - \Nr{\extA}{R}{k_0}{a}$, and 
\item[--] if additionally $\Rd{R}{r} < \omega$, then there is $k_1\in Z$ with $\Rr{R}{k_1}{r} - \Nr{\extA}{R}{k_1}{a}<   \Rd{R}{r} - \Nd{\extA}{R}{a}$.
\end{itemize}
\end{akrzlist}
Condition~\textbf{(r$_5$)} says that the number of $R$-successors in the ABox extension $\extA$ does not exceed the number of required $R$-successors: in view of  $|\extA^{R,k}_{a}|\in \QA\subseteq Q$, we have $\mathop{\geq q} R\in r(k)$ for $q = \Nr{\extA}{R}{k}{a}$ (and similarly for $\SVbox R$ and $\SVdiamond R$). Condition~\textbf{(r$_5$)}  also guarantees that the number of required $R$-successors that are \emph{not ABox individuals} is consistent for $\SVbox R$, $R$ and $\SVdiamond R$. In particular, since $\Nb{\extA}{R}{a} \leq \Nd{\extA}{R}{a}$, it follows that $\Rb{R}{r} = \Rd{R}{r}$ implies $\Nb{\extA}{R}{a} = \Nd{\extA}{R}{a}$, and so $\nxt^n R(a,b) \in \extA$, for all $n\in Z$, whenever  $\nxt^{\smash{k}} R(a,b) \in \extA$ for some $k\in Z$. Finally,  \textbf{(r$_6$)} is an adaptation of the notion of \emph{saturated runs} for the case of ABox individuals.

A \emph{quasimodel $\mathfrak{Q}$ for $\K$} is
a quadruple $(Q,Z,\mathfrak{R},\extA)$, where $Q$ and $Z$ are finite sets of integers  extending $\QT\cup\QA$ and $Z_\A$, respectively, $\mathfrak{R}$ is a set
of coherent and saturated $(Z,\Sigma)$-runs (for $\Sigma$ defined on the basis of $\Qs$)
and $\extA$ is a consistent $Z$-extension of $\A$ satisfying the
following conditions:
\begin{akrzlist}\itemsep=2pt
\item[Q$_1$] for all $r\in\mathfrak{R}$, $k\in Z$ and $C_1\sqsubseteq C_2 \in \T$, if $C_1\in r(k)$ then $C_2\in r(k)$;
\item[Q$_2$] for all $a\in\ob$, there is a run $r_a\in\mathfrak{R}$  that is $a$-faithful for $\extA$;
\item[Q$_3$] for all $R\in\role$, if there is $r\in\mathfrak{R}$ with $\Rb{R}{r} \geq 1$ then there is $r'\in\mathfrak{R}$ with $\Rb{\,\inv{R}}{r'} \geq 1$;
\item[Q$_4$]   for all $R\in\role$, if there is  $r\in\mathfrak{R}$ with $\Rb{R}{r} < \Rd{R}{r}$ then there exists $r'\in\mathfrak{R}$ with $\Rb{\,\inv{R}}{r'} < \Rd{\,\inv{R}}{r'}$.
\end{akrzlist}
Condition~\textbf{(Q$_1$)} ensures that all runs are consistent with the concept inclusions in $\T$ and~\textbf{(Q$_2$)} that there are runs for all ABox individuals; \textbf{(Q$_3$)} guarantees that a $\SVbox R$-successor can be found whenever required and \textbf{(Q$_4$)} provides $R$- (and thus $\SVdiamond R$-) successors whenever required. The following lemma states that the notion of quasimodel is adequate for checking satisfiability of \TurDLbn{} KBs:
\begin{lemma}\label{lemma:NP:quasimodel}  
A \TurDLbn{} KB $\K$ is satisfiable if and only if there is a quasimodel $\mathfrak{Q}$ for $\K$ such that the size of $\mathfrak{Q}$ is polynomial in the size of \K.
\end{lemma}

\begin{proof}
$(\Rightarrow)$ Let $\I$ be a model of
  $\K$. By Lemma~\ref{lem:finite:box}, we may assume that, for each $R\in\role$, the
  number of $\SVbox R$-successors of any element in $\I$ does not exceed
  $q_\K - 1$.
We construct a polynomial-size quasimodel $\mathfrak{Q} = (Q,Z,\mathfrak{R},\extA)$ for $\K$.
First, we select a set $D$ of elements of $\Delta^\I$ that serve as prototypes for runs in $\mathfrak{R}$: each $u \in D$ will give rise to a run $r_u$ in $\mathfrak{R}$ (after the set $Z$ of time instants has been fixed). Set $D_0 = \{ a^\I \mid a\in \ob \}$ and then proceed by induction: if $D_m$ has already been defined then we construct $D_{m+1}$ by extending $D_m$ as follows:
\begin{akrzitemize}
\item[--] if $D_m\cap (\exists \SVbox R)^{\I} \ne \emptyset$ but $D_m\cap (\exists \SVbox \inv{R})^{\I} = \emptyset$ then we add some $u\in (\exists \SVbox \inv{R})^\I$;
\item[--] if there is $q$ with $D_m\cap \bigl((\mathop{\geq q} \SVdiamond R)^\I\setminus (\mathop{\geq q} \SVbox R)^\I\bigr) \ne\emptyset$  but there is no $q'$ with $D_m\cap \bigl((\mathop{\geq q'} \SVdiamond \inv{R})^\I\setminus (\mathop{\geq q'} \SVbox \inv{R})^\I\bigr) \ne \emptyset$ then add $u\in (\mathop{\geq q''} \SVdiamond \inv{R})^\I\setminus (\mathop{\geq q''} \SVbox \inv{R})^\I$, for some $q''$ (recall that, by Lemma~\ref{lem:finite:box}, we assume that  $q$, $q'$ and $q''$ do not exceed $q_\K$).
\end{akrzitemize}
When neither rule is applicable to $D_m$, stop and set $D = D_m$. Clearly, $|D| \leq |\ob| + 2|\role|$.  

For each $u\in D$, let
\begin{align*}
\rho^{\Box R}_u & =  \max \bigl\{ q < q_\K \mid u \in
  (\mathop{\geq q} \SVbox R)^\I \bigr\},  \\
\rho^{\Diamond R}_u & = \begin{cases}\omega, & \text{ if }u \in
  (\mathop{\geq q_{\K}} \SVdiamond R)^\I,\\ 
  \max \bigl\{ q < q_\K \mid u \in
  (\mathop{\geq q} \SVdiamond R)^\I \bigr\}, & \text{ otherwise}.\end{cases}
\end{align*}
We now choose time instants to be included in the runs $\mathfrak{R}$. Let $Z$ extend $Z_\A$ with the following: 
\begin{akrzlist}\itemsep=2pt
\item[$Z_0$] for any $u\in D$ and $\SVbox C\in\Sigma$ such that $u\notin(\SVbox C)^{\I}$, 
we add some $n\in \Z$ with $u\notin C^{\I(n)}$; 
\item[$Z_1$] for any $u\in D$ and $R\in\role$ such that $\rho^{\Box R}_u < \rho^{\Diamond R}_u$, we add 
\begin{itemize}    
\item[--]  some $n_0\in\Z$ with  $u\in (\mathop{\geq  (\rho^{\Box R}_u + 1)} R)^{\I(n_0)}$ and
\item[--] if additionally $\rho^{\Diamond R}_u < \omega$, some $n_1\in\Z$ with $u\notin(\mathop{\geq \rho^{\Diamond R}_u} R)^{\I(n_1)}$; 
\end{itemize}

\item[$Z_2$] for any $a,b\in\ob$ and $R\in\role$ such that $(a^\I,b^\I)\in(\SVdiamond R)^\I$, we add
\begin{itemize}
\item[--]  some $n_0\in\Z$ with $(a^\I,b^\I)\in R^{\I(n_0)}$ and 
\item[--] if $(a^\I,b^\I)\notin(\SVbox R)^\I$, some $n_1\in\Z$ with $(a^\I,b^\I)\notin R^{\I(n_1)}$;
\end{itemize}

\item[$Z_3$] for any $a\in \ob$ and $R\in\role$ such that $\rho^{\Box R}_{a^\I} - |\Irb{\I}{R}{a}| < \rho^{\Diamond R}_{a^\I} - |\Ird{\I}{R}{a}|$, we add 
\begin{itemize}
\item[--]  some $n_0\in\Z$ with $a^\I\in (\mathop{\geq (q_0 +1)} R)^{\I(n_0)}$, for $q_0 = \rho^{\Box R}_{a^\I} + (|\Ir{\I}{R}{n_0}{a}| -|\Irb{\I}{R}{a}|)$, and
\item[--] if $\rho^{\Diamond R}_{a^\I} < \omega$, some $n_1\in\Z$ with
$a^\I\notin (\mathop{\geq q_1} R)^{\I(n_1)}$, for $q_1 = \rho^{\Diamond R}_{a^\I} - (|\Ird{\I}{R}{a}| - |\Ir{\I}{R}{n_1}{a}|)$, 
\end{itemize}
where $\Ir{\I}{R}{k}{a} = \{ b\in\ob \mid (a^\I,b^\I)\in R^{\I(k)}\}$ and $\Irb{\I}{R}{a}$ and $\Ird{\I}{R}{a}$ are defined similarly.
\end{akrzlist}
Clearly, $|Z_0| \leq |D|\cdot |\K|$, $|Z_1| \leq 2 |D|\cdot|\role|$, $|Z_2| \leq 2 |\ob|^2\cdot|\role|$ and $|Z_3| \leq 2\cdot |\ob|\cdot|\role|$. Thus, $|Z| = O(|\K|^3)$.
The time instants in $Z_0$, $Z_1$ and $Z_2$ exist because $\I\models\K$. We now show that $n_0$ required in $Z_3$ also exists. Suppose, on the contrary, that $a^\I\notin (\mathop{\geq (q_0 + 1)} R)^{\I(n)}$, with $q_0$ as above, for all $n\in\Z$.  Then $a^\I$ has at most $(\rho^{\Box R}_{a^\I} + (|\Ir{\I}{R}{n}{a}| - |\Irb{\I}{R}{a}|))$-many $R$-successors, whence the number of non-ABox $R$-successors of $a^\I$ does not exceed $\rho^{\Box R}_{a^\I} - |\Irb{\I}{R}{a}|$. So, at all instants $n\in\Z$, every $R$-successor of $a^\I$ is either in $\ob$ or is in fact a $\SVbox R$-successor, contrary to $\rho^{\Box R}_{a^\I} - |\Irb{\I}{R}{a}| < \rho^{\Diamond R}_{a^\I} - |\Ird{\I}{R}{a}|$. Using a similar argument, one can show that $n_1$ required in $Z_3$  exists as well.
Having fixed $Z$, we define a consistent $Z$-extension $\extA$ of $\A$ by taking
\begin{equation*}
\extA \ \ =\ \ \A \ \ \cup \ \ \{ \ \nxt^k S(a,b) \ \ \mid \ \ (a^\I,b^\I) \in
 S^{\I(k)} \text{ and } k \in Z \ \}.
\end{equation*}
Let $Q$ be the set comprising $\QT$, $\QA$ and, for any $u\in D$ and $R\in\role$, the integers from 
\begin{equation*}
 \rho^{\Box R}_u, \ \ \rho^{\Diamond R}_u \ \ \text{ and } \ \ \max\bigl\{q < q_\K \mid u \in (\mathop{\geq q} R)^{\I(k)}\bigr\}, \text{ for } k \in Z.
\end{equation*}
By definition, $\max Q  = q_\K  - 1$ and  $|Q| \leq |\QT| + |\QA|  + |D| \cdot |\role| \cdot (2 + |Z|)$.
Let $\mathfrak{R}$ be the set of $(Z,\Sigma)$-runs $r_u$, for $u\in D$, defined by taking, for each $k\in Z$,
\begin{akrzitemize}
\item[--] $\mathop{\geq \omega} R \in r_u(k)$ iff $u \in (\mathop{\geq
    q_\K} R)^{\I(k)}$, and  $\mathop{\geq \omega} \SVdiamond R\in r_u(k)$ iff $u \in
  (\mathop{\geq q_\K} \SVdiamond R)^{\I}$,
\item[--] $C \in r_u(k)$ iff $u \in C^{\I(k)}$, for all other
  concepts $C\in\Sigma$.
\end{akrzitemize}
Since $\I\models\K$ and $\I$ is as in Lemma~\ref{lem:finite:box}, each $r_u(k)$ is a $\Sigma$-type.
Each $r_u\in\mathfrak{R}$ is a coherent and saturated $(Z,\Sigma)$-run: \textbf{(r$_1$)} holds because $\I\models\K$; \textbf{(r$_3$)} and~\textbf{(r$_2$)} are due to $Z_0\subseteq Z$ and $Z_1\subseteq Z$, respectively. Since $\I\models\extA$, each run $r_{a^\I}$ is $a$-faithful for $\extA$. Indeed,~\textbf{(r$_4$)} is due to $Z_\A\subseteq Z$. To show~\textbf{(r$_5$)} and~\textbf{(r$_6$)}, observe that, by definition, $\Nr{\extA}{R}{k}{a} = |\Ir{\I}{R}{k}{a}|$ and, since $Z_2\subseteq Z$, we also have $\Nb{\extA}{R}{a} = |\Irb{\I}{R}{a}|$ and $\Nd{\extA}{R}{a} = |\Ird{\I}{R}{a}|$; moreover,  $\Rb{R}{r_{a^\I}}, \Rr{R}{k}{r_{a^\I}}, \Rd{R}{r_{a^\I}} \in\Qs$ and, for each $q< q_\K$, we have $a^\I \in (\mathop{\geq q} R)^{\I(k)}$ iff $\Rr{R}{k}{r_{a^\I}} \geq q$ (and similarly for $\SVbox R$ and $\SVdiamond R$). Then~\textbf{(r$_5$)} follows from the choice of $\extA$ and~\textbf{(r$_6$)} from $Z_3\subseteq Z$.
We claim that $\mathfrak{Q} = (Q,Z,\mathfrak{R},\extA)$ is a quasimodel for $\K$:  \textbf{(Q$_1$)} holds by definition and  \textbf{(Q$_2$)}--\textbf{(Q$_4$)} follow from  the choice of $D$.
Finally,  as $|\extA| \leq |\A| + |Z|\cdot|\ob|^2\cdot |\role|$ and
$|\mathfrak{R}| \leq |\ob| + 2 |\role|$, the quasimodel is of polynomial size.

\medskip

$(\Leftarrow)$ Let $\mathfrak{Q} = (Q,Z,\mathfrak{R},\extA)$
be a quasimodel for $\K$.  We construct an interpretation $\I$ satisfying
$\K$, which is based on some domain $\Delta^\I$ that will be
defined inductively as the union
\begin{equation*}
\Delta^\I = \bigcup\nolimits_{m\geq 0} \Delta_m,\qquad\text{where } \Delta_{m}\subseteq\Delta_{m+1}, \text{ for all } m \geq 0. 
\end{equation*}
Each set $\Delta_{m+1}$ ($m \geq 0$) is constructed by adding to $\Delta_m$ new domain elements that are copied from the runs in $\mathfrak{R}$; similarly to the proof of Theorem~\ref{lem:qtli-equisat} in Appendix~\ref{app:proof:qtli-equisat}, the function $\cp\colon \Delta^\I\to\mathfrak{R}$ keeps track of this process. In contrast to the proof of Theorem~\ref{lem:qtli-equisat}, however, the runs are defined on a finite set, $Z$, and so we need to multiply (and rearrange) the time instants of $Z$ when creating elements of $\Delta^\I$ from runs in $\mathfrak{R}$. To this end, for each $u\in\Delta^\I$, we define a function $\nu_u\colon\Z\to Z$ that maps each time instant $n\in\Z$ of $u\in\Delta^\I$ to its `origin' $\nu_u(n)\in Z$ on the run $\cp(u)$. Since the constructed interpretation $\I$ may contain infinite sequences of domain elements related by roles, we will need to ensure that each $\Sigma$-type of the run appears infinitely often along $\Z$ (note, however, that the actual order of time instants is important only for $Z_\A$, the instants of the ABox).

The interpretation of role names in $\I$ is constructed inductively along with the construction of the domain:  $S^{\I(n)} = \bigcup_{m\geq 0} S^{n,m}$, where $S^{n,m} \subseteq \Delta_m \times
\Delta_m$,  for $m \geq 0$. Given $m \geq 0$ and $u \in \Delta_m$,
we define the \emph{actual $S$-rank at moment $n\in\Z$} and the \emph{actual $\SVbox S$- and $\SVdiamond S$-ranks on step $m$}:
\begin{align*}
  \Tr{S}{n}{u}{m} & = \sharp \{u' \in \Delta_m \mid (u,u') \in S^{n,m}\}, &&
  \begin{array}{rl}
 \Tb{S}{u}{m} & = \textstyle\sharp \{u' \in \Delta_m \mid (u,u') \in S^{k,m}, \text{ for all } k\in \Z\}, \\
\Td{S}{u}{m} & = \textstyle\sharp \{u' \in \Delta_m \mid (u,u') \in S^{k,m}, \text{ for some } k\in\Z\}.
\end{array}
\end{align*}
The actual  $S^-$-, $\SVbox S^-$- and $\SVdiamond S^-$-ranks are defined similarly, with $(u,u')$ replaced by $(u',u)$. Let
\begin{equation*}
\Db{R}{u}{m} =  \Rb{R}{\cp(u)} - \Tb{R}{u}{m},\qquad \Dr{R}{n}{u}{m} =  \Rr{R}{\nu_u(n)}{\cp(u)}- \Tr{R}{n}{u}{m}\quad\text{ and }\quad \Dd{R}{u}{m} =  \Rd{R}{\cp(u)} - \Td{R}{u}{m}.
\end{equation*}
The inductive construction of the domain and sets $S^{n,m}$  will ensure that, for each $m \geq 0$, the following  holds for all $u\in\Delta_m\setminus\Delta_{m-1}$ (for convenience, we assume $\Delta_{-1}= \emptyset$):
\begin{akrzlist}
\item[fn] $\Td{R}{u}{m} < \omega$, for all  $R\in\role$;
\item[rn] 
$0 \ \ \leq \ \ \Db{R}{u}{m} \  \ \leq \ \ \Dr{R}{n}{u}{m} \ \ \leq \ \  \Dd{R}{u}{m}$, for all  $R\in\role$ and all $n\in\Z$;
\item[df] for all  $R\in\role$, if $\Db{R}{u}{m} \  \ < \ \  \Dd{R}{u}{m}$ then 
\begin{itemize}
\item[--] $\Db{R}{u}{m}  <  \Dr{R}{n}{u}{m} $, for infinitely many $n\in\Z$, and 
\item[--] if additionally $\Dd{R}{u}{m} < \omega$, then $\Dr{R}{n}{u}{m} <   \Dd{R}{u}{m}$, for infinitely many $n\in\Z$.
\end{itemize}
\end{akrzlist}
Note that, by~\textbf{(fn)}, $\Dd{R}{u}{m}$ and the $\Dr{R}{n}{u}{m}$ are well-defined and $\Dd{R}{u}{m} = \omega$ is just in case
$\Rd{R}{\cp(u)}=\omega$. 

For the basis of induction  ($m=0$), set $\Delta_0 = \ob$ and $a^\I=a$, for each $a \in\ob$.  By~\textbf{(Q$_2$)}, for each $a\in\Delta_0$,  there is a run $r_a\in\mathfrak{R}$ that is $a$-faithful for $\extA$. So, set $\cp(a) = r_a$ and take $\nu_a = \nu$, for some fixed function $\nu \colon\Z\to Z$ such that $\nu(k)=k$ and $\nu^{-1}(k)$ is infinite, for each $k\in Z$. 
For every role name $S$, let 
\begin{equation}\label{eq:NP:relation:basis}
  S^{n,0}= \bigl\{(a,b)\in\Delta_0\times\Delta_0 \mid \nxt^{\nu(n)} S(a,b) \in \extA \bigr\},\qquad\text{ for }n \in\Z.
\end{equation}
By definition, $\Tb{R}{a}{0} = \Nb{\extA}{R}{a}$, $ \Tr{R}{n}{a}{0} =
\Nr{\extA}{R}{\nu(n)}{a}$, for all $n\in\Z$, and $\Td{R}{a}{0} =
\Nd{\extA}{R}{a}$.  For each $a\in\Delta_0$,  \textbf{(fn)} is by construction, \textbf{(rn)} is immediate from~\textbf{(r$_5$)} and~\textbf{(df)} follows from~\textbf{(r$_6$)} and the fact
that $\nu^{-1}(k)$ is infinite, for each $k\in Z$.

Assuming that $\Delta_m$ and the $S^{n,m}$ have been defined and~\textbf{(fn)}, \textbf{(rn)} and \textbf{(df)} hold for some $m \geq 0$, we construct $\Delta_{m+1}$ and the $S^{n,m+1}$ and show that the   properties also hold for $m+1$.  By~\textbf{(rn)}, for all $u\in\Delta_m$ and $R\in\role$, we have $\Db{R}{u}{m} \geq 0$, $\Dr{R}{n}{u}{m}\geq 0$,  for all $n\in\Z$, and $\Dd{R}{u}{m}\geq 0$.
If these inequalities are actually equalities then we are done. However, in general this is not the case as there may be `defective' elements whose actual rank is smaller than the required rank. Consider the following four sets of defects in $S^{n,m}$, for $R = S$ and $R = S^-$:
\begin{equation*}
  \LambdaB{R}{m} ~=~ \bigl\{ u \in \Delta_m\setminus \Delta_{m-1} \mid 0 < \Db{R}{u}{m} \bigr\}\quad\text{ and }\quad
  \LambdaD{R}{m} ~=~ \bigl\{ u \in \Delta_m\setminus \Delta_{m-1} \mid \Db{R}{u}{m} < \Dd{R}{u}{m} \bigr\}.
\end{equation*}
The purpose of $\LambdaB{R}{m}$ is to identify 
elements $u\in \Delta_m\setminus \Delta_{m-1}$ that should have
$\Rb{R}{\cp(u)}$-many distinct $\SVbox R$-arrows
(according to $\mathfrak Q$), but some arrows are still
missing (only $ \Tb{R}{u}{m}$ arrows exist in $\Delta_m$). The purpose of
$\LambdaD{R}{m}$ is to identify elements $u$ that should have
$\Rd{R}{\cp(u)}$-many distinct $\SVdiamond
R$-arrows (according to $\mathfrak Q$), but some arrows
are still missing---only $ \Td{R}{u}{m}$ arrows exist in $\Delta_m$ and $\Tb{R}{u}{m}$ of those are in fact $\SVbox R$-arrows. Although $\SVbox R$-arrows are also $\SVdiamond R$-arrows,
their defects are repaired using a separate rule; and defects of $R$-arrows are dealt with as part of repairing defects of $\SVdiamond R$-arrows.  The following rules  extend $\Delta_m$ to $\Delta_{m+1}$ and each $S^{n,m}$ to $S^{n,m+1}$:
\begin{akrzlist}\itemsep=4pt
\item[$\LambdaB{S}{m}$] \ If $\Db{S}{u}{m}> 0$ then $\Rb{S}{\cp(u)} \geq 1$. By~\textbf{(Q$_3$)}, there is $r' \in \mathfrak{R}$ such that $\Rb{S^-}{r'}\geq 1$. We add $q = \Db{S}{u}{m}$ copies $v_1, \dots, v_q$ of the run $r'$ to $\Delta_{m+1}$ and  set $\cp(v_i) = r'$,  add $(u,v_i)$ to $S^{n,m+1}$, for all $n\in\Z$, and let $\nu_{v_i}\colon\Z\to Z$ be such that $\nu_{v_i}^{-1}(k)$ is infinite, for each $k\in Z$.

\item[$\LambdaD{S}{m}$] Let $K$ be $\bigl\{ i\mid 0 < i \leq \Dd{S}{u}{m} -\Db{S}{u}{m} \bigr\}$ if $\Dd{S}{u}{m}  < \omega$ and $\bigl\{ i\mid 0 < i \leq q_\K + 1 \bigr\}$ otherwise. By assumption,  $K\ne \emptyset$. We attach $|K|$
fresh $\SVdiamond S$-successors to $u$ so that the required
$\SVbox S$-, $S$- and $\SVdiamond S$-ranks coincide with the respective actual
ranks at step $m+1$. By~\textbf{(rn)} and~\textbf{(df)}, there exists a
function $\gamma\colon \Z \to 2^K$ such that,  for each $i\in K$, there are infinitely many $n_0\in \Z$ with $i \notin \gamma(n_0)$ and infinitely many $n_1 \in \Z$ with $i \in \gamma(n_1)$, and for all $n\in\Z$,
\begin{equation*}
|\gamma(n)|= \begin{cases}\Dr{S}{n}{u}{m}  - \Db{S}{u}{m}, & \text{ if }\Dr{S}{n}{u}{m}< \omega,\\
q_\K, & \text{ otherwise.}\end{cases}
\end{equation*}
By assumption, we have $\Rb{R}{\cp(u)} - \Tb{R}{u}{m} < \Rd{R}{\cp(u)} - \Td{R}{u}{m}$; by definition, $\Tb{R}{u}{m} \leq \Td{R}{u}{m}$ and, by~\textbf{(fn)}, $\Td{R}{u}{m}< \omega$, whence  $\Rb{S}{\cp(u)} < \Rd{S}{\cp(u)}$. Therefore, by~\textbf{(Q$_4$)}, there exists $r'\in\mathfrak{R}$ with $\Rb{S^-}{r'} < \Rd{S^-}{r'}$.
We add $|K|$ \emph{fresh copies} $v_1, \dots ,v_{|K|}$ of $r'$ to $\Delta_{m+1}$ and, for each $i\in K$, set $\cp(v_i) = r'$ and, for every $n\in\Z$, add $(u,v_i)$ to $S^{n,m+1}$ iff $i \in \gamma(n)$.  Let
\begin{equation*}
Z^{\Box S^-} \!\!= \bigl\{ k\in Z \mid \Rb{S^-}{r'} = \Rr{S^-\!\!}{\ k}{r'}  \bigr\}, \quad
Z^{\Diamond S^-} \!\!= \begin{cases}
\bigl\{ k\in Z \mid   \Rr{S^-\!\!}{\ k}{r'} = \Rd{S^-}{r'} \bigr\}, & \text{ if } \Rd{S^-}{r'} < \omega,\\
\emptyset, & \text{ otherwise.} 
\end{cases}
\end{equation*}
For each $v_i$, we take a function $\nu_{v_i}\colon \Z\to Z$ such that each $\nu_{v_i}^{-1}(k)$ is infinite, for $k\in Z$, and
\begin{akrzitemize}
\item if $k\in Z^{\Box S^-}$ then $i\notin\gamma(n)$, for each $n\in \nu^{-1}_{v_i}(k)$;
\item if $k\in Z\setminus (Z^{\Box S^-}\cup  Z^{\Diamond S^-})$ then $i\in\gamma(n)$ for  infinitely many $n\in \nu^{-1}_{v_i}(k)$ and $i\notin\gamma(n)$ for infinitely many $n\in \nu^{-1}_{v_i}(k)$;
\item if $k\in Z^{\Diamond S^-}$ then $i\in\gamma(n)$, for each $n\in \nu^{-1}_{v_i}(k)$
\end{akrzitemize}
(see Fig.~\ref{fig:quasimodel}). Intuitively, if $k$ is such that not
every $S$-predecessor is required to be a $\SVbox S$-predecessor then
there should be infinitely many copies of  $k$ with
$(u,v_i)\in S^{n,m+1}$; symmetrically, if $k$ is such that not
every $\SVdiamond S$-predecessor is required to be an $S$-predecessor,
there should be infinitely many copies of $k$ with
$(u,v_i)\notin S^{n,m+1}$.

\item[$\LambdaB{S^-}{m}$] and \textbf{($\LambdaD{S^-}{m}$)} are the mirror images of \textbf{($\LambdaB{S}{m}$)} and \textbf{($\LambdaD{S}{m}$)}, respectively. 
\end{akrzlist}
By construction, the rules guarantee that, for any $m \geq 0$ and $u\in\Delta_m$,
\begin{equation}\label{eq:QM:fx}
0 = \Db{R}{u}{m+1} \  =  \ \Dr{R}{n}{u}{m+1} \ = \ \Dd{R}{u}{m+1}, \text{ for all } R\in\role \text{  and all } n\in\Z.
\end{equation}

\begin{figure}[t]
\centering%
\begin{tikzpicture}[>=latex,point/.style={rectangle,draw=black,thick,inner sep=2.5pt,minimum size=1.4mm},lpoint/.style={circle,draw=black,thick,inner sep=1.5pt,minimum size=1.4mm},xscale=0.85,yscale=0.82]
\scriptsize
\foreach \x in {0,...,5} {
\draw[ultra thin,gray!50] (2*\x,-2.5) -- ++(0,5.8);
\node at (2*\x,-2.6) {\scriptsize $k_{\x}$};
}
\draw[thick,gray] (-1,0) -- (11,0);
\node[lpoint,fill=gray!70] (a0) at (0,0) {\textcolor{white}{\bf 2}};
\node[lpoint,fill=gray!70] (a1) at (2,0) {\textcolor{white}{\bf 2}};
\node[lpoint,fill=black] (a2) at (4,0) {\textcolor{white}{\bf 3}};
\node[lpoint,fill=gray!70] (a3) at (6,0) {\textcolor{white}{\bf 2}};
\node[lpoint,fill=black] (a4) at (8,0) {\textcolor{white}{\bf 3}};
\node[lpoint,fill=gray!70] (a5) at (10,0) {\textcolor{white}{\bf 2}};
\node at (-2.8,0) {\footnotesize $\Rb{S}{\cp(u)} \!\!= 1$,\  \ $\Rd{S}{\cp(u)} \!\!= 3$};
\node at (-1.6,1) {\footnotesize $v_2$};
\node at (-1.6,1.5) {\footnotesize $v_3$};
\node at (-1.6,-1) {\footnotesize $v_1$};
\draw[thick,gray] (-1,-1) -- (11,-1);
\foreach \x in {0,1,2,3,4,5} {
	\node[point,fill=white] (b\x) at (2*\x,-1) {\textcolor{black}{\bf 1}};
	\draw[->,thick] (a\x) to (b\x);
}
%
\draw[thick,gray] (-1,1) -- (11,1);
\draw[thick,gray] (-1,1.5) -- (11,1.5);
\foreach \x in {0,3} {
	\node[point,fill=white] (c\x) at (2*\x,1.5) {\bf 0};
}
\foreach \x in {3} {
	\node[point,fill=gray] (c\x) at (2*\x,1.5) {\textcolor{white}{\bf 1}};
}
\foreach \x in {1,2} {
	\node[point,fill=gray] (c\x) at (2*\x,1.5) {\textcolor{white}{\bf 1}};
	\draw[->,thick,out=30,in=-30,looseness=1.5] (a\x) to (c\x); 
}
\foreach \x in {4,5} {
	\node[point,fill=black] (c\x) at (2*\x,1.5) {\textcolor{white}{\bf 2}};
	\draw[->,thick,out=30,in=-30,looseness=1.5] (a\x) to (c\x); 
}
\foreach \x in {1} {
	\node[point,fill=white] (d\x) at (2*\x,1) {\bf 0};
}
\foreach \x in {5} {
	\node[point,fill=gray] (d\x) at (2*\x,1) {\textcolor{white}{\bf 1}};
}
\foreach \x/\c in {2,3,4} {
	\node[point,fill=gray] (d\x) at (2*\x,1) {\textcolor{white}{\bf 1}};
	\draw[->,thick,out=30,in=-30,looseness=1.5] (a\x) to (d\x); 
}
\foreach \x/\c in {0} {
	\node[point,fill=black] (d\x) at (2*\x,1) {\textcolor{white}{\bf 2}};
	\draw[->,thick,out=30,in=-30,looseness=1.5] (a\x) to (d\x); 
}
\node at (-3.3,2.5) {\footnotesize $\Rb{S^-}{r''} \!\! = 0$, \  $\Rd{S^-}{r''} \!\!= 2$};
\draw[rounded corners=1mm,gray,fill=gray!10] (-1.5,2.2) rectangle +(5,0.6);
\draw[ultra thin,gray] (-1.5,2.5) -- ++(5,0);
\node[point,fill=black] (p2) at (3,2.5) {\textcolor{white}{2}};
\node[point,fill=gray] (p1) at (1,2.5) {\textcolor{white}{1}};
\node[point,fill=white] (p0) at (-1,2.5) {\textcolor{black}{0}};
\foreach \n in {c0} { 
	\draw[dashed,gray] (p0.east) to (\n.north);
}
\foreach \n in {c1,c2,c3} { 
	\draw[dashed,gray] (p1.east) to (\n.north);
}
\foreach \n in {c4,c5} { 
	\draw[dashed,gray] (p2.east) to (\n.north);
}
\node at (-0.8,3.2) {\small $Z^{\Box S^-}$};
\node at (3.2,3.2) {\small $Z^{\Diamond S^-}$};
%
\node at (-3.3,-2) {\footnotesize $\Rb{S^-}{r'} \!\! = 1$, \  $\Rd{S^-}{r'} \!\!= 1$};
\draw[rounded corners=1mm,gray,fill=gray!10] (-1.5,-2.3) rectangle +(5,0.6);
\draw[ultra thin,gray] (-1.5,-2) -- ++(5,0);
\node[point,fill=white] (q2) at (3,-2) {\textcolor{black}{1}};
\node[point,fill=white] (q1) at (1,-2) {\textcolor{black}{1}};
\node[point,fill=white] (q0) at (-1,-2) {\textcolor{black}{1}};
\foreach \n in {b0,b1} { 
	\draw[dashed,gray] (q0.east) to (\n.south);
}
\foreach \n in {b2,b3} { 
	\draw[dashed,gray] (q1.east) to (\n.south);
}
\foreach \n in {b4,b5} { 
	\draw[dashed,gray] (q2.east) to (\n.south);
}
\end{tikzpicture}
\caption{Repairing defects of $u$: the required $S$-rank, $\smash{\Rr{S}{n}{\cp(u)}}$, of $u$ is specified inside the circular  nodes. Rule $(\Lambda^m_{\Box S})$ uses run $r'$ to create $v_1$; $(\Lambda^m_{\Diamond S})$ uses  copies of run $r''$ to create $v_2$ and $v_3$. The required $S^-$-rank, $\smash{\Rr{S^-\!\!}{\ n}{\cp(v_i)}}$, of the created points is specified inside the square nodes. Note that,   at instant $k_3$, the element $v_3$ requires an $S$-predecessor different from $u$.}\label{fig:quasimodel}
\end{figure}
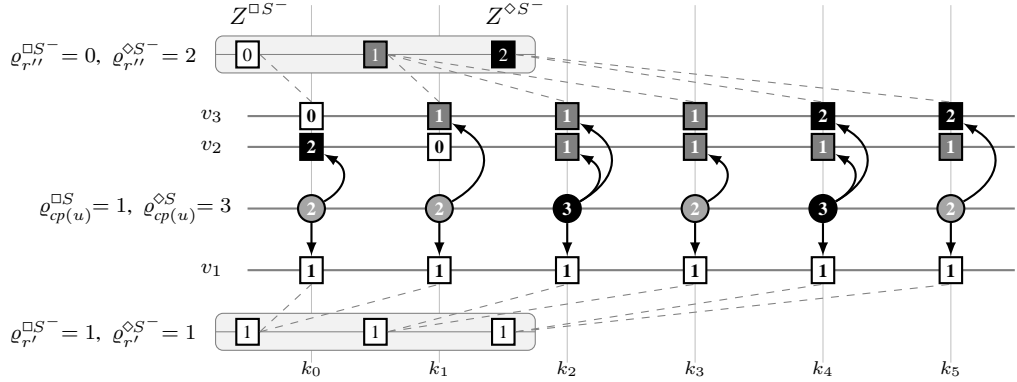

We now show that~\textbf{(fn)}, \textbf{(rn)} and \textbf{(df)} hold
for each $v\in\Delta_{m+1}\setminus\Delta_m$. Indeed, \textbf{(fn)}
holds because $\Td{R}{v}{m+1} \le 1$. In the case of \textbf{($\LambdaB{S}m$)}, property~\textbf{(rn)} follows from
\begin{align}
\label{eq:NP:witnessS-}
& 1 = \Tb{S^-}{v}{m+1} =\Tr{S^-\!\!}{\ n}{v}{m+1}=\Td{S^-}{v}{m+1}  \leq \Rb{S^-}{\cp(v)}\leq \Rr{S^-\!\!}{\ \nu_v(n)}{\cp(v)} \leq \Rd{S^-}{\cp(v)},\\
\label{eq:NP:witnessR}
& 0 = \Tb{R}{v}{m+1} =\Tr{R}{n}{v}{m+1}= \Td{R}{v}{m+1}  \leq \Rb{R}{\cp(v)}\leq \Rr{R}{\,\nu_v(n)}{\cp(v)}  \leq \Rd{R}{\cp(v)},\quad \text{ for all }R \ne S^-.
\end{align}
Then~\textbf{(df)} is by the definition of $\nu_{v}$. The case of~\textbf{($\LambdaB{S^-}m$)} is similar. For the case of~\textbf{($\LambdaD{S}{m}$)}, we observe  that, for each $R\ne S^-$, we have~\eqref{eq:NP:witnessR}, and so \textbf{(rn)} and \textbf{(df)} follow as above. Let us consider $S^-$.
By~\textbf{(r$_2$)}, both $Z\setminus Z^{\Diamond S^-}$ and
$Z\setminus Z^{\Box S^-}$ are non-empty. It follows that
$\Tb{S^-}{v}{m+1} = 0$ and $\Td{S^-}{v}{m+1} = 1$. By definition, we
also have~\textbf{(rn)}. To show~\textbf{(df)}, suppose $ \Rb{S^-}{\cp(v)} < \Rd{S^-}{\cp(v)} - 1$. Clearly,
$Z^{\Box S^-}\cap Z^{\Diamond S^-} = \emptyset$.  If there is $k \in
Z^{\Diamond S^-}$ then $\Rr{S^-\!\!}{\ k}{\cp(v)} = \Rd{S^-}{\cp(v)}$
and $\Tr{S^-\!\!}{\ n}{v}{m+1} = 1$, for all (infinitely many)
$n\in\nu^{-1}_v(k)$, whence the first item of~\textbf{(df)} holds;
otherwise, there is $k\in Z\setminus (Z^{\Box S^-}\cup Z^{\Diamond
  S^-})$ and therefore, $\Rb{S^-}{\cp(v)} < \Rr{S^-\!\!}{\ k}{\cp(v)}$ and
$\Tr{S^-\!\!}{\ n}{v}{m+1} = 0$, for infinitely many $n\in
\nu^{-1}_v(k)$, whence the first item
of~\textbf{(df)} holds.  The second item of~\textbf{(df)} is obtained by a
symmetric argument.

The definition of $\I$ is completed by taking $A^{\I(n)} = \bigl\{ u\in\Delta^\I \mid A\in r(\nu_u(n)), \ r = \cp(u)\bigr\}$, for each concept name $A$.
Observe that  $\I\models\extA$ because each $\nu_a$,  $a\in\ob$, coincides with the fixed $\nu$.
Next, we show by induction on the construction of concepts $C$ in $\K$ that
\begin{equation*}
C\in r(\nu_u(n)) \text{ with } r = \cp(u)\qquad\text{iff}\qquad u\in C^{\I(n)},\qquad\text{ for all } n\in\Z\text{ and } u\in \Delta^\I. 
\end{equation*}
The basis of induction is by definition for $C = \bot$ and $C = A_i$; for $C= \mathop{\geq q} R$ it follows from~\eqref{eq:QM:fx} and the fact that arrows to $u\in\Delta_{m}\setminus\Delta_{m-1}$ can be added only at steps $m$ and $m+1$ as part of the defect repair process. The induction step for $C= \neg C_1$ and $C = C_1\sqcap C_2$ follows from the induction hypothesis by~\textbf{(t$_1$)} and~\textbf{(t$_2$)}, respectively.
The induction step for $C = \SVbox C_1$ follows from the
induction hypothesis by~\textbf{(t$_1$)}, \textbf{(t$_3$)}, \textbf{(r$_1$)} and~\textbf{(r$_3$)}. Thus, by~\textbf{(Q$_1$)}, $\I\models\T$.

It remains to show $\I\models\A$. By the definition of $\extA$ and~$\I$, if $\nxt^k A(a) \in \A$ then $\I\models \nxt^k A(a)$ and if $\nxt^k \neg A(a) \in \A$ then $\I\models \nxt^k \neg A(a)$. If $\nxt^k S(a,b) \in \A$ then, by~\eqref{eq:NP:relation:basis}, $(a^\I,b^\I)\in S^{k,0}$, whence $\I\models \nxt^k S(a,b)$. If $\nxt^k \neg S(a,b) \in \A$ then $(a^\I,b^\I)\notin \extA$, whence $(a^\I,b^\I)\notin S^{k,0}$ by~\eqref{eq:NP:relation:basis}, and so $\I\models \nxt^k \neg S(a,b)$  as no new arrows can be added between ABox individuals.
\end{proof}

We are now in a position to establish the \NP{} membership of the
satisfiability problem for \TurDLbn{} KBs. To check whether a KB $\K = (\T,\A)$ is satisfiable, it is
enough to guess a structure $\mathfrak{Q} =
(Q,Z,\mathfrak{R},\extA)$ consisting of a set $\mathfrak{R}$ of runs  and an extension $\extA$ of the ABox
$\A$, both of which are of polynomial size in
$|\K|$, and check that $\mathfrak{Q}$ is a quasimodel for
$\K$. \NP-hardness follows from the complexity
of \bDLb{}. This completes the proof of Theorem~\ref{thm:R:NP}.\qed
\end{proof}
%

\section{Conclusions}
\label{sec:concl}

Logics interpreted over two- (or more) dimensional Cartesian products are notorious for their bad
computational properties, which is well-documented in the modal logic
literature (see~\cite{GKWZ03,Kurucz07} and references therein). For
example, satisfiability of bimodal formulas over Cartesian products of
transitive Kripke frames is undecidable~\cite{GabelaiaKWZ05}; by
dropping the requirement of transitivity we gain decidability, but not
elementary~\cite{GollerJL12}; if one dimension is a linear-time line then the complexity can only become worse~\cite{GKWZ03}.

The principal achievement of this article is the construction of
temporal description logics that (\emph{i}) are interpreted over 2D
Cartesian products, (\emph{ii}) are capable of capturing standard
temporal conceptual modelling constraints, and (\emph{iii}) in many
cases are of reasonable computational complexity.
Although TDLs \TdrDLbn{} and \TxDLbn, capturing lifespan cardinalities together with qualitative or quantitative evolution, turned out to be undecidable (as well as TDLs with unrestricted role inclusions), the complexity of the remaining ten logics ranges between  \NLogSpace{} and \PSpace{}.
We established these positive results by reductions to various clausal
fragments of propositional temporal logic (the complexity analysis of
which could be of interest on its own).  We have conducted initial
experiments, using two off-the-shelf temporal reasoning tools,
NuSmv~\cite{CCGPRST:02} and TeMP~\cite{HustadtKRV04}, which
showed feasibility of automated reasoning over TCMs with both
timestamping and evolution constraints but without sub-relations
(\TdxDLbn). Many efficiency issues are yet to be resolved
but the first results are encouraging.

The most interesting TDLs not considered in this article are probably
\TdxrDLcn{} and \TdxrDLkn. We conjecture that both of them are
decidable. We also believe that the former can be used as a variant of temporal RDFS (cf.~\cite{GutierrezHV05}). 

Although the results in this article establish tight complexity bounds for TDLs, they can only be used to obtain upper complexity bounds for the corresponding fragments of TCMs; the lower  bounds are mostly left for future work~\cite{AKRZ:ER10}. 

The original \DL{} family~\cite{CDLLR07} was designed with the primary aim of ontology-based data access (OBDA) by means of first-order query rewriting. In fact, OBDA has already reached a mature stage and  become a prominent direction in the development of the next generation of information systems and the Semantic Web; see~\cite{DBLP:conf/rweb/PolleresHDU13,DBLP:conf/rweb/KontchakovRZ13} for recent surveys and references therein. In particular, W3C has introduced a special profile, OWL~2~QL, of the Web Ontology Language OWL~2 that is suitable for OBDA and based on the \DL{} family. An interesting problem, both theoretically and practically, is to investigate how far this approach can be developed in the temporal case and what temporal ontology languages can support first-order query rewriting; see recent~\cite{Gutierrez-BasultoK12,DBLP:journals/ws/Motik12,ArtaleKWZ13,DBLP:conf/cade/BaaderBL13,DBLP:conf/frocos/BorgwardtLT13} for some initial results.

\appendixhead{URLend}


\bibliographystyle{acmsmall}
\bibliography{biblio}

\begin{received}
{September 2012}{October 2013}{February 2014}
\end{received}


\elecappendix

\setcounter{section}{1}

\section{Proof of Theorem~\ref{lem:qtli-equisat}}\label{app:proof:qtli-equisat}

\hspace*{1em}\textsc{Theorem~\ref{lem:qtli-equisat}.} \ {\itshape
 A \TuDLbn{} KB $\K  = (\T,\A)$ is satisfiable iff the \QTLi{} sentence
  $\K^\dagger$ is satisfiable.}

\begin{proof}
($\Leftarrow$) Let $\Mmf$ be a first-order temporal model with
  a \emph{countable} domain $D$ and $\Mmf,0 \models
  \K^\dagger$.
Without loss of generality we may assume that the $a^\Mmf$, for $a\in\ob$,
are all distinct.
We are going to construct a \TuDLbn{} interpretation $\I$
satisfying $\K$ and based on some domain $\Delta^\I$
that will be inductively defined as the union
\begin{equation*}
\Delta^\I = \bigcup\nolimits_{m \geq 0} \Delta_m,\quad\text{ where } \ \ \Delta_0 = \bigl\{ a^\Mmf \mid a\in\ob\bigr\} \subseteq D \ \ \text{ and } \ \ \Delta_m\subseteq \Delta_{m+1}, \text{ for } m \geq 0.
\end{equation*}
The interpretations of object names in $\I$ are given by their
interpretations in $\Mmf$: $a^{\I} = a^\Mmf \in
\Delta_0$. Each set $\Delta_{m+1}$, for $m\geq 0$, is constructed by adding to $\Delta_m$ some new elements that are fresh \emph{copies} of certain elements from $D\setminus \Delta_0$. If such a new element $u$ is a copy of $u'\in D\setminus \Delta_0$ then we write $\cp(u) = u'$, while for $u\in \Delta_0$ we let $\cp(u) = u$.

The interpretation $A^{\I(n)}$ of each concept name $A$ in
$\I$ is defined by taking
\begin{equation}\label{eq:atomic}
A^{\I(n)} = \bigl\{ u\in \Delta^\I \mid
\Mmf,n\models A^*[\cp(u)]\bigr\}.
\end{equation}
The interpretation $S^{\I(n)}$ of each role name $S$ in $\I$ is constructed inductively as the union
\begin{equation*}
S^{\I(n)} = \bigcup\nolimits_{m \geq  0} S^{n,m},\qquad\text{ where } S^{n,m} \subseteq \Delta_m \times
\Delta_m, \text{ for all } m \geq 0.
\end{equation*}
We require the following two definitions to guide our construction. The \emph{required $R$-rank $ \varrho_d^{R,n}$ of $d\in D$ at  moment
  $n$} is
\begin{equation*}
  \varrho_d^{R,n} ~=~ \max\bigl(\{0\} \cup \{q \in Q_\T \mid
  \Mmf,n \models E_qR[d]\}\bigr).
\end{equation*}
By~\eqref{eq:role:saturation}, $\varrho_d^{R,n}$ is a
function and if $\varrho_d^{R,n} = q$ then
$\Mmf,n\models E_{q'}R[d]$ for every $q' \in Q_\T$ with
$q' \leq q$, and $(\Mmf,n)\models \neg E_{q'}R[d]$ for every
$q' \in Q_\T$ with $q' > q$. We also define the \emph{actual
  $R$-rank $\tau_{u,m}^{R,n}$ of $u\in\Delta^\I$ at moment $n$ and step $m$} by
taking
\begin{equation*}
  \tau_{u,m}^{R,n} =  
 \max \bigl( \{0\} \cup \{q \in Q_\T \mid  (u,u_1),\dots,(u,u_q)\in   R^{n,m} \text{ for distinct } u_1,\dots,u_q\in\Delta^\I \} \bigr),
\end{equation*}
where $R^{n,m}$ is $S^{n,m}$ if $R = S$ and $\{ (u',u)\mid (u,u')\in S^{n,m} \}$ if $R = S^-$, for a role name $S$.

For the basis of induction, for each role name $S$,  we set
\begin{equation}\label{eq:model:relation:basis}
S^{n,0} = \bigl\{ (a^\I,b^\I)\in
\Delta_0\times \Delta_0 \mid  S(a,b) \in \A_n^S \bigr\},\qquad\text{ for } n \in\Z
\end{equation}
(note that $S(a,b)\in\A_n^S$ for all $n\in\Z$ if
$\nxt^k S(a,b)\in\A$, for a rigid role name $S$).  It follows
from the definition of $\A^\dagger$ that, for all $R \in \role$ and
$u\in \Delta_0$,
\begin{equation}\label{eq:model:rank:basis}
\tau_{u,0}^{R,n}  ~\leq~ \varrho_{\cp(u)}^{R,n}.
\end{equation}
Suppose that $\Delta_m$ and the $S^{n,m}$ have been defined
for some $m\ge 0$. If, for all roles $R$ and $u\in \Delta_m$, we had $\tau_{u,m}^{R,n} = \varrho_{\cp(u)}^{R,n}$ then the interpretation of roles would have been
constructed. However, in general this is not the case because there
may be some `defects' in the sense that the actual rank of some elements
is smaller than the required rank. Consider the following two sets of
defects in $S^{n,m}$:
\begin{equation*}
  \Lambda_R^{n,m} ~=~ \bigl\{ u \in \Delta_m\setminus \Delta_{m-1} \mid \tau_{u,m}^{R,n} < \varrho_{\cp(u)}^{R,n} \bigr\},\qquad\text{ for } R \in \{S, S^-\}
\end{equation*}
(for convenience, we assume $\Delta_{-1}= \emptyset$).
The purpose of, say, $\Lambda_S^{n,m}$ is to identify those
`defective' elements $u\in \Delta_m\setminus \Delta_{m-1}$ from which precisely
$\varrho_{\cp(u)}^{S,n}$ distinct $S$-arrows should start (according to
$\Mmf$), but some arrows are still missing (only $ \tau_{u,m}^{S,n}$
arrows exist). To `repair' these defects, we extend $\Delta_m$ to
$\Delta_{m+1}$ and $S^{n,m}$ to $S^{n,m+1}$ according to the following
rules:
\begin{akrzlist}\itemsep=10pt
\item[$\Lambda_S^{n,m}$] Let $u\in \Lambda_S^{n,m}$. Denote
  $d =\cp(u)$ and $q = \varrho_{\cp(u)}^{S,n} -\tau_{u,m}^{S,n}$. Then
  $\Mmf,n\models E_{q'}S[d]$ for some $q'\geq q >
  0$. By~\eqref{eq:role:saturation}, we have $\Mmf,n\models
  E_1S[d]$ and, by~\eqref{eq:role:existence:2}, there is
  $d'\in D$ such that $\Mmf,n\models E_1S^-[d']$. In this case
  we take $q$ \emph{fresh} copies $u'_1,\dots,u'_q$
  of $d'$ (so $\cp(u'_i) = d'$), add them to
  $\Delta_{m+1}$ and add the pairs $(u,u'_1),\dots,(u,u'_q)$ to
  $S^{n,m+1}$. If $S$ is rigid we add these pairs to all $S^{k,m+1}$, for $k\in\Z$.
\item[$\Lambda_{S^-}^{n,m}$] Let $u\in
  \Lambda_{S^-}^{n,m}$. Denote $d =\cp(u)$ and $q =
  \varrho_{\cp(u)}^{S^-,n} -\tau_{u,m}^{S^-,n}$. Then \mbox{$\Mmf,n\models
    E_{q'}S^-[d]$} for $q'\geq q >
  0$. By~\eqref{eq:role:saturation}, $\Mmf,n\models E_1S^-[d]$
  and, by~\eqref{eq:role:existence:2}, there is $d'\in D$ with
  $\Mmf,n\models E_1S[d']$. In this case we take $q$
  \emph{fresh} copies $u'_1,\dots,u'_q$ 
  of $d'$, add them to
  $\Delta_{m+1}$ and add the pairs $(u'_1,u),\dots,(u'_q,u)$ to
  $S^{n,m+1}$. If $S$ is rigid we add these pairs to all $S^{k,m+1}$, for $k\in\Z$.
\end{akrzlist}
Now we observe the
following property of the construction: for all $m_0 \geq 0$ and $u\in \Delta_{m_0}\setminus \Delta_{m_0-1}$,
\begin{equation}\label{eq:rm}
\tau_{u,m}^{R,n}  =\begin{cases} 0, & \text{if}\ m < m_0,\\
q, & \text{if}\ m = m_0, \text{ for some }  q \leq \varrho_{\cp(u)}^{R,n},\\
\varrho_{\cp(u)}^{R,n}, & \text{if}\ m > m_0.\end{cases}
\end{equation}
To prove this property, consider all possible cases. If $m < m_0$ then
$u\notin \Delta_m$, i.e., it has not been added to $\Delta_m$ yet,
and so $\tau_{u,m}^{R,n} = 0$.
If $m = m_0 = 0$ then $\tau_{u,m}^{R,n} \leq \varrho_{\cp(u)}^{R,n}$ by~\eqref{eq:model:rank:basis}. 
If $m = m_0 > 0$ then $u$
was added at step $m_0$ to repair a defect of some  $u'\in
\Delta_{m_0-1}$.
This means that either
$(u',u)\in S^{n,m_0}$ and $u'\in\Lambda_S^{n,m_0-1}$, or $(u,u')\in
S^{n,m_0}$ and $u'\in\Lambda_{S^-}^{n,m_0-1}$, for a role name $S$. Consider the first case.
Since \emph{fresh} witnesses $u$ are picked up every time the rule
$(\Lambda_S^{n,m_0-1})$ is applied and those witnesses
satisfy $\Mmf,n\models E_1S^-[\cp(u)]$, we obtain $\tau_{u,m_0}^{S,n} = 0$,
$\tau_{u,m_0}^{S^-,n} = 1$ and $\varrho_{\cp(u)}^{S^-,n} \geq 1$. The
second case is similar. If $m = m_0 + 1$ then all defects of $u$ are
repaired at step $m_0 + 1$ by applying the rules $(\Lambda_S^{n,m_0})$
and $(\Lambda_{S^-}^{n,m_0})$. Therefore, $\tau_{u,m_0}^{R,n} =
\varrho_{\cp(u)}^{R,n}$. If $m > m_0 + 1$ then~\eqref{eq:rm} follows
from the observation that no new arrows involving $u$ can be added
after step $m_0 + 1$.

It follows that, for all $R\in\role$, $q \in
Q_\T$, $n\in\Z$ and $u\in \Delta^\I$,
\begin{equation}\label{eq:model:expansion:exists}
\Mmf,n\models E_qR[\cp(u)] \quad \text{iff}\quad u\in (\mathop{\geq q} R)^{\I(n)}.
\end{equation}
Indeed, if $\Mmf,n\models E_qR[\cp(u)]$ then, by definition,
$\varrho_{\cp(u)}^{R,n}\geq q$. Let $u\in \Delta_{m_0}\setminus \Delta_{m_0 - 1}$.
Then, by~\eqref{eq:rm},
$\tau_{u,m}^{R,n} =\varrho_{\cp(u)}^{R,n}\geq q$, for all $m>m_0$. It follows from the
definition of $\tau_{u,m}^{R,n}$ and $R^{\I(n)}$ that $u\in(\mathop{\geq q}
R)^{\I(n)}$. Conversely, let $u\in(\mathop{\geq q}
R)^{\I(n)}$ and $u\in \Delta_{m_0}\setminus \Delta_{m_0-1}$.
Then, by~\eqref{eq:rm}, we
have $q\leq \tau_{u,m}^{R,n} = \varrho_{\cp(u)}^{R,n}$, for all $m>m_0$. So, by the
definition of $\varrho_{\cp(u)}^{R,n}$ and~\eqref{eq:role:saturation}, we
obtain $\Mmf,n\models E_qR[\cp(u)]$.

Now we show by induction on the construction of concepts $C$ in $\K$ that
\begin{equation*}
\Mmf,n\models C^*[\cp(u)]\qquad\text{iff}\qquad u\in C^{\I(n)},\qquad\text{ for all } n\in\Z\text{ and } u\in \Delta^\I. 
\end{equation*}
The basis of induction is trivial for $C = \bot$ and follows
from~\eqref{eq:atomic} if $C = A_i$ and
\eqref{eq:model:expansion:exists} if $C= \mathop{\geq q} R$. The
induction step for the Booleans ($C= \neg C_1$ and $C = C_1\sqcap
C_2$) and the temporal operators ($C = C_1\U C_2$ and $C= C_1\S C_2$) follows from the
induction hypothesis. Thus, $\I\models\T$.

It only remains to show that $\I\models\A$. If $\nxt^n A(a) \in \A$ then, by the definition of $\A^\dagger$ and~\eqref{eq:atomic}, $\I\models \nxt^n A(a)$. If $\nxt^n \neg A(a) \in \A$ then, analogously, $\I\models \nxt^n \neg A(a)$. If $\nxt^n S(a,b) \in \A$ then, by~\eqref{eq:model:relation:basis},
$(a^\I,b^\I)\in S^{n,0}$, whence, by
the definition of $S^{\I(n)}$,
$\I\models \nxt^n S(a,b)$. If $\nxt^n \neg S(a,b) \in
\A$ then, by~\eqref{eq:model:relation:basis},
$(a^\I,b^\I)\notin S^{n,0}$, and so,
as no new arrows can be added between ABox individuals,
$\I\models \nxt^n \neg S(a,b)$.

\smallskip
$(\Rightarrow)$ is straightforward.
\end{proof}


\section{Proof of Theorem~\ref{thm:pspace:2}}\label{app:PSpace}

\hspace*{1em}\textsc{Theorem~\ref{thm:pspace:2}.} \ {\itshape
The satisfiability problem for the core fragment of \TuDLbn{} KBs is \PSpace-complete.}

\begin{proof}
The proof is by reduction of the halting problem for Turing
 machines with a polynomial tape. We recall that, given a deterministic Turing machine  $M=\langle Q, \Gamma, \#, \Sigma, \delta, q_0, q_f
  \rangle$ and a polynomial $s(n)$, we construct a TBox $\T_M$ containing concept inclusions~\eqref{eq:tm:change:head:R}--\eqref{eq:tm:non-term}, which we list here for the reader's  convenience: 
\begin{align}
\tag{\ref{eq:tm:change:head:R}} H_{iq} & \sqsubseteq \bot \U
  H_{{(i+1)}q'},\quad H_{iq} \sqsubseteq \bot \U S_{ia'}, &&
  \text{if } \delta(q,a) = (q',a',R) \text{ and } i < s(n),\\
  \tag{\ref{eq:tm:change:head:L}} H_{iq} & \sqsubseteq \bot \U
  H_{{(i-1)}q'},\quad H_{iq} \sqsubseteq \bot \U S_{ia'}, && \text{if } \delta(q,a) = (q',a',L) \text{ and }
  i > 1,\\
  \tag{\ref{eq:tm:E}} H_{iq} & \sqsubseteq \bot \U D_i,\\
 \tag{\ref{eq:tm:neq:D}} D_i \sqcap  D_j & \sqsubseteq \bot,&& \text{if } i \neq j,\\
  \tag{\ref{eq:tm:preserve}} S_{ia} & \sqsubseteq S_{ia} \U  D_i,\\
  \tag{\ref{eq:tm:non-term}} H_{iq_f} & \sqsubseteq \bot.
\end{align}
For an input $\vec{a}=a_1\dots a_n$ of length $n$, we take the following ABox $\A_{\vec{a}}$:
\begin{equation*}
  H_{1q_0}(d),
  \qquad S_{ia_i}(d), \ \text{ for } 1 \leq i \leq n,\qquad
  S_{i\#}(d), \ \text{ for } n < i \leq s(n).
\end{equation*}
We show that $(\T_M,\A_{\vec{a}})$ is unsatisfiable
iff $M$ accepts $\vec{a}$. We represent configurations of $M$ as
tuples of the form $\mathfrak{c}=\langle b_1\dots b_{s(n)}, i, q
\rangle$, where $b_1\dots b_{s(n)}$ is the contents of the first
$s(n)$ cells of the tape with $b_j \in \Gamma$, for all $j$, the head
position is $i$, $1 \leq i \leq s(n)$, and $q\in Q$ is the control
state. Let $\I$ be an interpretation for
$\K_{M,\vec{a}}$. We say that $\I$ \emph{encodes
  configuration} $\mathfrak{c}=\langle b_1\dots b_{s(n)}, i, q
\rangle$ at moment $k$ if $d^\I\in H_{iq}^{\I(k)}$
and $d^\I\in S_{jb_j}^{\I(k)}$, for all $1 \leq j
\leq s(n)$. We note here that, in principle, many different
configurations can be encoded at moment $k$ in
$\I$. Nevertheless, any prefix of a model of
$(\T_M,\A_{\vec{a}})$ contains the computation of $M$ on the
given input $\vec{a}$:
\begin{lemma}\label{lemma:computation}
  Let $\mathfrak{c}_0, \dots, \mathfrak{c}_m$ be a sequence of
  configurations representing a partial computation of $M$ on
  $\vec{a}$. Then every model $\I$ of
  $(\T_M,\A_{\vec{a}})$ encodes $\mathfrak{c}_k$ at
  moment $k$, for $0 \leq k \leq m$.
\end{lemma}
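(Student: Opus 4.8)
The plan is to prove the lemma by induction on $k$, exploiting the fact that $\bot\U C$ is nothing but the next-time operator: by the semantics of $\U$, the intermediate conjunction $\bigcap_{k<m'<l}\bot^{\I(m')}$ is nonempty only when the range is empty, so $(\bot\U C)^{\I(k)} = C^{\I(k+1)}$. For the base case $k=0$ I would simply read the encoding off the ABox: since the assertions $H_{1q_0}(d)$, $S_{ia_i}(d)$ and $S_{i\#}(d)$ of $\A_{\vec{a}}$ are interpreted at the initial moment $0$, every model $\I$ satisfies $d^\I\in H_{1q_0}^{\I(0)}$ and $d^\I\in S_{jb_j}^{\I(0)}$ for the initial tape contents, which is exactly the statement that $\I$ encodes $\mathfrak{c}_0$ at moment $0$.

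For the inductive step, suppose $\I$ encodes $\mathfrak{c}_k=\langle b_1\dots b_{s(n)},i,q\rangle$ at moment $k$ for some $k<m$, and let $\mathfrak{c}_{k+1}$ be obtained via $\delta(q,b_i)=(q',a',R)$ (the left-move case being symmetric). Since $k<m$ the computation has not halted, so $q\ne q_f$, and since $M$ never leaves the first $s(n)$ cells we have $i<s(n)$; hence the clauses~\eqref{eq:tm:change:head:R} for this transition occur in $\T_M$. As these inclusions hold globally and $d^\I\in H_{iq}^{\I(k)}$, the next-time reading gives $d^\I\in H_{(i+1)q'}^{\I(k+1)}$ and $d^\I\in S_{ia'}^{\I(k+1)}$, so the head position, the control state, and the symbol in cell $i$ are all updated correctly ($a'=b'_i$). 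Likewise,~\eqref{eq:tm:E} yields $d^\I\in D_i^{\I(k+1)}$, recording that the head was at cell $i$.

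The main obstacle, and the only non-routine point, will be showing that every untouched cell $j\ne i$ keeps its symbol, i.e.\ $d^\I\in S_{jb_j}^{\I(k+1)}$. Here I would invoke~\eqref{eq:tm:preserve}: from $d^\I\in S_{jb_j}^{\I(k)}$ we obtain $d^\I\in (S_{jb_j}\U D_j)^{\I(k)}$, so there is a witness $l>k$ with $d^\I\in D_j^{\I(l)}$ and $d^\I\in S_{jb_j}^{\I(m')}$ for all $k<m'<l$. The delicate case is $l=k+1$, where the intermediate range is empty and yields no information at moment $k+1$; to exclude it I would use the disjointness~\eqref{eq:tm:neq:D} together with the already established $d^\I\in D_i^{\I(k+1)}$: since $i\ne j$ and $D_i\sqcap D_j\sqsubseteq\bot$, we get $d^\I\notin D_j^{\I(k+1)}$, so $l\ge k+2$ and therefore $k+1$ lies strictly between $k$ and $l$, giving $d^\I\in S_{jb_j}^{\I(k+1)}$. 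Collecting these facts shows that $\I$ encodes $\mathfrak{c}_{k+1}=\langle b'_1\dots b'_{s(n)},i+1,q'\rangle$ at moment $k+1$, which completes the induction. (This is the ingredient that, combined with~\eqref{eq:tm:non-term}, will force unsatisfiability whenever $M$ reaches $q_f$.)
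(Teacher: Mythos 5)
Your proposal is correct and follows essentially the same route as the paper's own proof: induction on $k$, reading $\bot\U C$ as next-time for the head/state/written-symbol updates via~\eqref{eq:tm:change:head:R}--\eqref{eq:tm:E}, and preserving untouched cells by combining~\eqref{eq:tm:preserve} with the disjointness~\eqref{eq:tm:neq:D} to rule out the until-witness landing at $k+1$. The only difference is presentational: you make explicit the delicate case $l=k+1$ that the paper's argument handles implicitly, and you treat the right-move case where the paper details the left-move case, both noting the other is symmetric.
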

\begin{proof}
  The proof is by induction on $k$. For $k=0$, the
  claim follows from $\I\models \A_{\vec{a}}$.
  For the induction step, let $\I$ encode
  $\mathfrak{c}_k=\langle b_1\dots b_i\dots b_{s(n)}, i, q \rangle$ at
  moment $k$, and let $\mathfrak{c}_{k+1}$ be $\langle b_1\dots b_i'
  \dots b_{s(n)}, i', q' \rangle$. Then we have $q \in Q\setminus
  \{q_f\}$. Consider first $\delta(q,b_i)=(q', b_i', L)$, in
  which case $i > 1$ and $i' = i -1$. Since $d^\I\in
  H_{iq}^{\I(k)}$ we have, by~\eqref{eq:tm:E},
  $d^\I\in D_i^{\I(k+1)}$ and,
  by~\eqref{eq:tm:change:head:L}, $d^\I\in
  H_{(i-1)q'}^{\I(k+1)}$ and
  $d^\I\in S_{ib_i'}^{\I(k+1)}$. Consider cell $j$,
  $1 \leq j \leq s(n)$, such that $j \ne i$. By~\eqref{eq:tm:neq:D},
  $d^\I\notin D_j^{\I(k+1)}$, and so, since
  $d^\I\in S_{jb_j}^{\I(k)}$, we obtain,
  by~\eqref{eq:tm:preserve}, $d^\I\in
  S_{jb_j}^{\I(k+1)}$. Hence, $\I$ encodes
  $\mathfrak{c}_{k+1}$ at moment $k+1$.  The case of
  $\delta(q,b_i)=(q', b_i', R)$ is analogous.
\end{proof}

It follows that if $M$ accepts $\vec{a}$ then
$(\T_M,\A_{\vec{a}})$ is unsatisfiable. Indeed, if $M$ accepts
$\vec{a}$ then the computation is a sequence of configurations
$\mathfrak{c}_0, \dots, \mathfrak{c}_m$ such that
$\mathfrak{c}_m=\langle b_1\dots b_{s(n)}, i, q_f \rangle$. Suppose
$(\T_M,\A_{\vec{a}})$ is satisfied in a model
$\I$. By
Lemma~\ref{lemma:computation}, $d^\I\in
H_{iq_f}^{\I(m)}$, which contradicts~\eqref{eq:tm:non-term}.

Conversely, if $M$ rejects $\vec{a}$ then $(\T_M,\A_{\vec{a}})$ is
satisfiable.  Let $\mathfrak{c}_0, \dots, \mathfrak{c}_m$ be a
sequence of configurations representing the rejecting computation of
$M$ on $\vec{a}$, $\mathfrak{c}_k=\langle b_{1,k},\dots, b_{s(n),k},
i_k, q_k \rangle$, for $0 \leq k \leq m$. We define an interpretation
$\I$ with $\Delta^\I= \{ d\}$, $d^\I = d$
and, for every $a\in\Gamma$, $q\in Q$, $1 \leq i \leq s(n)$ and $k
\geq 0$, we set (note that $q_m$ is a rejecting state and so, $\delta(q_f,a)$ is
undefined):
\begin{align*}
H_{iq}^{\I(k)} & = \begin{cases}
\Delta^\I, & \text{if } \ k \leq m, i=i_k \text{ and } q = q_k,\\
\emptyset, & \text{otherwise},
\end{cases} \\
S_{ia}^{\I(k)} & = \begin{cases}
\Delta^\I, & \text{if } \ k \leq m \text{ and } a=b_{i,k},\\
\Delta^\I, & \text{if } \ k = m + 1 \text{ and } a=b_{i,m},\\
\emptyset, & \text{otherwise},
\end{cases} \\
D_i^{\I(k)} & = \begin{cases}
\Delta^\I, & \text{if } \ 0 < k \leq m + 1 \text{ and }i=i_{k-1},\\
\Delta^\I, & \text{if } \ k = m + 2 \\
\emptyset, & \text{otherwise}.
\end{cases}
\end{align*}
It can be easily verified that $\I \models (\T_M,\A_{\vec{a}})$.
\end{proof}


\section{Proof of Theorem~\ref{thm:R:NP}}\label{app:NP}

\hspace*{1em}\textsc{Lemma~\ref{lem:finite:box}.} \
{\itshape
  Let $\K$ be a \TurDLbn{} KB and $q_\K =\max (\QT\cup\QA) + 1$. If $\K$ is satisfiable then it can be
  satisfied in an interpretation $\I$ such that $(\mathop{\geq
    q_\K} \SVbox R)^{\I} = \emptyset$, for each
  $R\in\role$.}

\begin{proof}
  Let $\I\models \K$. Without loss of generality, we
  will assume that the domain $\Delta^\I$ is at most countable.  
Construct a new interpretation $\I^*$ as follows. We take $\Delta^\I\times\N$ as the domain of
$\I^*$ and set $a^{\I^*} = (a^\I,0)$, for all $a\in\ob$. For
each $n\in\Z$, we set
\begin{align*}
A^{\I^*(n)} &= \{ (u,i) \mid u\in A^{\I(n)},\ i\in\N\},&& \text{ for every concept name } A,\\
S^{\I^*(n)} &= \{ ((u,i),(v,i)) \mid (u,v)\in S^{\I(n)},\ i\in\N\},&&\text{ for every role name }S.
\end{align*}
It should be clear that $\I^* \models \K$.

Suppose that
$u\in\Delta^\I$ has at least $q_\K$-many $\SVbox
R$-successors in $\I$ and assume that the pairs
\begin{equation*}
\bigl((u,i),(u_1,i)\bigr) ,\dots, \bigl((u,i),(u_{q_\K - 1},i)\bigr), \bigl((u,i),(u_{q_\K},i)\bigr) ,\dots
\end{equation*}
are all in $(\SVbox R)^{\I^*}$. We can also assume that if $(u_j,0) =
a^{\I^*}$, for some $a\in\ob$, then $j < q_\K$.
We then
rearrange some of the $R$-arrows of the form $((u,i),
(u_j,i'))$, simultaneously \emph{at all moments of time}, in the following
manner. We remove $((u,i), (u_j,i))$ from $(\SVbox
R)^{\I^*}$, for all $j$ and $i$ such that $j\geq q_\K$
or $i > 0$. Note that this operation does not affect the $\SVbox R$-arrows to the ABox individuals. To preserve the extension of concepts of the form $\mathop{\geq q} \SVbox R$, we
then add new $\SVbox R$-arrows of the form $((u,i), (u_j,i'))$, for $i > i' \geq 0$, to
$(\SVbox R)^{\I^*}$ in such a way that the following
conditions are satisfied:
\begin{itemize}
\item[--] for every $(u_j,i')$, there is precisely one $\SVbox R$-arrow of the form $((u,i), (u_j,i'))$,
\item[--] for every $(u,i)$, there are precisely $(q_\K-1)$-many $\SVbox R$-arrows of the form $((u,i), (u_j,i'))$.
\end{itemize}
Such a rearrangement is possible because $\I^*$ contains countably infinitely many copies of $\I$.
We leave it to the reader to check that the resulting interpretation is still a model of $\K$.

The rearrangement process is then repeated for each other $u\in\Delta^\I$ with at least $q_{\K}$-many $\SVbox R$-successors.
\end{proof}

\end{document}